\newcommand{\mytexttilde}{\raisebox{0.5ex}{\texttildelow}}
\newcommand*\widefbox[1]{\fbox{\hspace{0.5em}#1\hspace{0.5em}}}
\newtheorem{theorem}{Theorem}
\newtheorem{lemma}{Lemma}
\newtheorem{corollary}{Corollary}
\theoremstyle{definition}
\newtheorem{definition}{Definition}
\theoremstyle{proposition}
\newtheorem{proposition}{Proposition}
\theoremstyle{example}
\newtheorem{example}{Example}
\theoremstyle{remark}
\newtheorem{remark}{Remark}
\newcommand{\tr}[1]{\ensuremath{\mathrm{tr}\left[#1\right]}\xspace}
\newcommand{\diff}{\ensuremath{\mathrm{d}\xspace}}
\newcommand{\wt}[1]{\widetilde{#1}}
\DeclareMathOperator{\arcosh}{arcosh}
\DeclareMathOperator{\artanh}{artanh}
\DeclareMathOperator{\sech}{sech}
\newcommand{\current}{\mathcal{J}\xspace}
\newcommand{\pluseq}{\mathrel{{+}{=}}}
\renewcommand{\Re}{\operatorname{Re}}
\renewcommand{\Im}{\operatorname{Im}}
\DeclareMathOperator{\sgn}{sgn}
\DeclareMathOperator{\argmin}{argmin}
\DeclareMathOperator{\argmax}{argmax}
\DeclareMathOperator{\supp}{supp}
\DeclareMathOperator{\Ai}{Ai}
\DeclareMathOperator{\erf}{erf}
\DeclarePairedDelimiter\norm{\lVert}{\rVert}
\def\Xint#1{\mathchoice
   {\XXint\displaystyle\textstyle{#1}}%
   {\XXint\textstyle\scriptstyle{#1}}%
   {\XXint\scriptstyle\scriptscriptstyle{#1}}%
   {\XXint\scriptscriptstyle\scriptscriptstyle{#1}}%
   \!\int}
\def\XXint#1#2#3{{\setbox0=\hbox{$#1{#2#3}{\int}$}
     \vcenter{\hbox{$#2#3$}}\kern-.5\wd0}}
\def\dashint{\Xint-}
\newcommand{\subfigimg}[3][,]{%
  \setbox1=\hbox{\includegraphics[#1]{#3}}%
  \leavevmode\rlap{\usebox1}%
  \rlap{\hspace*{-3pt}\raisebox{\dimexpr\ht1-0.5\baselineskip}{\textbf{#2}}}%
  \phantom{\usebox1}%
}
\def\l@subsubsection#1#2{}
\begin{document}

\title{Emergent random matrix universality in quantum operator dynamics}

\author{Oliver Lunt}
\affiliation{Clarendon Laboratory, Department of Physics, University of Oxford, United Kingdom}
\affiliation{Department of Physics, King's College London, United Kingdom}

\author{Thomas Kriecherbauer}
\affiliation{Department of Mathematics, Universität Bayreuth, Germany}

\author{Kenneth T-R McLaughlin}
\affiliation{Department of Mathematics, Tulane University, United States}

\author{Curt von Keyserlingk}
\affiliation{Department of Physics, King's College London, United Kingdom}

\date[]{}

\begin{abstract}

The high complexity of many-body quantum dynamics means that essentially all analytical or numerical approaches either exploit special structure or are approximate in nature.
One such approach---the memory function formalism---involves a carefully chosen split into `fast' and `slow' modes.
An approximate model for the fast modes can then be used to solve for Green's functions $G(z)$ of the slow modes, and the success of this approach depends on the accuracy of the fast space approximation. Using a formulation in operator Krylov space known as the recursion method, we prove the emergence of a universal random matrix description of the fast mode dynamics. This is captured by the `level-$n$ Green's function' $G_{n}(z)$, which we show approaches universal scaling forms in the `fast limit' $n\to\infty$. Notably, this emergent universality can occur in both chaotic and non-chaotic systems, provided their spectral functions are suﬃciently smooth.
This universality of $G_{n}(z)$ turns out to be precisely analogous to the universality of eigenvalue correlations in random matrix theory (RMT), even though there is \textit{no explicit randomness} present in the Hamiltonian.
Concretely, at finite $z$ we show that $G_{n}(z)$ approaches the Wigner semicircle law, while if $G(z)$ is the Green's function of certain hydrodynamical variables, we show that at low frequencies $G_{n}(z)$ is instead governed by the Bessel universality class from RMT. 
As an application of this universality, we give a new numerical method, the \textit{spectral bootstrap}, for approximating spectral functions, including hydrodynamic transport data, from a finite number of Lanczos coefficients.
Our proof involves a map to a Riemann-Hilbert problem which we solve using a steepest-descent-type method, rigorously controlled in the $n\to\infty$ limit.
Via the steepest-descent procedure, we are led to a related Coulomb gas optimization problem, and we discuss how a recent conjecture---the `Operator Growth Hypothesis'---implies that chaotic operator dynamics can generically be identified with the critical point of a confinement transition in this Coulomb gas.
These results elevate the recursion method from a useful numerical technique to a theoretically principled framework with universal content.

\end{abstract}

\maketitle

\tableofcontents

\section{Introduction}
Improvements in our understanding of many-body quantum systems can inform the development of new numerical algorithms to simulate them. For example, the realization that gapped 1D ground states have area law entanglement \cite{hastingsAreaLawOnedimensional2007} justifies the great success of variational tensor network algorithms~\cite{whiteDensityMatrixFormulation1992}. Moving beyond ground states, the study of many-body quantum dynamics presents a longstanding challenge.
Due to the rapid growth of entanglement for generic interacting systems~\cite{kimBallisticSpreadingEntanglement2013}, the memory requirements for faithful tensor network descriptions typically grow exponentially in time~\cite{schuchEntropyScalingSimulability2008}. Given this difficult state of affairs, it is worth asking: are there universal features of many-body quantum dynamics, and if so, how can we utilize this universality to design better algorithms?

Recently an array of algorithms have been proposed that apply new insights from quantum information and quantum chaos to the old problem of hydrodynamics~\cite{rakovszkyDissipationassistedOperatorEvolution2022,whiteQuantumDynamicsThermalizing2018,kleinkvorningTimeevolutionLocalInformation2022,artiacoEfficientLargeScaleManyBody2024,whiteEffectiveDissipationRate2023,yi-thomasComparingNumericalMethods2024,parkerUniversalOperatorGrowth2019,prosenMatrixProductSimulations2009,vonkeyserlingkOperatorBackflowClassical2022,srivatsaProbingHydrodynamicCrossovers2024,lloydBallisticDiffusiveCrossover2024}. These algorithms work by discarding information about a time-evolved quantum operator that can only be detected using very non-local probes, like high-order correlation functions. If one is primarily interested in physics described by low-order correlation functions, like linear response hydrodynamics, then such approximations can be worthwhile in order to reduce computational requirements. To develop intuition for why this could work, consider the autocorrelation function $C(t) = (O_{0}|O_{0}(t))$ of a normalized local operator $O_{0}$ evolving in the Heisenberg picture as $O_{0}(t) = e^{i H t} O_{0} e^{-i H t}$, and for simplicity let us work at infinite temperature so that $(A|B) = \tr{A^{\dag} B} / \tr{\mathds{1}}$. The correlation function gives the probability amplitude at time $t$ for the time-evolved operator to return to its starting point---that is, it measures the operator `backflow'. Decomposing $O_{0}(t)$ as a superposition over operator paths, the dominant paths that contribute to $C(t)$ are those that involve only `simple' local operators~\cite{vonkeyserlingkOperatorBackflowClassical2022,nahumRealtimeCorrelatorsChaotic2022}. The intuition is that, at least for ergodic dynamics, once an operator becomes sufficiently non-local, it is very unlikely to shrink down to a small operator again, and will therefore give a negligible contribution to $C(t)$. This suggests that one should be able to neglect such non-local operator histories while incurring only a small error in $C(t)$. Where these algorithms differ is in precisely how they choose to discard or approximate these non-local operators.

\begin{figure*}[t]

    \subfloat{\tikzset{every picture/.style={line width=0.75pt}} %

\begin{tikzpicture}[x=0.75pt,y=0.75pt,yscale=-1,xscale=1]

\draw  [fill={rgb, 255:red, 0; green, 0; blue, 0 }  ,fill opacity=1 ][line width=1.5]  (96,119.91) .. controls (96,116.06) and (99.12,112.95) .. (102.96,112.95) .. controls (106.81,112.95) and (109.93,116.06) .. (109.93,119.91) .. controls (109.93,123.76) and (106.81,126.87) .. (102.96,126.87) .. controls (99.12,126.87) and (96,123.76) .. (96,119.91) -- cycle ;
\draw  [fill={rgb, 255:red, 0; green, 0; blue, 0 }  ,fill opacity=1 ][line width=1.5]  (152.84,119.91) .. controls (152.84,116.06) and (155.96,112.95) .. (159.81,112.95) .. controls (163.65,112.95) and (166.77,116.06) .. (166.77,119.91) .. controls (166.77,123.76) and (163.65,126.87) .. (159.81,126.87) .. controls (155.96,126.87) and (152.84,123.76) .. (152.84,119.91) -- cycle ;
\draw [line width=1.5]    (109.93,119.91) -- (152.84,119.91) ;
\draw  [fill={rgb, 255:red, 0; green, 0; blue, 0 }  ,fill opacity=1 ][line width=1.5]  (210.62,120.05) .. controls (210.62,116.21) and (213.73,113.09) .. (217.58,113.09) .. controls (221.42,113.09) and (224.54,116.21) .. (224.54,120.05) .. controls (224.54,123.9) and (221.42,127.02) .. (217.58,127.02) .. controls (213.73,127.02) and (210.62,123.9) .. (210.62,120.05) -- cycle ;
\draw [line width=1.5]    (168.19,119.91) -- (179.84,120.05) ;
\draw [line width=1.5]    (200.24,119.91) -- (219,120.05) ;
\draw  [fill={rgb, 255:red, 0; green, 0; blue, 0 }  ,fill opacity=1 ][line width=1.5]  (260.49,120.19) .. controls (260.49,116.35) and (263.61,113.23) .. (267.46,113.23) .. controls (271.3,113.23) and (274.42,116.35) .. (274.42,120.19) .. controls (274.42,124.04) and (271.3,127.16) .. (267.46,127.16) .. controls (263.61,127.16) and (260.49,124.04) .. (260.49,120.19) -- cycle ;
\draw [line width=1.5]    (217.58,120.19) -- (260.49,120.19) ;
\draw [line width=1.5]    (274.14,119.91) -- (285.79,120.05) ;
\draw  [color={rgb, 255:red, 255; green, 0; blue, 2 }  ,draw opacity=1 ][dash pattern={on 4.5pt off 4.5pt}][line width=0.75]  (254,110) .. controls (254,104.48) and (258.48,100) .. (264,100) -- (350,100) .. controls (355.52,100) and (360,104.48) .. (360,110) -- (360,140) .. controls (360,145.52) and (355.52,150) .. (350,150) -- (264,150) .. controls (258.48,150) and (254,145.52) .. (254,140) -- cycle ;
\draw  [fill={rgb, 255:red, 255; green, 255; blue, 255 }  ,fill opacity=1 ] (353,126) .. controls (353,121.03) and (330.61,117) .. (303,117) -- (303,108) .. controls (330.61,108) and (353,112.03) .. (353,117) ;\draw  [fill={rgb, 255:red, 255; green, 255; blue, 255 }  ,fill opacity=1 ] (353,117) .. controls (353,121.23) and (336.82,124.77) .. (315,125.74) -- (315,122.74) -- (303,130.5) -- (315,137.74) -- (315,134.74) .. controls (336.82,133.77) and (353,130.23) .. (353,126)(353,117) -- (353,126) ;

\draw (96,130) node [anchor=north west][inner sep=0.75pt]  [font=\footnotesize]  {$O_{0}$};
\draw (153,130) node [anchor=north west][inner sep=0.75pt]  [font=\footnotesize]  {$O_{1}$};
\draw (181,116) node [anchor=north west][inner sep=0.75pt]    {$\cdots $};
\draw (210,130) node [anchor=north west][inner sep=0.75pt]  [font=\footnotesize]  {$O_{n-1}$};
\draw (260,130) node [anchor=north west][inner sep=0.75pt]  [font=\footnotesize]  {$O_{n}$};
\draw (287,116) node [anchor=north west][inner sep=0.75pt]    {$\cdots $};
\draw (287,80) node [anchor=north west][inner sep=0.75pt] [font=\normalsize]   {$G_{n}( z)$};
\draw (125,105) node [anchor=north west][inner sep=0.75pt]  [font=\footnotesize]  {$b_{1}$};
\draw (235,105) node [anchor=north west][inner sep=0.75pt]  [font=\footnotesize]  {$b_{n}$};

\draw [color={rgb, 255:red, 249; green, 56; blue, 56 }  ,draw opacity=1 ][line width=1.5]    (125,165) -- (311,165) ;
\draw [shift={(315,165)}, rotate = 180] [fill={rgb, 255:red, 249; green, 56; blue, 56 }  ,fill opacity=1 ][line width=0.08]  [draw opacity=0] (13.4,-6.43) -- (0,0) -- (13.4,6.44) -- (8.9,0) -- cycle    ;
\draw (82,156) node [anchor=north west][inner sep=0.75pt]  [font=\large,color={rgb, 255:red, 249; green, 56; blue, 56 }  ,opacity=1 ]  {$\textbf{Slow}$};
\draw (320,156) node [anchor=north west][inner sep=0.75pt]  [font=\large,color={rgb, 255:red, 249; green, 56; blue, 56 }  ,opacity=1 ]  {$\textbf{Fast}$};

\draw (60,80) node [anchor=north west][inner sep=0.75pt]   [align=left] {\textbf{{\normalsize(a)}}};

\end{tikzpicture}}\hfill%
    \subfloat{    \tikzset{every picture/.style={line width=0.75pt}} %

    \begin{tikzpicture}[x=0.75pt,y=0.75pt,yscale=-1,xscale=1]

    \draw  [fill={rgb, 255:red, 215; green, 230; blue, 255 }  ,fill opacity=1 ] (110.47,136.86) .. controls (110.47,136.86) and (110.47,136.86) .. (110.47,136.86) .. controls (122.91,136.86) and (133,146.95) .. (133,159.4) -- (110.47,159.4) -- cycle ;
    \draw  [fill={rgb, 255:red, 215; green, 230; blue, 255 }  ,fill opacity=1 ] (225,159.4) .. controls (225,134.69) and (242.73,114.58) .. (264.79,114.37) .. controls (287,114.17) and (305.19,134.21) .. (305.42,159.14) .. controls (305.42,159.22) and (305.42,159.3) .. (305.42,159.37) -- cycle ;
    \draw [color={rgb, 255:red, 150; green, 150; blue, 150 }  ,draw opacity=1 ] [dash pattern={on 4.5pt off 4.5pt}]  (264.44,159.93) -- (289.15,128.36) ;
    \draw [shift={(291,126)}, rotate = 128.06] [fill={rgb, 255:red, 150; green, 150; blue, 150 }  ,fill opacity=1 ][line width=0.08]  [draw opacity=0] (7.14,-3.43) -- (0,0) -- (7.14,3.43) -- cycle    ;
    \draw    (190,160) -- (328,160) ;
    \draw [shift={(330,160)}, rotate = 180] [color={rgb, 255:red, 0; green, 0; blue, 0 }  ][line width=0.75]    (10.93,-3.29) .. controls (6.95,-1.4) and (3.31,-0.3) .. (0,0) .. controls (3.31,0.3) and (6.95,1.4) .. (10.93,3.29)   ;
    \draw    (110,160) -- (110,102) ;
    \draw [shift={(110,100)}, rotate = 90] [color={rgb, 255:red, 0; green, 0; blue, 0 }  ][line width=0.75]    (10.93,-3.29) .. controls (6.95,-1.4) and (3.31,-0.3) .. (0,0) .. controls (3.31,0.3) and (6.95,1.4) .. (10.93,3.29)   ;
    \draw  [fill={rgb, 255:red, 0; green, 0; blue, 0 }  ,fill opacity=1 ] (261.94,159.93) .. controls (261.94,158.55) and (263.05,157.43) .. (264.44,157.43) .. controls (265.82,157.43) and (266.94,158.55) .. (266.94,159.93) .. controls (266.94,161.31) and (265.82,162.43) .. (264.44,162.43) .. controls (263.05,162.43) and (261.94,161.31) .. (261.94,159.93) -- cycle ;
    \draw  [fill={rgb, 255:red, 0; green, 0; blue, 0 }  ,fill opacity=1 ] (107.94,159.93) .. controls (107.94,158.55) and (109.05,157.43) .. (110.44,157.43) .. controls (111.82,157.43) and (112.94,158.55) .. (112.94,159.93) .. controls (112.94,161.31) and (111.82,162.43) .. (110.44,162.43) .. controls (109.05,162.43) and (107.94,161.31) .. (107.94,159.93) -- cycle ;
    \draw    (109.65,159.3) -- (184.22,159.88) ;
    \draw    (181,165) -- (187,155) ;
    \draw    (187,165) -- (193,155) ;
    \draw  [draw opacity=0] (101.81,141.09) .. controls (106.61,141.28) and (111.42,142.62) .. (115.85,145.23) .. controls (117.97,146.48) and (119.89,147.96) .. (121.58,149.6) -- (100.62,171.08) -- cycle ; \draw    (101.81,141.09) .. controls (106.61,141.28) and (111.42,142.62) .. (115.85,145.23) .. controls (117.08,145.96) and (118.24,146.76) .. (119.33,147.62) ; \draw [shift={(121.58,149.6)}, rotate = 216.73] [fill={rgb, 255:red, 0; green, 0; blue, 0 }  ][line width=0.08]  [draw opacity=0] (8.93,-4.29) -- (0,0) -- (8.93,4.29) -- cycle    ; 
    \draw  [draw opacity=0] (313.46,141.64) .. controls (309.08,146.3) and (303.17,149.61) .. (296.36,150.7) .. controls (293.92,151.09) and (291.51,151.17) .. (289.16,150.98) -- (291.62,121.08) -- cycle ; \draw    (313.46,141.64) .. controls (309.08,146.3) and (303.17,149.61) .. (296.36,150.7) .. controls (294.95,150.93) and (293.54,151.05) .. (292.15,151.08) ; \draw [shift={(289.16,150.98)}, rotate = 357.12] [fill={rgb, 255:red, 0; green, 0; blue, 0 }  ][line width=0.08]  [draw opacity=0] (8.93,-4.29) -- (0,0) -- (8.93,4.29) -- cycle    ; 

    \draw (106,164) node [anchor=north west][inner sep=0.75pt]  [font=\footnotesize]  {$0$};
    \draw (94,84) node [anchor=north west][inner sep=0.75pt]  [font=\small]  {$\mathrm{Im} \ z $};
    \draw (332,153) node [anchor=north west][inner sep=0.75pt]  [font=\small]  {$\mathrm{Re} \ z $};
    \draw (257,164) node [anchor=north west][inner sep=0.75pt]  [font=\footnotesize]  {$\beta _{n}$};
    \draw (125,163) node [anchor=north west][inner sep=0.75pt]  [font=\footnotesize]  {$\delta _{0}$};
    \draw (255,127) node [anchor=north west][inner sep=0.75pt]  [font=\footnotesize]  {$\beta _{n} \delta _{1}$};
    \draw (160,137) node [anchor=north west][inner sep=0.75pt]   [align=left] {\textbf{{\small Bulk}}};
    \draw (56,125) node [anchor=north west][inner sep=0.75pt]   [align=left] {\begin{minipage}[lt]{32.3pt}\setlength\topsep{0pt}
    \begin{flushright}
    \textbf{{\small Bessel}}\\\textbf{{\small region}}
    \end{flushright}

    \end{minipage}};
    \draw (305,112) node [anchor=north west][inner sep=0.75pt]   [align=left] {\textbf{{\small Airy}}\\\textbf{{\small region}}};

    \draw (55,80) node [anchor=north west][inner sep=0.75pt]   [align=left] {\textbf{{\normalsize(b)}}};

    \end{tikzpicture}}

    \caption{\textbf{(a)} The semi-infinite 1D operator chain produced by the Lanczos algorithm. The level-$n$ Green's function $G_{n}(z) = (O_{n}|(z-\mathcal{L}_{n})^{-1} |O_{n})$ is the effective Green's function for operator dynamics restricted to operators $O_{n}$ and above, which we call the `fast space' (indicated by the dashed box). A good model for the fast space dynamics encoded in $G_{n}(z)$ can be used to estimate the full Green's function $G(z)$ via the continued fraction in \cref{eq:G_continued_frac}. Intuitively, $G_{n}(z)$ captures the `backflow' (indicated by the curved arrow) from the fast space to the slow space. \textbf{(b)} We prove that $G_{n}(z)$ approaches \textit{universal scaling forms} as $n\to\infty$, with different limits for different regions of the complex $z$-plane. The most prominent example is that $G_{n}(z)$ approaches the Wigner semicircle law in the `bulk' of the spectrum, but there can be different behavior near the origin and the spectral edge. This emergent universality is precisely analogous to the universality of eigenvalue correlations of random matrices, even though there is \textit{no explicit randomness} here. For illustration we show only the first quadrant, with the other quadrants obtained by reflection about the axes. Here $\delta_{0}$ and $\delta_{1}$ are small $\mathcal{O}(1)$ constants in units of the microscopic couplings, and $\beta_{n} \approx 2 b_{n}$ determines the bandwidth of the bulk spectrum.
    }

    \label{fig:plancherel_schematic}

\end{figure*}
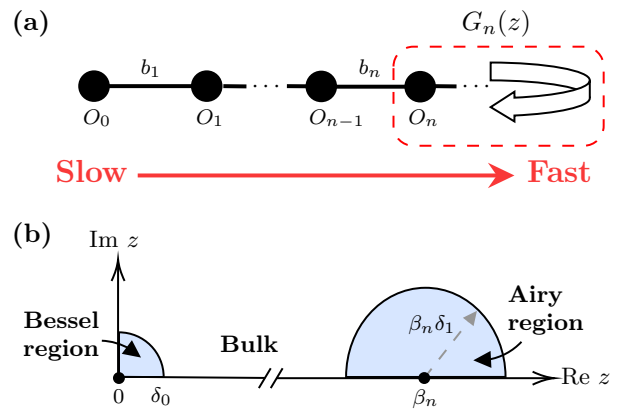

In this paper we give a new approach to systematically approximate the contributions from complicated operator paths. Our approach exploits an emergent random-matrix like universality of quantum operator dynamics. As well as having a practical application in estimating Green's functions, this therefore also yields a new connection between quantum operator dynamics and random matrix theory.
Our results can be phrased in the language of a standard tool from hydrodynamics, the Mori-Zwanzig memory function formalism~\cite{zwanzigMemoryEffectsIrreversible1961,moriContinuedFractionRepresentationTimeCorrelation1965,forsterHydrodynamicFluctuationsBroken2019}. Given a Hamiltonian $H$ and an initial operator $O_{0}$, consider the Green's function
\begin{equation}
    G(z) = \left(O_{0}\left| \dfrac{1}{z-\mathcal{L}}\right|O_{0}\right),
    \label{eq:Gz_intro}
\end{equation}
where $\mathcal{L}(\cdot) = [H, \cdot\,]$ is the Liouvillian, and we are using a vectorized notation for operators, $O_{0} \leftrightarrow  |O_{0})$. At the level of linear response, this Green's function describes the system's behavior under a perturbation generated by $O_{0}$~\cite{forsterHydrodynamicFluctuationsBroken2019}. In order to link theoretical models with experimental measurements, it is therefore of strong interest to be able to compute objects like $G(z)$. However, doing this analytically is usually intractable for generic interacting systems, and even numerically it remains challenging to compute exactly when the system is large or the frequency $z$ is small, so some approximate treatment is required.

The memory function formalism gives a framework in which to approximately calculate $G(z)$ by separating operators into `fast' and `slow' modes, according to some choice (to be discussed below). The internal dynamics of the fast modes is approximated by making some assumptions about correlation functions between fast operators, e.g., that they decay exponentially in time. Having reduced this complicated operator dynamics down to the specification of a few free parameters, like exponential decay rates, this fast mode approximation can then be used to close the equations of motion for the slow modes, allowing for an evaluation of their Green's functions. The effectiveness of this approach depends on:%
\begin{enumerate}[label=\roman*)]
    \item Making a judicious choice of fast and slow modes.
    \item The accuracy of the approximation for the fast mode dynamics.
\end{enumerate}

Given the generality of this framework, there are a number of reasonable schemes one could choose to define the fast and slow modes. One approach is `hydrodynamic', where the slow modes are taken to be those operators that overlap with any conserved quantities related to symmetries of $H$~\cite{forsterHydrodynamicFluctuationsBroken2019}; they are slow in the sense that their correlation functions typically decay algebraically in time, as opposed to superpolynomial decay for non-conserved operators. This is physically well motivated, but requires a degree of physical insight, particularly if the symmetries are not obvious, such as in integrable systems~\cite{ilievskiQuasilocalConservedOperators2015}. 

We will take a different approach, which in modern language can be thought of in terms of `operator complexity', although its origins date back to the 1970s~\cite{haydockElectronicStructureBased1972}. It can be motivated by formally expanding the resolvent $(z-\mathcal{L})^{-1}$ in \cref{eq:Gz_intro}, from which we see that $G(z)$ will contain contributions from operator paths involving operators of the form $\mathcal{L}^{n}|O_{0})$, $n=0,1,\dots$. Generically, the higher the power of $n$, the more `complicated' and nonlocal these operators become. The rough intuition is that, as these operators become more nonlocal, their internal dynamics becomes faster (this will later be made sharp). One then chooses the slow space to consist of those operators that can be generated by acting only $n< n_{\mathrm{max}}$ times with the Liouvillian $\mathcal{L}$, up to some $n_{\mathrm{max}}$ dictated by computational constraints.

To make this precise, one first forms an orthonormal basis of the operator Krylov space spanned by the operators $\{\mathcal{L}^{n}|O_{0})\}_{n=0}^{\infty}$. If $\mathcal{L}$ is self-adjoint, this orthogonalization is called the \textit{Lanczos algorithm}, but it is really just Gram-Schmidt orthogonalization on the Krylov space~\cite{saadNumericalMethodsLarge1992}. This produces two outputs: an orthonormal operator basis $\{O_{n}\}_{n=0}^{\infty}$, called the Lanczos basis, and a set of nonnegative real numbers $\{b_{n}\}_{n=1}^{\infty}$, called the Lanczos coefficients. The Lanczos coefficients have a simple origin: they are just the Gram-Schmidt normalization constants. But they play a privileged role, because it turns out that the matrix representation of the Liouvillian $\mathcal{L}$ restricted to the Lanczos basis is a tridiagonal matrix, where the nonzero matrix elements are given by the Lanczos coefficients: 
\begin{equation}
    \mathcal{L} = \begin{pmatrix}
        0 & b_{1} & 0 & 0 & \cdots \\
        b_{1} & 0 & b_{2} & 0 & \cdots \\
        0 & b_{2} & 0 & b_{3} & \cdots \\
        0 & 0 & b_{3} & 0 & \ddots \\
        \vdots & \vdots & \vdots & \ddots & \ddots
    \end{pmatrix}.
    \label{eq:tridiagonal}
\end{equation}
This matrix can be thought of as a Hamiltonian for an effective 1D hopping model, where the sites of the chain are the basis operators $O_{n}$, and the hopping strengths are given by the Lanczos coefficients (see \cref{fig:plancherel_schematic}a). The operator dynamics can then be visualized as the evolution of a single-particle wavefunction $O_{0}(t)$ on this 1D operator chain. One can associate a notion of operator complexity with this evolution through, say, its average position $\mathcal{K}(t) = \sum_{n=0}^{\infty} n \times |(O_{n}|O_{0}(t))|^{2}$ along the Lanczos chain. This is known as `operator Krylov complexity', as introduced in Ref.\ \cite{parkerUniversalOperatorGrowth2019}, and has seen a large amount of interest as a tractable measure of operator complexity~\cite{rabinoviciKrylovComplexity2025}. Ref.\ \cite{balasubramanianQuantumChaosComplexity2022} proved there is a natural sense in which the Lanczos basis is the `minimal complexity' basis, at least at early times. See Ref.\ \cite{rabinoviciKrylovComplexity2025} for a review of Krylov complexity, and Ref.\ \cite{nandyQuantumDynamicsKrylov2025} for a more general review of quantum dynamics in Krylov space.

Returning to the memory function formalism, having formed this orthonormal operator basis, we take the fast space to be the span of $\{O_{m}\}_{m\geq n}$ for some large $n$, with orthogonal fast space projector $\mathcal{Q}_{n} \coloneqq \mathds{1} - \sum_{m=0}^{n-1} |O_{m})(O_{m}|$. One useful output of this choice is a continued fraction representation for the original Green's function~\cite{viswanathRecursionMethodApplication2013}:
\begin{equation}
    G(z) = \dfrac{1}{z - \dfrac{b_{1}^{2}}{z - \cdots \underset{\cdots - \dfrac{b_{n-1}^{2}}{z - b_{n}^{2} G_{n}(z)}}{}}}
    \label{eq:G_continued_frac}
\end{equation}
The function $G_{n}(z)$ appearing at level-$n$ of the continued fraction is known as the \textit{level-$n$ Green's function}, and is defined as 
\begin{equation}
    G_{n}(z) \coloneqq \left(O_{n} \left| \dfrac{1}{z - \mathcal{L}_{n}} \right| O_{n}\right),
\end{equation}
where $\mathcal{L}_{n} = \mathcal{Q}_{n} \mathcal{L} \mathcal{Q}_{n}$ is the Liouvillian restricted to the fast space, corresponding to the lower right block of the tridiagonal matrix in \cref{eq:tridiagonal}. As such, $G_{n}(z)$ is a Green's function representing the internal dynamics of the fast space. In terms of the 1D hopping model, it describes dynamics restricted to sites $n$ and above, indicated by the boxed region in \cref{fig:plancherel_schematic}(a). The continued fraction representation shows that, if one can accurately describe the fast space dynamics, then one can leverage this to estimate the full many-body Green's function $G(z)$. Choosing such an appropriate `terminator' for $G_{n}(z)$ forms the basis of a numerical approach called the \textit{recursion method}~\cite{viswanathRecursionMethodApplication2013}. 

\subsection{Relation to previous work}

The foundations of the recursion method were developed in the 1970s--1990s~\cite{haydockElectronicStructureBased1972,haydockRecursiveSolutionSchrodinger1980,pettiforRecursionMethodIts1987,viswanathRecursionMethodApplication2013}. Making a good choice for $G_{n}(z)$ is key to the success of the method, and the nature of the continued fraction means this is more nontrivial than, say, simply setting $G_{n}(z) = 0$ like one might truncate a Taylor series (indeed, this is a very poor approximation~\cite{viswanathRecursionMethodApplication2013}). It was quickly recognized that this choice for $G_{n}(z)$ should be dictated by various `high-level' features of the underlying spectral function
\begin{equation}
    \Phi(\omega) \coloneqq \int_{\mathbb{R}} e^{-i \omega t} \big(O_{0}| O_{0}(t)\big) \diff t,
\end{equation}
such as its behavior at high- and low-frequencies, and any gaps in the spectrum. The relevance of $\Phi(\omega)$ is that it turns out to completely determine all the properties of the Lanczos basis (see \cref{sec:synopsis_orth_pols,sec:background}). To some degree, the original practice of the recursion method consisted of finding a toy spectral function with similar high-level features to those expected of $\Phi(\omega)$ on physical grounds, but having analytically computable Lanczos coefficients and Green's function, which could then be used to terminate the continued fraction~\cite{viswanathRecursionMethodApplication2013}. This procedure is not systematic, and depends heavily on the existence of an appropriate solvable toy model, which is not always available. A central goal of our work is to put the recursion method on a more systematic footing, with a particular focus on computing hydrodynamic transport coefficients by estimating $\Phi(\omega)$ at low frequencies.%

The recursion method received renewed attention in the late 2010s during an explosion of interest in operator dynamics and scrambling~\cite{swingleUnscramblingPhysicsOutoftimeorder2018}, prompted by questions related to the black hole information paradox~\cite{haydenBlackHolesMirrors2007,sekinoFastScramblers2008,almheiriEntropyHawkingRadiation2021} and the dynamics of quantum many-body systems~\cite{vonkeyserlingkOperatorHydrodynamicsOTOCs2018,nahumOperatorSpreadingRandom2018}. In particular, Ref.\ \cite{parkerUniversalOperatorGrowth2019} stated the `Operator Growth Hypothesis' (OGH). The OGH conjectures that one characteristic of quantum chaotic systems is that operators grow as fast as possible, limited only by the locality of interactions. Locality places an upper limit on operator growth by bounding the possible number of operator trajectories. The OGH can be formulated precisely as the statement that, in a generic quantum chaotic system, the Lanczos coefficients $b_{n}$ grow \textit{linearly} as $n\to\infty$ (with a logarithmic correction in one dimensional systems). This is the fastest possible growth rate in lattice systems with a local Hamiltonian~\cite{parkerUniversalOperatorGrowth2019}. Under mild assumptions on the spectral function $\Phi(\omega)$, linear growth of $b_{n}$ is implied by $\Phi(\omega)$ decaying exponentially as $|\omega|\to\infty$~\cite{lubinskyProofFreudConjecture1988}; that is, the OGH is a statement about \textit{high-frequency} behavior. Although this upper bound on $b_{n}$ does seem to be saturated in many quantum chaotic systems~\cite{parkerUniversalOperatorGrowth2019}, it turns out that some non-chaotic systems also saturate the upper bound~\cite{caoStatisticalMechanismOperator2021,dymarskyKrylovComplexityConformal2021,bhattacharjeeKrylovComplexitySaddledominated2022}, so satisfying the OGH can be seen only as a necessary but not sufficient requirement for a system to be considered quantum chaotic.

A challenging aspect of the recursion method is that, while high-frequency features are clearly reflected in the gross structure of the Lanczos coefficients, the imprints of low-frequency behavior are reflected in finer structure. Various solvable models have helped decode how different low-frequency phenomena are exhibited in subleading effects, particularly in differences between the Lanczos coefficients for odd and even $n$, referred to as `staggering'. Ref.\ \cite{viswanathRecursionMethodApplication2013} shows some examples where the spectral function $\Phi(\omega)$ has a power-law at low frequencies, $\Phi(\omega\to 0) \sim |\omega|^{\rho}$, and how the value of $\rho$ shows up in the coefficient of a subleading staggered term in the Lanczos coefficients. Other authors have studied the signatures of long-lived zero modes~\cite{yatesDynamicsAlmostStrong2020,yatesLifetimeAlmostStrong2020,yatesLonglivedPerioddoubledEdge2022,tausendpfundAlmostStrongZero2025}, many-body localization~\cite{ballartriguerosKrylovComplexityManybody2022}, and mass gaps in quantum field theories~\cite{avdoshkinKrylovComplexityQuantum2024}.

Motivated by the OGH, Ref.\ \cite{parkerUniversalOperatorGrowth2019} provides an exactly solvable spectral function with asymptotically linear Lanczos coefficients, and uses this model as a terminator for $G_{n}(z)$ in order to estimate the energy diffusion constant of the chaotic mixed field Ising model (see also Refs.\ \cite{wangDiffusionConstantsRecursion2024,fullgrafLanczosPascalApproachCorrelation2025} for similar approaches). This toy spectral function was effectively chosen only to match the high-frequency behavior of $\Phi(\omega)$ through the linear growth of Lanczos coefficients. But the relation between $G(z)$ and $G_{n}(z)$ in the continued fraction in \cref{eq:G_continued_frac} holds pointwise in the frequency $z$, so it is \textit{a priori} unclear why a choice of $G_{n}(z)$ catered to the high-frequency behavior should have been successful at reproducing a low-frequency property like a diffusion constant. We address this question in \cref{sec:gf}.

While these solvable examples are illustrative, there are relatively few general results proving a definitive link between a given low-frequency feature and its signature in Lanczos space. Such results would be valuable for laying solid foundations for new numerical methods. Most existing general results are limited to spectral functions with finite bandwidth~\cite{nevaiOrthogonalPolynomials1979,vanlessenStrongAsymptoticsRecurrence2003}, corresponding to Lanczos coefficients with bounded growth---these results are appropriate for non-interacting systems~\cite{viswanathRecursionMethodApplication2013} and finite-size systems. By contrast, interacting systems have spectral functions with infinite support in the thermodynamic limit~\cite{viswanathRecursionMethodApplication2013}, corresponding to Lanczos coefficients with unbounded growth, and rigorous results~\cite{vanlessenStrongAsymptoticsLaguerreType2007} on these systems are rare and usually disconnected from the recursion method literature. Within this class, results on systems with asymptotically linear Lanczos coefficients---the generic case according to the OGH---are even more technically challenging, because these systems turn out to be `marginal'~\cite{kriecherbauerStrongAsymptoticsPolynomials1999,levinOrthogonalPolynomialsWeights2007} in a sense we will describe in \cref{sec:synopsis_coulomb,sec:synopsis_marginality}, and this marginality makes proofs particularly delicate.

Assuming only a few physically motivated conditions on $\Phi(\omega)$, stated in \cref{sec:potential_definitions}, one of our main results is that, as $n\to\infty$, $G_{n}(z)$ approaches \textit{universal scaling forms}, to be discussed further in \cref{sec:synopsis_emergent_universality,sec:gf}. This universality can then be exploited to estimate the original Green's function $G(z)$. As we will explain in \cref{sec:synopsis_rmt_universality}, the universality of $G_{n}(z)$ turns out to be precisely analogous to the universality of eigenvalue correlations of $n\times n$ random matrices. Importantly, our assumptions on the spectral function include the generic case of linearly growing Lanczos coefficients, and much of the technical work in our proof is targeted at this case. One notable difference from previous approaches based on the recursion method~\cite{parkerLocalMatrixProduct2020,wangDiffusionConstantsRecursion2024,fullgrafLanczosPascalApproachCorrelation2025} is that we explicitly consider the role of conservation laws in determining the operator backflow encoded in $G_{n}(z)$. This is particularly relevant for spectral functions $\Phi(\omega)$ that have power-law divergences as $\omega\to 0$, which we show leads to a new `Bessel universality class' describing the behavior of $G_{n}(z)$ near $z=0$. Not accounting for this modified universality can lead to erroneous estimates of the spectral function at low frequencies, as we discuss in \cref{sec:spectral_bootstrap}. These results thus give a principled theoretical framework for numerical approaches based on the recursion method.  %

The appearance of random matrix theory (RMT) in this context may seem surprising \textit{a priori}, since we did not explicitly impose any randomness in either the Hamiltonian $H$ or the initial operator $O_{0}$. Of course, RMT is also very well studied in the context of quantum chaos~\cite{dalessioQuantumChaosEigenstate2016}, where it is conjectured that quantum chaotic systems have energy eigenvalue statistics that can be described in terms of random matrix ensembles. A natural question to ask is whether the RMT universality we discuss here is related to the appearance of RMT in quantum chaos, and we delineate the differences in \cref{sec:synopsis_eth}. Connections between the Lanczos algorithm and random matrix theory were also discussed in Ref.~\cite{balasubramanianTridiagonalizingRandomMatrices2023}, but not with a focus on universality akin to that of eigenvalue correlations, as we discuss here; instead they focused on the resulting Lanczos coefficients when $H$ itself is a random matrix.%

At a higher level, the recursion method is one of a number of approaches that seek to simplify quantum operator dynamics by somehow approximating the contributions from nonlocal operators. Some approaches, including dissipation-assisted operator evolution (DAOE)~\cite{rakovszkyDissipationassistedOperatorEvolution2022,vonkeyserlingkOperatorBackflowClassical2022,srivatsaOperatorGrowthHypothesis2024,lloydBallisticDiffusiveCrossover2024} and density matrix truncation (DMT)~\cite{yi-thomasComparingNumericalMethods2024}, suppress nonlocal operators based on a fairly explicit notion of weight or size. The recursion method is distinct from these approaches in that it seeks to approximate, rather than explicitly discard, the contributions from nonlocal operators $O_{n\gg 1}$. In this sense it has parallels with local information time evolution (LITE)~\cite{kleinkvorningTimeevolutionLocalInformation2022,artiacoEfficientLargeScaleManyBody2024}, but there are also moral similarities with DAOE and DMT since the common assumption is that the contributions from nonlocal operators should be `small' in some sense. One advantage of the Lanczos setting is that it has enough mathematical structure that one can really prove rigorous results about the `operator backflow' that is the subject of these approximations. Furthermore, we view the recursion method not as a direct competitor to those approaches but rather as complementary, and we discuss in \cref{sec:synopsis_future_directions} how one might utilize a `DAOE-enhanced' Lanczos algorithm to improve the rate of convergence with $n$ for estimates of hydrodynamic transport coefficients.

\section{Synopsis}
\label{sec:synopsis}

\subsection{Emergent universality in the $n\to \infty$ limit}
\label{sec:synopsis_emergent_universality}
When the memory function formalism was first developed it was hoped that, if $n$ is sufficiently large, then all non-universal features of the original dynamical problem would be filtered out into the coefficients $\{b_{n}\}$, and then the level-$n$ Green's function $G_{n}(z)$ could be approximated by assuming it is ultra short-range correlated in time, which amounts to setting $G_{n}(z)$ to be constant~\cite{viswanathRecursionMethodApplication2013,forsterHydrodynamicFluctuationsBroken2019}. In certain cases this turns out to be well-founded. As discussed in \cref{sec:gf}, one of our main results is that, as $n \to \infty$, for $z$ in the `bulk' of the spectrum, $G_{n}(z)$ approaches the Wigner semicircle law:
\begin{equation}
    \boxed{
    G_{n}(z) \approx \dfrac{2}{\beta_{n}^{2}} \left(z - \sqrt{z + \beta_{n}} \sqrt{z - \beta_{n}}\right).
    }
\end{equation}
Here $\beta_{n}$ is an effective $n$-dependent frequency bandwidth approximated by $\beta_{n} \approx 2 b_{n}$ as $n \to \infty$, and `bulk' means $|z| > \delta_{0}$ and $|z/\beta_{n} \pm 1| > \delta_{1}$ for some small $\mathcal{O}(1)$ constants $\delta_{0}, \delta_{1}$ in units of the microscopic couplings (see \cref{fig:plancherel_schematic}(b)). That this should be called a semicircle is perhaps clearer from considering the corresponding bulk spectral function,
\begin{equation}
    \Phi_{n}(\omega) = 2 \Im\left[ G_{n}(\omega - i 0^{+})\right] \approx \dfrac{4}{\beta_{n}} \sqrt{1 - (\omega/\beta_{n})^{2}}.
\end{equation}
Note that this semicircle form of $G_{n}(z)$ is identical to the average global resolvent $\frac{1}{n} \overline{\tr{1/(z-M)}}$---which encodes dynamics in random matrix theory---for random $n\times n$ matrices $M$ drawn from the Gaussian Unitary Ensemble~\cite{taoTopicsRandomMatrix2012}, rescaled so that the bulk eigenvalue spectrum corresponds to the interval $[-\beta_{n}, \beta_{n}]$. Thus we see the emergence of random matrix-like universality---\textit{even though there is no explicit randomness present}---in the bulk frequency profile of the `fast space' operator dynamics, i.e., restricted to $\{O_{m}\}_{m\geq n}$. Since the bandwidth scales like $\beta_{n} \approx 2 b_{n}$ to leading order in $n$, if the Lanczos coefficients grow indefinitely, $b_{n} \to \infty$, we conclude that $G_{n}(z) \approx \pm 2 i / \beta_{n} + \mathcal{O}(z/\beta_{n}^{2})$ is approximately constant to leading order in $z/\beta_{n}$, thereby justifying the assumption of short-range time correlations as $n \to \infty$. This also provides a precise sense in which the large-$n$ dynamics are `fast', since $G_{n}(z) = \mathcal{O}(1/\beta_{n})$ tends to zero as $n\to\infty$.

For spectral functions $\Phi(\omega)$ which are complex analytic at $\omega = 0$, corresponding to autocorrelation functions $C(t)$ which decay exponentially in time, we further show that this semicircle behavior persists all the way down to zero frequency. We conjecture that this remains true even if $C(t)$ decays algebraically, provided it is faster than $1/t$---but that there is a slower decay of the finite-$n$ correction to the semicircle law in this case. This conjecture is based on similar asymptotic analysis in prior work~\cite{mclaughlinSteepestDescentMethod2006}, which demonstrated how the assumed degree of differentiability affects the scaling of the error term, not the leading term.

We note that the appearance of the Wigner semicircle law in the level-$n$ Green's function $G_{n}(z)$ was recognized by Ref.~\cite{karRandomMatrixTheory2022a}, but only in the context of a \textit{finite}-dimensional system where $n$ is so large that the Lanczos coefficients have plateaued (typically this happens once $n$ is of order the system size), so they can utilize rigorous results from Ref.~\cite{vanasscheOrthogonalPolynomialsAssociated1991}. Our analysis is instead focused on interacting systems in the thermodynamic limit, where the Lanczos coefficients will generically grow indefinitely. We show that it is the $n\to\infty$ limit (i.e., not the plateau itself), which results in emergent random matrix universality. Besides the interpretational difference, this is relevant for numerical applications to interacting systems, where one will generically not see a plateau without incurring finite-size effects.

When the autocorrelation function $C(t)$ decays as a power-law slower than $1/t$, this results in power-law behavior of $G_{n}(z)$ near $z = 0$, so that the semicircle form is no longer appropriate. Such power-law decay in $C(t)$ is typical if the initial operator overlaps with a conserved quantity with sufficiently slow transport, say diffusive in 1D~\cite{forsterHydrodynamicFluctuationsBroken2019}. With hydrodynamics in mind, one of our principal goals will be to characterize how signatures of the long-time behavior of the autocorrelation function $C(t)$ imprint themselves on the Lanczos basis, and in turn how this affects the operator backflow encoded in the Green's functions $G_{n}(z)$. When the spectral function behaves like a power-law at low frequencies, we show that this modifies the behavior of $G_{n}(z)$ for $z \to 0$, such that the semicircle form breaks down, and $G_{n}(z)$ instead has an explicit expression in terms of Bessel functions (see \cref{sec:gf_bessel}). This is a reflection of `Bessel universality' that governs the low frequency behavior (see \cref{fig:plancherel_schematic}). With these fingerprints of low-frequency behavior in hand, we show how to approximate low-frequency data like hydrodynamic transport coefficients, using only a finite number of Lanczos coefficients as input (see \cref{sec:hydro}). We benchmark this approach on a range of physical models, including the mixed field Ising model and the XXZ spin chain, finding results that are competitive with tensor network methods.

Finally, for frequencies near the edge of the spectrum $z = \pm \beta_{n}$, there is a third `Airy' universality class (see \cref{fig:plancherel_schematic}b), so-called because $G_{n}(z)$ approaches a universal scaling form in terms of Airy functions (see \cref{sec:airy_gn}). This is analogous to the modified behavior of eigenvalue correlations near the edge of the spectrum of a random matrix~\cite{kuijlaarsUniversality2011}. In \cref{sec:airy_bootstrap} we also show how to utilize this Airy universality to estimate spectral functions at very high frequencies.

\subsection{Orthogonal polynomials and large-$n$ expansions}
\label{sec:synopsis_orth_pols}
Our proof makes use of some powerful machinery from complex analysis, developed in the study of orthogonal polynomials~\cite{deiftSteepestDescentMethod1993,deiftStrongAsymptoticsOrthogonal1999,deiftRiemannHilbertApproach2001,deiftUniformAsymptoticsPolynomials1999,deiftOrthogonalPolynomialsRandom2000,kuijlaarsRiemannHilbertAnalysisOrthogonal2003,kuijlaarsUniversalityEigenvalueCorrelations2003}. This is relevant because the Lanczos algorithm can be naturally phrased in the language of orthogonal polynomials (see \cref{sec:background}). Indeed, the $n$th Lanczos basis operator $|O_{n})$ is given by
\begin{equation}
    |O_{n}) = p_{n}(\mathcal{L}) |O_{0}),
    \label{eq:lanczos_poly_formulation}
\end{equation}
where the polynomials $p_{n}(\omega)$ are orthonormal with respect to the spectral function, i.e.,
\begin{equation}
    \int_{\mathbb{R}} p_{m}(\omega) p_{n}(\omega) \dfrac{\Phi(\omega)}{2\pi} \diff \omega = \delta_{m n}.
    \label{eq:poly_orthogonality}
\end{equation}
One can similarly write an expression for the level-$n$ Green's function $G_{n}(z)$ in terms of these orthogonal polynomials (see \cref{eq:gf_cauchy_ratio}). The complex analytic machinery works by devising a Riemann-Hilbert problem whose solution encodes essential data about these orthogonal polynomials with respect to $\Phi(\omega)$, which in turn gives us information about the Lanczos basis (see \cref{sec:rhp} for details). The advantage of this formulation is that it permits approximate solutions, controlled in the limit $n \to \infty$, even without knowing the explicit form of the spectral function $\Phi(\omega)$. Instead one only needs to know some `high level' features of $\Phi(\omega)$, like its rate of decay at high frequencies, and whether it has a power-law at low frequencies. Then one can use a technique similar to steepest descent to obtain a $1/n$ expansion of the polynomials $p_{n}(\omega)$ (\cref{sec:hydro,sec:spectral_bootstrap}), the recurrence coefficients $b_{n}$ (\cref{thm:recurrence_theorem}), the level-$n$ Green's function $G_{n}(z)$ (\cref{sec:gf}), and other related quantities. We remark that the $n \to \infty$ limit is quite natural, in the sense that we expect the operator Krylov space $\mathcal{K}=\mathrm{span}\{\mathcal{L}^{n}O_{0}\}_{n\geq 0}$ to be infinite dimensional in the thermodynamic limit, for generic local operators $O_{0}$ and generic Hamiltonians. This provides one with the means to perform a large-$n$ expansion of the operator dynamics, even when there is no explicit large parameter in the Hamiltonian.

Since this steepest descent approach involves deformation of contours into the complex plane, our proof technique requires us to make some assumptions about the complex analytic structure of the spectral function $\Phi(\omega)$ near the real frequency axis; this is the main technical caveat to our work, and we discuss it in more detail in \cref{sec:potential_definitions,sec:analyticity}. However, numerical evidence suggests that some of our results may remain valid, at least to leading order in $n$, assuming only weaker conditions than analyticity, such as differentiability of $\Phi(\omega)$. Indeed, a recent breakthrough result in random matrix theory proved universality assuming only a very mild local continuity condition~\cite{eichingerNecessarySufficientConditions2024}, and it would be interesting to see if these techniques could be adapted for our purposes. Finally, we emphasize that the Riemann-Hilbert formulation is used only for the proof, and readers only interested in practical applications do not need to make use of it.

\subsection{Lanczos dynamics and random matrix universality}
\label{sec:synopsis_rmt_universality}
As previously discussed, one of our main results is that, as $n \to \infty$, the level-$n$ Green's function $G_{n}(z)$ approaches universal scaling forms, the most prominent example being the Wigner semicircle law for $z$ in the bulk of the spectrum. In what follows, we will discuss how the universality of $G_{n}(z)$ and related quantities is analogous, in a precise sense, to the universality of eigenvalue correlations of random matrices. 
Mathematically, this shared universality can be understood from the fact that both Lanczos operator dynamics and random matrix theory can be formulated in the language of orthogonal polynomials. These polynomials themselves exhibit universal scaling forms in the $n\to\infty$ limit, which then implies universality in both Lanczos operator dynamics and random matrix theory. While these formulations in terms of orthogonal polynomials are well-known to practioners of both fields, it is worth spelling out how they work in some detail.

In the Lanczos context, we already saw the connection with orthogonal polynomials in \cref{eq:lanczos_poly_formulation,eq:poly_orthogonality}: the $n$th Lanczos operator $O_{n}$ can be written in terms of the $n$th orthogonal polynomial $p_{n}(\omega)$ with respect to the spectral function $\Phi(\omega) = \int_{\mathbb{R}} e^{-i \omega t} C(t) \diff t$. To get from there to the level-$n$ Green's function, one can employ the useful identity~\cite{vanasscheOrthogonalPolynomialsAssociated1991}
\begin{align}
    G_{n}(z) &= \dfrac{1}{b_{n}} \dfrac{C_{n}(z)}{C_{n-1}(z)}, \label{eq:gf_cauchy_ratio} \\[1em]
    \text{where } C_{n}(z) &\equiv \int_{\mathbb{R}} \dfrac{p_{n}(\omega)}{z-\omega} \dfrac{\Phi(\omega)}{2\pi} \diff \omega. \nonumber
\end{align}
Unfortunately, $G_{n}(z)$ does not, in general, have an interpretation as a simple low-order correlation function of a random matrix ensemble. In that sense, the result for the bulk of the spectrum is a special case---albeit one that applies almost everywhere in the complex $z$ plane---where $G_{n}(z)$ approaches the Wigner semicircle law, and can then indeed be identified with a simple RMT correlation function. But nonetheless, even away from the bulk, $G_{n}(z)$ does still approach universal scaling forms, which can be understood in terms of universality of the orthogonal polynomials $p_{n}(\omega)$ in the $n\to\infty$ limit.

In the random matrix theory context, orthogonal polynomials provide a useful tool for computing eigenvalue correlation functions~\cite{mehtaRandomMatrices2004}. Consider an ensemble of $n \times n$ random Hermitian matrices $M$ with probability density $P(M) \propto \exp(-\mathrm{tr}[Q(M)])$, for some function $Q$ called the \textit{potential}. The Gaussian unitary ensemble (GUE) corresponds to $Q(x) = x^{2}$, but it can be useful to consider more general functions $Q$ to introduce correlations between matrix elements. One can instead consider the distribution on the eigensystem of $M$ rather than its matrix elements; for general $Q$, integrating out the eigenvectors produces a distribution on the eigenvalues given by
\begin{equation}
    P_{Q}(\lambda_{1},\dots,\lambda_{n}) \propto \left(\prod_{i<j} |\lambda_{i}-\lambda_{j}|^{2}\right) \prod_{i} e^{-Q(\lambda_{i})},
    \label{eq:rmt_eigenvalue_distribution}
\end{equation}
resulting in the familiar Vandermonde determinant $\prod_{i<j} |\lambda_{i}-\lambda_{j}|^{2}$ responsible for eigenvalue repulsion. It is then natural to consider the $k$-point eigenvalue correlation function $R_{k}(\lambda_{1},\dots,\lambda_{k})$ defined by
\begin{align}
    &R_{k}(\lambda_{1}, \dots, \lambda_{k}) =\\
    &\dfrac{n!}{(n-k)!} \int_{\mathbb{R}} \cdots \int_{\mathbb{R}} P_{Q}(\lambda_{1}, \dots, \lambda_{k}, \lambda_{k+1}, \dots, \lambda_{n}) \diff \lambda_{k+1} \cdots \diff \lambda_{n}. \nonumber
\end{align}
The key fact linking RMT and orthogonal polynomials is that $R_{k}$ can be expressed as a $k\times k$ matrix determinant,
\begin{equation}
    R_{k}(\lambda_{1}, \dots, \lambda_{k}) = \mathrm{det}\left[\left(\hat{K}_{n}(\lambda_{i}, \lambda_{j})\right)_{1 \leq i,j \leq k}\right],
    \label{eq:rmt_determinant}
\end{equation}
where each matrix element is given by evaluating a 2-point correlation kernel $\hat{K}_{n}(\lambda_{i},\lambda_{j})$. The kernel is defined by
\begin{equation}
    \hat{K}_{n}(\lambda_{i}, \lambda_{j}) = \sqrt{e^{-Q(\lambda_{i})} e^{-Q(\lambda_{j})}} \sum_{m=0}^{n-1} p_{m}(\lambda_{i}) p_{m}(\lambda_{j}),
    \label{eq:rmt_kernel_polynomials}
\end{equation}
where the $p_{m}$ are orthogonal polynomials satisfying
\begin{equation}
    \int_{\mathbb{R}} p_{k}(\lambda) p_{l}(\lambda) e^{-Q(\lambda)} \diff \lambda = \delta_{kl}.
    \label{eq:rmt_poly_orthogonality}
\end{equation}
Thus we see that eigenvalue statistics of random matrices can be described in terms of orthogonal polynomials with respect to $e^{-Q(\lambda)}$, the potential defining the random matrix ensemble. One of the deepest facts about random matrices is that their eigenvalue statistics can be \textit{universal}, when probed on a local scale~\cite{kuijlaarsUniversality2011}. Here `local scale' means that we consider correlations over $\mathcal{O}(1)$ separations in units of the inverse local eigenvalue density, and `universal' means they are independent of the exact eigenvalue probability distribution. Given the formulation in terms of orthogonal polynomials in \cref{eq:rmt_poly_orthogonality}, one can derive RMT universality from universality of the orthogonal polynomials themselves.

To relate Lanczos operator dynamics to random matrix theory, let us summarize the discussion so far: the Lanczos basis operators $O_{n}$ are related to orthogonal polynomials with respect to $\Phi(\omega) / 2\pi$ (see \cref{eq:poly_orthogonality}), while RMT eigenvalue statistics are related to orthogonal polynomials with respect to $e^{-Q(\lambda)}$ (see \cref{eq:rmt_poly_orthogonality}). Therefore, under the identification of weight functions:
\begin{equation}
    \boxed{
    \dfrac{\Phi(\omega)}{2\pi} \equiv e^{-Q(\omega)},
    }
    \label{eq:Phi_Q_identification}
\end{equation}
we see that there is a connection between:
\begin{equation*}
    \boxed{
    n\textit{th Lanczos operator } O_{n} \leftrightsquigarrow n\times n \textit{ random matrices}
    }
\end{equation*}
Much as one can derive RMT universality from universality of the orthogonal polynomials defining the matrix ensemble, one can derive universality of the level-$n$ Green's function $G_{n}(z)$ and related quantities from universality of the orthogonal polynomials with respect to the spectral function $\Phi(\omega)$. 
Universal asymptotics of orthogonal polynomials is a much older phenomenon than RMT universality, going back to work in the 1920s of Plancherel and Rotach on $n\to\infty$ asymptotics for the Hermite polynomials ($Q(x) = x^{2}$)~\cite{plancherelValeursAsymptotiquesPolynomes1929}. One of our technical contributions is proving so-called `Plancherel-Rotach asymptotics' of the orthogonal polynomials for a large class of spectral functions obeying physically-motivated conditions (described in \cref{sec:potential_definitions}). From there, one is led to universality for the level-$n$ Green's function.

\subsection{Coulomb gas confinement transition, quantum chaos, and the operator growth hypothesis}
\label{sec:synopsis_coulomb}
Stemming from the foundational work of Dyson~\cite{dysonStatisticalTheoryEnergy1962}, a key concept in random matrix theory (RMT) is the Coulomb gas~\cite{forresterLogGasesRandomMatrices2010}. To give context for our result about a Coulomb gas confinement transition, let us briefly illustrate where this comes from, both in random matrix theory and in Lanczos operator dynamics.

To compute eigenvalue correlation functions in RMT, one needs to consider integrals with respect to the eigenvalue probability distribution $P_{Q}(\lambda_{1},\dots,\lambda_{n})$. For reasons that will soon become apparent, it turns out to be important to rescale by an $n$-dependent factor, $\lambda_{i} \to \beta_{n} x_{i}$, where the scale factor $\beta_{n}$ is determined by $Q$ and the $x_{i}$ are $\mathcal{O}(1)$; for $Q(\lambda) = \lambda^{2}$ we have $\beta_{n} = \sqrt{2n}$. Then, defining the rescaled potential $V_{n}(x) \equiv Q(\beta_{n}x)/n$, one can rewrite \cref{eq:rmt_eigenvalue_distribution} as
\begin{equation}
    P_{Q}(\lambda_{1},\dots,\lambda_{n}) \propto e^{-\left[\sum_{i\neq j} \log{|x_{i}-x_{j}|^{-1}} + n \sum_{i} V_{n}(x_{i})\right]},
\end{equation} 
and so one expects the leading contributions to come from tuples $\bm{x} = (x_{1},\dots,x_{n})$ for which the exponent is minimal. For any such tuple $\bm{x}$, if we introduce the normalized counting measure $\diff \psi_{\bm{x}}(y) = \frac{1}{n} \sum_{i=1}^{n} \delta(y-x_{i}) \diff y$, then we are led to consider the Coulomb gas energy functional
\begin{equation}
    E_{V_{n}}[\psi] = \int_{\mathbb{R}} \int_{\mathbb{R}} \log{|x-y|^{-1}} \diff \psi(x) \diff \psi(y) + \int_{\mathbb{R}} V_{n}(x) \diff \psi(x).
\end{equation}
In the $n\to\infty$ limit, eigenvalue correlation functions will be dominated by configurations $\psi$ which minimize this Coulomb gas energy. The rescaling $\lambda_{i} \to \beta_{n}x_{i}$ served the purpose of ensuring that the two energy terms are of comparable magnitude as $n\to\infty$. In general, this minimal configuration gives the \textit{mean eigenvalue density profile} for the random matrix ensemble defined by the potential $Q$. For $Q(\lambda) = \lambda^{2}$, the minimizing configuration is the Wigner semicircle law.

Remarkably, the same Coulomb gas ensemble also plays a central role in the $n\to\infty$ asymptotics of orthogonal polynomials~\cite{deiftOrthogonalPolynomialsRandom2000}. The easiest way to see this is through the following explicit formula for the orthogonal polynomials $p_{n}(\omega)$ with respect to the weight function $e^{-Q(\omega)}$:
\begin{equation}
    p_{n}(\omega) \propto \int \cdots \int \left( \prod_{i=1}^{n}(\omega-\omega_{i})\right) P_{Q}(\omega_{1},\dots,\omega_{n}) \diff \omega_{1} \cdots \diff \omega_{n},
\end{equation}
where the probability density $P_{Q}(\omega_{1},\dots,\omega_{n})$ was defined in \cref{eq:rmt_poly_orthogonality}. In other words, $p_{n}(\omega)$ can be expressed as an average over all degree-$n$ polynomials, weighted by the \textit{same} probability measure as that which governed the eigenvalue distribution of the random matrix ensemble defined by the potential $Q$.
In this context, the minimal energy configuration of the charge-$n$ Coulomb gas---which we call $\sigma_{n}(\omega)$---gives the \textit{limiting density distribution of zeros} of the orthogonal polynomials with respect to $e^{-Q(\omega)}$. 
In our Riemann-Hilbert steepest descent analysis (\cref{sec:rhp}), the support of this minimal energy configuration is analogous to the `oscillatory region' of a WKB approximation, and yields the dominant contribution to the steepest descent problem as $n\to\infty$.

With the relevance of the Coulomb gas established, we can now discuss its confinement transition and the connection to quantum chaos.
For systems with local interactions, one can show that the spectral function must decay at least exponentially as $|\omega|\to\infty$~\cite{abaninExponentiallySlowHeating2015,aradConnectingGlobalLocal2016,bruLiebRobinsonBoundsMultiCommutators2017}, which translates, via \cref{eq:Phi_Q_identification}, into the requirement that the potential $Q(\omega)$ grow at least linearly at large $\omega$. This is enough to conclude that the Coulomb gas is always confined, in the sense that, for any finite $n$, the equilibrium Coulomb gas density $\sigma_{n}(\omega)$ has finite support. However, within the confined phase, there can be a further phase transition between `weak confinement' and `strong confinement' (see \cref{fig:confinement_schematic})~\cite{canaliNonuniversalityRandommatrixEnsembles1995,freilikherUnitaryRandommatrixEnsemble1996,freilikherTheoryRandomMatrices1996,claeysWeakStrongConfinement2023}. These phases are distinguished by whether the Coulomb gas density is approximately uniform or not within the bulk of its support.

Importantly, the order parameter for this transition is the equilibrium density at \textit{low frequencies}, $\sigma_{n}(\omega \approx 0)$. In the weakly confined phase, $\sigma_{n}(\omega)$ has an algebraic divergence as $\omega \to 0$, while in the strongly confined phase it is approximately constant there; at the critical point, there is instead a logarithmic divergence as $\omega \to 0$ (see \cref{fig:confinement_example} for an example). What is interesting is that, despite being diagnosed by the density near $\omega =0$, the transition itself is primarily driven by the rate of \textit{high frequency} decay of the spectral function. In particular, the weakly confined phase, the critical point, and the strongly confined phase, correspond to $\Phi(\omega\to \infty)$ decaying subexponentially, exponentially, and superexponentially respectively (\cref{fig:confinement_schematic}). This critical exponential decay---the slowest possible decay consistent with locality~\cite{abaninExponentiallySlowHeating2015,aradConnectingGlobalLocal2016,bruLiebRobinsonBoundsMultiCommutators2017}---coincides with the behavior that is expected to occur \textit{generically} for chaotic systems with local interactions; this is the content of a recent conjecture in quantum chaos, the `Operator Growth Hypothesis' (OGH)~\cite{parkerUniversalOperatorGrowth2019}. Thus, assuming the OGH, there is a sense in which chaotic systems are generically marginal. Since this confinement transition is related to the density of zeros of the orthogonal polynomials $p_{n}(\omega)$, which in turn govern the Lanczos operators $O_{n}$ via \cref{eq:lanczos_poly_formulation}, this marginality has implications for the computational efficiency of numerical applications, as we will discuss. One of our contributions (\cref{lem:hn0_scaling}) is showing that this confinement phase structure is robust, in the sense that we prove it occurs for a wide class of spectral functions assuming only a fairly weak specification of how they decay at high frequencies (see \cref{sec:potential_definitions}).

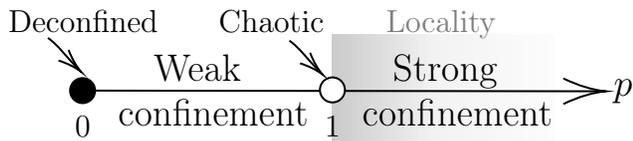
\begin{figure}[t]
    \tikzset {_vz9ade1hz/.code = {\pgfsetadditionalshadetransform{ \pgftransformshift{\pgfpoint{0 bp } { 0 bp }  }  \pgftransformrotate{0 }  \pgftransformscale{2 }  }}}
\pgfdeclarehorizontalshading{_wexlnkruz}{150bp}{rgb(0bp)=(0.8,0.8,0.8);
rgb(37.5bp)=(0.8,0.8,0.8);
rgb(48.5bp)=(0.95,0.95,0.95);
rgb(62.5bp)=(0.99,0.99,0.99);
rgb(100bp)=(0.99,0.99,0.99)}
\tikzset{every picture/.style={line width=0.75pt}} %

\begin{tikzpicture}[x=0.75pt,y=0.75pt,yscale=-1,xscale=1,scale=1.4]

\draw  [draw opacity=0][shading=_wexlnkruz,_vz9ade1hz] (170.25,229.75) -- (250.25,229.75) -- (250.25,269.75) -- (170.25,269.75) -- cycle ;
\draw    (80.25,250) -- (266.75,250.25) ;
\draw [shift={(268.75,250.25)}, rotate = 180.08] [color={rgb, 255:red, 0; green, 0; blue, 0 }  ][line width=0.75]    (15.3,-4.61) .. controls (9.73,-1.96) and (4.63,-0.42) .. (0,0) .. controls (4.63,0.42) and (9.73,1.96) .. (15.3,4.61)   ;
\draw  [fill={rgb, 255:red, 0; green, 0; blue, 0 }  ,fill opacity=1 ] (75.75,250) .. controls (75.75,247.51) and (77.76,245.5) .. (80.25,245.5) .. controls (82.74,245.5) and (84.75,247.51) .. (84.75,250) .. controls (84.75,252.49) and (82.74,254.5) .. (80.25,254.5) .. controls (77.76,254.5) and (75.75,252.49) .. (75.75,250) -- cycle ;
\draw  [fill={rgb, 255:red, 255; green, 255; blue, 255 }  ,fill opacity=1 ] (165.75,249.75) .. controls (165.75,247.26) and (167.76,245.25) .. (170.25,245.25) .. controls (172.74,245.25) and (174.75,247.26) .. (174.75,249.75) .. controls (174.75,252.24) and (172.74,254.25) .. (170.25,254.25) .. controls (167.76,254.25) and (165.75,252.24) .. (165.75,249.75) -- cycle ;
\draw    (67.75,231) .. controls (71.33,233.46) and (74.51,237.93) .. (78.36,241.88) ;
\draw [shift={(79.75,243.25)}, rotate = 223.36] [color={rgb, 255:red, 0; green, 0; blue, 0 }  ][line width=0.75]    (8.74,-2.63) .. controls (5.56,-1.12) and (2.65,-0.24) .. (0,0) .. controls (2.65,0.24) and (5.56,1.12) .. (8.74,2.63)   ;
\draw    (154.75,232) .. controls (158.33,234.46) and (161.51,238.93) .. (165.36,242.88) ;
\draw [shift={(166.75,244.25)}, rotate = 223.36] [color={rgb, 255:red, 0; green, 0; blue, 0 }  ][line width=0.75]    (8.74,-2.63) .. controls (5.56,-1.12) and (2.65,-0.24) .. (0,0) .. controls (2.65,0.24) and (5.56,1.12) .. (8.74,2.63)   ;

\draw (270.5,245) node [anchor=north west][inner sep=0.75pt]  [font=\Large]  {$p$};
\draw (76.5,258) node [anchor=north west][inner sep=0.75pt]  [font=\large]  {$0$};
\draw (166.5,258) node [anchor=north west][inner sep=0.75pt]  [font=\large]  {$1$};
\draw (52.5,220) node [anchor=north west][inner sep=0.75pt]   [align=center,font=\large] {{{Deconfined}}};
\draw (188.5,220) node [anchor=north west][inner sep=0.75pt]   [align=center,font=\large] {{{\textcolor[rgb]{0.49,0.49,0.49}{Locality}}}};
\draw (128.25,220) node [anchor=north west][inner sep=0.75pt]   [align=center,font=\large] {{{Chaotic}}};
\draw (104.75,236) node [anchor=north west][inner sep=0.75pt]   [align=left,font=\Large] {{{Weak}}};
\draw (191,236) node [anchor=north west][inner sep=0.75pt]   [align=center,font=\Large] {{{Strong}}};
\draw (92,252) node [anchor=north west][inner sep=0.75pt]   [align=center,font=\Large] {{{confinement}}};
\draw (180,252) node [anchor=north west][inner sep=0.75pt]   [align=center,font=\Large] {{{confinement}}};

\end{tikzpicture}
    \caption{Phase diagram of Coulomb gas confinement with a single-particle potential $Q(\omega)$, controlled by the large frequency decay of the spectral function $\Phi(\omega) \sim \exp[-Q(\omega)]$. Locality of the Hamiltonian forces the potential to grow polynomially at large $\omega$, $Q(|\omega|\to\infty) \sim |\omega|^{p}$, with $p \geq 1$. The operator growth hypothesis \cite{parkerUniversalOperatorGrowth2019} posits that chaotic systems generically have $p=1$. A Coulomb gas with finite charge (i.e.~finite $n$) is confined for all $p > 0$ since the two-particle repulsion is only logarithmic, while the potential is polynomial. However, there is a transition between `weak' and `strong' confinement at $p=1$, precisely at the boundary imposed by locality. This has implications for numerical applications.}
    \label{fig:confinement_schematic}
\end{figure}

\subsection{Comparison with eigenstate thermalization}
\label{sec:synopsis_eth}
The main technical caveat to our work is that we also impose some regularity and analyticity conditions on $\Phi(\omega)$ in order to prove rigorous statements; we discuss this in detail in \cref{sec:potential_definitions}. But with these conditions in mind, let us discuss the relationship between the appearance of random matrix theory (RMT) in our work, and more familiar appearances of RMT in quantum many-body physics, namely through the eigenstate thermalization hypothesis (ETH)~\cite{srednickiChaosQuantumThermalization1994,dalessioQuantumChaosEigenstate2016}. The ETH postulates that the matrix elements of a local operator $O$ in the energy eigenbasis of a chaotic local Hamiltonian take the form
\begin{equation}
    \langle E_{m} | O | E_{n} \rangle = O(\overline{E}) \delta_{mn} + e^{-S(\overline{E})/2} f_{O}(\overline{E}, \omega) R_{mn},
\end{equation}
where $\overline{E} \equiv (E_{m} + E_{n})/2$, $\omega \equiv E_{n} - E_{m}$, and $S(\overline{E})$ is the thermodynamic entropy at energy $\overline{E}$. RMT appears through the entropic factor $e^{-S(\overline{E})/2}$ and the random matrix $R$, which is postulated to be distributed according to a Gaussian random matrix ensemble~\cite{dalessioQuantumChaosEigenstate2016}.

An important part of the ETH is that the functions $O(\overline{E})$ and $f_{O}(\overline{E}, \omega)$ are postulated to be \textit{smooth} functions of their arguments. For our purposes, working in the thermodynamic limit, we do assume some smoothness properties of the spectral function $\Phi(\omega)$, which coincides with the function $|f_{O}(E_{\mathrm{av}}, \omega)|^{2}$ at the average energy $E_{\mathrm{av}} = \tr{H} / \tr{\mathds{1}}$ of the infinite temperature state~\cite{dalessioQuantumChaosEigenstate2016}. Therefore, arguably the most natural class of physical systems that one might expect to obey our assumptions about the spectral function are those that obey the ETH.
However, the RMT universality we discuss in our work appears to be distinct in nature from the appearance of random matrix theory in the ETH, since it really owes itself to the \textit{regularity} of spectral functions (and the consequence of that on the Lanczos orthogonal polynomials), which can be achieved even in non-ergodic single-particle and integrable systems.
Indeed, we also find some evidence of this RMT universality in the integrable XXZ chain, where the presence of integrability should violate the ETH (although a weak version may persist~\cite{albaEigenstateThermalizationHypothesis2015}).
Validity of the ETH may therefore be a sufficient but not necessary condition for the RMT universality we discuss here, but a fuller investigation of these connections is an important topic for future work.

We also give some evidence in \cref{sec:smoothness} that smoothness of spectral functions is indeed a necessary condition for this fast space universality. We focus on the 1D transverse field Ising model with quenched disorder, which, owing to the disorder, is provably Anderson localized and has point like spectrum even in the thermodynamic limit~\cite{aizenmanRandomOperators2015}. We perform a self-consistency check for the presence of universality in this model---which was passed by the clean version---and find that this check fails in the presence of disorder, suggesting the breakdown of fast space universality.

\subsection{Numerical application: the spectral bootstrap}
\label{sec:synopsis_spectral_bootstrap}
As described in the Introduction, the recursion method is a numerical technique based around the continued fraction representation for the full Green's function $G(z)$; traditionally it requires one to make an educated guess for the functional form of the level-$n$ Green's function $G_{n}(z)$~\cite{viswanathRecursionMethodApplication2013}. Our approach allows us to rigorously derive the functional behavior of $G_{n}(z)$, showing how this changes for $z$ in different regions of the complex plane (see \cref{fig:plancherel_schematic}). This provides a principled means of terminating the continued fraction and thereby approximating the full Green's function $G(z)$.

We show how to use this control afforded by the large-$n$ limit to produce an approximation of the spectral function $\Phi(\omega)$ using a finite number of Lanczos coefficients. Our focus is particularly on the low frequency behavior, and we show how to extract diffusion constants, as well as transport coefficients in models with nondiffusive transport, such as the isotropic Heisenberg model which has superdiffusive spin transport~\cite{gopalakrishnanSuperdiffusionNonabelianSymmetries2024}. We also show how to generalize these ideas to finite frequencies. The simplest case is when the semicircle form for the level-$n$ Green's function $G_{n}(z)$ is valid at the target frequency, because then only a single parameter, the frequency bandwidth $\beta_{n}$, needs to be specified in order to fully fix the semicircle Green's function---and $\beta_{n}$ can be simply approximated from the Lanczos coefficients via $\beta_{n} = 2 b_{n}[1 + \mathcal{O}(1/n)]$ (see \cref{thm:recurrence_theorem}). However, there are also more complicated cases, such as when $\Phi(\omega)$ has a low-frequency power-law, where an additional phase, arising from an integral over the Coulomb gas density, needs to be specified in order to approximate $G_{n}(z)$. This is the content of our `spectral bootstrap' algorithm. It works by using the appropriate $n\to\infty$ asymptotics of the orthogonal polynomials $p_{n}(\omega)$ to formulate a first-order ordinary differential equation in frequency space involving the spectral function and the desired phase factor. Iteratively solving this ODE gives a finite-$n$ approximation to the spectral function, using only the first $n$ Lanczos coefficients as input. %

One interesting aspect of this is that the convergence rate of this approximation at $\omega=0$ is determined by a combination of the high- and low-frequency behavior of the spectral function $\Phi(\omega)$, owing to the manifestation of the Coulomb gas confinement transition in the Coulomb gas density at low frequencies. We show that the rate of convergence is $1/\mathrm{poly}(n)$ when the spectral function $\Phi(\omega\to\infty)$ decays superexponentially, regardless of the low-frequency scaling of $\Phi(\omega)$. However, when $\Phi(\omega\to\infty)$ decays only (quasi-)exponentially---which is the case conjectured to be generic for chaotic quantum systems by the Operator Growth Hypothesis~\cite{parkerUniversalOperatorGrowth2019}---then the convergence rate is only $1/\mathrm{poly}(\log{n})$ if $\Phi(\omega\to 0) \sim |\omega|^{\rho}$ with $\rho \neq 0$. We recover $1/\mathrm{poly}(n)$ scaling when $\rho = 0$, but this is likely a consequence of the fact that, for $\rho=0$, our assumptions on $\Phi(\omega)$ amount to assuming analyticity at $\omega = 0$, i.e.~that the correlation function $C(t)$ decays exponentially in time. We conjecture that when $0 < \Phi(0) < \infty$ but $\Phi(\omega)$ is not analytic at $\omega = 0$ due to hydrodynamic tails, then the convergence rate again reduces to $1/\mathrm{poly}(\log{n})$. However, despite this worst-case theoretical convergence, in practice we find that the spectral bootstrap can still often give surprisingly accurate estimates using only a modest number of Lanczos coefficients (see e.g.~\cref{fig:spectral_bootstrap_rho=0}). Alternatively, since the convergence rate of the spectral bootstrap is better for non-chaotic systems than chaotic ones, one can view it as being complementary to other methods like dissipation-assisted operator evolution (DAOE)~\cite{rakovszkyDissipationassistedOperatorEvolution2022,vonkeyserlingkOperatorBackflowClassical2022,srivatsaOperatorGrowthHypothesis2024,lloydBallisticDiffusiveCrossover2024}, whose accuracy is best justified for chaotic systems~\cite{vonkeyserlingkOperatorBackflowClassical2022}.

\subsection{Marginality and quantum chaos}
\label{sec:synopsis_marginality}

One curious aspect of this work concerns the connection between `marginality' and chaotic operator dynamics. In our work the spectral function $\Phi(\omega)$ takes center stage, since it completely characterizes the Lanczos basis. In that role it functions mainly as the density of a measure $\diff \mu(\omega) = (\Phi(\omega)/2\pi) \diff \omega$ on $\mathbb{R}$, where it defines the particular set of orthogonal polynomials constructed by the Lanczos algorithm. Its origin in quantum operator dynamics is left somewhat implicit, entering only through constraints at high- and low-frequencies coming from considerations of locality and hydrodynamics.

Our numerical algorithm, the spectral bootstrap, gives one solution to the `spectral inverse problem' of recovering the measure $\Phi(\omega)/2\pi$ from its moments (which is equivalent to recovery from the Lanczos coefficients \cite{gautschiOrthogonalPolynomialsComputation2004}). Relatedly, the `Hamburger moment problem' asks whether there exists a measure $\mu$ generating a given set of moments, and if so whether this measure is unique \cite{akhiezerClassicalMomentProblem1965}. If it is unique then we say the moment problem for $\mu$ is \textit{determinate}, and if not then it is indeterminate. 
By definition it is then impossible to uniquely recover a measure from its moments if the corresponding moment problem is indeterminate.
Crucially, it is well known that the moment problem is determinate for all measures which decay at least \textit{exponentially} at infinity, meaning $\int_{\mathbb{R}} e^{c |x|} \diff \mu(x) < \infty$ for some $c>0$ \cite{levinOrthogonalPolynomialsWeights2007}. For example, the `Freud weights' $\diff \mu(x) = \exp(-|x|^{p}) \diff x$ are determinate for $p \geq 1$ and indeterminate for $p<1$~\cite{levinOrthogonalPolynomialsWeights2007}.

What seems curious is that this determinate-indeterminate boundary happens to be precisely where the Operator Growth Hypothesis (OGH) conjectures generic spectral functions of locally interacting quantum chaotic systems should fall, in the sense that they should generically have exponentially decaying spectral functions, $\Phi(\omega \to\infty) \sim \exp[-\mathcal{O}(\omega)]$ \cite{parkerUniversalOperatorGrowth2019}. It is fitting, then, that in our spectral bootstrap algorithm we find that the error bounds at $\omega=0$ decay increasingly slowly the slower the high-frequency decay of the spectral function, as discussed in the previous section.
These error bounds control how large $n$ must be to approximate $\Phi(\omega=0)$ to a given precision $\epsilon$, so we see that the inverse problem becomes increasingly difficult as the determinate-indeterminate boundary is approached, before becoming impossible (by definition) in the indeterminate regime.

Thus, according to the OGH, recovery of spectral functions of chaotic systems is \textit{exponentially hard} in $n$ as a function of the inverse precision $1/\epsilon$. From the vantage of the original operator growth problem, it is not so surprising that chaotic systems are hard to simulate; for example, it is known that the operator entanglement entropy generically grows linearly in time for chaotic systems \cite{jonayCoarsegrainedDynamicsOperator2018}, rendering tensor network descriptions challenging. But it is perhaps surprising that, in the context of the Lanczos algorithm, where entanglement entropy plays no obvious role, nonetheless the spectral theory should conspire so as to make this problem hard.

To be clear, in the context of this section, quantum chaos is a \textit{sufficient but not necessary} condition for marginality. More generally the requirement for marginality is (quasi-)linear growth of Lanczos coefficients, and there are examples of non-chaotic systems exhibiting this growth, such as certain operators in many-body localized systems~\cite{caoStatisticalMechanismOperator2021}.

\subsection{Future directions}
\label{sec:synopsis_future_directions}
\textit{Combining the spectral bootstrap with dissipation-assisted approaches}---In \cref{sec:synopsis_spectral_bootstrap} we discussed that, in the Lanczos context, it is actually the high-frequency scaling of the spectral function $\Phi(\omega)$ that determines the computational difficulty of recovering $\Phi(\omega)$ at low frequencies. But it is often the case that one is principally interested in characterizing the low-frequency behavior of a system, say for computing hydrodynamic transport coefficients, and the accuracy of estimates of high-frequency properties are of lesser concern. That motivates studying potential modifications of the system's dynamics which cause $\Phi(\omega)$ to decay superexponentially as $\omega\to\infty$, thereby improving the error bound scaling, while ideally having minimal impact on $\Phi(\omega)$ near $\omega=0$. One possibility is to use operator truncation approaches, such as in dissipaton-assisted operator evolution (DAOE)~\cite{rakovszkyDissipationassistedOperatorEvolution2022} and similar techniques. Provided the modification is such that the effective Liouvillian is still self-adjoint, then the Lanczos approach is valid; anything of the form $\mathcal{L}_{\mathrm{eff}} = \mathcal{D} \mathcal{L} \mathcal{D}$ for some self-adjoint dissipator $\mathcal{D}$ would suffice.

A concrete choice is the DAOE dissipator $\mathcal{D}_{\mathrm{DAOE}}$, which suppresses Pauli strings $\sigma^{\bm{\alpha}}\coloneqq\otimes_{i} \sigma_{i}^{\alpha_{i}}$, $\alpha_{i}\in\{0,1,2,3\}$, according to their Pauli weight $|\bm{\alpha}|\coloneqq |\{i : \alpha_{i} \neq 0\}|$; if $|\bm{\alpha}|$ is above some chosen cutoff length $\ell_{*}$ then the dissipator acts as $\mathcal{D}_{\mathrm{DAOE}}[\sigma^{\bm{\alpha}}] = e^{-\gamma(|\bm{\alpha}|-\ell_{*})} \sigma^{\bm{\alpha}}$, where $\gamma$ is a dissipation strength. Our initial experiments suggest that applying $\mathcal{D}_{\mathrm{DAOE}}$ as part of the Lanczos algorithm generically arrests the growth of the Lanczos coefficients from $b_{n} \sim \mathcal{O}(n)$ to $b_{n} \sim \mathcal{O}(1)$, which should exponentially improve the convergence of the spectral bootstrap approach to computing transport coefficients from $\epsilon\sim 1/\mathrm{poly}(\log{n})$ to $\epsilon \sim 1/\mathcal{O}(n)$. It also has the benefit of reducing the computational requirements to compute the Lanczos coefficients up to a given $n$, due to a suppression of operator trajectories. But it is worth investigating to what extent one can smoothly extrapolate the resulting transport coefficients in the limit of vanishing dissipation $\gamma \to 0$~\cite{rakovszkyDissipationassistedOperatorEvolution2022}. %

\textit{Universal dynamics of large operators}---In this work we have demonstrated that operator dynamics restricted to the large-$n$ `fast space' of Lanczos dynamics exhibits an emergent universality, akin to that of eigenvalue correlations in random matrix theory. The Lanczos setting is useful because it has enough mathematical structure to enable rigorous results, but one could speculate that this universality is more generally a statement about the universal dynamics of large operators restricted to the space of large operators. Practically, one way to investigate this in a tensor network approach would be to use the matrix product operator (MPO) formulation of the DAOE dissipator (see Ref.\ \cite{rakovszkyDissipationassistedOperatorEvolution2022} for details), but change the boundary conditions of the MPO to keep only operators with weight $|\bm{\alpha}|>\ell_{*}$, rather than keeping $|\bm{\alpha}|\leq \ell_{*}$ as in the standard DAOE approach.

\textit{Universality in unitary and open system dynamics}---All of the results in this paper concern operator dynamics generated by a time-independent and Hermitian Hamiltonian $H$. An obvious question is whether the notion of universality discussed in this paper can be generalized to other types of operator dynamics. For Floquet dynamics generated by a unitary $U$, this question is likely quite tractable, and will involve asymptotics of orthogonal polynomials on the unit circle~\cite{simonOrthogonalPolynomialsUnit2009}, corresponding to the spectra of unitary operators (for related work see \cite{suchslandKrylovComplexityTrotter2025}). Where the situation is both less clear and possibly richer is that of open quantum system dynamics~\cite{bhattacharyaOperatorGrowthKrylov2022,bhattacharjeeOperatorGrowthOpen2023,bhattacharjeeOperatorDynamicsLindbladian2024,bhattacharyaKrylovComplexityOpen2023,srivatsaOperatorGrowthHypothesis2024}. With dynamics generated by a Lindbladian $\mathcal{L}$ with nontrivial jump operators, the spectrum of $\mathcal{L}$ now generically lies in the complex plane. It is still possible to get a banded matrix representation of $\mathcal{L}$---which was necessary to get a simple recursion between $G_{n}(z)$ and $G_{n+1}(z)$---by performing the biorthogonal version of the Lanczos algorithm. However, this will now involve polynomials whose orthogonality relation is defined not on a contour but on a subset of the complex plane~\cite{saadNumericalMethodsLarge1992,bhattacharyaKrylovComplexityOpen2023,srivatsaOperatorGrowthHypothesis2024}. This complex spectrum could lead to new universality classes for open system operator dynamics, but characterizing the nature of this universality is an open question. The technical machinery of polynomials orthogonal on 2D complex domains is much less well-developed~\cite{baloghStrongAsymptoticsOrthogonal2015,kieburgOrthogonalPolynomialsNormal2024}, although in certain special cases there have been generalizations of our Riemann-Hilbert approach to this case.
 
\textit{Connection to the generalized ETH}---One of our main results is the appearance of the Wigner semicircle law in the level-$n$ Green's function $G_{n}(z)$ describing the dynamics restricted to the `fast space' $\{O_{m}\}_{m\geq n}$. One can heuristically argue for this result using the recursion relation \cref{eq:greens_func_recursion} between $G_{n}(z)$ and $G_{n+1}(z)$, where the Wigner semicircle appears as a fixed point (see \cref{sec:gf}). This recursion bears some resemblance to the $R$-transform from free probability~\cite{mingoFreeProbabilityRandom2017}, but `deformed' by the ratio $b_{n+1}/b_{n}$. Recently, the language of free probability was used to generalize the eigenstate thermalization hypothesis (ETH) to higher-order correlation functions~\cite{pappalardiEigenstateThermalizationHypothesis2022,pappalardiFullEigenstateThermalization2025}. It is an interesting topic for future work to explore potential connections between universality in Lanczos dynamics and the generalized ETH.

\textit{Krylov complexity and conservation laws}---Much of the recent interest in Krylov space dynamics has been through the lens of Krylov complexity~\cite{rabinoviciKrylovComplexity2025}. Stemming from the original work of Ref.\ \cite{parkerUniversalOperatorGrowth2019}, there is by now a reasonably good understanding of how the leading behavior of the Lanczos coefficients affects the early-time behavior of the Krylov complexity $\mathcal{K}(t)$, and in particular how the linear growth $b_{n} \sim \mathcal{O}(n)$ postulated by the OGH translates into exponential growth $\mathcal{K}(t)\sim \exp[\mathcal{O}(t)]$ of the complexity at early times. There have also been studies of Krylov complexity in finite-sized systems, examining how the finite-size plateau of the Lanczos coefficients arrests the initial complexity growth~\cite{barbonEvolutionOperatorComplexity2019}. However, what is relatively less well understood is the effect of conservation laws on Krylov complexity, which should be increasingly important for the mid-to-late time complexity of thermodynamically large systems. Our work, and particularly our discussion of Bessel universality, lays the mathematical foundations for a systematic study of these effects. Indeed, if one expands the time-evolved operator $|O_{0}(t)) = \sum_{n=0}^{\infty} c_{n}(t) |O_{n})$ in the Lanczos basis, then a quick calculation shows that the coefficient $c_{n}(t)$ is given by $c_{n}(t) = \int_{\mathbb{R}} e^{i \omega t} p_{n}(\omega) \Phi(\omega) \diff \omega/2\pi$. By exploiting the universal description of the orthogonal polynomials $p_{n}(\omega)$ we have derived for $n\gg 1$ (e.g., \cref{eq:pn_bulk_asymptotic} or expressions in \cref{sec:thm2_proof}), one could probe the large-$n$ contributions to the Krylov complexity.

\textit{New quantum algorithms for quantum dynamics}---Our results focus on the `post-processing' step of converting Lanczos coefficients into estimates of spectral functions, and as such are agnostic about how these Lanczos coefficients were actually obtained. In classical implementations of the Lanczos algorithm, a lightcone argument suggests that the memory requirement for a brute force computation of the $n$th Lanczos coefficient scales like $\exp[\mathcal{O}(n)]$, which can become prohibitively large for modest $n$. (Though see Ref.\ \cite{parkerLocalMatrixProduct2020} for a potentially better scaling approach using tensor networks.) While one approach to ameliorating these memory requirements is to use the dissipation-assisted methods discussed above, one could also implement the Lanczos algorithm on a quantum computer~\cite{bakerLanczosRecursionQuantum2021,kirbyExactEfficientLanczos2023}. One reason to consider this approach over direct real-time evolution is that it does not face the latter's tradeoff between Trotter error and circuit depth (for a review of related methods see Ref.\ \cite{mottaSubspaceMethodsElectronic2024}).

\textit{Improved treatment of hydrodynamic tails}---The main technical caveat to our proof technique is that we need to make some quite strong analyticity assumptions on the spectral function (see \cref{sec:assumptions_summary}), in order to perform the contour deformation involved in a steepest descent analysis of an associated Riemann-Hilbert problem. It would be desirable to prove our results under weaker assumptions, not just mathematically but also from a physical perspective, since it allows for a better treatment of hydrodynamic tails. In particular, for autocorrelation functions $C(t)$ that decay algebraically, $C(t\to\infty) \sim 1/t^{\alpha}$, standard Fourier arguments suggest that the spectral function $\Phi(\omega)$ should possess only a finite number of derivatives at $\omega = 0$. In a sense, our results adequately cover the case where $0<\alpha<1$, since then $\Phi(\omega)$ should have a power-law divergence at $\omega=0$, and we explicitly address this case by decomposing $\Phi(\omega)/2\pi \equiv |\omega|^{\rho} e^{-Q(\omega)}$. However, our results do not adequately cover the case where $\alpha > 1$, because in this case $\Phi(\omega)$ is finite at $\omega=0$ (so $\rho=0$) but is not infinitely differentiable, whereas we need to assume full analyticity for our proof. We have conjectured that our results should still be correct to leading order in $n$, but the lack of full analyticity will result in a more slowly decaying error term. One way to address this rigorously would be to replace our Riemann-Hilbert problem approach with a $\overline{\partial}$-problem approach~\cite{mclaughlinSteepestDescentMethod2006,mclaughlinDbarSteepestDescent2008}, which is still treatable with nonlinear steepest descent controlled in the $n\to\infty$ limit.

\subsection{Reader's guide}
We outline the contents of the paper below, and have attempted to indicate how readers can focus on those sections most relevant to their interests. There is also a \hyperref[sec:glossary]{glossary of symbols} at the end of the paper.

\cref{sec:background} provides background material, and in particular illustrates the connection between the Lanczos algorithm and orthogonal polynomials generated by the spectral function $\Phi(\omega)$. We recommend all readers to ensure they are familiar with the contents of this section, since it explains the central role that the spectral function plays in our analysis.

\cref{sec:potential_definitions} is where we state the assumptions on the spectral function under which we can prove our results. We recommend readers follow the high-level summary in \cref{sec:assumptions_summary}, and skip the precise definitions in \cref{sec:precise_assumptions} until they are interested in the proof. \cref{sec:analyticity} provides some discussion of our analyticity requirements on $\Phi(\omega)$, and can also be skipped on a first reading. For interested readers, we note that various properties of these spectral functions are derived in \cref{sec:freud_properties}, which is not too technical and is independent of the later Riemann-Hilbert analysis.

\cref{sec:coulomb} introduces the Coulomb gas problem associated with the spectral function. \cref{sec:coulomb_definition} defines the Coulomb gas and introduces some useful notation that will appear in later sections, so we recommend it for all readers. This is enough to understand the later \cref{sec:gf} on universality of the level-$n$ Green's function. \cref{sec:coulomb_rescaled} introduces some further notation related to a rescaled version of the Coulomb gas; this does appear in the statements of some theorems (e.g., \cref{thm:hydrodynamic_constants}), but is less essential to understand immediately. \cref{sec:coulomb_confinement} is where we discuss the confinement transition in the Coulomb gas. We recommend this subsection to readers interested in the connections between quantum chaos, the Operator Growth Hypothesis, and marginality.  

\cref{sec:gf} is where we discuss our results on the universality of the level-$n$ Green's function $G_{n}(z)$. We expect this section to be of interest to most readers.

In \cref{sec:recurrence_coefficients} we move focus to signatures of hydrodynamics in Lanczos space, and in \cref{thm:hydrodynamic_constants} show how a low-frequency power-law $\Phi(\omega\to 0) \sim |\omega|^{\rho}$ appears as a subleading `staggered' term in the Lanczos coefficients. However, we do not recommend using this result as a practical method to estimate $\rho$, instead preferring to focus on the zero mode, to be discussed imminently. 

Indeed, in the application-motivated \cref{sec:hydro}, we show how a low-frequency power-law affects the \textit{leading} behavior of the zero mode of the Liouvillian restricted to the Krylov space. In \cref{sec:extracting_power_law} we focus on the decay of the zero mode amplitudes with $n$, and how this can be used to extract the value of the exponent $\rho$. Then in \cref{sec:transport_coeffs} we further show how one can estimate hydrodynamic transport coefficients from the zero mode amplitudes. We benchmark this approach on physical models in \cref{sec:hydro_benchmarks}.

The previous \cref{sec:hydro} was focused on $\omega=0$ properties of the spectral function $\Phi(\omega)$, but in \cref{sec:spectral_bootstrap} we show how to estimate $\Phi(\omega)$ at nonzero frequencies in an algorithm called the \textit{spectral bootstrap}. The examples in the previous section used the $\omega\to 0$ limit of this algorithm. On a first read of \cref{sec:spectral_bootstrap}, one could just focus on \cref{sec:bulk_bootstrap}, where we illustrate the basic ideas behind the spectral bootstrap.

We expect \cref{sec:universality} to be of most interest to readers already familiar with the universality of eigenvalue correlations in random matrix theory, although we do lay out the relevant background in \cref{sec:universality_background}. The reason is that we examine quantities whose meaning is most transparent in the random matrix language. For each spectral function $\Phi(\omega)$, we define an associated random matrix ensemble. By using the spectral bootstrap to estimate the equilibrium measure, we are able to test whether the random matrix ensembles associated with various physical models possess the well-known universality of eigenvalue correlations. This can be seen as a consistency check for the universality present in the level-$n$ Green's function $G_{n}(z)$, since both types of universality have the same origin (universality of orthogonal polynomials). But at present the universality of these eigenvalue correlations does not appear to have a transparent implication for physically relevant quantities.

Finally, in \cref{sec:smoothness} we give evidence that fast space universality indeed requires a degree of smoothness of the spectral function $\Phi(\omega)$ as a function of frequency $\omega$. We demonstrate this by focusing on an Anderson localized model, which provably has a point like spectrum, and where we see the failure of a consistency check that was passed by the clean version of this model.

Moving to the Supplementary Material, we first derive in \cref{sec:freud_properties} various implications of our assumptions for the behavior of the spectral function. Then the Riemann-Hilbert analysis proceeds in \cref{sec:rhp}. By reading only \cref{sec:fundamental_rhp}, one can appreciate the relevance of the Riemann-Hilbert problem (RHP) to Lanczos dynamics, without going through the technical detail of actually solving the RHP. However, we emphasize that it is \textit{not} necessary for readers to follow the RHP analysis in order to use our proposed spectral bootstrap algorithm.

For readers familiar with RHPs for orthogonal polynomials, we note that much of our analysis in \cref{sec:rhp} is quite standard, but we have given explicit details for the benefit of a mixed readership (the same goes for \cref{sec:recurrence_proof,sec:polynomial_asymptotics}). The main novelty is in our local analysis near the origin, and this requires a detailed study of the equilibrium measure that can be found in \cref{sec:endpoint_analysis}. For the sake of completeness, in \cref{sec:airy_bootstrap} we also describe a version of the spectral bootstrap involving `Airy universality' suited for frequencies near the endpoints $\omega=\pm \beta_{n}$, but since these frequencies often become large for large $n$, we expect this to be less practically relevant than the versions of the spectral bootstrap described in \cref{sec:spectral_bootstrap}.

\subsection{Acknowledgements}
We thank Alexander Avdoshkin, Xiangyu Cao, Pieter Claeys, Fabian Essler, Michael Flynn, Jorge Kurchan, Jean-Bernard Lasserre, Doron Lubinsky, Sheehan Olver, Daniel Parker, and Gabriele Pinna for helpful discussions. Some of the numerical computations were performed using King's College London's CREATE cluster~\cite{KingsComputationalResearch2024}. The data that support the findings of this article are openly available~\cite{lunt_2025_15387574}. O.L.\ was supported by EPSRC through grant number EP/Y005058/2. K.M.\ gratefully acknowledges the support of a Royal Society Wolfson Fellowship (grant number: RSWVF/R2/212003). C.vK.\ was supported by a UKRI FLF through MR/T040947/1, MR/T040947/2 and MR/Z000297/1.

\clearpage
\section{Background}
\label{sec:background}
A time-evolved operator $A(t) \coloneqq e^{i H t} A e^{-i H t}$ can be expanded as a sum of nested commutators,
\begin{equation}
    e^{i H t} A e^{-i H t} = A + it [H,A] + \dfrac{(it)^{2}}{2!} [H,[H,A]] + \cdots.
\end{equation}
This suggests that if we want to understand the growth of $A(t)$, a natural object to study is the \textit{Krylov space} $\mathcal{K} \coloneqq \mathrm{span}\{\mathcal{L}^{k}(A)\}_{k=0}^{\infty}$, where $\mathcal{L}(\cdot) \coloneqq [H, \cdot]$ is the Liouvillian superoperator. To probe this space, it is convenient to construct an orthonormal basis for $\mathcal{K}$ using the Gram-Schmidt process. When $\mathcal{L}$ is self-adjoint, this reduces to the \textit{Lanczos algorithm}, where successive basis vectors for $\mathcal{K}$ can be defined by a recurrence relation containing only three terms, rather than all previous basis vectors. We will use a vector notation $|A)$ for operators, and for now we take the inner product to be the Hilbert-Schmidt product $(A|B) \coloneqq \tr{A^{\dag} B} / \tr{\mathds{1}}$, so we are effectively working at infinite temperature. Starting with a self-adjoint operator $|A)$, we initialize the recurrence with $|O_{-1}) \coloneqq 0$ and $|O_{0}) \coloneqq b_{0}^{-1} |A)$, where $b_{0} \coloneqq \sqrt{(A|A)} \equiv \norm{A}$. Then we recursively define  
\begin{subequations}
    \begin{align}
        |A_{n}) &\coloneqq \mathcal{L} |O_{n-1}) - b_{n-1} |O_{n-2}),\label{eq:lanczos_recurrence}\\
        b_{n} &\coloneqq \sqrt{(A_{n}| A_{n})},\label{eq:lanczos_coefficient_def}\\
        |O_{n}) &\coloneqq b_{n}^{-1} |A_{n}). \label{eq:lanczos_operator_def}
    \end{align} 
\end{subequations}
We will refer to the basis $\{|O_{n})\}_{n=0}^{\infty}$ as the \textit{Lanczos basis}, and throughout will consider the generic case where $\mathcal{K}$ is infinite-dimensional in the thermodynamic limit.

As well as the basis vectors themselves, the Lanczos algorithm produces a sequence of numbers $\{b_{n}\}_{n=1}^{\infty}$ which we will interchangeably refer to as the \textit{recurrence} or \textit{Lanczos} coefficients. It turns out that these numbers are enough to fully characterize the action of $\mathcal{L}$ within $\mathcal{K}$; indeed, the restriction of $\mathcal{L}$ to $\mathcal{K}$ can be represented in the Lanczos basis by a tridiagonal matrix
\begin{equation}
    \mathcal{L} = \begin{pmatrix}
        0 & b_{1} & 0 & 0 & \cdots \\
        b_{1} & 0 & b_{2} & 0 & \cdots \\
        0 & b_{2} & 0 & b_{3} & \cdots \\
        0 & 0 & b_{3} & 0 & \ddots \\
        \vdots & \vdots & \vdots & \ddots & \ddots
    \end{pmatrix}.
\end{equation}
That the diagonal elements are all zero is a generic consequence of \textit{Hermiticity}, since $(O | \mathcal{L} | O) = 0$ follows for any self-adjoint $O$ simply by using the definition $\mathcal{L}(\cdot) = [H, \cdot]$ with self-adjoint $H$, and inductively one can show that $i^{n} O_{n}$ is self-adjoint. %

The tridiagonal form lends itself to interpreting $\mathcal{L}$ as a tight-binding model on a semi-infinite chain, where the sites are the operators $\{O_{n}\}_{n=0}^{\infty}$, and the hopping strengths are the recurrence coefficients $\{b_{n}\}_{n=1}^{\infty}$. We can visualize the operator evolution $A(t)$ as a single-particle problem, with a wavefunction initially localized on site $n=0$ of the chain, which then spreads out along the chain over time. The autocorrelation function $C(t) \coloneqq (A | A(t))$ corresponds to the probability amplitude for the operator wavefunction to be found back at its starting point at time $t$.

The \textit{spectral function} $\Phi(\omega)$ defined as
\begin{equation}
    \Phi(\omega) \coloneqq \int_{\mathbb{R}} e^{-i \omega t} C(t) \diff t,
    \label{eq:spectral_function_fourier_def}
\end{equation}
plays a significant role in the Lanczos algorithm. In particular, the Lanczos algorithm has a natural formulation in terms of the \textit{orthogonal polynomials} with respect to the weight function $w(\omega) = \Phi(\omega) / 2\pi$. To see this, we start from
\begin{equation}
    C(t) = (A | e^{i \mathcal{L} t} | A) = \int_{\mathbb{R}} e^{i \omega t} \dfrac{\Phi(\omega)}{2\pi} \diff \omega,
\end{equation}
and evaluate the $k$th derivative with respect to $t$ at $t = 0$. Assuming $\Phi(\omega)$ decays sufficiently quickly as $|\omega| \to \infty$ (more on this later), this gives
\begin{equation}
    (A | \mathcal{L}^{k} | A) = \int_{\mathbb{R}} \omega^{k} \dfrac{\Phi(\omega)}{2\pi} \diff \omega.
    \label{eq:spectral_generating_function}
\end{equation}
Since $\mathcal{L}$ has real spectrum, $\omega_{ij} = E_{i} - E_{j}$, if we then define the inner product between real polynomials $p,q \colon \mathbb{R} \to \mathbb{R}$
\begin{equation}
    \langle p, q \rangle \coloneqq (p(\mathcal{L}) A | q(\mathcal{L}) A),
    \label{eq:inner_product_def}
\end{equation}
one can use linearity of the inner product and \cref{eq:spectral_generating_function} to write this as
\begin{equation}
    \langle p, q \rangle = \int_{\mathbb{R}} p(\omega) q(\omega) \dfrac{\Phi(\omega)}{2\pi} \diff \omega,
    \label{eq:inner_product_integral}
\end{equation}
so, as claimed, $\Phi(\omega) / 2\pi$ appears as the relevant weight function. A defining property of orthonormal polynomials with respect to an even weight function on $\mathbb{R}$ is that they obey a three-term recursion relation of the form~\cite{szegoOrthogonalPolynomials1939,gautschiOrthogonalPolynomialsComputation2004}
\begin{equation}
    b_{n} p_{n}(\omega) = \omega p_{n-1}(\omega) - b_{n-1} p_{n-2}(\omega).
    \label{eq:three_term_recursion}
\end{equation}
Comparing this to the Lanczos recursion relation, we deduce that the $n$th Lanczos vector $|O_{n})$ is given by
\begin{equation}
    |O_{n}) = p_{n}(\mathcal{L}) |A),
    \label{eq:lanczos_vector}
\end{equation}
where $p_{n}(\omega) = y_{n} \omega^{n} + \cdots, y_{n}>0$, is the $n$th order orthonormal polynomial with respect to the inner product \cref{eq:inner_product_def}~\cite{saadNumericalMethodsLarge1992,muckKrylovComplexityOrthogonal2022}. Thus, by studying the orthogonal polynomials with respect to the spectral function $\Phi(\omega)/2\pi$, we can understand properties of the Lanczos operators. %

Note that this argument is very general, so that we can also easily apply it to operator inner products other than the Hilbert-Schmidt product. In particular, at finite temperature, we can consider the following family of inner products~\cite{viswanathRecursionMethodApplication2013}
\begin{equation}
    (A|B)_{\beta}^{g} \coloneqq \dfrac{1}{\beta} \int_{0}^{\beta} g(\lambda) \langle  A^{\dag} e^{-\lambda H} B e^{\lambda H} \rangle_{\beta} \, \diff \lambda - \langle A^{\dag} \rangle_{\beta} \langle B \rangle_{\beta},
    \label{eq:thermal_prod}
\end{equation}
where $\langle B \rangle_{\beta} = \tr{e^{-\beta H}B} / \mathcal{Z}$ is a thermal expectation value, $\mathcal{Z} = \tr{e^{-\beta H}}$ is the partition function, and $g(\lambda)$ is any function defined on the thermal circle $[0,\beta]$ satisfying
\begin{equation}
    g(\lambda) \geq 0, \qquad g(\beta-\lambda) = g(\lambda), \qquad \dfrac{1}{\beta} \int_{0}^{\beta} g(\lambda) \diff \lambda  = 1.
    \label{eq:thermal_g_conditions}
\end{equation}
One can then perform a version of the Lanczos algorithm where operators are orthogonal with respect to this inner product. Examples include $g(\lambda) = [\delta(\lambda) + \delta(\lambda-\beta)]/2$ for linear response, and the Wightman product $g(\lambda) = \delta(\lambda - \beta/2)$. The autocorrelation function and spectral function are then $C_{\beta}^{g}(t) \coloneqq (A|A(t))_{\beta}^{g}$ and $\Phi_{\beta}^{g}(\omega) \coloneqq \int_{-\infty}^{\infty} e^{-i\omega t} C_{\beta}^{g}(t) \diff t$, and the polynomial inner product defined by $\langle p, q \rangle_{\beta}^{g} \coloneqq (p(L^{\dag}) A | q(L) A)_{\beta}^{g}$ can be written as
\begin{equation}
    \langle p, q \rangle_{\beta}^{g} = \int_{-\infty}^{\infty} p(\omega) q(\omega) \dfrac{\Phi_{\beta}^{g}(\omega)}{2\pi} \diff \omega,
\end{equation}
with the generalized spectral function $\Phi_{\beta}^{g}(\omega)/2\pi$ again appearing as the weight function. Since all our work is premised on assumptions made about the spectral function (\cref{sec:potential_definitions}), we expect our conclusions to hold not only at infinite temperature, but also at high enough finite temperatures, given a suitable choice of inner product.

\section{Statement of assumptions on the spectral function}
\label{sec:potential_definitions}
\subsection{Summary}
\label{sec:assumptions_summary}
We saw in \cref{sec:background} that the spectral function $\Phi(\omega) \coloneqq \int_{\mathbb{R}} e^{-i \omega t} C(t) \diff t$ determines all the properties of the Lanczos basis. However, for a given interacting many-body Hamiltonian $H$, we do not generally have a hope of exactly calculating $\Phi(\omega)$. Instead, the approach we will take is to impose a few physically motivated conditions on $\Phi(\omega)$, and study their implications. Before stating these conditions precisely, we state them at a high level:
\begin{enumerate}
    \item We fix the $|\omega| \to \infty$ behavior of $\Phi(\omega)$, taking it to decay at least exponentially in $|\omega|$ (but possibly faster). This is provably true at high temperatures for \textit{local} lattice Hamiltonians with a bounded local Hilbert space~\cite{abaninExponentiallySlowHeating2015,aradConnectingGlobalLocal2016,bruLiebRobinsonBoundsMultiCommutators2017}, and forms part of the motivation for the operator growth hypothesis~\cite{parkerUniversalOperatorGrowth2019}.
    \item Since we are interested in signatures of the $|t| \to \infty$ behavior of $C(t)$, we impose a condition on the $\omega \to 0$ behavior of its Fourier transform $\Phi(\omega)$. To model $C(t) \sim 1 / |t|^{1+\rho}$ as $|t| \to \infty$, we require $\Phi(\omega) \sim |\omega|^{\rho}$ as $\omega \to 0$. Such power-law decay in $C(t)$ is typical if the initial operator overlaps with a conserved quantity with sufficiently slow transport, say diffusive in 1D~\cite{forsterHydrodynamicFluctuationsBroken2019}. %
    \item We take $\Phi(\omega) = \Phi(-\omega)$ to be even, which is satisfied if one uses any of the thermal inner products defined in \cref{eq:thermal_prod} satisfying the conditions in \cref{eq:thermal_g_conditions}. This includes the infinite temperature Hilbert-Schmidt product $(A|B) = \tr{A^{\dag}B}/\tr{\mathds{1}}$, which we will use in all our numerical tests.
    \item We require $\Phi(\omega) > 0$ to be strictly positive, which should be generically true for interacting systems in the thermodynamic limit. Furthermore, in a sense we will make precise below, we require $\Phi(\omega)$ to be very smooth. This has both technical and physical implications. At a technical level, it allows us to use a powerful complex analytic proof technique involving a Riemann-Hilbert problem. However, we have evidence that smoothness of spectral functions is not merely an artefact of the proof technique, but rather is a physical requirement to get universality. Indeed, in \cref{sec:smoothness} we give evidence that universality fails in an Anderson localized system, where strong disorder can result in a pure point spectrum~\cite{aizenmanRandomOperators2015}.
\end{enumerate}
We remark that the analyticity requirements for our proof could likely be weakened to assuming only a few derivatives of $\Phi(\omega)$, by employing a so-called $\overline{\partial}$-problem approach~\cite{mclaughlinSteepestDescentMethod2006,mclaughlinSteepestDescentMethod2008,diengDispersiveAsymptoticsLinear2019}, rather than using a Riemann-Hilbert problem as we do here. Experience suggests that one usually finds the same leading order asymptotics, but with potentially slower decaying error terms, depending on the degree of differentiability of $\Phi(\omega)$ one assumes.

\subsection{Precise assumptions}
\label{sec:precise_assumptions}
In order to state our precise assumptions, we need to make a definition. Since $\Phi(\omega) \geq 0$ and is even, we can always write it in the form $\Phi(\omega) \equiv \exp[-Q(\omega)]$ for some even real-valued function $Q(\omega)$. However, since we are interested in hydrodynamic spectral functions, which can have an algebraic divergence $\Phi(\omega) \sim |\omega|^{\rho}$ as $\omega \to 0$, it will prove helpful to factorize out this divergence, and decompose $\Phi(\omega)$ as
\begin{equation}
    \Phi(\omega)/2\pi \equiv \left|\omega\right|^{\rho} e^{-Q(\omega)}.
    \label{eq:spectral_function_def}
\end{equation}
In principle $\Phi(\omega)$ may also have algebraic behavior near other frequencies, in which case we would also factorize those out~\cite{kuijlaarsUniversalityEigenvalueCorrelations2003}. But for simplicity we will focus on the case where there is only an algebraic divergence at $\omega = 0$. The function $Q(\omega)$ so defined is called the \textit{potential} (we will later make precise the sense in which this is a potential). %

For technical reasons, it is more convenient to make all our assumptions in terms of $Q(\omega)$ rather than $\Phi(\omega)$ directly. We will translate assumptions about the high-frequency decay of $\Phi(\omega\to\infty)$ into assumptions about the high-frequency growth of $Q(\omega\to\infty)$. In particular, we will consider a class of potentials inspired by the `very smooth Freud weights' of Refs.~\cite{lubinskyProofFreudConjecture1988,lubinskyUniformMeanApproximation1988}, which they denote by $\mathrm{VSF}(p)$, with $p$ an exponent governing the degree of the polynomial growth of $Q(\omega) \sim |\omega|^{p}$ as $|\omega| \to \infty$. We will consider a subset of $\mathrm{VSF}(p)$, where we add the requirement of analyticity, and also require a specification of the logarithmic corrections to the leading polynomial growth of $Q(\omega)$.

Our Riemann-Hilbert analysis draws heavily from Ref.~\cite{deiftStrongAsymptoticsOrthogonal1999}, where they take $Q$ to be a polynomial of even order. However, we are particularly interested in the marginal case where $Q(\omega \to \infty)$ grows linearly with $|\omega|$, since the Operator Growth Hypothesis~\cite{parkerUniversalOperatorGrowth2019} conjectures this to be generic for spectral functions in chaotic many-body quantum systems. But it is clearly not possible to simultaneously have i) $Q(\omega) \sim |\omega|$ as $|\omega| \to \infty$, and ii) $Q$ be a polynomial. This was a primary motivation for considering this generalized class $\mathrm{VSLF}(p,q)$ of `polynomial-like' weights. 
\begin{definition}[$\mathrm{VSLF}(p,q)$: log-Freud potentials of order $(p,q)$]
    Let $Q : \mathbb{R} \to \mathbb{R}$ be real-analytic, even, and satisfy
    \begin{equation}
        Q^{\prime}(\omega) > 0, \quad \text{for } \omega \text{ large enough},
        \label{eq:vsf_prop1}
    \end{equation}
    \begin{equation}
        \lim_{\omega \to \infty}\left(\dfrac{\omega Q^{\prime\prime}(\omega)}{Q^{\prime}(\omega)}\right) = p - 1,
        \label{eq:vsf_prop3}
    \end{equation}
    \begin{equation}
        \lim_{\omega\to\infty} \left(\log(\omega) \left[-p + \dfrac{\omega Q^{\prime}(\omega)}{Q(\omega)}\right]\right) = q.
        \label{eq:vslf_limit}
    \end{equation}
    for some $p > 0$ and $q \in \mathbb{R}$. Then we shall call $Q$ a \textit{log-Freud} potential of order $(p,q)$ and write $Q \in \mathrm{VSLF}(p,q)$.
\end{definition}
We characterize the behavior of these potentials in \cref{sec:freud_properties}. In particular, in \cref{eq:vslf_characterization_prop2} we show that these potentials grow as $|\omega| \to \infty$ like
\begin{equation}
    |\omega|^{p} (\log{|\omega|})^{q - \epsilon} \leq Q(\omega) \leq |\omega|^{p} (\log{|\omega|})^{q + \epsilon},
\end{equation}
where $\epsilon$ can be taken to zero as $|\omega| \to \infty$; we will informally write this as $Q(\omega) \sim |\omega|^{p} \log^{q} |\omega|$. In this sense assumptions \cref{eq:vsf_prop3,eq:vslf_limit} are similar to but slightly weaker than assuming $Q \in \Theta(|\omega|^{p} (\log{|\omega|})^{q})$. %
We make the assumption \cref{eq:vsf_prop1} for technical convenience; it amounts to assuming that the spectral function $\Phi(\omega)$ decays monotonically above some frequency scale (which is $\mathcal{O}(1)$ but potentially much larger than any characteristic frequency scale). %

In order to apply Riemann-Hilbert techniques, we also need to assume that some of these properties continue to hold in a region of the complex plane near the real axis (illustrated in \cref{fig:analyticity}).
\begin{definition}[$\mathrm{CVSLF}(p,q,\theta,\gamma)$: complex log-Freud potentials of order $(p,q)$]
    For an angle $0 < \theta \leq \pi/2$, define the `complex cone' $C_{\theta}$ by
    \begin{equation}
        C_{\theta} \coloneqq \left\{z : |\arg{z}| < \theta \right\} \cup \left\{ z : |\arg{z}| > \pi - \theta\right\},
        \label{eq:complex_cone}
    \end{equation}
    using the convention $-\pi < \arg{z} \leq \pi$. We consider the open cone, so $z=0$ is not included in $C_{\theta}$. Now suppose there is some $0 < \theta \leq \pi/2$ and $\gamma > 0$ such that $Q \in \mathrm{VSLF}(p,q)$ can be analytically continued to $C_{\theta} \cup \left\{z : |z| < \gamma\right\}$, the union of $C_{\theta}$ and the disk of radius $\gamma$ centered at the origin (see \cref{fig:analyticity}).
    Also assume that \cref{eq:vsf_prop3,eq:vslf_limit} generalize to this region, in the sense that for $z$ restricted to $C_{\theta}$ we have
    \begin{equation}
        \lim_{|z|\to\infty} \dfrac{z Q^{\prime\prime}(z)}{Q^{\prime}(z)} = p-1,
        \label{eq:complex_assumption}
    \end{equation}
    \begin{equation}
              \lim_{|z|\to\infty} \left(\log(z) \left[-p + \dfrac{z Q^{\prime}(z)}{Q(z)}\right]\right) = q.
        \label{eq:complex_assumption_T}
    \end{equation}
    Given these properties, we say that $Q$ is a complex log-Freud potential of order $(p,q)$, and write $Q \in \mathrm{CVSLF}(p,q,\theta,\gamma)$.
\end{definition}

\fbox{\begin{minipage}{\columnwidth}Our proofs will apply for $Q \in \mathrm{CVSLF}(p,q,\theta,\gamma)$ in the cases $p>1,q\in \mathbb{R}$, and $p=1,q>-1$, with the latter case conjectured to be generic for quantum chaotic systems, according to the Operator Growth Hypothesis~\cite{parkerUniversalOperatorGrowth2019}. Note that local interactions guarantee $p\geq 1$~\cite{abaninExponentiallySlowHeating2015,aradConnectingGlobalLocal2016,bruLiebRobinsonBoundsMultiCommutators2017}.\end{minipage}}

\begin{example}
    All even polynomials with positive leading coefficient, $Q(x) = q_{2m} x^{2m} + \cdots$, $q_{2m} > 0$, lie in $\mathrm{CVSLF}(p,q,\theta,\gamma)$ with $p=2m$, $q=0$, $\theta=\pi/2$, $\gamma=\infty$.
\end{example}
\begin{example}
    Certain fractional powers of polynomials satisfy our assumptions, e.g. $Q(x) = \sqrt{1 + x^{2}}$ has $p=1$, $q=0$, $0 < \theta < \pi/2$, and $0 < \gamma < 1$. A similar example is $Q(x) = \sqrt{1+x^{2}} \log(1 + x^{2})$, which has $p=1,q=1$.
\end{example}
\begin{example}
    The symmetric Meixner-Pollaczek weights have $w(x) = \exp[-Q(x)] = \Gamma(\lambda + i x)\Gamma(\lambda -i x)$. ($Q$ can then be defined as in \cref{eq:Q_integral_def} below.) With $\lambda > 0$, we have $p=1$, $q=0$, $0 < \theta < \pi/2$, and $0 < \gamma < \lambda$. These weights appear in a rescaled form as the 2-point Wightman spectral function in the SYK model~\cite{maldacenaRemarksSachdevYeKitaevModel2016,dodelsonThermalProductFormula2024}, and were utilized in Ref.~\cite{parkerUniversalOperatorGrowth2019} to give an exactly solvable spectral function with exponential decay for use with the recursion method~\cite{viswanathRecursionMethodApplication2013}.
\end{example}
\begin{example}
    Taking $Q^{\prime}(z) = \erf(z) = (2/\sqrt{\pi}) \int_{0}^{z} e^{-t^{2}} \diff t$ gives an example of a $p=1$ potential which is also entire, unlike the previous $p=1$ examples. The conditions \cref{eq:complex_assumption,eq:complex_assumption_T} hold for $0<\theta < \pi/4$.
\end{example}

\begin{figure}[t]
    \centering
      
\tikzset {_9jbjxx02h/.code = {\pgfsetadditionalshadetransform{ \pgftransformshift{\pgfpoint{0 bp } { 0 bp }  }  \pgftransformrotate{-180 }  \pgftransformscale{2 }  }}}
\pgfdeclarehorizontalshading{_sitsp0slu}{150bp}{rgb(0bp)=(0.91,0.95,1);
rgb(37.5bp)=(0.91,0.95,1);
rgb(62.5bp)=(0.7,0.84,1);
rgb(100bp)=(0.7,0.84,1)}

\tikzset {_0xvn51kyz/.code = {\pgfsetadditionalshadetransform{ \pgftransformshift{\pgfpoint{0 bp } { 0 bp }  }  \pgftransformrotate{0 }  \pgftransformscale{2 }  }}}
\pgfdeclarehorizontalshading{_gifwjbc9r}{150bp}{rgb(0bp)=(0.91,0.95,1);
rgb(37.5bp)=(0.91,0.95,1);
rgb(62.5bp)=(0.7,0.84,1);
rgb(100bp)=(0.7,0.84,1)}

\tikzset {_ipvoxy33n/.code = {\pgfsetadditionalshadetransform{ \pgftransformshift{\pgfpoint{0 bp } { 0 bp }  }  \pgftransformrotate{-180 }  \pgftransformscale{2 }  }}}
\pgfdeclarehorizontalshading{_xrk0l28i5}{150bp}{rgb(0bp)=(0.91,0.95,1);
rgb(37.5bp)=(0.91,0.95,1);
rgb(62.5bp)=(0.7,0.84,1);
rgb(100bp)=(0.7,0.84,1)}

\tikzset {_yfoxe3hk5/.code = {\pgfsetadditionalshadetransform{ \pgftransformshift{\pgfpoint{0 bp } { 0 bp }  }  \pgftransformrotate{0 }  \pgftransformscale{2 }  }}}
\pgfdeclarehorizontalshading{_4vor5h302}{150bp}{rgb(0bp)=(0.91,0.95,1);
rgb(37.5bp)=(0.91,0.95,1);
rgb(62.5bp)=(0.7,0.84,1);
rgb(100bp)=(0.7,0.84,1)}
\tikzset{every picture/.style={line width=0.75pt}} %

\begin{tikzpicture}[x=0.75pt,y=0.75pt,yscale=-1,xscale=1]

\draw  [draw opacity=0][fill={rgb, 255:red, 179; green, 215; blue, 255 }  ,fill opacity=1 ] (224.79,141.16) .. controls (224.79,127.35) and (235.99,116.16) .. (249.79,116.16) .. controls (263.6,116.16) and (274.79,127.35) .. (274.79,141.16) .. controls (274.79,154.96) and (263.6,166.16) .. (249.79,166.16) .. controls (235.99,166.16) and (224.79,154.96) .. (224.79,141.16) -- cycle ;
\draw  [draw opacity=0][shading=_sitsp0slu,_9jbjxx02h] (350.8,101.34) -- (248.79,141.16) -- (350.8,141.16) -- cycle ;
\draw  [draw opacity=0][shading=_gifwjbc9r,_0xvn51kyz] (147.79,101.34) -- (248.79,141.16) -- (147.79,141.16) -- cycle ;
\draw  [draw opacity=0][shading=_xrk0l28i5,_ipvoxy33n] (350.8,180.97) -- (248.79,141.16) -- (350.8,141.16) -- cycle ;
\draw  [draw opacity=0][shading=_4vor5h302,_yfoxe3hk5] (147.79,180.97) -- (248.79,141.16) -- (147.79,141.16) -- cycle ;
\draw  (131.19,141.16) -- (371.19,141.16)(249.79,97.39) -- (249.79,182.54) (364.19,136.16) -- (371.19,141.16) -- (364.19,146.16) (244.79,104.39) -- (249.79,97.39) -- (254.79,104.39)  ;

\draw  [draw opacity=0] (286.36,127.21) .. controls (286.38,127.27) and (286.4,127.32) .. (286.42,127.38) .. controls (287.91,131.79) and (288.32,136.31) .. (287.79,140.64) -- (258,137) -- cycle ; \draw   (286.36,127.21) .. controls (286.38,127.27) and (286.4,127.32) .. (286.42,127.38) .. controls (287.91,131.79) and (288.32,136.31) .. (287.79,140.64) ;  

\draw (291,127) node [anchor=north west][inner sep=0.75pt]    {$\theta $};
\draw (235,83) node [anchor=north west][inner sep=0.75pt]    {$\Im{z}$};
\draw (355,122) node [anchor=north west][inner sep=0.75pt]    {$\Re{z}$};

\end{tikzpicture}
    \caption{We require the potential $Q(z)$ defined in \cref{eq:spectral_function_def} to have an analytic continuation to the shaded region of the complex plane, where $0 < \theta \leq \pi/2$ is any positive angle. This region is the union of the `complex cone' $C_{\theta}$ (see \cref{eq:complex_cone}) and a disk of constant radius around $z=0$.}
    \label{fig:analyticity}
\end{figure}

\subsection{Comments on analyticity requirements}
\label{sec:analyticity}
Regarding the definition of the class $\mathrm{CVSLF}$, the reason we need analyticity in a cone of constant argument rather than, say, a strip of constant width, is that we need this region of analyticity to be invariant under rescaling $z \mapsto z / \beta$ for $\beta>0$. Note that $\theta$ could be very small; we just require that it is a positive constant. We also require $Q(z)$ to be analytic in a disk of constant radius around $z = 0$. Handling the fact that this disk is \textit{not} invariant under rescaling is one of the technical achievements of our work. Note that this analyticity requirement on $Q$ is not in general the same as requiring $\Phi(z)$ to be analytic at $z=0$, since we are explicitly factorizing out a low-frequency power law, c.f.~\cref{eq:spectral_function_def}, which is supposed to capture the main non-analyticity of $\Phi(z)$.

Typically, we are more used to talking about analytic continuations of Green's functions than of spectral functions. Suppose we have a retarded Green's function $G_{R}(z)$, defined in the lower-half plane by
\begin{equation}
    G_{R}(z) \coloneqq \left(O_{0}\left|\dfrac{1}{z-\mathcal{L}}\right|O_{0}\right) = \int_{\mathbb{R}} \dfrac{\Phi(\omega)}{z-\omega} \dfrac{\diff \omega}{2\pi}, \quad z \in \mathbb{C}_{-}.
\end{equation}
By construction $G_{R}(z)$ is analytic in the lower-half plane, and $\Phi(\omega) = 2 \Im{\lim_{\epsilon \to 0^{+}} G_{R}(\omega - i \epsilon)}$. Now suppose that $G_{R}(z)$ has an analytic continuation across a section of the real line to some subset $\Omega \subseteq \mathbb{C}_{+} \cup \mathbb{R}$ of the upper-half plane. We can use this to define an analytic continuation of the spectral function by
\begin{equation}
    \Phi(z) \coloneqq \dfrac{1}{i} \left(G_{R}(z) + G_{R}(-z)\right).
\end{equation}
$\Phi(z)$ is then analytic in $\Omega \cup (-\Omega)$, and satisfies $\Phi(-z) = \Phi(z)$. So, for example, if $G_{R}(z)$ has a diffusive pole at $z = i D k^{2}$, then $\Phi(z)$ would have poles at $z = \pm i D k^{2}$. Having analytically continued $\Phi(z)$ in this way, $Q(z)$ would then be defined such that \cref{eq:spectral_function_def} is satisfied. In the case $\rho=0$, this can be achieved by taking
\begin{equation}
    Q(z) = -\log\left(\dfrac{\Phi(0)}{2\pi}\right) - \int_{0}^{z} \dfrac{\Phi^{\prime}(s)}{\Phi(s)} \diff s,
    \label{eq:Q_integral_def}
\end{equation}
where the integral is along any path where $\Phi(s) > 0$ and avoiding any poles. Then $\exp[-Q(z)] = \Phi(z)/2\pi$ by construction. A similar construction is possible for $\rho \neq 0$ given an analytic continuation of $|\omega|^{\rho}$ to the relevant quadrant of the complex plane.

Note that we do \textit{not} show that the violation of these analyticity conditions on $Q$ necessarily leads to a modification of our results. Versions of some of our results have been proven, though not always with explicit constants, only using assumptions about the behavior of the spectral function on the real line, see e.g.~\cite{levinOrthogonalPolynomialsExponential2012,kasugaOrthonormalPolynomialsGeneralized2003}. Rather than using Riemann-Hilbert techniques, which rely quite crucially on analyticity, it may also be possible to prove similar statements using a $\overline{\partial}$ steepest descent method~\cite{mclaughlinSteepestDescentMethod2008}, which does not require such strong analyticity assumptions on $Q$, at the expense of getting weaker error bounds.

One potential violation of our analyticity conditions could come from systems with a `diffuson cascade'~\cite{delacretazHeavyOperatorsHydrodynamic2020,rajDiffusionCascadeModel2024}, which can lead to an infinite series of poles in $k$-space retarded Green's functions $G_{R}(z,k)$ accumulating all the way down to the real $z$ axis, as opposed to the conventional expectation of just a simple pole at $z = i D k^{2}$~\cite{forsterHydrodynamicFluctuationsBroken2019}. Handling such singular points would require a modification of our analysis which goes beyond the scope of this work (though see \cite{kriecherbauerStrongAsymptoticsPolynomials1999} for some work in this direction). However, despite their analyticity conditions, our calculations give certain predictions that can be tested numerically. We find that they are obeyed well even in some interacting non-integrable systems that might be expected to have a diffuson cascade. This suggests that, even if we weaken our analyticity conditions, some of our conclusions may continue to hold.

\section{Coulomb gas, universality, and confinement}
\label{sec:coulomb}
To state our later results about hydrodynamics and Lanczos dynamics, we first need to introduce an associated Coulomb gas problem. As outlined in \cref{sec:synopsis_coulomb}, this Coulomb gas arises naturally both in the context of random matrix theory and orthogonal polynomials, and finding the minimum energy Coulomb gas configuration will give us a handle on the regions which dominantly contribute to the Lanczos coefficients. As we alluded to in \cref{sec:synopsis_coulomb}, one interesting aspect of this Coulomb gas is that it undergoes a \textit{confinement transition}, and that the Operator Growth Hypothesis implies that quantum chaotic systems are generically at the critical point of this transition. We describe this in more detail in \cref{sec:coulomb_confinement}.

\subsection{Coulomb gas definition}
\label{sec:coulomb_definition}
Given a potential $Q(\omega)$ determined from the spectral function by \cref{eq:spectral_function_def}, we define a Coulomb gas energy functional $E_{Q}$ as follows. We take as input a charge density function $\sigma : \mathbb{R} \to \mathbb{R}_{\geq 0}$ (defined on frequency space), and give it an energy consisting of two terms,
\begin{align}
    E_{Q}[\sigma] \coloneqq &- \int_{\mathbb{R}} \int_{\mathbb{R}} \log{|\omega_{1}-\omega_{2}|} \, \sigma(\omega_{1}) \sigma(\omega_{2}) \diff \omega_{1} \diff \omega_{2}\\
    &+ \int_{\mathbb{R}} Q(\omega) \sigma(\omega)\diff \omega,\nonumber
\end{align}
namely a logarithmic repulsion between charges, and a single-particle energy determined by the potential $Q$. For a given Lanczos index $n$, we imagine distributions with total charge $n$, so that $\int_{\mathbb{R}} \sigma(\omega) \diff \omega = n$. We then want to consider the distribution $\sigma_{n}$ with charge $n$ that minimizes the energy $E_{Q}[\sigma]$, so that
\begin{equation}
    \sigma_{n} \coloneqq \underset{\sigma}{\argmin}\left\{E_{Q}[\sigma] : \int_{\mathbb{R}} \sigma(\omega) \diff \omega = n \right\}.
    \label{eq:coulomb_gas_def}
\end{equation}
We will refer to $\sigma_{n}$ as the \textit{equilibrium density}. For large enough $n$, one can show that $\sigma_{n}$ is indeed uniquely defined and continuous under reasonable assumptions on $Q$~\cite{saffLogarithmicPotentialsExternal1997}. Let us make a few observations about this equilibrium density. Since the potential $Q(x)$ must grow at least linearly for $|x| \to \infty$ due to constraints from locality, while the two-particle repulsion is only logarithmic, we conclude that for any finite $n$, the charge will remain at a finite distance from the origin. This means that $\sigma_{n}(\omega)$ has support within some finite interval we denote by
\begin{equation}
    \supp{\sigma_{n}} \equiv (-\beta_{n}, \beta_{n}),
    \label{eq:sigman_support}
\end{equation}
which we refer to as the `bulk'. This distribution is even, $\sigma_{n}(-\omega) = \sigma_{n}(\omega)$, because $Q$ is even. As $n \to \infty$ the density $\sigma_{n}(\omega)$ gives the density of zeros of the orthogonal polynomials with respect to the weight $e^{-Q(\omega)}$~\cite{saffLogarithmicPotentialsExternal1997}, and the dominant contribution to the recurrence coefficients will come from this bulk frequency range $(-\beta_{n}, \beta_{n})$. This interval is analogous to the oscillatory region in a WKB approximation~\cite{deiftStrongAsymptoticsOrthogonal1999,hallQuantumTheoryMathematicians2013}. Note that, for technical convenience, we do not include the power law $|\omega|^{\rho}$ in the Coulomb gas potential, instead handling it by other means (Szeg\H{o} functions).

The distance $\beta_{n}$ is referred to as the $n$th \textit{Mhaskhar-Rakhmanov-Saff} (MRS) number~\cite{mhaskarExtremalProblemsPolynomials1984,rakhmanovAsymptoticPropertiesPolynomials1984,saffLogarithmicPotentialsExternal1997}, and is defined for even $Q$ as the positive solution to the integral equation
\begin{equation}
    \dfrac{1}{2\pi} \int_{-1}^{1} \dfrac{\beta_{n} s Q^{\prime}(\beta_{n} s)}{\sqrt{1-s^{2}}} \diff s = n.
    \label{eq:MRS_def}
\end{equation}
This solution is unique for large enough $n$. How does $\beta_{n}$ scale with $n$? Clearly, the slower the growth of the single-particle potential $Q$, the more the charge will spread out. It turns out that the growth of $Q$ and $\beta_{n}$ are related by
\begin{align}
    Q(\omega) &\sim |\omega|^{p} \log^{q}{|\omega|} \text{ as } |\omega| \to \infty \nonumber \\
    &{} \hspace{3em} \Updownarrow \label{eq:beta_n_asymptotics}\\
    \beta_{n} &\sim \left(\dfrac{n}{\log^{q}{n}}\right)^{1/p} \text{ as } n \to \infty. \nonumber
\end{align}
It is straightforward to verify this scaling when $Q$ is literally a polynomial, e.g.~$Q(\omega) = \omega^{2m}$, but it continues to hold when $Q(\omega)$ is sufficiently `polynomial-like' as $|\omega| \to \infty$ (see \cref{lem:vslf_characterization}). The locality constraint on $Q$ translates into $p \geq 1$, so that $\beta_{n}$ grows at most linearly in local systems. It is well known, as we will see in \cref{thm:recurrence_theorem}, that the scaling of $\beta_{n}$ fixes the leading behavior of the recurrence coefficients $b_{n}$, which are given by $b_{n} \approx \beta_{n} / 2$ to leading order in $n$~\cite{lubinskyProofFreudConjecture1988}. Note that by construction $\beta_{n}$ depends only on $Q^{\prime}$ and so is independent of the hydrodynamic exponent $\rho$, so the hydrodynamics will only show up at \textit{subleading} orders in the recurrence coefficients.
\begin{figure}[t]
    \centering
    \includegraphics[width=\columnwidth]{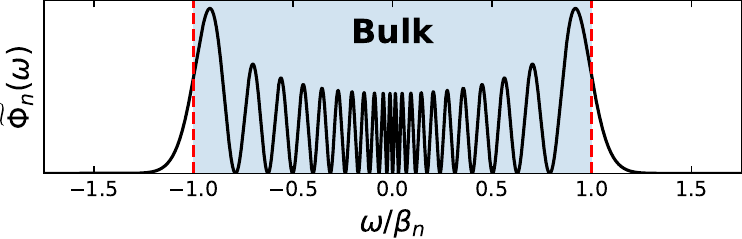}
    \caption{The $n$th spectral function $\wt{\Phi}_{n}(\omega) = p_{n}(\omega)^{2} \Phi(\omega)$ for the toy spectral function $\Phi(\omega) / 2\pi = \sech(\pi \omega)$ and $n=20$. This governs the dynamics of $O_{n}$ with respect to the \textit{full} Liouvillian $\mathcal{L}$, not the restricted Liouvillian $\mathcal{L}_{n}$ (see \cref{fig:level_n_gf_example} for the latter). In this case $\beta_{n} \approx \sqrt{n(n-1)}$~\cite{chenAsymptoticsExtremeZeros1997}, with the linear scaling $\beta_{n} \sim \mathcal{O}(n)$ reflecting the exponential decay of $\Phi(\omega \to \infty)$. One can see that $\wt{\Phi}_{n}(\omega)$ is peaked near $|\omega| = \beta_{n}$, and as $n \to \infty$ this becomes increasingly sharp. The region $|\omega| \leq \beta_{n}$ is referred to as the `bulk', and is analogous to the oscillatory region in a WKB approximation. For $|\omega| \gg \beta_{n}$, $\wt{\Phi}_{n}(\omega)$ is exponentially small. When $\Phi(\omega \to 0) \sim |\omega|^{\rho}$ has a power-law at $\omega = 0$, the orthogonal polynomials $p_{n}(\omega)$ behave differently near the origin (see e.g.~\cref{lem:pn0_scaling}).}
    \label{fig:toy_weighted_poly}
\end{figure}

One way to understand the MRS number $\beta_{n}$ is as an $n$-dependent frequency bandwidth. From \cref{eq:MRS_def} it is easy to see that if we introduce a frequency scale $\omega_{0}$ by mapping $Q(\omega) \mapsto Q(\omega / \omega_{0})$, then $\beta_{n}$ transforms as $\beta_{n} \mapsto \omega_{0} \beta_{n}$, so $\beta_{n}$ indeed has units of frequency. Furthermore, if we consider the spectral function $\wt{\Phi}_{n}(\omega) \coloneqq \int_{\mathbb{R}} e^{-i \omega t} (O_{n} | e^{i \mathcal{L}t} | O_{n}) \diff t$ for the dynamics of the $n$th Lanczos operator $O_{n}$ under the full Liouvillian $\mathcal{L}$, then from \cref{eq:lanczos_vector} we can see that this is related to the original spectral function by
\begin{equation}
    \wt{\Phi}_{n}(\omega) = p_{n}(\omega)^{2} \Phi(\omega).
\end{equation}
(Note $\wt{\Phi}_{n}(\omega)$ is not to be confused with the spectral function $\Phi_{n}(\omega)$ related to $G_{n}(z)$; they correspond to dynamics generated by $\mathcal{L}$ and $\mathcal{L}_{n}$ respectively). A remarkable result due to Mhaskar and Saff~\cite{mhaskarWhereDoesSup1985} shows that, under certain conditions on $Q$, the MRS number $\beta_{n}$ provides an asymptotically sharp characterization of where this weighted polynomial is peaked, so $\argmax_{\omega\geq 0} \wt{\Phi}_{n}(\omega) \xrightarrow{n \to \infty} \beta_{n}$. One can understand the scaling in \cref{eq:beta_n_asymptotics} as coming from the tradeoff between the growth of the degree $2n$ polynomial $p_{n}(\omega)^{2}$ and the decay of the spectral function $\Phi(\omega \to \infty) \sim \exp[-\mathcal{O}(\omega^{p} \log^{q}{\omega})]$. Physically, $\beta_{n}$ therefore represents the frequency bandwidth over which the $n$th spectral function $\wt{\Phi}_{n}(\omega)$ is non-negligible. For frequencies $|\omega| \gg \beta_{n}$, $\wt{\Phi}_{n}(\omega)$ will be exponentially small (see \cref{fig:toy_weighted_poly} for an example); this region is analogous to the `classically forbidden' region in a WKB analysis, and will give only exponentially small contributions to the Lanczos coefficients. %

\subsection{Rescaled equilibrium measure}
\label{sec:coulomb_rescaled}
Given a minimizing density distribution $\sigma_{n}$, it is helpful to define a related distribution
\begin{equation}
    \psi_{n}(x) \coloneqq \dfrac{\beta_{n}}{n} \sigma_{n}(\beta_{n}x),
    \label{eq:psin_rescaling}
\end{equation}
which is normalized to 1 and has support $[-1,1]$. We refer to $\psi_{n}$ as the rescaled equilibrium density. It is the energy minimizing distribution with charge 1 for the `rescaled potential'
\begin{equation}
    V_{n}(x) \coloneqq \dfrac{1}{n} Q(\beta_{n} x).
\end{equation}
The rescaling by $\beta_{n}$ in the definition of $V_{n}$ washes out the non-universal details of $Q$ when $n$ is large. For example, for the class of potentials we consider, if $Q(x) \sim x^{p}$ as $x \to \infty$, then for any finite $x$ we have
\begin{equation}
    \lim_{n \to \infty} V_{n}(x) = \kappa_{p} |x|^{p}, 
    \label{eq:limit_to_freud}
\end{equation}
where $\kappa_{p}=\Gamma[\frac{1}{2}]\Gamma[\frac{p}{2}]/\Gamma[\frac{p+1}{2}]$~\cite[Lemma 3.2]{lubinskyUniformMeanApproximation1988}. This emergent dominance of the high frequency scaling means that the equilibrium density $\psi_{n}(x)$ has several displays of \textit{universality} in the $n \to \infty$ limit, with the same properties as if the potential were the corresponding `Freud potential' $Q^{(p)}(x) \equiv \kappa_{p} |x|^{p}$. These potentials have $\beta_{n} = n^{1/p}$, and an equilibrium measure given by the `Ullman distribution'~\cite{saffLogarithmicPotentialsExternal1997}
\begin{equation}
    \psi^{(p)}(x) \coloneqq \dfrac{1}{\pi} \int_{|x|}^{1} \dfrac{p u^{p-1}}{\sqrt{u^{2}-x^{2}}} \diff u.
    \label{eq:freud_eq_measure}
\end{equation}
One simple manifestation of this is that, quite generally, the density $\psi_{n}(x)$ vanishes like $\sqrt{1-x^{2}}$ at the endpoints $x=\pm 1$; this has a manifestation in the famous Wigner semicircle law of random matrix theory~\cite{mehtaRandomMatrices2004}, but holds near the edge more generally beyond Gaussian ensembles. Other universal properties of the equilibrium Coulomb gas distribution are more easily stated in terms of a function $h_{n}(x)$ defined by
\begin{equation}
    \psi_{n}(x) \eqqcolon \dfrac{1}{2\pi} h_{n}(x) \sqrt{1-x^{2}},
    \label{eq:psin_hn}
\end{equation}
for $x \in [-1,1]$. The values of $h_{n}(x)$ at $x=0$ and $x=1$ appeared in \cref{thm:recurrence_theorem} describing hydrodynamic corrections to the Lanczos coefficients.

For large enough $n$ (but $\mathcal{O}(1)$ in terms of the microscopic couplings), we show that the minimum energy configuration of $E_{Q}$ is obtained when $h_{n}(x)$ is given by
\begin{equation}
    h_{n}(x) = \dfrac{1}{\pi} \int_{-1}^{1} \dfrac{V_{n}^{\prime}(s) - V_{n}^{\prime}(x)}{s-x} \dfrac{\diff s}{\sqrt{1 - s^{2}}}.
    \label{eq:hn_integral}
\end{equation}
We then show that $h_{n}(x)$ displays universal behavior at the origin $x = 0$ and the endpoints $x = \pm 1$, similar to that of the Ullman distribution \cref{eq:freud_eq_measure}. This behavior depends only on the $|\omega| \to \infty$ behavior of $Q$, provided it is sufficiently smooth near the origin, essentially because of the large-$n$ scaling of $V_{n}(x)$ given by \cref{eq:limit_to_freud}.

As an example, for our class of spectral functions, at the edge of the bulk we have:
\begin{lemma}[Informal]
    If $Q(\omega) \sim |\omega|^{p}\log^{q}{|\omega|}$ as $|\omega| \to \infty$ for some $p > 0$ and $q \in \mathbb{R}$, then
    \begin{equation}
        \lim_{n \to \infty} h_{n}(1) = 2p.
    \end{equation}
    \label{lem:hn1_scaling}
\end{lemma}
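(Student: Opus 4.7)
My plan has two steps: (i) pass to the limit $n\to\infty$ inside the integral \cref{eq:hn_integral} at $x=1$, reducing to the ``Freud case'' $V_\infty(x) \equiv \kappa_p|x|^p$; (ii) evaluate the resulting integral in closed form.

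For step (i), \cref{eq:limit_to_freud} gives $V_n \to V_\infty$, and differentiating (justified by the smoothness hypothesis on $Q$) yields $V_n'(s) \to V_\infty'(s) = \kappa_p p\,|s|^{p-1}\sgn(s)$ uniformly on compact subsets of $[-1,1]\setminus\{0\}$. In the integrand, the divided difference $[V_n'(s) - V_n'(1)]/(s-1)$ is regular at $s=1$, controlled by $\sup|V_n''|$ in a neighborhood of $1$. A direct computation gives $V_n''(x) = (\beta_n^2/n)\,Q''(\beta_n x)$, and combining $\beta_n^p \sim n$ from \cref{eq:beta_n_asymptotics} with the $|\omega|\to\infty$ behavior $Q''(\omega) \sim p(p-1)|\omega|^{p-2}$ shows that $V_n''$ is uniformly bounded in $n$ near $s=1$. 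Since the remaining $1/\sqrt{1-s^2}$ is integrable at $s=-1$, dominated convergence yields
\begin{equation}
    \lim_{n\to\infty} h_n(1) = \frac{1}{\pi}\int_{-1}^1 \frac{V_\infty'(s) - V_\infty'(1)}{s-1}\frac{ds}{\sqrt{1-s^2}}.
\end{equation}

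For step (ii), I split the integral over $[-1,0]\cup[0,1]$, send $s\mapsto -s$ on the negative half, and combine over a common denominator to obtain
\begin{equation}
    \lim_{n\to\infty} h_n(1) = \frac{2\kappa_p p}{\pi}\int_0^1 \frac{s^p - 1}{(s-1)(s+1)}\frac{ds}{\sqrt{1-s^2}}.
\end{equation}
The substitution $s = \cos\theta$ converts this to $\frac{2\kappa_p p}{\pi}\int_0^{\pi/2} (1-\cos^p\theta)/\sin^2\theta\,d\theta$. Integration by parts with $u = 1-\cos^p\theta$ and $dv = \csc^2\theta\,d\theta$ (the boundary term vanishes because $1-\cos^p\theta = O(\theta^2)$ overwhelms $\cot\theta = O(1/\theta)$ near $\theta=0$) reduces the integral to $p\int_0^{\pi/2}\cos^p\theta\,d\theta$, a standard Beta integral equal to $\sqrt{\pi}\,\Gamma((p+1)/2)/[p\,\Gamma(p/2)]$. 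Using $\kappa_p = \sqrt{\pi}\,\Gamma(p/2)/\Gamma((p+1)/2)$, the Gamma factors cancel in pairs and the answer is $2p$.

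The main technical obstacle is step (i): the interchange of limit and integral depends on uniform smoothness of $V_n'$ near the edge $s=1$, which is exactly where the ``$Q$ is smooth enough'' proviso attached to \cref{thm:recurrence_theorem} earns its keep. A parallel approach would invoke standard potential-theoretic continuity: first show $\psi_n \to \psi^{(p)}$ (the Ullman distribution) and then read off $h^{(p)}(1) = 2p$ from the explicit edge asymptotic $\psi^{(p)}(x) \sim (p/\pi)\sqrt{2(1-x)}$ as $x\to 1^-$. However, promoting weak convergence of measures to convergence of the pointwise boundary value $h_n(1)$ still requires essentially the same regularity input, so stepping directly through the integral representation is the cleanest path.
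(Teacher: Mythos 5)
Your proposal is essentially the same approach the paper takes: pass the limit $n\to\infty$ inside the integral representation of $h_n(1)$, reduce to the Freud case, and evaluate the resulting integral to $2p$. The paper works from the even-symmetry form \cref{eq:hn_even_integral} over $[0,1]$, $h_n(1)=\frac{2}{\pi}\int_0^1\frac{V_n'(1)-sV_n'(s)}{1-s^2}\frac{ds}{\sqrt{1-s^2}}$, and arrives at the same limiting integral $\frac{2p}{\pi\lambda_p}\int_0^1\frac{1-s^p}{1-s^2}\frac{ds}{\sqrt{1-s^2}}$ that you get after symmetrizing; your IBP evaluation is a nice explicit version of a step the paper leaves to the reader. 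Near $s=1$, the two proofs control the difference quotient by related means: the paper via the auxiliary function $f_Q(y)=\frac{d}{dy}[yQ'(y)]$ together with \cref{cor:useful_estimate_T}, you via a uniform bound on $V_n''$. Your stated scaling $Q''(\omega)\sim p(p-1)|\omega|^{p-2}$ drops the $\log^q$ factor built into the class $\mathrm{VSLF}(p,q)$, and $\beta_n^p\sim n$ likewise omits the $\log^q n$ denominator, but these cancel against each other in $V_n''(x)=(\beta_n^2/n)\,Q''(\beta_n x)$ so the conclusion that $V_n''$ is uniformly bounded near $s=1$ is still right.

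The one genuine gap is near $s=0$. You invoke uniform convergence of $V_n'$ ``on compact subsets of $[-1,1]\setminus\{0\}$'' and then assert dominated convergence, but you never produce a dominating function on a neighborhood of $s=0$. For $p<1$ the limit $V_\infty'(s)=\kappa_p p\,|s|^{p-1}\sgn(s)$ actually diverges there, so this is not automatic. The paper addresses exactly this: it splits off the shrinking interval $[0,A/\beta_n]$, shows its contribution vanishes because $\int_0^{A/\beta_n}sV_n'(s)\,ds=\frac{1}{n\beta_n}\int_0^A uQ'(u)\,du\to 0$, and uses monotonicity of $sV_n'(s)$ on $[A/\beta_n,1/2]$ to get a uniform bound there. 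Equivalently, \cref{cor:useful_estimate_T} gives $0<V_n'(s)\leq (2p/\lambda_p)s^{p-1-\epsilon}$ for $s\in[A/\beta_n,1]$, which is an integrable dominating function once $\epsilon<p$. You should fold one of these into step (i); without it, the DCT claim is unsupported precisely for the weakly confined part of the parameter range $0<p<1$ where $V_\infty'$ is unbounded.
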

We test this result numerically for the mixed field Ising model in \cref{sec:airy_bootstrap}. Note that $h_{n}(x)$ is even if $Q(x)$ is even, so the same conclusion holds for $h_{n}(-1)$. 

\subsection{Confinement transition}
\label{sec:coulomb_confinement}
The behavior of $h_{n}(0)$ is more interesting, since it is sensitive to a \textit{confinement transition} in the Coulomb gas, between `strong confinement' for $p > 1$ to `weak confinement' for $p < 1$ (see \cref{fig:confinement_schematic})~\cite{canaliNonuniversalityRandommatrixEnsembles1995,freilikherUnitaryRandommatrixEnsemble1996,freilikherTheoryRandomMatrices1996,claeysWeakStrongConfinement2023}. Despite this confinement transition being driven primarily by the \textit{high frequency} behavior of the spectral function, it turns out to manifest in a divergent density at \textit{low frequencies}, as codified in \cref{lem:hn0_scaling}. This means it will have an imprint on the signatures of hydrodynamics on the recurrence coefficients $b_{n}$, as we saw in \cref{thm:recurrence_theorem}. The following result is stated in terms of $h_{n}(0)$, but can be translated back to a confinement transition in the original equilibrium density $\sigma_{n}(0)$ using \cref{eq:psin_rescaling,eq:psin_hn}. 
\begin{lemma}[Informal]
    Suppose $Q(\omega) \sim |\omega|^{p}\log^{q}{|\omega|}$ as $|\omega| \to \infty$ for some $p \geq 1$ and $q \in \mathbb{R}$.

    If $p > 1$, then for all $q \in \mathbb{R}$ we have
    \begin{equation}
        \lim_{n \to \infty} h_{n}(0) = \dfrac{2p}{p-1},
    \end{equation}
    while in the marginal case $p = 1$, as $n \to \infty$ we have
    \begin{equation}
        h_{n}(0) = (\log{n})^{1 + o(1)}
    \end{equation}
    provided $q > -1$. (Local interactions enforce $q \geq 0$.)
    \label{lem:hn0_scaling}
\end{lemma}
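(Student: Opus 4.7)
The starting point is to set $x=0$ in \cref{eq:hn_integral}: since $Q$ is even, $V_n'$ is odd with $V_n'(0)=0$, and $V_n'(s)/s$ is even, giving
\begin{equation}
h_n(0) = \frac{2}{\pi}\int_0^1 \frac{V_n'(s)}{s}\frac{ds}{\sqrt{1-s^2}} = \frac{2\beta_n}{\pi n}\int_0^1 \frac{Q'(\beta_n s)}{s\sqrt{1-s^2}}\,ds.
\end{equation}
The plan is to evaluate this asymptotically as $n\to\infty$ by substituting the high-frequency asymptotic $Q'(\omega)\sim p|\omega|^{p-1}(\log|\omega|)^q$ and eliminating $\beta_n$ via the MRS equation \cref{eq:MRS_def}, which under the same asymptotic reads $n\sim \pi^{-1}p\beta_n^p(\log\beta_n)^q\int_0^1 s^p(1-s^2)^{-1/2}\,ds$. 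I would split the $s$-integral at $s_*=M/\beta_n$ for a large $n$-independent $M$. On $[0,s_*]$ the argument $\beta_n s$ stays bounded, so $Q'$ is controlled by its smooth behavior near the origin with $Q'(0)=0$, giving a contribution that is subleading to the target scaling in both regimes. On $[s_*,1]$ one substitutes the asymptotic form of $Q'$, with errors controlled by the two-sided envelopes from \cref{sec:potential_definitions}.

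For $p>1$, the weight $s^{p-2}(1-s^2)^{-1/2}$ is integrable on $[0,1]$, so one can send $s_*\to 0$; writing $\log(\beta_n s)=\log\beta_n+\log s$, the $\log s$ piece is integrable against this weight and is smaller than the $\log\beta_n$ term by a factor $O(1/\log\beta_n)$. The leading integral therefore reduces to $(\log\beta_n)^q$ times $\int_0^1 s^{p-2}(1-s^2)^{-1/2}\,ds$. Assembling the prefactors, the $\beta_n^p(\log\beta_n)^q$ dependence cancels against the MRS constraint, leaving $h_n(0)\to 2\int_0^1 s^{p-2}(1-s^2)^{-1/2}\,ds\,/\int_0^1 s^p(1-s^2)^{-1/2}\,ds$. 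Evaluating both integrals as beta functions and using $\Gamma(p/2+1)=(p/2)\Gamma(p/2)$ and $\Gamma((p+1)/2)=\frac{p-1}{2}\Gamma((p-1)/2)$, this collapses to $2p/(p-1)$, as claimed.

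For $p=1$ the weight $1/[s\sqrt{1-s^2}]$ is logarithmically divergent at $s=0$, and it is precisely this divergence cut off at $s_*=M/\beta_n$ that supplies the $\log n$. After substituting $v=-\log s$, the integral on $[s_*,1]$ becomes $\int_0^{\log(\beta_n/M)}(\log\beta_n-v)^q(1-e^{-2v})^{-1/2}\,dv$, whose leading behavior for $q>-1$ is $(\log\beta_n)^{q+1}/(q+1)$; the $v\to 0^+$ endpoint contributes only a subdominant $O((\log\beta_n)^q)$. Multiplying by $2\beta_n/(\pi n)$ and using the $p=1$ MRS asymptotic $\beta_n\sim n/(\log n)^{q+o(1)}$ yields $h_n(0)=(\log n)^{1+o(1)}$. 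The main obstacle throughout is promoting the pointwise envelopes on $Q'$ to uniform integral estimates, which is most delicate in this marginal case because the entire leading contribution comes from a logarithmic tail whose length grows with $n$; the $o(1)$ in the exponent exists precisely to absorb the slowly-varying slack inherited from the $\epsilon\to 0$ envelopes on $Q$.
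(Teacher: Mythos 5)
Your proposal follows the paper's proof of this lemma essentially step for step: the same integral representation $h_n(0) = \frac{2}{\pi}\int_0^1 V_n'(s) s^{-1}(1-s^2)^{-1/2}\,\mathrm{d}s$, the same split at a cutoff near $s=0$ to control the smooth behavior of $Q'$ at the origin (using $Q'(0)=0$), the same reduction of the remainder to the large-$\omega$ asymptotics of $Q'$, and the same cancellation of $\beta_n$ against the MRS constraint. Two implementation differences are worth noting: (i) for $p>1$ the paper applies dominated convergence against the scaling limit $V_n'(s)\to (p/\lambda_p)\,s^{p-1}$ from \cref{lem:vslf_characterization}(v), which absorbs the $(\log\beta_n s)^q$ factor uniformly, whereas your binomial-style expansion of $(\log\beta_n + \log s)^q$ and the claim that the $\log s$ correction is $O(1/\log\beta_n)$ smaller only hold uniformly away from $s\sim M/\beta_n$; closing that gap requires the additional (easy) remark that the exceptional region $[M/\beta_n,\,\beta_n^{-1+\delta}]$ carries mass $O(\beta_n^{-(1-\delta)(p-1)})$, which is polynomially small and therefore negligible even against the only logarithmically small target $(\log\beta_n)^q$ when $q<0$; (ii) for $p=1$ the paper splits at a growing cutoff $(\log n)^\alpha$ while you use a constant $M$, both of which work here since the near-origin piece is $O(1)$ against the $\Theta((\log n)^{q+1})$ leading term, which is subdominant precisely when $q>-1$.
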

We prove \cref{lem:hn1_scaling,lem:hn0_scaling} in \cref{sec:endpoint_analysis}. It is interesting to now recall the \textit{operator growth hypothesis} (OGH)~\cite{parkerUniversalOperatorGrowth2019}, which posits that operators undergoing chaotic many-body dynamics generically grow as fast as possible under locality constraints. These constraints stipulate that the spectral function must decay at least exponentially with $\omega$, so that $Q(\omega) \sim |\omega|^{p}$ with $p \geq 1$, and hence $b_{n}$ grows no faster than $\mathcal{O}(n)$. The OGH therefore amounts to the statement that chaotic many-body systems generically have $p=1$; here we see that this implies they can be identified with the critical point of a Coulomb gas confinement transition (see \cref{fig:confinement_schematic}). Systems with spectral functions decaying superexponentially ($p>1$) will instead lie in the strongly confined phase of the Coulomb gas. 

\begin{figure}[t]
    \centering
    \includegraphics[width=0.9\columnwidth]{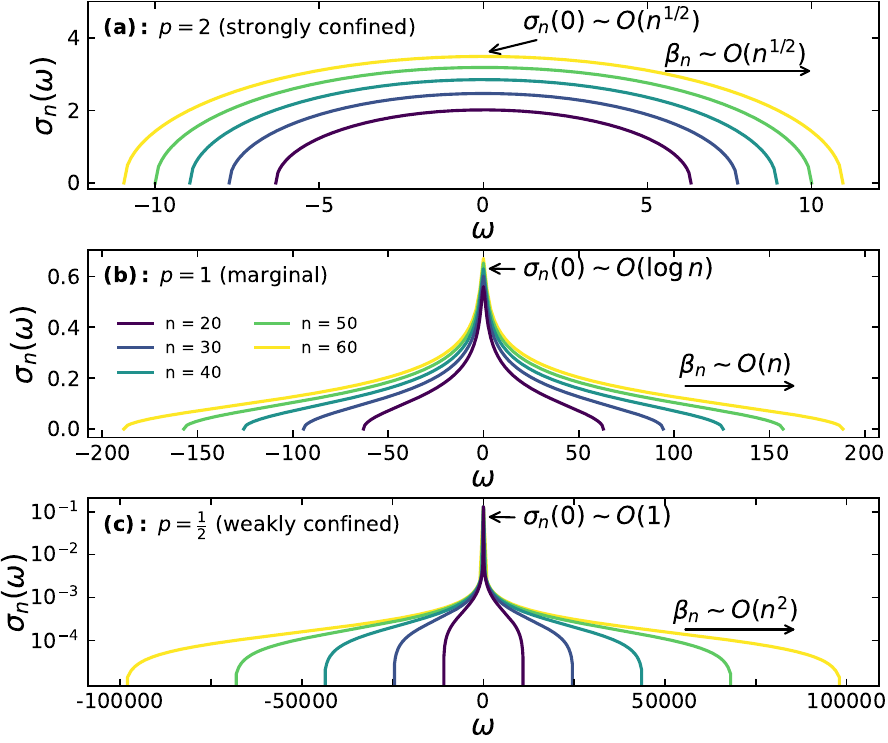}
    \caption{Illustration of the confinement transition using the potential $Q(\omega) = (1 + \omega^{2})^{p/2}$ for different growth exponents $p$. The values $p=\frac{1}{2}$ and $p=2$ lie in the weakly and strongly confined phases respectively, while $p=1$ is marginal. All systems with local interactions should have $p \geq 1$. The confinement transition can be diagnosed via the equilibrium density $\sigma_{n}(0)$ at zero frequency. In the strongly confined phase $\sigma_{n}(0) \sim \mathcal{O}(n/\beta_{n})$ grows algebraically with $n$, reducing to logarithmic growth $\sigma_{n}(0) \sim \mathcal{O}(\log{n})$ at the critical point (or $\sigma_{n}(0) \sim \mathcal{O}(\log^{2}{n})$ in one spatial dimension). In the weakly confined phase, $\sigma_{n}(0) \sim \mathcal{O}(1)$ does not grow with $n$.}
    \label{fig:confinement_example}
\end{figure}
The confinement transition can be heuristically understood as follows. We have already seen that the large frequency scaling of the potential $Q(\omega) \sim |\omega|^{p} \log^{q}{|\omega|}$ determines the width $\beta_{n}$ of the equilibrium charge density distribution $\sigma_{n}(\omega)$ via \cref{eq:beta_n_asymptotics}. The confinement transition is related to the relative scaling of the bandwidth $\beta_{n}$ and the total charge $n$. For $p>1$, $\beta_{n}$ grows slower than $n$, so the average charge density $n/(2\beta_{n})$ increases with $n$; heuristically, the charge is increasingly `packed in'. A useful diagnostic is the charge density at zero frequency, $\sigma_{n}(\omega=0)$, which is related to $h_{n}(0)$ via
\begin{equation}
    \sigma_{n}(0) = \dfrac{1}{2\pi} \frac{n}{\beta_{n}} h_{n}(0).
\end{equation}
\cref{lem:hn0_scaling} tells us that $h_{n}(0)$ is $\mathcal{O}(1)$ for $p>1$, where the $\mathcal{O}(1)$ refers to scaling with $n$. Combining this with the scaling of $\beta_{n}$ given in \cref{eq:beta_n_asymptotics}, we conclude that $\sigma_{n}(0) \sim \mathcal{O}(n/\beta_{n})$ increases algebraically with $n$, with the same $n$-scaling as the average charge density. In this sense the charge density is approximately uniform (in $\omega$) in the bulk. An example is shown in \cref{fig:confinement_example}(a), where we show the equilibrium density $\sigma_{n}(\omega)$ for the toy potential $Q(\omega) = (1 + \omega^{2})^{p/2}$ with $p=2$.

By contrast, in the (locality forbidden) weakly confined phase ($p<1$), the bandwidth $\beta_{n}$ grows faster than $n$, so the average charge density $n/(2\beta_{n})$ decreases with $n$. However, the charge is now distributed differently at low frequencies compared with the rest of the bulk; the density profile is peaked at $\omega=0$, where it becomes $\sigma_{n}(0) \sim \mathcal{O}(1)$. An example is shown in \cref{fig:confinement_example}(c). One can understand this divergence from the behavior of the derivative $Q^{\prime}(\omega) \sim \omega^{p-1}$, which for $p<1$ grows algebraically as $\omega$ becomes smaller. For our class of spectral functions we only assume that this behavior holds at large frequencies, but when $n$ is large this scaling still governs the behavior of the rescaled potential $V_{n}^{\prime}(s) = (\beta_{n}/n) Q^{\prime}(\beta_{n}s)$ for $s \sim \mathcal{O}(1)$ (c.f.~\cref{eq:limit_to_freud}). As such, for $x=0$ the integrand of \cref{eq:hn_integral} diverges algebraically as $s$ approaches zero, until the divergence is eventually cut off by our differentiability assumption on $Q$. This gives rise to a contribution scaling like $h_{n}(0) \sim \mathcal{O}(n^{(1-p)/p})$, so that $\sigma_{n}(0) \sim \mathcal{O}(1)$.

Finally, in the marginal case $p=1$ relevant for chaotic systems, the bandwidth $\beta_{n}$ grows at the same rate as the total charge $n$ (up to logarithmic factors depending on $q$). By a similar argument to the $p<1$ case, the derivative $V_{n}^{\prime}(s)$ is now approximately constant for $s \sim \mathcal{O}(1)$, so the integrand of \cref{eq:hn_integral} has a logarithmic divergence as $s \to 0$. This is again cut off at very small $s\sim\mathcal{O}(1/\beta_{n})$ by our differentiability assumption on $Q$, with the end result that $h_{n}(0)$ diverges like $\mathcal{O}(\log{n})$. Any logarithmic corrections to $Q(\omega) \sim |\omega| \log^{q}{|\omega|}$ do not affect the logarithmic scaling of $h_{n}(0)$ provided $q > -1$. The case $q=-1$ corresponds to the transition to weak confinement; this is also where the Hamburger moment problem for $\Phi(\omega)$ becomes indeterminate~\cite{shohatProblemMoments1943}, which follows from Carleman's condition~\cite{akhiezerClassicalMomentProblem1965}. Note that local interactions enforce $q \geq 0$ if $p=1$~\cite{abaninExponentiallySlowHeating2015}. The scaling of $\sigma_{n}(0)$ is then $\sigma_{n}(0) \sim \mathcal{O}(n h_{n}(0)/\beta_{n})$, giving $\sigma_{n}(0) \sim \mathcal{O}(\log{n})$ for $\beta_{n} \sim n$ and $\sigma_{n}(0) \sim \mathcal{O}(\log^{2}{n})$ for $\beta_{n} \sim n / \log{n}$, the latter case relevant in one spatial dimension~\cite{parkerUniversalOperatorGrowth2019}.

Before proceeding, we note that the convergence with $n$ of the equilibrium Coulomb gas density to the asymptotic forms indicated in \cref{lem:hn1_scaling,lem:hn0_scaling} can be extremely slow, depending on the functional form for $Q$. When $Q$ is an even-order polynomial, $Q(x) \sim \mathcal{O}(x^{2m})$, then it is known that the finite $n$ corrections to $h_{n}(0)$ and $h_{n}(1)$ scale like $\mathcal{O}(n^{-1/2m})$~\cite{deiftStrongAsymptoticsOrthogonal1999}. When $Q$ is not necessarily an actual polynomial, but merely of polynomial-growth as we treat here, then in general the rate of convergence depends on how quickly $Q$ approaches its asymptotic scaling $Q(x) \sim x^{p} \log^{q}{x}$. In toy examples we have also found that convergence of $h_{n}(0)$ tends to be slower than that of $h_{n}(1)$, particularly in the quasi-linear case $p=1$ where $h_{n}(0)$ has a logarithmic divergence. Having said that, this slow convergence will not necessarily be a problem for estimating e.g.~diffusion constants: in \cref{sec:transport_coeffs} we show how to eliminate the dependence on these slowly converging Coulomb gas densities. After developing some technical tools, we perform some quantitative checks of the convergence of the equilibrium measure to the Ullman distribution \cref{eq:freud_eq_measure} in \cref{sec:eq_measure_scaling}. We find, at the values of $n$ typically available for numerical simulations, that the Ullman distribution successfully describes the large-scale qualitative shape of the true equilibrium measure, but at finite $n$ there remain significant fluctuations which must be accounted for in order to get quantitative accuracy in estimates of the spectral function.

\section{Universality of the level-$n$ Green's function}
\label{sec:gf}
In this section we will explore the emergent universality in Green's functions outlined in \cref{sec:synopsis_emergent_universality}. 
Let $\mathcal{L}_{n} = (\mathcal{L}_{ij})_{i, j \geq n}$ denote the tridiagonal Liouvillian matrix restricted to sites $n$ and above, and let 
\begin{equation}
    G_{n}(z) \coloneqq \left(O_{n}\left|\dfrac{1}{z-\mathcal{L}_{n}}\right|O_{n}\right),
    \label{eq:greens_func_def}
\end{equation}
which we refer to as the `level-$n$ Green's function'. In the language of the memory function formalism, one should think of this as the Green's function for dynamics restricted to the `fast space' spanned by $\{O_{m}\}_{m\geq n}$.

Due to the tridiagonal structure of the Liouvillian, $G_{n}(z)$ obeys the recursion relation~\cite{viswanathRecursionMethodApplication2013}
\begin{equation}
    G_{n}(z) = \dfrac{1}{z - b_{n+1}^{2} G_{n+1}(z)}.
    \label{eq:greens_func_recursion}
\end{equation}
Recursing this gives a continued fraction expansion for the original Green's function $G(z) \equiv G_{0}(z)$ (taking $\norm{A} = 1$ for notational simplicity)
\begin{equation}
    G(z) = \dfrac{1}{z - \dfrac{b_{1}^{2}}{z - \cdots \underset{\cdots - \dfrac{b_{n-1}^{2}}{z - b_{n}^{2} G_{n}(z)}}{}}}
    \label{eq:greens_func_continued_fraction}
\end{equation}
In practice one can only compute some finite number of coefficients $\{b_{k}\}_{k=1}^{n}$, so one must somehow terminate the continued fraction expansion at level-$n$. Simply setting $G_{n}(z) = 0$ is a bad idea because it amounts to terminating the Lanczos chain after site $n$, which gives unphysical reflections of the operator wavefunction off the hard boundary, leading to very slow convergence with $n$. A better approach is to choose a model for $G_{n}(z)$ which is designed to accurately capture the operator backflow. The `recursion method'~\cite{viswanathRecursionMethodApplication2013} is a popular numerical technique where one chooses such a `terminator' level-$n$ Green's function based on some high level features of the spectral function, like its high-frequency decay and any algebraic singularities. The existence of appropriate terminators is something of a lottery because one needs exact expressions for all three of the Lanczos coefficients, the spectral function, and the Green's function. These solutions are usually expressed in terms of special functions, and there is no guarantee an exactly solvable model will exist with all the appropriate characteristics. With our Riemann-Hilbert approach, we can analyze the $n \to \infty$ asymptotic behavior of $G_{n}(z)$ directly, bypassing the need for these exactly solvable models.

We prove that, as $n \to \infty$, $G_{n}(z)$ approaches different universal scaling forms in different regions of the complex plane. The simplest case is when $z$ is sufficiently far from the special points $z=0$ and $z=\pm \beta_{n}$. In this case we show that $G_{n}(z)$ approaches the Wigner semicircle law, the same as the average global resolvent $\frac{1}{n} \overline{\tr{1/(z-M)}}$ for random $n\times n$ matrices $M$ drawn from the Gaussian Unitary Ensemble~\cite{taoTopicsRandomMatrix2012} with a suitably rescaled bandwidth. If the spectral function $\Phi(\omega)$ is complex analytic at $\omega=0$, this semicircle behavior persists all the way down to $z=0$. On the other hand, if $\Phi(\omega \to 0) \sim |\omega|^{\rho}$ has a low-frequency power law, then the semicircle law breaks down near $\omega=0$, and $G_{n}(z)$ instead is described by a form dictated by the Bessel universality class. Finally, near the edge of the spectrum, $z \approx \pm \beta_{n}$, the behavior of $G_{n}(z)$ is described by the Airy universality class. We show an example of this in \cref{fig:level_n_gf_example} for the toy spectral function $\Phi(\omega) = |\omega|^{-1/2} \sech(\pi \omega)$, where we focus on the behavior along the real line, as encoded in the corresponding spectral function $\Phi_{n}(\omega) = 2 \Im[G_{n}(\omega -i 0^{+})]$. Since in this case we know the spectral function $\Phi(\omega)$, we can numerically compute $G_{n}(z)$ using a formula expressing it as a ratio of two Cauchy-Stieltjes transforms (see \cref{eq:Gn_from_cauchy}).

\begin{figure}[t]
    \centering
    \includegraphics[width=\columnwidth]{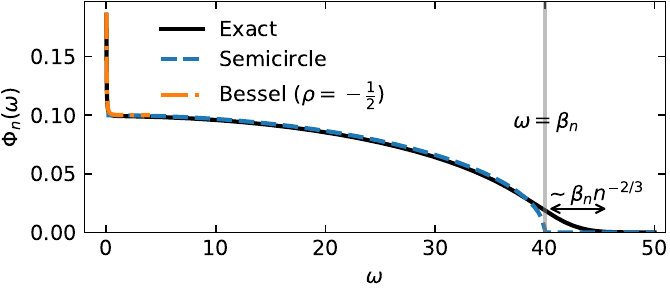}
    \caption{The spectral function $\Phi_{n}(\omega) = 2 \Im[ G_{n}(\omega - i 0^{+})]$ corresponding to the level-$n$ Green's function $G_{n}(z)$, shown for the toy spectral function $\Phi(\omega) = |\omega|^{-1/2} \sech(\pi \omega)$ and $n=40$. The Wigner semicircle law \cref{eq:Phi_n_semicircle} describes the behavior very well in the bulk, but breaks down near $\omega=0$ due to the low frequency power-law in the spectral function. Instead the behavior near $\omega=0$ is described by the universal Bessel form corresponding to \cref{eq:n_gf_asymptotic}. Near the endpoint $\omega=\beta_{n}$ the behavior is described by Airy universality.}
    \label{fig:level_n_gf_example}
\end{figure}

\subsection{Frequencies in the bulk}
First let us consider frequencies sufficiently far from the special points $z=0$ and $z = \pm \beta_{n}$, where the level-$n$ Green's function $G_{n}(z)$ takes on a particularly simple form as $n \to \infty$. In \cref{sec:greens_func_derivation} we show that, as $n \to \infty$, $\beta_{n} G_{n}(\beta_{n} z)$ approaches a universal scaling form
\begin{equation}
    \boxed{\beta_{n} G_{n}(\beta_{n} z) \approx 2 \left(z -  \sqrt{z+1}\sqrt{z-1}\right),}
    \label{eq:greens_func_bulk}
\end{equation}
for all $z$ satisfying $|z| > \delta_{0}$ and $|z/\beta_{n} \pm 1| > \delta_{1}$, where $\delta_{0}$ and $\delta_{1}$ are small but $\mathcal{O}(1)$ constants (see \cref{fig:plancherel_schematic}), and we use the principal branch of the square root, so the RHS has a branch cut along $[-1,1]$. Thus, away from the special points $z=0$ and $z=\pm \beta_{n}$, the only free parameter determining $G_{n}(z)$ is the frequency bandwidth $\beta_{n}$, which can be approximated up to $\mathcal{O}(1/n)$ relative error by $\beta_{n} \approx 2 b_{n}$ (c.f.~\cref{thm:recurrence_theorem}). When the spectral function $\Phi(\omega)$ is complex analytic at $\omega=0$, we prove this semicircle behavior extends all the way to $z = 0$. One can then understand the special case \cref{eq:Phi0bn} of our formula for recovering $\Phi(\omega=0)$ from the Lanczos coefficients as being equivalent to substituting $G_{n}(\pm i 0^{+}) \approx \mp 2i/\beta_{n} \approx \mp i/b_{n}$ into the continued fraction expansion for $G(z)$. Although our proof technique in this case requires analyticity at $\omega=0$, our numerical tests in \cref{sec:transport_coeffs} suggest that this scaling $G_{n}(\pm i 0^{+}) \approx \mp 2i/\beta_{n}$ may still be accurate to leading order in $n$, even without such strong analyticity requirements.

Within the `bulk' frequency range $\delta_{0} < |\omega| < (1 - \delta_{1})\beta_{n}$, the spectral function corresponding to \cref{eq:greens_func_bulk} is
\begin{align}
    \Phi_{n}(\omega) &\coloneqq \int_{-\infty}^{\infty} e^{-i \omega t} (O_{n} | e^{i \mathcal{L}_{n} t} | O_{n}) \diff t \nonumber \\
    &= i\left[G_{n}(\omega + i 0^{+}) - G_{n}(\omega - i 0^{+})\right] \nonumber \\
    &\approx \dfrac{4}{\beta_{n}} \sqrt{1 - (\omega/\beta_{n})^{2}},
    \label{eq:Phi_n_semicircle}
\end{align}
namely the \textit{semicircle law} famous from random matrix theory (RMT)~\cite{taoTopicsRandomMatrix2012}. Thus we see the emergence of RMT universality in the bulk frequency response of the operators $O_{n}$ restricted to the `fast space'. To leading order in $\omega/\beta_{n} \ll 1$, this spectral function is constant, providing some justification for approximating the large-$n$ dynamics by white noise, as was the original intuition for the Mori-Zwanzig memory function formalism~\cite{viswanathRecursionMethodApplication2013,forsterHydrodynamicFluctuationsBroken2019}. In \cref{sec:gf_bessel} we will show how this can fail near $\omega=0$.

\subsubsection*{Semicircle law as a fixed point}
Although this is not how we prove it (see \cref{sec:greens_func_derivation} for the actual proof), a quick way to arrive at \cref{eq:greens_func_bulk} is to \textit{assume} the existence of the limit $\beta_{n} G_{n}(\beta_{n} z) \xrightarrow{n \to \infty} f(z)$. This is a nontrivial assumption. One way it can fail is that this limit is different for even and odd $n$---this is what happens near $z=0$ when $\rho\neq 0$. These concerns aside, if we assume the limit exists, then using the properties $\lim_{n \to \infty} b_{n} / \beta_{n} = 1/2$ (c.f.~\cref{thm:recurrence_theorem}) and $\lim_{n \to \infty} b_{n+1}/b_{n} = 1$, one concludes from \cref{eq:greens_func_recursion} that $f(z)$ must solve the fixed point equation
\begin{equation}
    f(z) = \dfrac{1}{z - \frac{1}{4}f(z)}.
\end{equation}
The RHS of \cref{eq:greens_func_bulk} is then the unique solution of this fixed point, subject to the requirement that $f(z)$ is separately analytic in $\mathbb{C}_{\pm}$ and approaches zero like $\mathcal{O}(1/z)$ as $z \to \infty$ (these properties follow from the resolvent representation of $G_{n}(z)$). We remark that this fixed point argument is analogous to a method of proving the free central limit theorem in random matrix theory, where the Wigner semicircle law emerges as the fixed point of repeated free convolutions~\cite{mingoFreeProbabilityRandom2017,erdosMatrixDysonEquation2019}. (Similar to the argument for the standard central limit theorem that the Gaussian distribution is the fixed point of repeated convolutions.)

Without going into the details of the actual proof, let us note that the origin of the semicircle law behavior for $G_{n}(z)$ seems somewhat different to its origin for Gaussian random matrix ensembles. For the latter, the semicircle law arises because the equilibrium measure of the Gaussian potential $Q(\omega) = -\log[\exp(-\omega^{2})] = \omega^{2}$ is the semicircle law (c.f.~\cref{eq:freud_eq_measure} for $p=2$). In that context, the equilibrium measure governs the average eigenvalue density as $n \to \infty$. However, here the semicircle law for $G_{n}(z)$ arises for a much wider class of potentials, even when the corresponding equilibrium measure does \textit{not} approach the semicircle law. The main ingredient to get the semicircle law for $G_{n}(z)$ seems to be that the equilibrium density $\sigma_{n}(\omega)$ is nonzero throughout the whole interval $(-\beta_{n}, \beta_{n})$, and that it vanishes like a square root as $\omega$ approaches the endpoints $\pm \beta_{n}$ (this is called `regular' behavior). We prove this is the case for large enough $n$ for our class of spectral functions (see \cref{lem:hn_positive_x_g1}), subject to the regularity condition that $Q^{\prime}(\omega) > 0$ for large enough $\omega$. In this sense the universal form \cref{eq:greens_func_bulk} seems quite generic, but it would be possible to violate it locally if e.g.~the spectral function contained spectral gaps.%

In the language of the recursion method~\cite{haydockRecursiveSolutionSchrodinger1980,viswanathRecursionMethodApplication2013}, the Green's function in \cref{eq:greens_func_bulk} is known as the `square root terminator', and is one of the simplest ways to terminate the infinite continued fraction for the true Green's function. Indeed, within the approximation $\beta_{n} \approx 2 b_{n}$, it amounts to setting all $b_{n+k} \mapsto b_{n}$, $k=0,1,2,\dots$, to be equal, so that $\mathcal{L}_{n}$ is just a constant tridiagonal matrix with $b_{n}$ on the off-diagonals. 
Our results show that, remarkably, this simple square root terminator is sufficient to accurately compute diffusion constants ($\rho=0$). In fact, we have even more freedom: any terminator satisfying $G_{n}(\omega \pm i 0^{+}) \approx \mp 2i / \beta_{n}$ as $\omega \to 0$ would work just as well. For example, this means that it is \textit{not} necessary to match the high frequency tail of the true spectral function (this is not generally true if $\rho \neq 0$---see \cref{sec:bessel_bootstrap}). In related work, we explore this `stitching freedom' in more detail~\cite{pinnaApproximationTheoryGreens2025}.

We discuss the error term of \cref{eq:greens_func_bulk} in \cref{sec:greens_func_derivation}. In brief, for $z=\mathcal{O}(1)$ there is a multiplicative error of $\mathcal{O}(1/n)$. For $\rho = 0$ this is true for all $z$, but if $\rho \neq 0$ the error can become larger as $z \to 0$, reaching $\mathcal{O}(1/\sigma_{n}(0))$ at $z \sim 1/\beta_{n}$ (corresponding to a physical frequency $\omega = \beta_{n}z$ of $\mathcal{O}(n^{0})$). This is $\wt{\mathcal{O}}(1/n^{(p-1)/p})$ for $p>1$, but only $\mathcal{O}(1/(\log{n})^{1+q+o(1)})$ in the marginal case $p=1$. (We remind the reader that, for $p=1$, $q\geq 0$ is guaranteed by locality, with $q\geq 1$ in $d=1$ spatial dimensions~\cite{abaninExponentiallySlowHeating2015,parkerUniversalOperatorGrowth2019}.)

\subsection{Frequencies near $\omega=0$}
\label{sec:gf_bessel}
In the limit $\omega \to 0$, the distinction between even and odd $n$ is important for $\rho \neq 0$; we will discuss the even case. In the order of limits where $\omega \to 0$ before $n \to \infty$, we can show
\begin{equation}
    \boxed{
    G_{2n}(\omega \pm i 0^{+}) \approx \dfrac{-2}{\beta_{2n}} \left(\dfrac{J_{\frac{1}{2}(\rho-1)} \pm i Y_{\frac{1}{2}(\rho-1)}}{J_{\frac{1}{2}(\rho+1)} \pm i Y_{\frac{1}{2}(\rho+1)}}\right)[\pi \sigma_{2n}(0) \omega],}
    \label{eq:n_gf_asymptotic}
\end{equation}
where $J_{\nu}$ and $Y_{\nu}$ are Bessel functions of the first and second kind respectively, and our shorthand indicates all Bessel functions should be evaluated with the argument $\pi \sigma_{2n}(0) \omega$ (this is a linear approximation of $\pi I_{2n}(\omega) = \pi \int_{0}^{\omega} \sigma_{2n}(s) \diff s \approx \pi \sigma_{2n}(0) \omega$). See \cref{sec:greens_func_derivation} for a derivation. The appearance of Bessel functions here is an indication of the Bessel universality governing the behavior near the origin due to the spectral function scaling $\Phi(\omega \to 0) \sim |\omega|^{\rho}$. In principle one can use this Bessel universality of $G_{n}(z)$ to estimate spectral functions near $\omega = 0$. However, in order to eliminate the \textit{a priori} unknown Coulomb gas density $\sigma_{2n}(0)$, we find it more convenient to follow an equivalent approach and work directly with the zero mode in Lanczos space. We will describe this approach in \cref{sec:hydro}.

\subsection{Frequencies near the edges $\omega=\pm \beta_{n}$}
\label{sec:airy_gn}
Near the edge of the spectrum, meaning $|z/\beta_{n} \pm 1| < \delta_{1}$ for a small $\mathcal{O}(1)$ constant $\delta_{1}$, the Wigner semicircle law for $G_{n}(z)$ again breaks down, and instead the behavior is governed by the Airy universality class. This is analogous to the behavior near the turning points of a WKB approximation~\cite{hallQuantumTheoryMathematicians2013}, or the maximal eigenvalue distribution of a random matrix ensemble~\cite{tracyLevelspacingDistributionsAiry1994}. We refer the reader to \cref{sec:greens_func_derivation} for the full expression for $G_{n}(z)$ in terms of Airy functions, and simply state here the leading behavior precisely at the endpoint $\beta_{n}$, where we have
\begin{equation}
    \boxed{
    G_{n,\pm}(\beta_{n}) \approx \dfrac{2}{\beta_{n}} \left(\dfrac{2\Ai(0) \sqrt{f_{n}^{\prime}(1)} + e^{\pm \frac{i \pi}{3}} \sqrt{2} \Ai^{\prime}(0) (\rho+1)}{2\Ai(0) \sqrt{f_{n}^{\prime}(1)} + e^{\pm \frac{i \pi}{3}} \sqrt{2} \Ai^{\prime}(0) (\rho-1)}\right),}
\end{equation}
with $f_{n}^{\prime}(1) = (n h_{n}(1)/\sqrt{2})^{2/3} \sim \mathcal{O}(n^{2/3})$, and where $\Ai(0) = (3^{2/3} \Gamma[\frac{2}{3}])^{-1}$ and $\Ai^{\prime}(0) = -(3^{1/3} \Gamma[\frac{1}{3}])^{-1}$. The appearance of the Airy function $\Ai$ is a reflection of the Airy universality near the spectral edge. From this expression we see that the corresponding spectral function $\wt{\Phi}_{n}(\omega) = \mp 2 \Im[G_{n,\pm}(\omega)]$ does not vanish precisely at $\omega=\beta_{n}$, as the semicircle law would predict, but instead scales to zero like $\mathcal{O}(\beta_{n}^{-1} n^{-1/3})$. As discussed in \cref{sec:greens_func_derivation}, this expression for $G_{n,\pm}(\beta_{n})$ has a multiplicative error of $\mathcal{O}(1/n)$.

\section{Hydrodynamic contributions to the Lanczos coefficients}
\label{sec:recurrence_coefficients}

Our next result, \cref{thm:recurrence_theorem}, is a statement about the recurrence coefficients $\{b_{n}\}$ associated with spectral functions of the form in \cref{eq:spectral_function_def}. It is an asymptotic statement, controlled in the limit $n \to \infty$ of large Lanczos number. This limit is analogous to the large matrix dimension limit in random matrix theory~\cite{deiftOrthogonalPolynomialsRandom2000}, and similarly will lead to universal behavior.

It is well established that the scaling of $\Phi(\omega)$ as $|\omega| \to \infty$ determines the \textit{leading} behavior of the $b_{n}$ as $n \to \infty$~\cite{lubinskyProofFreudConjecture1988}. What we show in \cref{thm:recurrence_theorem} is that the hydrodynamic power-law imprints itself on the \textit{subleading} behavior of $b_{n}$ in a universal manner. Some of the quantities appearing in the theorem, namely $\beta_{n}$ and $h_{n}(x)$, are defined in terms of a classical Coulomb gas problem described in \cref{sec:coulomb}.

\begin{mdframed}
\begin{theorem}[Informal]
    Consider $\Phi(\omega) / 2\pi \equiv |\omega|^{\rho} \exp[-Q(\omega)]$, where $\rho > -1$, and $Q$ is an even function scaling like $Q(\omega) \sim |\omega|^{p} \log^{q}{|\omega|}$ as $|\omega| \to \infty$, for some $p \geq 1$ and $q \in \mathbb{R}$. Then the recurrence coefficients $b_{n}$ associated with $\Phi(\omega)$ have a large-$n$ expansion given by
    \begin{equation}
        b_{n} = \dfrac{\beta_{n}}{2} \left[1 + \rho \left(\dfrac{1}{h_{n}(1)} - (-1)^{n} \dfrac{1}{h_{n}(0)}\right) \dfrac{1}{n} + \cdots \right],
    \end{equation}
    where the scaling of $\beta_{n}$ is given in \cref{eq:beta_n_asymptotics}, the scaling of $h_{n}(1)$ and $h_{n}(0)$ are given in \cref{lem:hn1_scaling,lem:hn0_scaling}, and the dots indicate terms subleading in $n$.

    The case $p=1$ of (quasi-)exponential decay---most relevant for generic chaotic systems according to the operator growth hypothesis~\cite{parkerUniversalOperatorGrowth2019}---turns out to be marginal, with the above expression simplifying to
    \begin{equation}
        b_{n} = \dfrac{\beta_{n}}{2} \left[ 1 + \dfrac{\rho}{2}\left(1 - (-1)^{n} \dfrac{1}{(\log{n})^{1+o(1)}} \right) \dfrac{1}{n} + \cdots\right],
        \label{eq:recurrence_thm_linear}
    \end{equation}
    provided $q > -1$. (Local interactions enforce $q\geq 0$.)

    If $p > 1$, then this expression instead simplifies to
    \begin{equation}
        b_{n} = \dfrac{\beta_{n}}{2} \left[ 1 + \dfrac{\rho}{2p} \Bigg(1 - (-1)^{n}(p-1)\Bigg)\dfrac{1}{n} + \cdots\right],
        \label{eq:recurrence_thm_sublinear}
    \end{equation}
    for all $q \in \mathbb{{R}}$.%
    \label{thm:recurrence_theorem}
\end{theorem}
\end{mdframed}
This theorem is proved in \cref{sec:recurrence_proof}, with the scaling of the $\cdots$ error term discussed in \cref{sec:bn_combining}. It shows that the hydrodynamic power-law, $\Phi(\omega) \sim |\omega|^{\rho}$ as $\omega \to 0$, shows up as a \textit{multiplicative correction} to the leading scaling governed by $\beta_{n}$, whose scaling depends only on the $|\omega| \to \infty$ asymptotics via \cref{eq:beta_n_asymptotics}. The subleading hydrodynamic term displays \textit{staggering} due to the factor $(-1)^{n}$, but the magnitude of the staggering decays to zero as $n \to \infty$. Many authors have noted examples where such staggering is sufficient to give singular $\omega \to 0$ behavior~\cite{viswanathRecursionMethodApplication2013,yatesDynamicsAlmostStrong2020,yatesLifetimeAlmostStrong2020,dymarskyKrylovComplexityConformal2021,ballartriguerosKrylovComplexityManybody2022,bhattacharjeeKrylovComplexitySaddledominated2022,rabinoviciKrylovLocalizationSuppression2022,rabinoviciKrylovComplexityIntegrability2022}. Our result proves in a wide class of models that this staggering \textit{necessarily} arises from the low-frequency behavior. Hermiticity is an important ingredient in producing the specific $(-1)^{n}$ form: it enters in the proof via the identity
\begin{equation}
    (-1)^{n} = \exp\left(2\pi i \int_{0}^{\beta_{n}} \sigma_{n}(\omega) \diff \omega\right),
    \label{eq:staggering_identity}
\end{equation}
where $\sigma_{n}(\omega)$ is the charge density of the Coulomb gas distribution that we defined in \cref{sec:coulomb}. This identity follows very generically by symmetry, since the distribution has total charge $n$ by construction, $\int_{-\beta_{n}}^{\beta_{n}}\sigma_{n}(\omega) \diff \omega = n$, and $\sigma_{n}(\omega)$ is \textit{even}, the latter coming from the evenness of the spectral function, which is a property of Hermitian systems. Thus we have $\int_{0}^{\beta_{n}} \sigma_{n}(\omega) \diff \omega = n/2$, giving the staggering factor $(-1)^{n}$. It is also worth emphasizing that this is a contribution from the $\omega = 0$ behavior of the spectral function; that is why the lower limit of the integral in \cref{eq:staggering_identity} is zero. This $(-1)^{n}$ staggering factor also has the same origins as the sign of the polynomials at zero, $\sgn[p_{2n}(0)] = (-1)^{n}$ (c.f.~\cref{eq:zero_mode_recursion}), which again is a generic property of even weight functions~\cite{szegoOrthogonalPolynomials1939}.

Notice that the case $p = 1$ is marginal, in the sense that the staggered multiplicative correction associated with the hydrodynamics scales like $1 / n \log{n}$, rather than $1 / n$ for $p > 1$. This is a signature of the Coulomb gas confinement transition discussed in \cref{sec:coulomb}. This transition is primarily a consequence of the \textit{high frequency} scaling of the spectral function, but it affects the hydrodynamic signature because of its manifestation in a logarithmically divergent equilibrium charge density at \textit{low frequencies}.

We emphasize that the operator growth hypothesis \cite{parkerUniversalOperatorGrowth2019} conjectures that chaotic many-body quantum systems generically have $p = 1$, and so are marginal in this sense. The possibility of a log-correction for $p=1$ was noted in \cite{bhattacharjeeKrylovComplexitySaddledominated2022}. Note that $p=1$ scaling is not restricted to chaotic models---certain many-body localized models have been shown to also exhibit this scaling (with a $q=1$ log-correction)~\cite{caoStatisticalMechanismOperator2021}. There is numerical evidence that the case $p = 2$, corresponding to $b_{n} \sim \sqrt{n}$, holds for interacting integrable systems at infinite temperature~\cite{parkerUniversalOperatorGrowth2019,viswanathRecursionMethodApplication2013,leeErgodicTheoryInfinite2001,hevelingNumericallyProbingUniversal2022}. We can also consider non-interacting systems as the limit $p \to \infty$, such that $b_{n} \sim n^{0}$ and the spectral function has compact support (see \cite[Theorem 7.4]{nevaiOrthogonalPolynomials1979}).

It is tempting to extract the value of $\rho$ by fitting the Lanczos coefficients to the relevant asymptotic form given in \cref{thm:recurrence_theorem}. In practice, we have found that, while this method gives qualitatively correct answers for $\rho$, it tends to give rather large error bars, since one is attempting to fit the coefficient of a small subleading correction. Instead, we recommend following the procedure outlined in \cref{sec:extracting_power_law} for extracting $\rho$ from the leading scaling of the zero mode in the Krylov space, which amplifies the effects of the staggered subleading term coming from the low-frequency power-law. 

\begin{example}
   As a check of \cref{thm:recurrence_theorem}, we can consider the generalized Hermite polynomials, which have weight function $\Phi(\omega)/2\pi = |\omega|^{\rho} \exp[-\omega^{2}]$. With $Q(\omega)= \omega^{2}$, we can determine $\beta_{n} = \sqrt{2n}$ from \cref{eq:MRS_def}, and $h_{n}(0) = h_{n}(1) = 4$ from \cref{eq:hn_integral} (in agreement with \cref{lem:hn0_scaling,lem:hn1_scaling} with $p=2$). Substituting into \cref{thm:recurrence_theorem}, we get agreement to $\mathcal{O}(1/n)$ with the exact recurrence coefficients, which are known to be $b_{n} = \frac{1}{\sqrt{2}} \sqrt{n + \frac{1}{2}[1 - (-1)^{n}]\rho}$~\cite{mastroianniInterpolationProcessesBasic2008}. 
\end{example}

\section{Hydrodynamics from the zero mode}
\label{sec:hydro}
So far we have given a large-$n$ expansion of the Lanczos coefficients $b_{n}$, showing how a low-frequency power-law in the spectral function, $\Phi(\omega\to 0) \sim |\omega|^{\rho}$, shows up as a subleading correction to the leading scaling of the Lanczos coefficients. In this section we explain how by studying the zero mode $|\omega = 0)_{\mathcal{K}}$ of the Liouvillian $\mathcal{L}$ within the Krylov space, satisfying $\mathcal{L}|\omega = 0)_{\mathcal{K}} = 0$, one can extract the value of the low frequency power-law exponent $\rho$, as well as hydrodynamic transport coefficients encoded in $\lim_{\omega \to 0} \Phi(\omega) / |\omega|^{\rho}$. The advantage of studying the zero mode over directly analyzing the Lanczos coefficients is that the zero mode amplifies the effects of the small subleading terms in the Lanczos coefficients, such that the hydrodynamics is manifest in the \textit{leading} behavior of the zero mode as $n \to \infty$. This approach to computing transport coefficients can be summarized in three steps:
\begin{enumerate}
    \item By fitting the leading behavior of the Lanczos coefficients $b_{n} \sim (n / \log^{q}{n})^{1/p}$, determine the growth exponents $p$ and $q$. Generic models usually have $p=1$ and $q=0$ (or $q=1$ in 1D).
    \item By fitting the leading behavior of the zero mode amplitudes $p_{2n}(0)^{2}$ (defined below) to the scalings given in \cref{cor:pn0_scaling,tab:zero_mode}, determine the power-law exponent $\rho$ governing $\Phi(\omega\to 0) \sim |\omega|^{\rho}$.
    \item Having fixed $\rho$, extract hydrodynamic transport coefficients by computing $\lim_{\omega\to 0} \Phi(\omega) / |\omega|^{\rho}$ using \cref{thm:hydrodynamic_constants}.
\end{enumerate}
In the following subsections we will describe the second and third steps in more detail. Throughout this section we will focus on $\omega=0$ properties of the spectral function. In \cref{sec:spectral_bootstrap} we will generalize these ideas to $\omega \neq 0$ through an algorithm we call the \textit{spectral bootstrap}; the approach described in this section is the $\omega\to 0$ limit of the spectral bootstrap.
\subsection{Extracting the low frequency power-law}
\label{sec:extracting_power_law}
If we expand the zero mode ${|\omega = 0)_{\mathcal{K}}} = \sum_{n=0}^{\infty} c_{n} |O_{n})$ in the Lanczos basis, then \cref{eq:lanczos_vector} tells us that
\begin{equation}
    \dfrac{c_{n}}{c_{0}} = \dfrac{p_{n}(0)}{p_{0}(0)},
    \label{eq:zero_mode_amplitudes}
\end{equation}
where $p_{0}(0) = \norm{A}^{-1}$ for an initial operator $|A)$. It is a general property of orthogonal polynomials with respect to even weight functions that they have definite parity~\cite{szegoOrthogonalPolynomials1939}, $p_{n}(-x) = (-1)^{n} p_{n}(x)$, and hence $c_{n} = 0$ for odd $n$. Furthermore, by using the recurrence relation in \cref{eq:three_term_recursion}, we find the recursive formula $p_{2n}(0) = -(b_{2n-1}/b_{2n}) p_{2n-2}(0)$, and hence
\begin{equation}
    \dfrac{p_{2n}(0)}{p_{0}(0)} = (-1)^{n} \dfrac{b_{2n-1}}{b_{2n}} \dfrac{b_{2n-3}}{b_{2n-2}} \cdots \dfrac{b_{1}}{b_{2}}.
    \label{eq:zero_mode_recursion}
\end{equation}
Because this quantity involves the ratio of even and odd Lanczos coefficients, it amplifies the effects of the even/odd staggering in the subleading corrections to the Lanczos coefficients indicated in \cref{thm:recurrence_theorem}. Previous authors have noted that this is analogous to the origin of the localized zero mode in the Su-Schrieffer-Heeger chain~\cite{suSolitonsPolyacetylene1979,yatesLifetimeAlmostStrong2020,yatesDynamicsAlmostStrong2020}. Our next result gives us a precise handle on how the low-frequency power-law imprints itself on the scaling of this zero mode, which is a reflection of the Bessel universality class discussed in \cref{sec:gf_bessel}.
\begin{mdframed}
\begin{theorem}[Informal]
    Given the spectral function $\Phi(\omega)/2\pi \equiv |\omega|^{\rho} \exp[-Q(\omega)]$, so that ${\Phi(\omega\to 0) \sim |\omega|^{\rho}}$ for some $\rho > -1$, then as $n \to \infty$ we have
    \begin{equation}
        p_{2n}(0)^{2} = \dfrac{C_{\rho}}{e^{-Q(0)}} \dfrac{[\pi \sigma_{2n}(0)]^{\rho}}{\beta_{2n}}[1 + o(1)],
        \label{eq:exact_pn0_scaling}
    \end{equation}
    where $\sigma_{2n}$ and $\beta_{2n}$ are respectively the density and support of the equilibrium Coulomb gas distribution defined in \cref{sec:coulomb}, and the constant $C_{\rho}$ is given by
    \begin{equation}
        C_{\rho} = \dfrac{2^{1-\rho}}{\Gamma\left[\frac{1}{2}(1+\rho)\right]^{2}}.
        \label{eq:C_rho_def}
    \end{equation}
    \label{lem:pn0_scaling}
\end{theorem}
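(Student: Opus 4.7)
The plan is to carry out the Deift--Zhou Riemann--Hilbert (RH) steepest descent analysis for the orthogonal polynomials with weight $|\omega|^{\rho} e^{-Q(\omega)}$, supplemented by a local Bessel parametrix tailored to the algebraic singularity at the origin. First I would start from the Fokas--Its--Kitaev $2\times 2$ RHP whose $(1,1)$ entry is the monic polynomial $\pi_{2n}(z)$, so that $p_{2n}(0)^{2} = \gamma_{2n}^{2}\,\pi_{2n}(0)^{2}$ with the leading coefficient $\gamma_{2n}$ isolates what must be computed. Rescaling $\omega = \beta_{2n} x$ and using the equilibrium measure $\psi_{2n}$ from \cref{sec:coulomb} to build a $g$-function, the standard transformations $Y \mapsto T \mapsto S$ localize the RHP to $[-1,1]$ with exponentially suppressed contributions elsewhere.

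The next step is parametrix construction. Away from the special points $x = 0, \pm 1$ the solution is captured by a global outer parametrix built from a Szeg\H{o} function $D(z)$ that simultaneously handles the algebraic factor $|x|^{\rho}$ and the analytic factor $e^{-Q(x)}$. At the spectral endpoints $x = \pm 1$ a standard Airy parametrix is installed; these corrections are subleading at $x = 0$. The essential new ingredient is a \emph{local Bessel parametrix} in a shrinking disk around the origin, constructed out of the modified Bessel functions $I_{\rho/2}$ and $K_{\rho/2}$, whose jump structure matches the algebraic singularity of $|x|^{\rho}$ exactly (following the Kuijlaars--Vanlessen construction). Matching on the disk boundary to the outer parametrix fixes the precise numerical constants.

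To extract $p_{2n}(0)$ one then undoes the transformations at $x=0$. The local conformal variable scales as $\zeta \propto n\, \sigma_{2n}(0)\, x$, so the Bessel parametrix entries at the origin are governed by the small-argument expansion $I_{\alpha}(z) \sim (z/2)^{\alpha}/\Gamma(\alpha+1)$ together with its $K_{\alpha}$ companion; squaring produces the factor $[\pi \sigma_{2n}(0)]^{\rho}$. The accumulated $g$-function and Szeg\H{o} factors reinstate the prefactor $e^{Q(0)}$, and the passage from monic to orthonormal polynomials introduces $\gamma_{2n}$, whose leading behavior $\gamma_{2n}^{2} \sim 2/(\pi \beta_{2n})$ supplies the explicit $\beta_{2n}^{-1}$ (and absorbs the Jacobian of the $\omega \to \beta_{2n} x$ rescaling). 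The $\Gamma$-function reflection identity collects the universal constant into the stated form $C_{\rho} = 2^{1-\rho}/\Gamma[\tfrac{1}{2}(1+\rho)]^{2}$.

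The main obstacle will be uniform control of the local-to-outer matching in the marginal $p=1$ regime most relevant to chaotic systems, where \cref{lem:hn0_scaling} shows $\sigma_{2n}(0)$ diverges logarithmically. One must choose the Bessel disk with radius shrinking fast enough that the outer parametrix is accurate on its boundary, yet still encompassing the scale $[n\sigma_{2n}(0)]^{-1}$ on which the Bessel functions vary, and then verify that the $1+o(1)$ error absorbs these slowly growing logarithmic corrections. Finally, the evenness of the weight (which implies $\pi_{2n+1}(0) = 0$) must be invoked to ensure that the argument genuinely isolates $p_{2n}(0)$ rather than the vanishing odd polynomial values.
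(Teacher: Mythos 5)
Your proposal correctly identifies the paper's proof strategy: a Deift--Zhou Riemann--Hilbert steepest-descent analysis starting from the Fokas--Its--Kitaev problem, with an outer parametrix built from a Szeg\H{o} function, Airy parametrices at $z = \pm 1$, and a Kuijlaars--Vanlessen Bessel parametrix at the origin, followed by small-argument expansion of the Bessel functions to extract $p_{2n}(0)$. You also correctly flag the central technical obstacle---that the weight is non-varying, so the disk around $z = 0$ must shrink like $\mathcal{O}(1/\beta_{n})$ to stay inside the fixed region of analyticity while still resolving the scale $1/(n\sigma_{2n}(0))$ of the Bessel oscillations---which is exactly the novelty the paper addresses in \cref{sec:origin}.

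Two small inaccuracies worth fixing in a write-up. First, in the paper's construction the Szeg\H{o} function $D(z)$ handles only the algebraic factor $|x|^{\rho}$ (with jump $D_{+}D_{-} = |x|^{\rho}$ on $[-1,1]$); the analytic factor $e^{-Q}$ is absorbed entirely by the $g$-function built from the equilibrium measure, not by $D$. Conflating the two is harmless as intuition but would mislead if taken literally. Second, the claim $\gamma_{2n}^{2} \sim 2/(\pi\beta_{2n})$ cannot be right as stated: the leading coefficient obeys $y_{n} \sim \beta_{n}^{-n}\,e^{-n l_{n}/2}$ up to algebraic prefactors (cf.\ \cref{eq:yn_asymptotic}), so $y_{n}^{2}$ is superexponentially small. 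The factor $\beta_{2n}^{-1}$ in the final answer emerges only after these exponentially large/small pieces from $y_{n}$, the $g$-function, and the $Y_{11}$ unpacking cancel; what survives is precisely the algebraic remainder $2^{\rho/2}\pi^{-1/2}\beta_{n}^{-(1+\rho)/2}$, whose square supplies the $\beta_{2n}^{-1}$ (with the remaining $\beta_{2n}^{-\rho}$ combining with the conformal-map Jacobian inside the $[\pi\sigma_{2n}(0)]^{\rho}$ factor).
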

\end{mdframed}
See \cref{sec:thm2_proof} for a proof; the $o(1)$ error term is given by the RHS of \cref{eq:R_error_bound}. For the purposes of numerically extracting the value of $\rho$, it is worth isolating the $n$-dependence of $p_{2n}(0)^{2}$. The scaling of $\beta_{2n}$ is given in \cref{eq:beta_n_asymptotics}, and the scaling of $\sigma_{2n}(0) = (2n / \beta_{2n}) h_{2n}(0) / 2\pi$ can be deduced from \cref{lem:hn0_scaling}, leading to the following.
\begin{corollary}
    With $\Phi(\omega) \equiv |\omega|^{\rho} \exp[-Q(\omega)]$, for $n \to \infty$ we have
    \begin{equation}
        p_{2n}(0)^{2} \sim \mathcal{O}\left(\dfrac{[n h_{2n}(0)]^{\rho}}{\beta_{2n}^{1+\rho}}\right).
    \end{equation}
    With $Q(x) \sim |x|^{p} \log^{q}{|x|}$ as $|x| \to \infty$, this leads to
    \begin{equation}
        p_{2n}(0)^{2} \sim \begin{dcases}
            n^{-\frac{1}{p} + \rho\left(1 - \frac{1}{p}\right)} (\log{n})^{\frac{q}{p}(1+\rho)}, & \text{for } p > 1, q \in \mathbb{R},\\
            n^{-1} (\log{n})^{\rho + q(1+\rho)}, & \text{for } p = 1, q > -1,
        \end{dcases}
    \end{equation}
    where the $\sim$ indicates only the overall $n$-dependence, and for $p=1$ we have dropped the $o(1)$ subleading exponent in $h_{n}(0) = (\log{n})^{1+o(1)}$ from \cref{lem:hn0_scaling}.
    \label{cor:pn0_scaling}
\end{corollary}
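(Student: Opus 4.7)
The plan is to substitute the asymptotic expressions from \cref{lem:hn0_scaling} and \cref{eq:beta_n_asymptotics} into the formula provided by \cref{lem:pn0_scaling}, then perform elementary algebra on the exponents. Since the corollary records only the gross $n$-dependence (absorbed into the symbol $\sim$), all $n$-independent constants and $(1+o(1))$ factors can be dropped freely throughout.

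First I would rewrite $\sigma_{2n}(0)$ in the more convenient form in terms of $h_{2n}(0)$. Evaluating the rescaling relations \cref{eq:psin_rescaling,eq:psin_hn} at $x=0$ (with $n \mapsto 2n$) yields $\sigma_{2n}(0) = n\, h_{2n}(0)/(\pi \beta_{2n})$, hence $\pi \sigma_{2n}(0) = n\, h_{2n}(0)/\beta_{2n}$. Substituting this into \cref{eq:exact_pn0_scaling} and suppressing the $n$-independent constant $C_{\rho} e^{Q(0)}$ together with the $(1+o(1))$ factor gives the first display of the corollary,
\begin{equation*}
p_{2n}(0)^{2} \sim \frac{[n\, h_{2n}(0)]^{\rho}}{\beta_{2n}^{1+\rho}}.
\end{equation*}

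Second, I would specialize via \cref{eq:beta_n_asymptotics} and \cref{lem:hn0_scaling}. For $p>1$ the density is bounded, $h_{2n}(0) = \mathcal{O}(1)$, so only $\beta_{2n} \sim (n/\log^{q}n)^{1/p}$ produces log corrections; the exponent of $n$ is then $\rho - (1+\rho)/p = -1/p + \rho(1-1/p)$ and the exponent of $\log n$ is $q(1+\rho)/p$, matching the first branch of the case split. For the marginal case $p=1$ with $q>-1$, I have $h_{2n}(0) = (\log n)^{1+o(1)}$ and $\beta_{2n} \sim n/\log^{q}n$; the numerator contributes $n^{\rho}(\log n)^{\rho}$ (after dropping the $o(1)$ exponent, as the statement allows), the denominator contributes $n^{1+\rho}(\log n)^{-q(1+\rho)}$, and their quotient is $n^{-1}(\log n)^{\rho + q(1+\rho)}$, as claimed.

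There is no real obstacle here: the corollary is a mechanical specialization of \cref{lem:pn0_scaling} using two previously established asymptotic ingredients. The only point that requires any vigilance is the bookkeeping of the $o(1)$ exponent in $h_{2n}(0)$ in the marginal case, which produces a subleading factor $(\log n)^{\rho\cdot o(1)}$; this is absorbed into the $\sim$ notation consistent with the convention announced at the end of the corollary statement.
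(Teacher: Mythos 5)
Your proof is correct and matches the paper's own (implicit) argument exactly: the paper derives the corollary by the same substitution of $\sigma_{2n}(0) = n\,h_{2n}(0)/(\pi\beta_{2n})$ into \cref{lem:pn0_scaling} and then applying the asymptotics of $\beta_{n}$ from \cref{eq:beta_n_asymptotics} and $h_{n}(0)$ from \cref{lem:hn0_scaling}. The exponent bookkeeping, including the treatment of the $o(1)$ exponent in the marginal case $p=1$, is handled correctly.
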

In \cref{tab:zero_mode} we summarize some of the most physically relevant cases for the scaling of $p_{2n}(0)^{2}$. The statements about chaotic dynamics are made according to the operator growth hypothesis~\cite{parkerUniversalOperatorGrowth2019}, while the statement about integrable systems is based on numerical evidence only.

\begin{table}[t]
\begin{tabular}{c@{\hskip 0.5em}cc@{\hskip 0.5em}cc}
    \toprule
    \textbf{Dynamical class} & $p$ & $q$ & $\mathcal{O}(p_{2n}(0)^{2})$ & $\mathcal{O}\left(\sum_{m=0}^{n} p_{2m}(0)^{2}\right)$ \\
    \midrule
    Chaotic ($d>1$) & 1 & 0 & $(\log{n})^{\rho}/n$ & $(\log{n})^{1+\rho}$ \\
    Chaotic ($d=1$) & 1 & 1 & $(\log{n})^{1+2\rho}/n$ & $(\log{n})^{2(1+\rho)}$ \\
    Interacting integrable & 2 & 0 & $n^{\frac{1}{2}(-1+\rho)}$ & $n^{\frac{1}{2}(1+\rho)}$ \\
    Non-interacting & $\infty$ & 0 & $n^{\rho}$ & $n^{1+\rho}$ \\
    \bottomrule
\end{tabular}
\caption{Zero mode amplitude $p_{2n}(0)^{2}$ scaling as a function of the low-frequency power-law exponent $\Phi(\omega\to 0) \sim |\omega|^{\rho}$ and the high-frequency behavior $\Phi(\omega\to\infty) \sim \omega^{p} \log^{q}{\omega}$.}
\label{tab:zero_mode}
\end{table}

Since we must have $1+\rho > 0$ in order for the spectral function to be Lebesgue integrable across $\omega=0$, we see that $\sum_{n} p_{2n}(0)^{2}$ diverges algebraically for $p > 1$, but polylogarithmically for $p = 1$, so the zero mode is \textit{marginally delocalized} for $p = 1$. Unlike the original Su-Schrieffer-Heeger chain, here the zero mode is not a true localized eigenstate; although its support does decay further into the bulk of the Lanczos chain, it does so too slowly to be localized. The only case in which the zero mode is normalizable is when the spectral function has a delta-function peak at zero frequency. In \cref{fig:zero_mode_toy_examples} we test this scaling numerically for some toy spectral functions, finding good agreement with the predictions of \cref{lem:pn0_scaling}.

Having fixed the values of $p$ and $q$ from the leading behavior of the Lanczos coefficients $b_{n} \sim (n / \log^{q}{n})^{1/p}$, one can extract the value of the power-law exponent $\rho$ by computing $p_{2n}(0)^{2}$ and fitting to the relevant asymptotic form indicated in \cref{cor:pn0_scaling}. In practice, we have found this to be more robust than attempting to extract $\rho$ by fitting the subleading terms in \cref{thm:recurrence_theorem}, since here one is fitting to the leading behavior of $p_{2n}(0)^{2}$. Generally this procedure works well for $p>1$, i.e.~sublinear Lanczos coefficients. However, for $p=1$, where the Lanczos coefficients are (quasi-)linear, we have found that $n$ has to be extremely large before this procedure gives the correct value of $\rho$, and so in this case it might be best to assume the value of $\rho$ on phenomenological grounds. We attribute this to the potentially very slow convergence with $n$ of the Coulomb gas density $h_{n}(0)$ to its asymptotic form given in \cref{lem:hn0_scaling}. While this slow convergence seems troubling, our algorithms for extracting hydrodynamic coefficients like diffusion constants, discussed in the next section, do \textit{not} rely on the asymptotics for the Coulomb gas because we ultimately eliminate $h_{n}(0)$ from the equations that will determine the hydrodynamic coefficients.

\begin{figure}[t]
    \centering
    \includegraphics[width=\columnwidth]{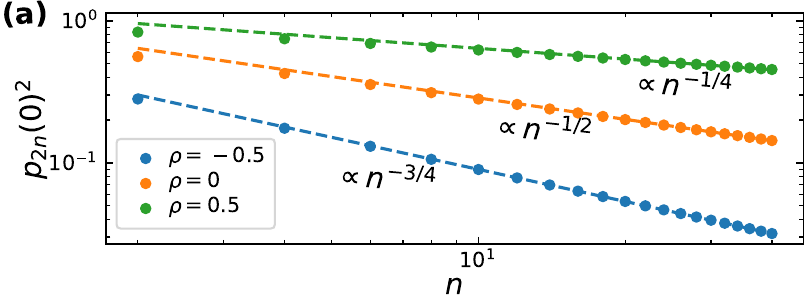}
    \includegraphics[width=\columnwidth]{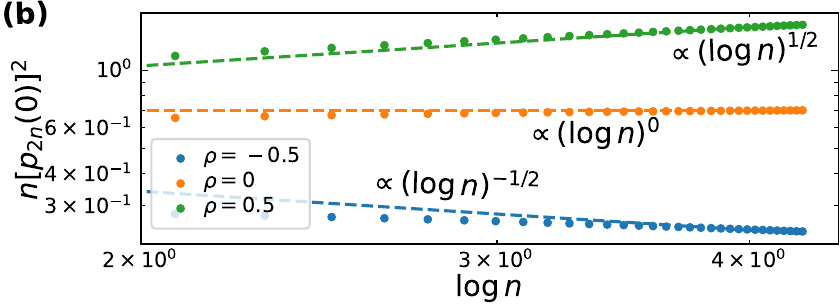}
    \caption{Scaling of the zero mode amplitudes $p_{2n}(0)^{2}$ as a function of the power-law exponent $\rho$ governing the low-frequency spectral function $\Phi(\omega\to 0) \sim |\omega|^{\rho}$. \textbf{(a)} Toy model $\Phi(\omega)/2\pi = |\omega|^{\rho} \exp[-Q(\omega)]$ with $Q(\omega) = (1 + \omega^{2} + \omega^{4})^{1/2}$. This model has $p=2$, so $p_{2n}(0)^{2}$ scales algebraically as $p_{2n}(0)^{2} \sim n^{\frac{1}{2}(\rho-1)}$. \textbf{(b)} Toy model $\Phi(\omega)/2\pi = |\omega|^{\rho} \exp[-Q(\omega)]$ with $Q(\omega) = (1 + \omega^{2} + \omega^{4})^{1/4}$. This model has $p=1$, so $n p_{2n}(0)^{2}$ scales polylogarithmically as $n p_{2n}(0)^{2} \sim (\log{n})^{\rho}$. (Note the axes are log-scale, so for case \textbf{(b)} the x-axis is linear in $\log{\log{n}}$.)}
    \label{fig:zero_mode_toy_examples}
\end{figure}

\subsection{Extracting hydrodynamic transport coefficients}
\label{sec:transport_coeffs}
Now let us apply this to develop a numerical algorithm for computing diffusion constants and other transport coefficients. To warm up, let us consider diffusion. The infinite temperature dc conductivity $\sigma_{\mathrm{dc}}$, rescaled by the temperature~\cite{bertiniFinitetemperatureTransportOnedimensional2021}, is given by the Kubo formula~\cite{kuboStatisticalMechanicalTheoryIrreversible1957}
\begin{equation}
    \sigma_{\mathrm{dc}} = \lim_{\tau \to \infty} \lim_{L \to \infty} \int_{0}^{\tau} \dfrac{( \current | \current(t) )}{L} \diff t,
\end{equation}
where $\current$ is the total current for the conserved charge, and $L$ is the system size. Now we set our initial Lanczos operator to be $A \coloneqq \current / \sqrt{L}$, the zero wavevector component of the Fourier transform of the current density, and take the thermodynamic limit $L \to \infty$. For a translationally invariant system, the Lanczos algorithm can be run in Fourier space with minimal modification compared with real space~\cite[Appendix C]{parkerUniversalOperatorGrowth2019}. Since $(\current | \current(t)) = (\current | \current(-t))$ at infinite temperature, we have
\begin{equation}
    \sigma_{\mathrm{dc}} = \dfrac{1}{2} \int_{-\infty}^{\infty} (A | A(t)) \diff t = \dfrac{1}{2} \Phi(0).
\end{equation}
Given our ansatz $\Phi(\omega)/2\pi = |\omega|^{\rho} \exp[-Q(\omega)]$, we have
\begin{equation}
    \lim_{\omega \to 0} \dfrac{\Phi(\omega)}{|\omega|^{\rho}} = 2\pi e^{-Q(0)}.
\end{equation}
For regular diffusion we expect the dc conductivity $\sigma_{\mathrm{dc}}$ to be finite, which corresponds to $\rho = 0$. To obtain the diffusion constant, one can use the Einstein relation $D = \sigma_{\mathrm{dc}} / \chi$, where $\chi$ is the static susceptibility.

The first thing to notice is that \cref{eq:exact_pn0_scaling} contains this factor $\exp[-Q(0)]$ which gives us the value of $\lim_{\omega \to 0} \Phi(\omega) / |\omega|^{\rho}$. However, despite being able to exactly compute $p_{2n}(0)$ via \cref{eq:zero_mode_recursion}, the problem with using \cref{eq:exact_pn0_scaling} on its own to extract $\exp[-Q(0)]$ is that it also contains the factor $\sigma_{n}(0)$ related to the density of the equilibrium measure at the origin, and in general computing $\sigma_{n}(0)$ exactly requires already knowing the spectral function (since one needs to know $Q$). While \cref{lem:hn0_scaling} indicates that $\sigma_{n}(0) = (1/2\pi)(n h_{n}(0) / \beta_{n})$ exhibits some universal scaling as $n\to\infty$, this asymptotic scaling can take a long time to set in, so given only a modest number of Lanczos coefficients, it would be preferable to rely only on quantities we can compute exactly or approximately using a finite number of Lanczos coefficients. Fortunately, it turns out to be possible to eliminate the leading-order dependence on $\sigma_{n}(0)$ using the precise asymptotics of the partial sums of the amplitudes $p_{2n}(0)^{2}$. We define
\begin{equation}
    K_{n}(x,y) \coloneqq \sum_{m=0}^{n-1} p_{m}(x) p_{m}(y),\label{eq:kernel_def}
\end{equation}
which is known as the `Christoffel-Darboux kernel' in the language of orthogonal polynomials~\cite{simonChristoffelDarbouxKernel2008}. Under the same assumptions as \cref{lem:pn0_scaling}, we can derive the asymptotic expression
\begin{equation}
    K_{n}(0,0) = \dfrac{\hat{c}_{\rho}}{e^{-Q(0)}} \left[\pi \sigma_{n}(0) \right]^{1+\rho}[1 + o(1)],
    \label{eq:exact_kn_formula}
\end{equation}
where the $o(1)$ error term is given in the RHS of \cref{eq:R_error_bound}, and the constant $\hat{c}_{\rho}$ is given by
\begin{equation}
    \hat{c}_{\rho} \coloneqq \dfrac{1}{2^{1+\rho} \Gamma\left[\frac{1}{2}(1+\rho)\right] \Gamma\left[\frac{1}{2}(3+\rho)\right]}.
\end{equation}
Combining \cref{eq:exact_kn_formula,lem:pn0_scaling} gives
\begin{equation}
     \pi \sigma_{2n}(0) = 2(1+\rho) \dfrac{K_{2n}(0,0)}{\beta_{2n} p_{2n}(0)^{2}}[1 + o(1)],
     \label{eq:sigma_practical_identity}
\end{equation}
which allows us to eliminate the equilibrium measure $\sigma_{2n}(0)$, leading to the following result:
\begin{mdframed}
\begin{theorem}
    With $\Phi(\omega) \sim |\omega|^{\rho}$ as $\omega \to 0$, we have
    \begin{equation}
        \lim_{\omega \to 0} \dfrac{\Phi(\omega)}{|\omega|^{\rho}} = \lim_{n \to \infty}  \dfrac{c_{\rho} [K_{2n}(0,0)]^{\rho}}{\left[ \beta_{2n} p_{2n}(0)^{2}\right]^{1+\rho}},
        \label{eq:hydrodynamic_constants_kernel}
    \end{equation}
    where the constant $c_{\rho}$ is given by
    \begin{equation}
        c_{\rho} = \dfrac{4\pi (1+\rho)^{\rho}}{\Gamma\left[\frac{1}{2}(1+\rho)\right]^{2}}.
    \end{equation}
    For the reader's convenience, we can use \cref{eq:zero_mode_recursion,eq:kernel_def} to give the following more explicit expression in terms of Lanczos coefficients:
    \begin{equation}
        \hspace{-0.9em}\lim_{\omega \to 0} \dfrac{\Phi(\omega)}{|\omega|^{\rho}} = \lim_{n \to \infty}  c_{\rho} \norm{A}^{2} \dfrac{\left[1 + \sum_{k=1}^{n-1} \left(\dfrac{b_{2k-1}}{b_{2k}} \cdots \dfrac{b_{1}}{b_{2}}\right)^{2}\right]^{\rho}}{\left[\beta_{2n} \left(\dfrac{b_{2n-1}}{b_{2n}} \cdots \dfrac{b_{1}}{b_{2}}\right)^{2}\right]^{1+\rho}}
        \label{eq:hydrodynamic_constants_Lanczos}
    \end{equation}
    where $A$ is the initial Lanczos operator, $O_{0} = A / \norm{A}$. 
    \label{thm:hydrodynamic_constants}
\end{theorem}
\end{mdframed}
By \cref{thm:recurrence_theorem} we know that $\beta_{2n} = 2 b_{2n} [1 + \mathcal{O}(1/n)]$, so from \cref{eq:hydrodynamic_constants_Lanczos} we can see that simply approximating $\beta_{2n} \approx 2 b_{2n}$ also results in a small $\mathcal{O}(1/n)$ relative error in $\lim_{\omega \to \infty} \Phi(\omega) / |\omega|^{\rho}$. All of the subtle effects from subleading staggered terms in the Lanczos coefficients are captured in $p_{2n}(0)^{2}$ and $K_{2n}(0,0)$, which we can compute exactly using the first $2n$ Lanczos coefficients.

\subsection{Numerical benchmarks}
\label{sec:hydro_benchmarks}
To start our benchmarks, we begin with regular diffusion. In this case $\Phi(0)$ for the current operator is nonzero, corresponding to $\rho = 0$.  In this special case, \cref{thm:hydrodynamic_constants} becomes
\begin{align}
    \Phi(0) &= \lim_{n \to \infty} \dfrac{4}{\beta_{2n} p_{2n}(0)^{2}},\label{eq:Phi0}\\
    &= \lim_{n \to \infty} \dfrac{4 \norm{A}^{2} }{\beta_{2n}} \left(\dfrac{b_{2n}}{b_{2n-1}} \dfrac{b_{2n-2}}{b_{2n-3}} \cdots \dfrac{b_{2}}{b_{1}}\right)^{2},\label{eq:Phi0bn}
\end{align}
where in the second line we used \cref{eq:zero_mode_recursion} to replace $p_{2n}(0)^{2}$. A similar formula appears in Ref.~\cite{wangDiffusionConstantsRecursion2024}, based on considerations motivated by the operator growth hypothesis (OGH)~\cite{parkerUniversalOperatorGrowth2019} (see also Ref.~\cite{fullgrafLanczosPascalApproachCorrelation2025}). Our derivation shows that this behavior does not necessarily require the system to obey the OGH in order for it to be valid. 

\begin{figure}[t]
    \centering
    \subfigimg[width=\columnwidth]{(a)}{mfim_energy_diffusion_constant}
    \subfigimg[width=\columnwidth]{(b)}{XXZ_spin_diffusion_n_extrapolation}
    \subfigimg[width=\columnwidth]{(c)}{heisenberg_superdiffusion}
    \caption{\textbf{(a)} Energy diffusion constant of the mixed field Ising chain using \cref{thm:hydrodynamic_constants} with $\rho=0$ and up to $n=40$ Lanczos coefficients. \textbf{(b)} Spin diffusion constant of the XXZ spin chain as a function of the anisotropy $\Delta$ using \cref{thm:hydrodynamic_constants} with $\rho=0$ and up to $n=20$ Lanczos coefficients, compared with the prediction \cref{eq:xxz_ghd} from generalized hydrodynamics (GHD)~\cite{gopalakrishnanKineticTheorySpin2019,denardisDiffusionGeneralizedHydrodynamics2019}. The black dots show the result of a linear extrapolation in $1/n$ to $n \to \infty$, and the crosses show tDMRG data from Ref.~\cite{karraschRealtimeRealspaceSpin2014}. \textbf{(c)} Coefficient $\gamma$ of the time-dependent spin diffusion constant $D(t) \sim \gamma t^{1/3}$ for the isotropic Heisenberg chain using \cref{thm:hydrodynamic_constants} with $\rho = -1/3$ and up to $n=20$ Lanczos coefficients. Extrapolations in $1/n$ and $1/n^{3/2}$ give slightly different predictions, with the $1/n^{3/2}$ result agreeing better with the GHD prediction $\gamma \approx 0.816$~\cite{denardisSuperdiffusionEmergentClassical2020,denardisStabilitySuperdiffusionNearly2021}.}
    \label{fig:spin_diffusion}
\end{figure}

We start with the mixed field Ising model,
\begin{equation}
    H_{\mathrm{MFIM}} = \sum_{i=1}^{L} \left(Z_{i} Z_{i+1} + g_{x} X_{i} + g_{z} Z_{i}\right),
\end{equation}
which has energy diffusion for generic values of $g_{x}$ and $g_{z}$~\cite{kimBallisticSpreadingEntanglement2013}. The energy current operator is defined by a continuity equation to be $\current = \sum_{i} g_{x} \left(Y_{i} Z_{i+1} - Z_{i} Y_{i+1}\right)$, so given the initial Lanczos operator $A \coloneqq \current / \sqrt{L}$, in the thermodynamic limit we have $\norm{A}^{2} = \lim_{L \to \infty} (\current | \current) / L = 2 g_{x}^{2}$, which is needed to compute $p_{2n}(0)$ via \cref{eq:zero_mode_recursion}. The infinite temperature static energy susceptibility is $\chi = \lim_{L \to \infty} [\langle H_{\mathrm{MFIM}}^{2} \rangle - \langle H_{\mathrm{MFIM}} \rangle^{2}] / L = 1 + g_{x}^{2} + g_{z}^{2}$. Fixing $g_{x} = 1.4$ and $g_{z} = 0.9045$, \cref{fig:spin_diffusion}(a) shows the result of using \cref{thm:hydrodynamic_constants} with $\rho = 0$ to extract the energy diffusion constant $D$. A linear extrapolation in $1/n$ on the predictions from up to $n=40$ Lanczos coefficients gives an estimate of $D(n \to \infty) = 1.45 \pm 0.02$, close to the prediction $D = 1.4$--$1.45$ from tensor network methods~\cite{rakovszkyDissipationassistedOperatorEvolution2022,yi-thomasComparingNumericalMethods2024}.

For a more challenging example, we study spin diffusion in the XXZ chain,
\begin{equation}
    H_{\mathrm{XXZ}} = \frac{1}{4} \sum_{i} \left( X_{i} X_{i+1} + Y_{i} Y_{i+1} + \Delta Z_{i} Z_{i+1}\right),
\end{equation}
which has diffusive spin transport for $\Delta > 1$~\cite{bertiniFinitetemperatureTransportOnedimensional2021}. The spin current operator is defined by a continuity equation to be $\current = \frac{1}{4} \sum_{i} \left(X_{i} Y_{i+1} - Y_{i} X_{i+1}\right)$, so $\norm{A}^{2} = 1 / 8$. The infinite temperature static spin susceptibility is $\chi = \lim_{L \to \infty} [\langle (S^{z}_{\mathrm{tot}})^{2}\rangle - \langle S^{z}_{\mathrm{tot}}\rangle^{2}]/L = 1/4$. Performing a linear extrapolation in $1/n$ on the predictions from up to $n=20$ Lanczos coefficients, \cref{fig:spin_diffusion}(b) shows that the resulting diffusion constants come reasonably close to the prediction from generalized hydrodynamics~\cite{gopalakrishnanKineticTheorySpin2019,denardisDiffusionGeneralizedHydrodynamics2019},
\begin{equation}
    D_{\mathrm{XXZ}} = \dfrac{2 \sinh{\eta}}{9 \pi} \sum_{s=1}^{\infty} (1+s) \left[\dfrac{s+2}{\sinh{\eta s}} - \dfrac{s}{\sinh{\eta(s+2)}}\right],
    \label{eq:xxz_ghd}
\end{equation}
where $\eta \coloneqq \arcosh{\Delta}$. We note that $n=20$ is still relatively small, but even at these small values the predictions are close to those obtained using tensor network methods like tDMRG~\cite{karraschRealtimeRealspaceSpin2014}.

Finally, to test the method for $\rho \neq 0$, we study spin transport in the isotropic Heisenberg chain (i.e.~$\Delta = 1$ for the XXZ spin chain). This has recently been discovered to have superdiffusive spin transport at infinite temperature~\cite{gopalakrishnanSuperdiffusionNonabelianSymmetries2024}, with a current-current correlator decaying like $( \mathcal{J} | \mathcal{J}(t)) \sim (\gamma \chi/3) t^{-2/3}$ as $t \to \infty$, leading to a time-dependent diffusion constant growing like $D(t) \sim \gamma t^{1/3}$. The value of $\gamma$ can be extracted from the Lanczos coefficients of the spin current operator using \cref{thm:hydrodynamic_constants} with $\rho = -1/3$ via the relation
\begin{equation}
\gamma = \dfrac{\chi \sqrt{3}}{\Gamma\left[\frac{1}{3}\right]} \lim_{\omega \to 0} \dfrac{\Phi(\omega)}{|\omega|^{-1/3}},
\end{equation}
where we used the fact that $\int_{\mathbb{R}} e^{-i \omega t} |t|^{-2/3} \diff t = \sqrt{3} \Gamma[1/3] |\omega|^{-1/3}$. Generalized hydrodynamics gives the prediction $\gamma =  (2/3) (10 \pi / 27)^{4/3} \approx 0.816$~\cite{denardisSuperdiffusionEmergentClassical2020,denardisStabilitySuperdiffusionNearly2021}. Using only $n=20$ Lanczos coefficients, we obtain predictions close to this result (\cref{fig:spin_diffusion}(c)). Interestingly, we find extrapolating in $n^{-3/2}$ gives better agreement with the GHD prediction than extrapolating in $n^{-1}$. Finally, we emphasize that, since this example has $\rho \neq 0$, not accounting for the Bessel universality governing the $\omega\approx 0$ behavior can lead to erroneous results, as we will illustrate in more detail in \cref{sec:bessel_bootstrap} and \cref{fig:bessel_bootstrap}. This is where our approach differs the most from other recent approaches based on the recursion method~\cite{parkerLocalMatrixProduct2020,wangDiffusionConstantsRecursion2024,fullgrafLanczosPascalApproachCorrelation2025}.

\section{The spectral bootstrap: approximating the spectral function at finite frequencies}
\label{sec:spectral_bootstrap}
In this section we explain how to extend ideas from the previous section to extract the spectral function at finite frequencies. The high-level summary is that, using the appropriate $n \to \infty$ asymptotics of the orthogonal polynomials, we formulate a first-order differential equation involving the spectral function and the equilibrium density of the Coulomb gas defined in \cref{sec:coulomb}. Together with an initial condition at $\omega = 0$ provided to us by Hermiticity (i.e.~$\Phi(-\omega) = \Phi(\omega)$), this differential equation can be iteratively solved to give a finite $n$ approximation to the spectral function at finite frequencies. The convergence of this approximation is determined by the convergence of the polynomial asymptotics. We refer to this general procedure of using orthogonal polynomial asymptotics to give a differential equation for the spectral function as the \textit{spectral bootstrap}. We emphasize that the approach to hydrodynamic transport coefficients in \cref{sec:transport_coeffs} used the $\omega\to 0$ limit of these asymptotics, and so can be viewed as the initial $\omega=0$ step of the spectral bootstrap.

In order for this approach to work well, it is important to derive asymptotics taking into account all the relevant `high-level' features of the spectral function, such as algebraic singularities, spectral gaps, etc. We will focus on a particular instance of this idea, where the spectral function $\Phi(\omega) / 2\pi \equiv |\omega|^{\rho} \exp[-Q(\omega)]$ has a power-law at $\omega = 0$ but is otherwise smooth. To warm up we will start by taking $\rho = 0$, so that the spectral function is also smooth at $\omega = 0$. The procedure for $\rho \neq 0$ is conceptually similar, but the equations involved are a little more complicated. For $\rho = 0$, the whole frequency range $\omega \in (-\beta_{n}, \beta_{n})$ is referred to as the `bulk' (see \cref{fig:toy_weighted_poly}), and the following procedure is controlled for all frequencies inside the bulk and away from the edges $\omega \approx \pm \beta_{n}$. Since $\beta_{n}$ increases with $n$, for computationally accessible values of $n$ the size of the bulk is large enough to capture the frequency range where the spectral function is non-negligible.

We remark that if one is only interested in recovering the spectral function $\Phi(\omega)$ in the bulk, there is a much simpler procedure than what we are about to describe:
\begin{enumerate}
    \item Approximate the level-$n$ Green's function by the universal `semicircle form'
    \begin{equation*}G_{n}(z) \approx \frac{2}{\beta_{n}^{2}}(z - \sqrt{z+\beta_{n}} \sqrt{z - \beta_{n}}),\end{equation*}
    from \cref{eq:greens_func_bulk}, where we approximate $\beta_{n} \approx 2 b_{n}$.
    \item Substitute $G_{n}(z)$ into the continued fraction in \cref{eq:greens_func_continued_fraction} to obtain the full Green's function $G(z)$.
    \item Compute $\Phi(\omega) = 2 \Im G(\omega - i 0^{+})$.
\end{enumerate}
For $\rho=0$, this simple procedure gives numerically identical results to the spectral bootstrap algorithm we will describe in the next section. However, for $\rho \neq 0$, the behavior of $\Phi(\omega)$ near $\omega=0$ is strongly influenced by the behavior of the Coulomb gas density $\sigma_{n}(\omega)$, which can have strong fluctuations as a function of $\omega$. Unfortunately, even for finite $n$, computing $\sigma_{n}(\omega)$ exactly requires already knowing the full spectral function $\Phi(\omega)$, which would defeat the point. What we show is how to iteratively compute an approximation to $\sigma_{n}(\omega)$, controlled in the $n \to \infty$ limit, and in turn how this can be used to obtain a finite $n$ approximation of the spectral function $\Phi(\omega)$. As a byproduct, having access to $\sigma_{n}(\omega)$ is also useful in checking the emergence of random matrix universality in this quantum operator growth problem---we will discuss this in more detail in \cref{sec:universality}.

As well as the `bulk' and `Bessel' versions of the spectral bootstrap we will describe imminently, one can also formulate a version suited for frequencies near the spectral edge $|\omega| \approx \beta_{n}$. In this region there is Airy universality, as in \cref{sec:airy_gn}, and one must formulate polynomial asymptotics which are relevant for this universality class. We defer the exposition of this `Airy bootstrap' to \cref{sec:airy_bootstrap}.

\subsection{The bulk bootstrap: $\rho=0$}
\label{sec:bulk_bootstrap}
In terms of the Coulomb gas density $\sigma_{n}(x)$ discussed in \cref{sec:coulomb}, we define
\begin{equation}
    \theta_{n}(\omega) \coloneqq -\pi \int_{\omega}^{\beta_{n}} \sigma_{n}(\omega^{\prime}) \diff \omega^{\prime} - \dfrac{\pi}{4}.
    \label{eq:theta_def}
\end{equation}
We will see shortly that $\theta_{n}(\omega)$ acts as a phase factor in a WKB-like asymptotic for the orthogonal polynomials. However, let us first briefly discuss some consequences of Hermiticity. For an arbitrary spectral function, all one knows about $\sigma_{n}$ is that it integrates to $n$ over some interval whose right endpoint is $\beta_{n}$. But for an \textit{even} spectral function, which for us is guaranteed by Hermiticity, we know that $\sigma_{n}(\omega)$ is even and supported on $[-\beta_{n}, \beta_{n}]$, so that $\int_{0}^{\beta_{n}} \sigma_{n}(x) \diff x = n/2$ (the same integral gave the staggering factor $(-1)^{n}$ in the recurrence coefficients, c.f.~\cref{thm:recurrence_theorem}). Thus we can rewrite $\theta_{n}(\omega)$ as
\begin{equation}
    \theta_{n}(\omega) = \pi I_{n}(\omega) - \dfrac{n\pi}{2} - \dfrac{\pi}{4},
    \label{eq:theta_identity}
\end{equation}
where $I_{n}(\omega)$ is defined as
\begin{align}
    I_{n}(\omega) &\coloneqq \int_{0}^{\omega} \sigma_{n}(\omega^{\prime}) \diff \omega^{\prime}.\label{eq:In_def}
\end{align}
Note that $I_{n}(0) = 0$, independent of the density distribution $\sigma_{n}$. This is important because it sets the initial condition for the first-order differential equation we will derive below. Although this is manifest from its definition, this discussion shows that it is really Hermiticity that gives us an initial condition at $\omega = 0$. For a non-even weight function, one would only have an initial condition for $\theta_{n}(\omega)$ at $\omega = \beta_{n}$. %

The Riemann-Hilbert analysis gives us access to detailed asymptotic formulae for the orthogonal polynomials, controlled as $n \to \infty$. In the orthogonal polynomials literature, these are known as Plancherel-Rotach asymptotics~\cite{plancherelValeursAsymptotiquesPolynomes1929}.

We have the following generalization of \cref{lem:pn0_scaling} for $\rho=0$ to finite frequencies (see \cref{sec:polynomial_asymptotics} for a proof)
\begin{widetext}
    \begin{align}
        p_{n-1}(\omega) \approx \dfrac{-1}{\sqrt{\Phi(\omega)}} \sqrt{\dfrac{2}{\beta_{n}}} &\left[ \left(\dfrac{\beta_{n}-\omega}{\beta_{n}+\omega}\right)^{1/4} \cos{\theta_{n}(\omega)} + \left(\dfrac{\beta_{n}+\omega}{\beta_{n}-\omega}\right)^{1/4} \sin{\theta_{n}(\omega)}\right],\label{eq:pnm1_bulk_asymptotic}\\
        p_{n}(\omega) \approx \dfrac{1}{\sqrt{\Phi(\omega)}} \sqrt{\dfrac{2}{\beta_{n}}} &\left[ \left(\dfrac{\beta_{n}-\omega}{\beta_{n}+\omega}\right)^{1/4} \cos{\theta_{n}(\omega)} - \left(\dfrac{\beta_{n}+\omega}{\beta_{n}-\omega}\right)^{1/4} \sin{\theta_{n}(\omega)}\right],\label{eq:pn_bulk_asymptotic}
    \end{align}
where at this level of approximation we are setting $\beta_{n} \approx 2 b_{n}$ (see \cref{thm:recurrence_theorem} for $\rho=0$). The first thing to notice is that $p_{n-1}(\omega)$ and $p_{n}(\omega)$ are both pointwise proportional to $1/\sqrt{\Phi(\omega)}$; this important feature is quite generic for orthogonal polynomials, and we will see that it continues to hold in other frequency regimes. We can combine these equations and use \cref{eq:theta_identity} to get the first of our bootstrap equations:
\begin{equation}
    \boxed{
     \Phi(\omega) \approx \dfrac{4}{\beta_{n}} \dfrac{1}{p_{n-1}(\omega)^{2} + p_{n}(\omega)^{2}}  \dfrac{\beta_{n} - (-1)^{n} \omega \sin[2 \pi I_{n}(\omega)]}{\sqrt{\beta_{n}^{2} - \omega^{2}}}.\label{eq:Phi_bulk}}
\end{equation}
Note that we used \textit{two} polynomials, $p_{n-1}$ and $p_{n}$, in order to avoid dividing by zero: it is a general property of orthogonal polynomials that the zeros of $p_{n-1}$ and $p_{n}$ interlace~\cite{freudOrthogonalPolynomials1971}, which guarantees $p_{n-1}(\omega)^{2} + p_{n}(\omega)^{2} > 0$.
\end{widetext}

Given $n$ Lanczos coefficients $\{b_{k}\}_{k=1}^{n}$, the orthogonal polynomials $p_{k}(\omega)$ can be computed exactly up to order $n$ using the recursion relation \cref{eq:three_term_recursion}, so if we knew the phase function $I_{n}(\omega)$, then we would be able to recover the spectral function $\Phi(\omega)$ from \cref{eq:Phi_bulk}. However, $I_{n}(\omega) = \int_{0}^{\omega} \sigma_{n}(\omega^{\prime}) \diff \omega^{\prime}$ is defined in terms of the equilibrium measure $\sigma_{n}$, and computing that \textit{exactly} via \cref{eq:psin_rescaling,eq:psin_hn,eq:hn_integral} requires prior knowledge of $\Phi$, precisely the function we are trying to estimate in the first place.

Our solution is to derive a large-$n$ approximation for $\sigma_{n}(\omega)$ using the polynomial asymptotics \cref{eq:pnm1_bulk_asymptotic,eq:pn_bulk_asymptotic}. If we differentiate either of these equations with respect to $\omega$, we will get terms involving the derivative $I_{n}^{\prime}(\omega) = \sigma_{n}(\omega)$, which is what we want. However, we will also get terms involving the derivative $\Phi^{\prime}(\omega)$, which is a new unknown. The trick is to consider a certain `determinantal' combination of derivatives. In particular, the Christoffel-Darboux formula \cite{szegoOrthogonalPolynomials1939} shows that the diagonal Christoffel-Darboux kernel $K_{n}(\omega,\omega) = \sum_{k=0}^{n-1} p_{k}(\omega)^{2}$ can be expressed as
\begin{align}
    K_{n}(\omega,\omega) &= b_{n} \det\begin{pmatrix} p_{n-1}(\omega) & p_{n-1}^{\prime}(\omega) \\ p_{n}(\omega) & p_{n}^{\prime}(\omega) \end{pmatrix}.\label{eq:diagonal_CD}
\end{align}
This determinantal structure ensures exact cancellation of the terms involving the unwanted derivative $\Phi^{\prime}(\omega)$. More explicitly, if we write the asymptotics in \cref{eq:pnm1_bulk_asymptotic,eq:pn_bulk_asymptotic} as $p_{n-1}(\omega) \equiv \Phi(\omega)^{-1/2} \alpha_{n-1}(\omega)$ and $p_{n}(\omega) \equiv \Phi(\omega)^{-1/2} \alpha_{n}(\omega)$, then we have (suppressing $\omega$ arguments for ease of notation)
\begin{align*}
    &K_{n} = b_{n} \det\begin{pmatrix} \Phi^{-1/2} \alpha_{n-1} & (\Phi^{-1/2} \alpha_{n-1})^{\prime} \\ \Phi^{-1/2} \alpha_{n} & (\Phi^{-1/2} \alpha_{n})^{\prime}\end{pmatrix},\\
    &= b_{n} \Phi^{-1} \det\begin{pmatrix} \alpha_{n-1} & \alpha_{n-1}^{\prime} \\ \alpha_{n} & \alpha_{n}^{\prime} \end{pmatrix} - \frac{b_{n}}{2} \Phi^{\prime} \Phi^{-2} \det\begin{pmatrix} \alpha_{n-1} & \alpha_{n-1} \\ \alpha_{n} & \alpha_{n} \end{pmatrix},\\
    &= b_{n} \Phi^{-1} \left[\alpha_{n-1}\alpha_{n}^{\prime} - \alpha_{n-1}^{\prime} \alpha_{n}\right].
\end{align*}
Thus the term proportional to $\Phi^{\prime}(\omega)$ vanishes as claimed due to rank deficiency. This argument does not rely on the detailed form of the polynomial asymptotics, other than the fact that $p_{n-1}(\omega)$ and $p_{n}(\omega)$ are both pointwise proportional to $\Phi(\omega)^{-1/2}$, and so will generalize to other frequency regimes discussed in subsequent sections.

Now, carrying out the relevant derivatives of \cref{eq:pnm1_bulk_asymptotic,eq:pn_bulk_asymptotic} and rearranging, we get the second of our bootstrap equations:
\begin{equation}
    \boxed{
     \sigma_{n}(\omega) \approx \dfrac{\Phi(\omega)}{2\pi} K_{n}(\omega,\omega) + (-1)^{n} \dfrac{\beta_{n} \cos[2\pi I_{n}(\omega)]}{2\pi(\beta_{n}^{2} - \omega^{2})}.
     \label{eq:sigma_bulk}}
\end{equation}
The kernel $K_{n}(\omega,\omega)$ can be calculated in terms of the orthogonal polynomials using \cref{eq:diagonal_CD}, which we remind the reader can be computed exactly in terms of the Lanczos coefficients $\{b_{k}\}_{k=1}^{n}$ using the three-term recurrence~\cref{eq:three_term_recursion}. \cref{eq:sigma_bulk} was proven by different means for a similar class of spectral functions in \cite[Theorem 9.5]{levinOrthogonalPolynomialsExponential2012}, but only with the first term on the RHS, which dominates the second term by a factor of $\mathcal{O}(\beta_{n})$ when $\omega$ is $\mathcal{O}(1)$.

Since $\partial_{\omega} I_{n}(\omega) = \sigma_{n}(\omega)$, \cref{eq:Phi_bulk,eq:sigma_bulk} constitute a first-order ordinary differential equation for $I_{n}(\omega)$ which is closed at leading order in $n$, and whose solution yields $\Phi(\omega)$ as a byproduct. Crucially, we also have the initial condition $I_{n}(0) = 0$, allowing us to iteratively solve this differential equation and thereby obtain a finite-$n$ approximation to the spectral function. Given a choice of frequency spacing $\delta \omega \ll 1$ and a maximum frequency $\omega_{\mathrm{max}}$ satisfying $0 < \omega_{\mathrm{max}} \ll \beta_{n}$, the algorithm works as follows:
\begin{enumerate}
    \setcounter{enumi}{-1}
    \item Set $\omega = 0$ and $I_{n}(0) = 0$.
    \item Compute $\Phi(\omega)$ using \cref{eq:Phi_bulk}. 
    \item Compute $\sigma_{n}(\omega)$ using \cref{eq:sigma_bulk}.
    \item Set $I_{n}(\omega+\delta \omega) = I_{n}(\omega) + \sigma_{n}(\omega) \times \delta \omega$.
    \item Increment $\omega \mapsto \omega + \delta \omega$.
    \item Repeat steps 1-4 until $\omega = \omega_{\mathrm{max}}$, then terminate.
\end{enumerate}

As an example we will again consider the mixed field Ising model (MFIM). We will compare the results to those obtained through the `simple procedure' described at the start of the section where we replace the level-$n$ Green's function $G_{n}(z)$ by the universal `semicircle' form in \cref{eq:greens_func_bulk}, as well as to the spectral function resulting from Fourier transforming the real-time autocorrelation function $(\mathcal{J}|\mathcal{J}(t))/\sqrt{L}$ obtained using time-evolving block decimation (TEBD). For the spectral bootstrap we compute $n=40$ Lanczos coefficients, going up to $\omega_{\mathrm{max}} = 0.99 \beta_{n}$, with $\beta_{n} \approx 2 b_{n} \approx 46.1$ for $n=40$ in the MFIM. For the real-time evolution, we go up to $t_{\mathrm{max}} = 10$ using a timestep of $\diff t = 0.01$, with a maximum bond dimension of $\chi_{\mathrm{max}} = 512$ and a system size $L=201$ large enough that finite-size effects are negligible over these timescales. We can see from the main panel of \cref{fig:spectral_bootstrap_rho=0}(a) that the spectral bootstrap and semicircle approximation give numerically identical results throughout this bulk frequency range, providing an important self-consistency check. The spectral bootstrap also generally agrees well with the Fourier transform data at low to moderate frequencies. There is some deviation around $\omega = 0$, but it is hard to do a fair comparison because it is not clear how to systematically convert between $n$ and $t_{\mathrm{max}}$. As a rough indication of computational effort, our Julia implementation of the Lanczos algorithm took \mytexttilde\qty{30}{\minute} and \mytexttilde\qty{60}{\gibi\byte} RAM on a single CPU core to compute $n=40$ Lanczos coefficients for the MFIM, while our TEBD simulation performed using TeNPy~\cite{hauschildEfficientNumericalSimulations2018} took \mytexttilde\qty{30}{\hour} on 8 cores using \mytexttilde\qty{2}{\gibi\byte} RAM. However, the TEBD runtime is highly parameter dependent, and could be reduced by using a larger $\diff t$ or a smaller $\chi_{\mathrm{max}}$. At any rate, the initial $\omega=0$ step of the spectral bootstrap corresponds to the $\rho=0$ case of the procedure for extracting diffusion constants described in \cref{sec:transport_coeffs}, and in \cref{fig:spin_diffusion}(a) we already demonstrated quantitative accuracy in the $n \to \infty$ limit.

\begin{figure}[t]
    \centering
    \includegraphics[width=\columnwidth]{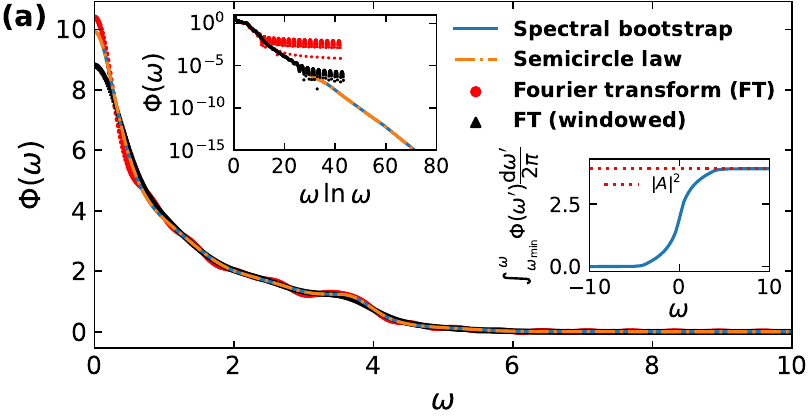}
    \includegraphics[width=0.495\columnwidth]{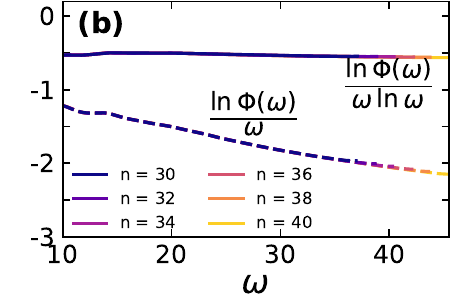}
    \includegraphics[width=0.485\columnwidth]{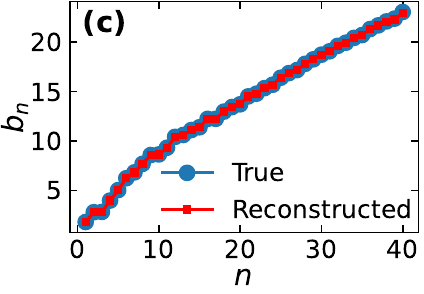}
    \caption{\textbf{(a)} The $n=40$ approximation to the spectral function of the energy current operator of the mixed field Ising model, with a comparison between the spectral bootstrap and approximating the level-$n$ Green's function by the `semicircle form' in \cref{eq:greens_func_bulk}, and the Fourier transform (FT) of the real-time correlator $C(t) = (\mathcal{J} | \mathcal{J}(t))/L$ up to $t_{\mathrm{max}}=10$. We also show the Fourier transform of $C(t)$ multiplied by the cosine window $\frac{1}{2}[1 + \cos(\pi t/t_{\mathrm{max}})]$ in order to reduce spectral leakage due to the finite $t_{\mathrm{max}}$. The left inset demonstrates the high-frequency scaling $\Phi(\omega) \sim \exp[-\mathcal{O}(\omega \log{\omega})]$ expected in one spatial dimension~\cite{parkerUniversalOperatorGrowth2019}. The right inset verifies the sum rule for the approximate spectral function obtained using the spectral bootstrap. \textbf{(b)} The spectral function $\Phi(\omega)$ obtained using the spectral bootstrap, rescaled in two ways, as evidence that $\Phi(\omega)$ decays like $\Phi(\omega \to \infty) \sim \exp[-\mathcal{O}(\omega \log{\omega})]$ rather than $\Phi(\omega \to \infty) \sim \exp[-\mathcal{O}(\omega)]$.
    \textbf{(c)} Reconstructed Lanczos coefficients $b_{n}$, calculated using the estimate of the spectral function obtained using the spectral bootstrap, and compared with the exact Lanczos coefficients. The close agreement is strong evidence that the spectral bootstrap is producing a good approximation to the true spectral function.
    }
    \label{fig:spectral_bootstrap_rho=0}
\end{figure}

More noticeable differences between the spectral bootstrap and the Fourier transformed correlator can be found at high frequencies, as shown in the left inset of \cref{fig:spectral_bootstrap_rho=0}(a). What is striking about the spectral bootstrap data is that we are able to resolve the log-correction due to geometric constraints on operator growth in 1D~\cite{parkerUniversalOperatorGrowth2019}: Lanczos coefficients growing like $b_{n} \sim n / \log{n}$ correspond to a spectral function decaying like $\Phi(\omega \to \infty) \sim \exp[-\mathcal{O}(\omega \log{\omega})]$ (c.f.~\cref{eq:beta_n_asymptotics}), rather than the generic exponential bound $\Phi(\omega \to \infty) \sim \exp[-\mathcal{O}(\omega)]$ expected in higher dimensions. Further evidence for this is shown in \cref{fig:spectral_bootstrap_rho=0}(b), where we find that $\log[\Phi(\omega)] / \omega \log{\omega}$ plateaus at large frequencies, while without the log correction $\log[\Phi(\omega)] / \omega$ continues to decay. The estimates appear to be well converged with $n$ provided we restrict to $|\omega| < \beta_{n}$ (which is why the smaller $n$ curves stop at smaller $\omega$). Convincingly resolving this log-correction had previously required going to much larger $n$ using a Monte-Carlo algorithm that exploits the lack of a sign-problem for operator growth in the MFIM~\cite{caoStatisticalMechanismOperator2021,deStochasticSamplingOperator2024}.

Getting the correct high-frequency tail from the Fourier transformed real-time correlation function can be challenging due to spectral leakage arising from cutting off the Fourier transform integral at a finite $t_{\mathrm{max}}$. Indeed, we can see in the left inset to \cref{fig:spectral_bootstrap_rho=0}(a) that the spectral function from the Fourier transform does not have the expected quasi-exponential high-frequency decay. The standard fix to this problem of spectral leakage is to multiply the real-time correlator by a windowing function before taking the Fourier transform~\cite{whiteSpectralFunction$S1$2008}. We also show in \cref{fig:spectral_bootstrap_rho=0}(a) the results from multiplying by a simple finite cosine window $\frac{1}{2} \left[1 + \cos(\pi t / t_{\mathrm{max}})\right]$. This extends the maximum frequency at which the Fourier transform data agrees with the high frequency tail of the spectral bootstrap, but still eventually leads to an unphysical plateau at very high frequencies. This high frequency tail could likely be improved further by both reducing $\diff t$ and increasing $t_{\mathrm{max}}$, at the cost of increased computational resources. \cref{fig:spectral_bootstrap_rho=0}(b) shows that the high-frequency tail of the spectral bootstrap is converged with $n$ at least up to $\omega \approx 40$, so this Lanczos approach appears to be particularly well suited to reliably extracting high-frequency information.

To further verify the accuracy of the spectral function approximation $\Phi_{\mathrm{est}}(\omega)$ obtained using the spectral bootstrap, we attempt to reconstruct the exact sequence of Lanczos coefficients by carrying out the three-term recurrence relation \cref{eq:three_term_recursion} for the orthogonal polynomials, but computing the norm
\begin{equation}
    b_{n}^{2} = \int_{\mathbb{R}} \big[\omega p_{n-1}(\omega) - b_{n-1} p_{n-2}(\omega)\big]^{2} \dfrac{\Phi(\omega)}{2\pi} \diff \omega
    \label{eq:bn_reconstruction}
\end{equation}
by numerically evaluating the integral with $\Phi(\omega)$ approximated by $\Phi_{\mathrm{est}}(\omega)$. We can only compute this integral up to the maximum frequency $\omega_{\mathrm{max}} = 0.99\beta_{n}$ for which we performed the spectral bootstrap, but this is sufficient to capture almost all of the norm contributing to $b_{n}$, for reasons discussed in \cref{sec:coulomb}. The result is shown in \cref{fig:spectral_bootstrap_rho=0}(c), where we find very close agreement between the exact and the reconstructed Lanczos sequence, thus providing strong evidence that the spectral bootstrap yields a good approximation to the true spectral function (see \cref{fig:bulk_universality}(c) for more examples).  As a final nontrivial check, the right inset to \cref{fig:spectral_bootstrap_rho=0}(a) shows that the sum rule $\int_{\mathbb{R}} \Phi(\omega) \mathrm{d}\omega/2\pi = (A|A)$ is verified very accurately. Using the spectral function extracted in the range $\omega \in [-0.99 \beta_{n}, 0.99 \beta_{n}]$, we find that the sum rule is satisfied up to a relative error of \mytexttilde$10^{-6}$.

Before proceeding, we remark that it is also possible to start this bulk bootstrap at some nonzero frequency $\omega_{0}$ as follows. Use the semicircle approximation (\cref{eq:greens_func_bulk}) to the level-$n$ Green's function $G_{n}(\omega_{0} \pm i 0^{+})$ to approximate the spectral function $\Phi(\omega_{0})$, as described at the start of \cref{sec:spectral_bootstrap}. Then numerically solve \cref{eq:Phi_bulk} to determine $I_{n}(\omega_{0}) \pmod{1}$ and use \cref{eq:sigma_bulk} to compute $\sigma_{n}(\omega_{0})$. Then one can proceed with the main routine of the bulk spectral bootstrap. This ability to start away from the special points $\omega = 0, \pm \beta_{n}$ is a particular feature of the bulk bootstrap, owing to the fact that the bulk Green's function in \cref{eq:greens_func_bulk} contains only a single free parameter, $\beta_{n}$, which can be directly estimated from the Lanczos coefficients.

\subsection{Generalization to arbitrary $\rho$: the Bessel bootstrap}
\label{sec:bessel_bootstrap}
The algorithm outlined in the previous section made use of the $n \to \infty$ asymptotics of the orthogonal polynomials. When $\rho \neq 0$ these asymptotics are modified, particularly near $\omega = 0$, the location of the power-law. Once we have the correct asymptotics, however, then the differential equation can be solved iteratively as before. We will aim to capture the `envelope function' $\Phi(\omega) / |\omega|^{\rho} \equiv 2\pi \exp[-Q(\omega)]$, since this is what is needed to determine the values of hydrodynamic transport coefficients.

For the initial step at $\omega = 0$, we use the result of \cref{thm:hydrodynamic_constants}, which we restate here, together with a result relating this to the equilibrium density at the origin:
\begin{widetext}
\begin{gather}
    e^{-Q(0)} \approx \dfrac{2 (1+\rho)^{\rho}}{\Gamma\left[\frac{1}{2}(1+\rho)\right]^{2}} \dfrac{[K_{n}(0,0)]^{\rho}}{\left[ \beta_{n} (p_{n-1}(0)^{2} + p_{n}(0)^{2})\right]^{1+\rho}},\label{eq:expmQ0}\\
    \sigma_{n}(0) \approx \dfrac{2}{\pi} \left(\Gamma\left[\frac{1}{2}(1+\rho)\right] \Gamma\left[\frac{1}{2}(3+\rho)\right] e^{-Q(0)} K_{n}(0,0)\right)^{\frac{1}{1+\rho}}.\label{eq:sigman0}
\end{gather}
These are the $\omega \to 0^{+}$ limits of the following equations which we will use for $\omega > 0$. For the envelope function we have
\begin{align}
    e^{-Q(\omega)} \approx \dfrac{1}{p_{n-1}(\omega)^{2} + p_{n}(\omega)^{2}}  \dfrac{1}{\sqrt{\beta_{n}^{2} - \omega^{2}}} \dfrac{\pi I_{n}(\omega)}{\omega^{\rho}} \Bigg[ &\left(J_{\frac{1}{2}(\rho-1)}^{2} + J_{\frac{1}{2}(\rho+1)}^{2}\right)(\pi I_{n}(\omega))\label{eq:expmQ_finite}\\
    - &(-1)^{n} \frac{\omega}{\beta_{n}}  \left(2 J_{\frac{1}{2}(\rho-1)} J_{\frac{1}{2}(\rho+1)}\right)(\pi I_{n}(\omega)) \cos\left\{\rho \, \mathrm{arcsin}\left(\frac{\omega}{\beta_{n}}\right)\right\}\Bigg],\nonumber
\end{align}
where $I_{n}(\omega)$ is defined as before in \cref{eq:In_def}, $J_{\alpha}$ is a Bessel function of the first kind, and we are using the shorthand $(J_{\alpha}^{2} + J_{\beta}^{2})(x) \equiv J_{\alpha}^{2}(x) + J_{\beta}^{2}(x)$, $(J_{\alpha} J_{\beta})(x) \equiv J_{\alpha}(x) J_{\beta}(x)$, etc. For the equilibrium density we have
\begin{equation}
    \sigma_{n}(\omega) \approx \dfrac{4}{\pi} \dfrac{K_{n}(\omega,\omega) \omega^{\rho} e^{-Q(\omega)}}{\pi I_{n}(\omega)} \dfrac{1}{\left[J_{\frac{1}{2}(\rho-1)}^{2} + J_{\frac{1}{2}(\rho+1)}^{2} - J_{\frac{1}{2}(\rho-3)} J_{\frac{1}{2}(\rho+1)} - J_{\frac{1}{2}(\rho-1)} J_{\frac{1}{2}(\rho+3)}\right](\pi I_{n}(\omega))}.\label{eq:sigman_finite}
\end{equation}
\end{widetext}
We derive these expressions in \cref{sec:spectral_bootstrap_derivation}. One can check that they reduce to \cref{eq:Phi_bulk,eq:sigma_bulk} upon sending $\rho \to 0$. Unlike for $\rho = 0$, where the asymptotics were valid out to $\omega = \mathcal{O}(\beta_{n})$, these asymptotics are in principle valid only for $\omega$ within some $\mathcal{O}(1)$ window around $\omega = 0$, essentially the `hydrodynamic' frequency regime where we expect the power-law to dominate the behavior of the spectral function. Note that, taking $\omega = \mathcal{O}(1)$, \cref{eq:expmQ_finite,eq:sigman_finite} have been slightly simplified by dropping some terms which are subleading as $n \to \infty$; see \cref{eq:expmQ_finite_full,eq:sigman_finite_full} for the full expressions. In practice we have found that these simplified expressions give numerical results very close to those of the full expressions, provided $n$ is large enough that $\omega/\beta_{n} \ll 1$.

The appearence of the Bessel functions in these equations is a signature of `Bessel universality', akin to that found in random matrix ensembles whose probability measures behave as a power-law near the origin~\cite{akemannUniversalityRandomMatrices1997,kuijlaarsUniversalityEigenvalueCorrelations2003}. For larger $\omega$, one could then switch to the asymptotics in the `bulk', which should then be valid up until $|\omega| \lesssim \beta_{n}$ like those stated for $\rho = 0$, but in practice we have found that just using the Bessel asymptotics gives good results for the frequency regime where the spectral function is non-negligible.

With these asymptotics in hand, we now have another first-order differential equation which we can iteratively solve as follows to obtain a finite $n$ approximation to the spectral function. We initialize by setting $\omega = 0$ and $I_{n}(0) = 0$, computing $e^{-Q(0)}$ via \cref{eq:expmQ0} and $\sigma_{n}(0)$ via \cref{eq:sigman0}, then setting $I_{n}(\delta \omega) = I_{n}(0) + \sigma_{n}(0) \times \delta \omega$ and incrementing $\omega \pluseq \delta \omega$. Then the main routine is conceptually identical to the procedure outlined for $\rho = 0$:
\begin{enumerate}
    \item Compute $e^{-Q(\omega)}$ using \cref{eq:expmQ_finite}. 
    \item Compute $\sigma_{n}(\omega)$ using \cref{eq:sigman_finite}.
    \item Set $I_{n}(\omega+\delta \omega) = I_{n}(\omega) + \sigma_{n}(\omega) \times \delta \omega$.
    \item Increment $\omega \mapsto \omega + \delta \omega$.
    \item Repeat steps 1-4 until $\omega = \omega_{\mathrm{max}}$, then terminate.
\end{enumerate}

To illustrate this generalized spectral bootstrap (SB), we try to reconstruct the envelope functions of some toy spectral functions of the form $\Phi(\omega) / 2\pi = |\omega|^{\rho} \exp[-Q(\omega)]$ for $\rho = -\frac{1}{2}$, again comparing with the recursion method (RM)~\cite{viswanathRecursionMethodApplication2013}. \cref{fig:bessel_bootstrap}(a) shows the comparison for $Q(x) = \frac{1}{2}(x^{6} - 5 x^{4} + 20 x^{2} +1)^{1/3}$. This function has no special significance other than the fact that $Q(x) \sim \mathcal{O}(x^{2})$, so we have $b_{n} \sim \mathcal{O}(\sqrt{n})$. Here there is an exactly solvable model available for the recursion method which has the relevant features of Gaussian decay and a zero frequency power-law~\cite{viswanathRecursionMethodApplication2013}. The model spectral function is $\wt{\Phi}(\omega) = \frac{2\pi / \omega_{0}}{\Gamma[\frac{1}{2}(1+\rho)]} |\omega / \omega_{0}|^{\rho} \exp[-(\omega/\omega_{0})^{2}]$, where $\omega_{0}$ and $\rho$ are tunable parameters. Crucially, its Lanczos coefficients are known to be $b_{2k-1} = \omega_{0} \sqrt{(2k-1+\rho)/2}$ and $b_{2k} = \omega_{0} \sqrt{2k/2}$, and its Green's function is $\wt{G}(z) = (i z / \omega_{0}^{2}) \exp[-(z/\omega_{0})^{2}] E_{\frac{1}{2}(1+\rho)}[-(z/\omega_{0})^{2}]$, where $E_{\alpha}(x)$ is the generalized exponential integral. We recover the envelope function from the RM using $e^{-Q(\omega)} = |\omega|^{-\rho} \Phi(\omega) / 2\pi$ with $\Phi(\omega) = 2 \Im{G(\omega - i \epsilon)}$, $\epsilon = 10^{-14}$. In this case, the SB and the RM with $\rho = -1/2$ give essentially identical results. We also show the result of $e^{-Q(\omega)} = |\omega|^{1/2} \Phi(\omega) / 2\pi$ where $\Phi(\omega)$ is computed using the SB with $\rho = 0$, showing that one can get large errors near $\omega = 0$ if one does not take into account the power-law, although at larger $\omega$ the agreement does become better.
\begin{figure}[t]
    \centering
    \subfigimg[width=\columnwidth]{(a)}{spectral_bootstrap_sqrt_rho=-half}
    \subfigimg[width=\columnwidth]{(b)}{spectral_bootstrap_linear_rho=-half}
    \caption{Reconstructing the spectral envelope function $\exp[-Q(\omega)]$ for a toy spectral function $\Phi(\omega)/2\pi = |\omega|^{-1/2} \exp[-Q(\omega)]$, using the spectral bootstrap (SB) and the recursion method (RM). \textbf{(a)} $Q(\omega) = \frac{1}{2}(\omega^{6} - 5 \omega^{4} + 20 \omega^{2} +1)^{1/3}$, which has $b_{n} \sim \mathcal{O}(\sqrt{n})$. Here there is an appropriate exact `stitching function' for the RM (see main text), so the RM and the $\rho = -1/2$ SB work equally well. The $\rho = 0$ SB curve illustrates the errors one can make by not accounting for the power-law. \textbf{(b)} Same as (a), except now $Q(\omega) = -\log{\sech{\pi \omega}}$, which has $b_{n} \sim \mathcal{O}(n)$.  Only the $\rho = -\frac{1}{2}$ SB accurately recovers the spectral function. Now there is no available stitching function with the appropriate power-law behavior, so the RM performs poorly. For the RM one has exact solutions with the right low frequency power-law but the wrong high frequency tail ($\rho$-sqrt), or no low frequency power-law but the right high frequency tail (linear), neither of which gives the correct result at low frequencies.}
    \label{fig:bessel_bootstrap}
\end{figure}

This first example showed that, when we have available an exactly solvable spectral function that captures the relevant features of the high-frequency decay and the low-frequency power-law, then the recursion method can work very well. One problem with the recursion method, though, is that the existence of these special solutions is not guaranteed for all problems of interest (see Ref.~\cite{viswanathRecursionMethodApplication2013} for a review). In order for the recursion method to work well, one needs to have an exact model solution for all three of the spectral function, the Green's function, and the Lanczos coefficients. Unfortunately, while there is a power-law solution available with Gaussian high-frequency decay, to our knowledge there is no such solution available which has both a low-frequency power-law and \textit{exponential} decay. This is potentially a problem because the operator growth hypothesis posits that such exponential decay is generic for chaotic quantum systems~\cite{parkerUniversalOperatorGrowth2019}. \cref{fig:bessel_bootstrap}(b) shows that when $\rho \neq 0$ it is important to get both of these ingredients correct to accurately capture the $\omega \to 0$ behavior of the spectral function. We consider the toy spectral function $\Phi(\omega)/2\pi = |\omega|^{-1/2} \sech(\pi \omega)$, which decays exponentially as $\omega \to \infty$. The $\rho = -\frac{1}{2}$ spectral bootstrap (SB) captures the true spectral function very well. However, since there is now no available exact solution, the RM performs poorly. One could try using the Meixner-Pollaczek solution from Ref.~\cite{parkerUniversalOperatorGrowth2019} to capture the high-frequency tail while ignoring the low frequency power-law, which is labeled `RM (linear)'. Or one could try using the exact solution from the previous example to get the low frequency power-law correct, at the cost of getting the high-frequency tail wrong, which is labeled `RM ($\rho$-sqrt)'. However, both of these approaches result in large errors at low frequencies, and so only the $\rho = -\frac{1}{2}$ spectral bootstrap correctly captures the true low frequency behavior of the spectral function. This highlights one of the advantages of the spectral bootstrap over the recursion method: it does not require any exact solutions, and so can be equally employed for a wide range of growth behaviors of the Lanczos coefficients.

From the discussion in \cref{sec:gf}, we know that for $\rho = 0$ there is a great deal of freedom in choosing the stitching function for the recursion method, and in particular it is \textit{not} necessary for the stitching function to have a high frequency decay matching that of the true spectral function; one only needs that the level-$n$ Green's function satisfies $G_{n}(\omega \pm i 0^{+}) \approx \mp i/b_{n}$ as $\omega \to 0$ and $n \to \infty$, which is much more generic. One then might wonder why for $\rho \neq 0$ we no longer have such freedom, since it appears to be necessary to also match the high frequency behavior (the relevance of matching the low frequency behavior is perhaps more intuitive). This is because the nontrivial low-frequency behavior of $\Phi(\omega \to 0) \sim |\omega|^{\rho}$ couples to the equilibrium density $\sigma_{n}(0)$ when $\rho \neq 0$ (c.f.~\cref{thm:recurrence_theorem,eq:exact_pn0_scaling}), and $\sigma_{n}(0)$ depends nontrivially on the high frequency decay of the spectral function (c.f.~\cref{lem:hn0_scaling}), being sensitive to a Coulomb gas confinement transition. This coupling is seen most explicitly in \cref{eq:n_gf_asymptotic}, where we show that the level $2n$ Green's function $G_{2n}(\omega \pm i 0^{+})$ is primarily a function of $\pi \sigma_{2n}(0) \omega$ as $\omega \to 0$ and $n \to \infty$.

\section{Random matrix universality}
\label{sec:universality}
While we developed the spectral bootstrap with the practical aim of recovering the spectral function $\Phi(\omega)$ from its Lanczos coefficients, as a byproduct we also get an estimate of the equilibrium measure $\sigma_{n}(\omega)$. We can use this to perform various self-consistency checks. In particular, while we cannot usually verify that a given many-body spectral function obeys all the assumptions under which we derive our results, we can at least check that our estimates of $\Phi(\omega)$ and $\sigma_{n}(\omega)$ are consistent with the behavior expected given those assumptions. One notable example is the emergence of `universality' akin to that governing eigenvalue correlations in random matrix theory (RMT)~\cite{kuijlaarsUniversality2011}. Just as universality appears in the $n \to \infty$ of $n \times n$ random matrices, here it should appear in the limit $n \to \infty$ of large Lanczos index.

From a physical point of view, this RMT-like universality is interesting because it seems to be `superuniversal', in the sense that it can appear not only for chaotic models, which might be expected from studies of eigenstate thermalization~\cite{dalessioQuantumChaosEigenstate2016}, but also in integrable and non-interacting models, provided they have sufficiently smooth spectral functions in the thermodynamic limit. What changes between these classes of operator dynamics is the $n$-dependence of the frequency scale on which this RMT universality appears, due to the different scaling of $b_{n}$.

We expect this section to be of most interest to readers already familiar with universality of eigenvalue correlations in random matrix theory. Nonetheless, we first provide the necessary background before illustrating our results.

\subsection{Background}
\label{sec:universality_background}
Given a spectral function $\Phi(\omega)$, we can define a unitarily-invariant ensemble of $n \times n$ Hermitian random matrices whose eigenvalues are distributed according to
\begin{equation}
    P(\bm{\lambda}) \diff^{n} \bm{\lambda} = \dfrac{1}{Z_{n}} \left( \prod_{i<j} |\lambda_{i} - \lambda_{j}|^{2}\right) \prod_{i} \dfrac{\Phi(\lambda_{i})}{2\pi} \diff \lambda_{i},
\end{equation}
where $\bm{\lambda} = (\lambda_{1},\dots,\lambda_{n})$ with the ordering $\lambda_{1} \leq \lambda_{2} \leq \cdots \leq \lambda_{n}$, and $Z_{n}$ is a normalization constant. The bracketed prefactor is the Vandermonde determinant which gives rise to level repulsion. We consider this random matrix ensemble because its eigenvalue distribution will be governed by the same Coulomb gas as that controlling the Lanczos operators via the orthogonal polynomials $p_{n}(\omega)$.

For any $1 \leq m \leq n-1$, let
\begin{align}
    &R_{m}(\lambda_{1}, \dots, \lambda_{m}) =\\
    &\dfrac{n!}{(n-m)!} \int_{\mathbb{R}} \cdots \int_{\mathbb{R}} P(\lambda_{1}, \dots, \lambda_{m}, \lambda_{m+1}, \dots, \lambda_{n}) \diff \lambda_{m+1} \cdots \diff \lambda_{n} \nonumber
\end{align}
denote the $m$-point eigenvalue correlation function. A remarkable property of these random matrix ensembles is that all of the correlation functions can be written in terms of a determinant of a 2-point correlation kernel~\cite{mehtaRandomMatrices2004},
\begin{equation}
    R_{m}(\lambda_{1}, \dots, \lambda_{m}) = \mathrm{det}\left[\left(\hat{K}_{n}(\lambda_{i}, \lambda_{j})\right)_{1 \leq i,j \leq m}\right].
\end{equation}
The correlation kernel $\hat{K}_{n}$ is a weighted version of the Christoffel-Darboux kernel $K_{n}$ defined in \cref{eq:kernel_def},
\begin{align}
    \hat{K}_{n}(\omega_{a},\omega_{b}) &= \sqrt{w(\omega_{a}) w(\omega_{b})} K_{n}(\omega_{a},\omega_{b}),
\end{align}
where $w(\omega) = \Phi(\omega) / 2\pi$ is the weight function defining the random matrix ensemble. In the Lanczos language, from \cref{eq:lanczos_vector} we can see that $\hat{K}_{n}(\omega_{a},\omega_{b})$ is the integral kernel (in frequency space) of the projection superoperator $\mathcal{P}_{n}$ on to the first $n$ Lanczos operators~\cite{deiftOrthogonalPolynomialsRandom2000}
\begin{align}
    \mathcal{P}_{n} &= \sum_{m=0}^{n-1} |O_{m})(O_{m}|,
\end{align}
which is the complement of the projector $\mathcal{Q}_{n} = \mathds{1} - \mathcal{P}_{n}$ used to construct the level-$n$ Green's function $G_{n}(z)$ (\cref{sec:gf}). By studying $\hat{K}_{n}(\omega_{a},\omega_{b})$, we can characterize this slow/fast operator projection in frequency space.

A further remarkable property of random matrix ensembles is that for large-$n$ their eigenvalue correlations often display \textit{universality}~\cite{kuijlaarsUniversality2011}. This manifests in $\hat{K}_{n}(\omega_{a},\omega_{b})$ approaching a universal form in different sections of the spectrum when probed on the appropriate scale, independent of the precise form of the weight $w$. For example, for $\omega$ in the `bulk' of the spectrum, $|\omega| \ll \beta_{n}$, we get 
\begin{equation}
    \dfrac{1}{\sigma_{n}(\omega)} \hat{K}_{n}\left(\omega + \dfrac{u}{\sigma_{n}(\omega)}, \omega + \dfrac{v}{\sigma_{n}(\omega)}\right) \xrightarrow{n \to \infty} \mathbb{S}(u,v),
    \label{eq:bulk_universality}
\end{equation}
for $u,v \in \mathbb{R}$, so for large-$n$ the local eigenvalue correlations are described by the \textit{sine kernel}
\begin{equation}
    \mathbb{S}(u,v) = \dfrac{\sin[\pi (u-v)]}{\pi (u-v)}.
    \label{eq:sine_kernel}
\end{equation}
Note that the RHS of \cref{eq:bulk_universality} is translation invariant, provided $\omega$ is within the bulk of the spectrum.

If the weight function has a power-law at the origin, $w(x) = |x|^{\rho} e^{-Q(x)}$, then this affects the local eigenvalue correlations near the origin. Rather than the sine kernel, at $\omega = 0$ we instead have \textit{Bessel universality}:
\begin{equation}
    \dfrac{1}{\sigma_{n}(0)} \hat{K}_{n}\left(\dfrac{u}{\sigma_{n}(0)}, \dfrac{v}{\sigma_{n}(0)}\right) \xrightarrow{n \to \infty} e^{-\frac{i \rho}{2}(\arg{u} + \arg{v})} \mathbb{J}_{\rho/2}(u,v),
    \label{eq:bessel_universality}
\end{equation}
where $\mathbb{J}_{\rho/2}$ is the Bessel kernel given by
\begin{equation}
    \mathbb{J}_{\rho/2}(u,v) = \pi \sqrt{u} \sqrt{v} \dfrac{J_{\frac{\rho+1}{2}}(\pi u) J_{\frac{\rho-1}{2}}(\pi v) - J_{\frac{\rho-1}{2}}(\pi u) J_{\frac{\rho+1}{2}}(\pi v)}{2(u-v)}
    \label{eq:bessel_kernel_def}
\end{equation}
with $J_{\alpha}$ a Bessel function of the first kind.

Finally, for eigenvalues near the edge of the spectrum, $\omega \approx \beta_{n}$, the bulk sine universality gives way to \textit{Airy universality}, such that
\begin{equation}
    \dfrac{\beta_{n}}{c_{n} n^{2/3}} \hat{K}_{n}\left(\beta_{n} + \dfrac{\beta_{n}}{c_{n} n^{2/3}} u, \beta_{n} + \dfrac{\beta_{n}}{c_{n} n^{2/3}} v\right) \xrightarrow{n \to \infty} \mathbb{A}(u,v),
    \label{eq:edge_universality}
\end{equation}
where $c_{n} \coloneqq (h_{n}(1) / \sqrt{2})^{2/3}$ is $\mathcal{O}(1)$ (note $c_{n} n^{2/3} = f_{n}^{\prime}(1)$ by~\cref{eq:fnp1}), and the Airy kernel $\mathbb{A}$ is given by
\begin{equation}
    \mathbb{A}(u,v) = \dfrac{\Ai(u)\Ai^{\prime}(v) - \Ai(v) \Ai^{\prime}(u)}{u-v},
    \label{eq:airy_kernel_def}
\end{equation}
where $\Ai$ is the Airy function of the first kind.

\begin{mdframed}
\begin{theorem}
    For our class of spectral functions (see \cref{sec:potential_definitions} for a definition), the corresponding correlation kernels $\hat{K}_{n}$ satisfy the universal scalings \cref{eq:bulk_universality,eq:bessel_universality,eq:edge_universality} in the relevant sections of the spectrum.
\end{theorem}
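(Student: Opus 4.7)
The plan is to reduce all three universality statements to the Plancherel--Rotach asymptotics of the orthonormal polynomials $p_n$ and $p_{n-1}$ that were extracted from the Riemann--Hilbert steepest-descent analysis earlier in the paper. The central tool is the Christoffel--Darboux identity
\begin{equation*}
K_n(\omega_a, \omega_b) = b_n \, \frac{p_n(\omega_a) p_{n-1}(\omega_b) - p_{n-1}(\omega_a) p_n(\omega_b)}{\omega_a - \omega_b},
\end{equation*}
which collapses the $n$-term sum defining $K_n$ into a difference quotient of two polynomials. Writing $\hat{K}_n(\omega_a, \omega_b) = \sqrt{w(\omega_a) w(\omega_b)} K_n(\omega_a, \omega_b)$ with $w = \Phi/2\pi$, the pointwise prefactor $1/\sqrt{\Phi(\omega)}$ present in every Plancherel--Rotach asymptotic (see e.g.\ \cref{eq:pnm1_bulk_asymptotic,eq:pn_bulk_asymptotic}) cancels exactly against the square-root weight factor, leaving only universal data in the $n \to \infty$ limit.

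First I would treat the bulk case. Substituting the trigonometric asymptotics \cref{eq:pnm1_bulk_asymptotic,eq:pn_bulk_asymptotic} into C--D and expanding the difference of products using standard cosine--sine identities, the numerator organizes into terms depending on $\theta_n(\omega_a) - \theta_n(\omega_b)$ and on $\theta_n(\omega_a) + \theta_n(\omega_b)$. Under the scaling $\omega_{a,b} = \omega + u,v/\sigma_n(\omega)$ with $\omega$ fixed in the bulk and $u,v$ in a compact set, Taylor expansion of the phase using $\theta_n^{\prime}(\omega) = \pi \sigma_n(\omega)$ gives $\theta_n(\omega_a) - \theta_n(\omega_b) = \pi(u-v) + \mathcal{O}(1/n)$; the rapidly oscillating sum $\theta_n(\omega_a) + \theta_n(\omega_b)$ contribution drops out (it is killed by the algebraic $(\omega_a - \omega_b)^{-1}$ factor after rescaling). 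Combining with $b_n/\beta_n \to 1/2$ from \cref{thm:recurrence_theorem} and the weight-factor cancellation, one recovers the sine kernel $\mathbb{S}(u,v)$. Near the origin the trigonometric asymptotics are replaced by the Bessel parametrix underlying \cref{thm:kernel,lem:pn0_scaling} and the Bessel bootstrap equations \cref{eq:expmQ_finite,eq:sigman_finite}; substituting these into C--D with $\omega_{a,b} = u,v/\sigma_n(0)$ produces, via the Bessel Wronskian $J_{(\rho+1)/2}(\pi u) J_{(\rho-1)/2}(\pi v) - J_{(\rho-1)/2}(\pi u) J_{(\rho+1)/2}(\pi v)$, the Bessel kernel $\mathbb{J}_{\rho/2}(u,v)$ of \cref{eq:bessel_kernel_def}, with the phase prefactor $e^{-i\rho(\arg u + \arg v)/2}$ arising from analytic continuation of $|\omega|^{\rho/2}$ across the origin. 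At the edge, the same strategy applies to the Airy parametrix already constructed in the proof of the expression for $G_{n,\pm}(\beta_n)$ in \cref{sec:airy_gn}: rescaling $\omega_{a,b} = \beta_n + \beta_n (u,v)/(c_n n^{2/3})$ with $c_n = (h_n(1)/\sqrt{2})^{2/3}$, the C--D numerator reduces to $\Ai(u)\Ai^{\prime}(v) - \Ai^{\prime}(u)\Ai(v)$ at leading order, and division by $u-v$ yields the Airy kernel $\mathbb{A}(u,v)$.

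The main obstacle is not the algebra in any single regime but rather establishing that the three local parametrices match onto the outer (global) Riemann--Hilbert solution with error that is uniform in the \emph{rescaled} arguments as $n \to \infty$. The bulk case needs uniformity on a neighborhood of size $1/\sigma_n(\omega)$, the Bessel case on scale $1/\sigma_n(0)$ (which, by \cref{lem:hn0_scaling}, shrinks only like $1/\log n$ in the marginal chaotic regime $p=1$), and the edge case on scale $n^{-2/3}$. The Bessel case is the most delicate, since the $|\omega|^{\rho}$ branch cut must be threaded through the analytic continuation to obtain the correct phase factor, and the parametrix must remain valid even as $\sigma_n(0)$ diverges with $n$ at the Coulomb-gas critical point; however, this uniformity is precisely the content of the Riemann--Hilbert small-norm estimates established in the analysis supporting \cref{thm:recurrence_theorem,lem:pn0_scaling}, and no new technical machinery is required beyond those estimates.
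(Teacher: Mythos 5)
Your proposal is correct and takes essentially the same route the paper does: the paper states that this theorem ``follows as a direct consequence of the $n\to\infty$ asymptotics we established for the orthogonal polynomials'' (e.g.\ \cref{eq:pnm1_bulk_asymptotic,eq:pn_bulk_asymptotic} in the bulk, with the Bessel and Airy parametrices near $z=0$ and $z=\pm 1$ from \cref{sec:polynomial_asymptotics}), and you have simply spelled out the Christoffel--Darboux computation and weight-factor cancellation in more detail. One small imprecision in the bulk argument: the $\theta_{n}(\omega_{a})+\theta_{n}(\omega_{b})$ contribution is not suppressed by the $(\omega_{a}-\omega_{b})^{-1}$ factor; rather, after expanding the C--D numerator in the trigonometric asymptotics, that combination appears with a small prefactor coming from the $\mathcal{O}((\omega_{a}-\omega_{b})/\beta_{n})$ mismatch of the amplitude factors $(\beta_{n}\mp\omega)^{1/4}/(\beta_{n}\pm\omega)^{1/4}$ (whose product is exactly $1$), and it is this near-cancellation that renders the oscillating sum subleading after the $\sigma_{n}^{-1}$ normalization.
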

\end{mdframed}
This theorem follows as a direct consequence of the $n\to\infty$ asymptotics we established for the orthogonal polynomials, such as \cref{eq:pnm1_bulk_asymptotic,eq:pn_bulk_asymptotic} in the bulk of the spectrum. See \cref{sec:polynomial_asymptotics} for the full asymptotic expressions. %

\subsection{Universality in quantum operator dynamics}
While we can prove the emergence of universality for spectral functions obeying our assumptions, it would be gratifying to see this also emerge in more familiar many-body quantum models, even though we cannot explicitly verify that our assumptions hold. Indeed, universality is a very generic phenomenon~\cite{erdosUniversalityWignerRandom2011,lubinskyBulkUniversalityHolds2012,eichingerNecessarySufficientConditions2024}, particularly in the bulk, and it may well hold even if some of our assumptions are violated (such as our analyticity requirements). Using the spectral bootstrap algorithm described in \cref{sec:spectral_bootstrap}, we can estimate both the weight function $w(\omega) = \Phi(\omega)/2\pi$ and the equilibrium measure $\sigma_{n}(\omega)$ for a given many-body model, using only a finite number of its Lanczos coefficients. We can then perform a self-consistent check that these approximations reproduce the universal scalings \cref{eq:bulk_universality,eq:bessel_universality,eq:edge_universality}. 

The evidence we will present for universality in many-body models comes with a qualification. To test for universality requires explicitly knowing the spectral function $\Phi(\omega)$ and the equilibrium measure $\sigma_{n}(\omega)$, but for most interacting many-body models this is not analytically tractable. For models where we do not know $\Phi(\omega)$, we use our spectral bootstrap algorithm to first obtain a finite-$n$ approximation to $\Phi(\omega)$ and $\sigma_{n}(\omega)$. The qualification is that this algorithm \textit{assumes} the validity of Plancherel-Rotach asymptotics for the orthogonal polynomials (e.g.~\cref{eq:pn_bulk_asymptotic,eq:pnm1_bulk_asymptotic}), from which universality can be derived as a consequence~\cite{deiftStrongAsymptoticsOrthogonal1999}. (But we do prove these asymptotics hold under reasonable assumptions on the spectral function.) To avoid circularity, we perform independent cross-checks of these estimates of $\Phi(\omega)$ and $\sigma_{n}(\omega)$, such as the ability to reconstruct the numerically exact Lanczos coefficients (\cref{fig:bulk_universality}(c)). If these checks are passed convincingly, which turns out to be the case for the physical models we consider, then this gives strong evidence that we have obtained accurate estimates of the true spectral function and equilibrium measure. With those in hand, it is then reasonable to test for universality.

Before proceeding, we make a practical consideration. The statements of universality in \cref{eq:bulk_universality,eq:bessel_universality,eq:edge_universality} are made under the approximation that the equilibrium density $\sigma_{n}(\omega)$ is locally constant. This approximation becomes increasingly well justified as $n\to\infty$, but for the finite $n$ we have available to us numerically, this approximation is not necessarily so good, particularly near the origin for exponentially decaying spectral functions which are at the Coulomb gas confinement transition (see \cref{fig:bulk_universality}(b) for some physical examples). We will adopt a practice that is standard in the random matrix theory literature, namely `unfolding' the spectrum~\cite{brodyRandommatrixPhysicsSpectrum1981}, which involves transforming to new coordinates that account for variation of the local density. Fixing a frequency $\omega$ at which we want to test for universality, we define
\begin{align}
    F_{n,\omega}(x) &\coloneqq \int_{\omega}^{x} \sigma_{n}(s) \diff s, \label{eq:Fn_def}\\
    &= I_{n}(x) - I_{n}(\omega), \nonumber
\end{align}
where $I_{n}(\omega) = \int_{0}^{\omega} \sigma_{n}(s) \diff s$ was defined in \cref{eq:In_def}. Now note that, if we approximate $\sigma_{n}(s) \approx \sigma_{n}(\omega)$ as constant in \cref{eq:Fn_def}, then each of the arguments of the kernel $\hat{K}_{n}$ in \cref{eq:bulk_universality} can be expressed in terms of the inverse function $F_{n,\omega}^{-1}(u) \approx \omega + u/\sigma_{n}(\omega) $. Then the statement \cref{eq:bulk_universality} of bulk universality at frequency $\omega$ can be rewritten as
\begin{equation}
    \left(\dfrac{F_{n,\omega}^{-1}(u) - F_{n,\omega}^{-1}(v)}{u-v}\right) \hat{K}_{n}\Big(F_{n,\omega}^{-1}(u), F_{n,\omega}^{-1}(v)\Big) \xrightarrow{n \to \infty} \mathbb{S}(u,v).
    \label{eq:unfolded_bulk_universality}
\end{equation}
The point is that, even in the situation where the equilibrium density $\sigma_{n}(\omega)$ is not constant and $F_{n,\omega}$ must be inverted numerically, this formulation of bulk universality will account for that variation, and so one can still expect to observe the sine kernel. (The inverse $F_{n,\omega}^{-1}$ is well-defined because $F_{n,\omega}$ is the integral of a positive function $\sigma_{n}(\omega) > 0$, so is increasing.) The statement \cref{eq:bessel_universality} of Bessel universality can be expressed in the same way by setting $\omega=0$, with only the RHS changing to the Bessel kernel instead of the sine kernel. The statement \cref{eq:edge_universality} of Airy universality around $\omega=\beta_{n}$ can also be made with the same LHS, but with $F_{n,\beta_{n}}(x)$ redefined as

\begin{equation}
    F_{n,\beta_{n}}(x) \coloneqq f_{n}(x/\beta_{n}), \mathclap{\hspace{-31.5em} \text{(Airy)}}
    \label{eq:airy_unfolding_def}
\end{equation}
where $f_{n}$ is defined in \cref{eq:fn_In_relation}.

\subsubsection{Sine universality in the bulk}
\label{sec:bulk_universality}
\begin{figure}[t]
    \centering
    \includegraphics[width=0.52\columnwidth]{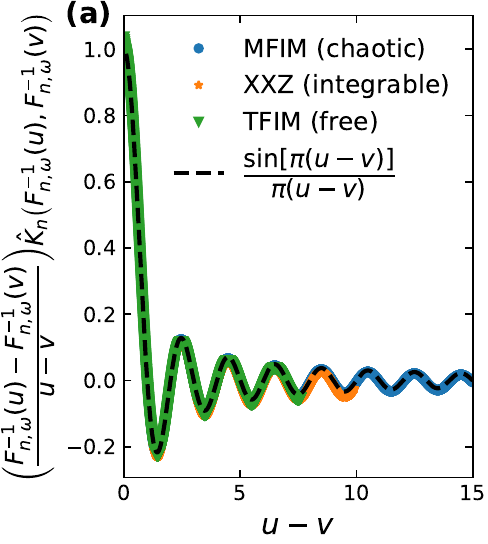}
    \includegraphics[width=0.45\columnwidth]{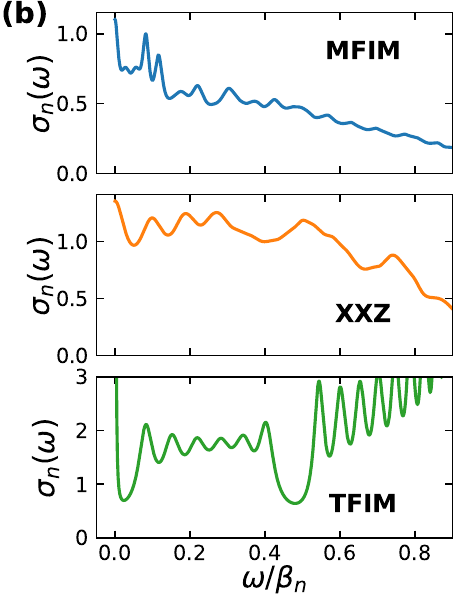}
    \includegraphics[width=\columnwidth]{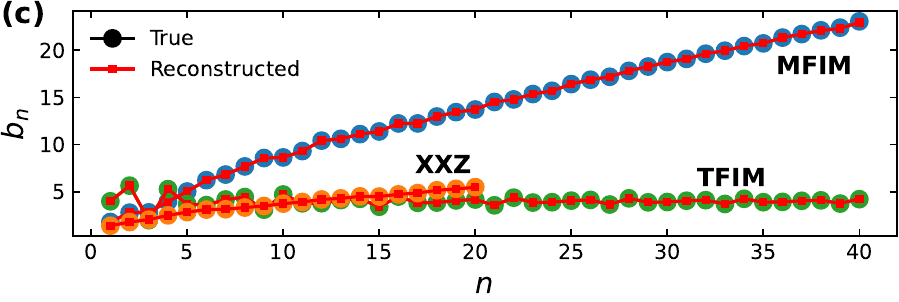}
    \includegraphics[width=\columnwidth]{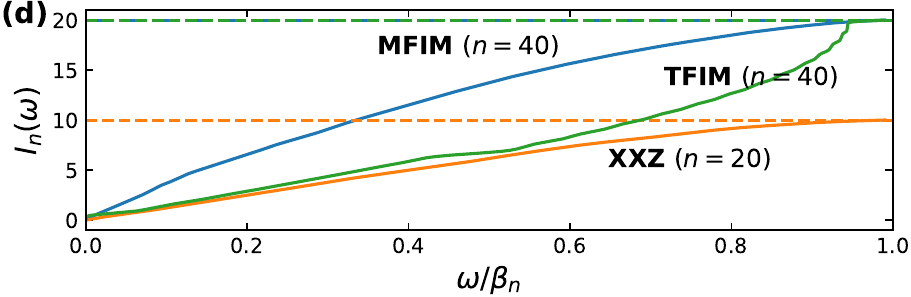}
    \caption{\textbf{(a)} Sine kernel universality [c.f.~\cref{eq:unfolded_bulk_universality,eq:sine_kernel}] in the bulk spectra of the chaotic mixed field Ising model (MFIM), the interacting integrable XXZ chain, and the non-interacting transverse field Ising model (TFIM). There are no free fitting parameters; everything is extracted from the Lanczos coefficients using the spectral bootstrap (\cref{sec:bulk_bootstrap}).
    \textbf{(b)} Equilibrium densities $\sigma_{n}(\omega)$ estimated using the spectral bootstrap algorithm. Note the significant variation of $\sigma_{n}(\omega)$ with $\omega$, which necessitates the unfolding procedure.
    \textbf{(c)} Reconstructed Lanczos coefficients $b_{n}$, calculated using the estimate of the spectral function obtained using the spectral bootstrap, and compared with the exact Lanczos coefficients. \textbf{(d)} Cumulative integrals $I_{n}(\omega) = \int_{0}^{\omega} \sigma_{n}(\omega^{\prime}) \diff \omega^{\prime}$ of the equilibrium measures from panel \textbf{(b)}, showing that they integrate to $I_{n}(\omega=\beta_{n}) = n/2$, as expected from \cref{eq:coulomb_gas_def,eq:sigman_support}.}
    \label{fig:bulk_universality}
\end{figure}
First we check for sine universality in the bulk of the spectrum, taking as examples the chaotic mixed field Ising model (MFIM), the interacting integrable XXZ chain, and the non-interacting transverse field Ising model (TFIM). We take the initial Lanczos operator to be the total energy current for the MFIM, the total spin current for the $\Delta=2$ XXZ model, and $\frac{1}{\sqrt{L}}\sum_{x} Y_{x} Y_{x+1}$ for the TFIM (with $XX$ interactions), and compute $n=40,20,40$ Lanczos coefficients respectively. Since in each case we expect the autocorrelation function $C(t)$ to decay faster than $1/t$, such that $0< |\Phi(\omega=0)| < \infty$, we set $\rho=0$ and start testing for bulk universality at $\omega=0$. Note that, for $\rho=0$, our assumptions on the spectral function (\cref{sec:potential_definitions}) required for our Riemann-Hilbert analysis amount to assuming analyticity of $\Phi(\omega)$ at $\omega=0$, which does not explicitly account for possible effects of power-law decay of $C(t)$ faster than $1/t$  on the analytic structure of $\Phi(\omega)$. The test of bulk universality we are about to perform thereby gives us a probe to see whether this non-analyticity affects the universality class near $\omega=0$.

We estimate the weight function $w(\omega) = \Phi(\omega)/2\pi$ and the equilibrium density $\sigma_{n}(\omega)$ throughout the bulk frequency spectrum by carrying out the `bulk bootstrap' algorithm, described in \cref{sec:spectral_bootstrap}, in the frequency range $\omega \in [0, 0.99\beta_{n}]$, with $\beta_{n=40} \approx 46.1$ for MFIM, $\beta_{n=20} \approx 11.2$ for XXZ, and $\beta_{n=40} \approx 8.5$ for TFIM. As a check of the accuracy of our estimate of $\Phi(\omega)$, like in \cref{fig:spectral_bootstrap_rho=0}(c) we again use it to reconstruct the Lanczos coefficients for these models by evaluating the three-term recurrence \cref{eq:three_term_recursion}, with the results shown in \cref{fig:bulk_universality}(c), where we find very good agreement. For the MFIM energy current, in \cref{fig:spectral_bootstrap_rho=0}(a) we also found good agreement with the spectral function obtained by Fourier transforming the real-time autocorrelation function obtained by TEBD, providing another independent accuracy test. To check the accuracy of the equilibrium measure $\sigma_{n}(\omega)$, as shown in \cref{fig:bulk_universality}(b), we compute the cumulative integral $I_{n}(\omega) =\int_{0}^{\omega} \sigma_{n}(\omega^{\prime}) \diff \omega^{\prime}$. By the definitions \cref{eq:coulomb_gas_def,eq:sigman_support} and the even symmetry $\sigma_{n}(-\omega) = \sigma_{n}(\omega)$, we should have the sum rule $I_{n}(\omega = \beta_{n}) = n/2$. As shown in \cref{fig:bulk_universality}(d), we find this is well satisfied in all three models. This sum rule is not enforced by the spectral bootstrap, so this is a nontrivial check of the accuracy of our estimates of $\sigma_{n}(\omega)$.
\begin{figure}[t]
    \centering
    \includegraphics[width=\columnwidth]{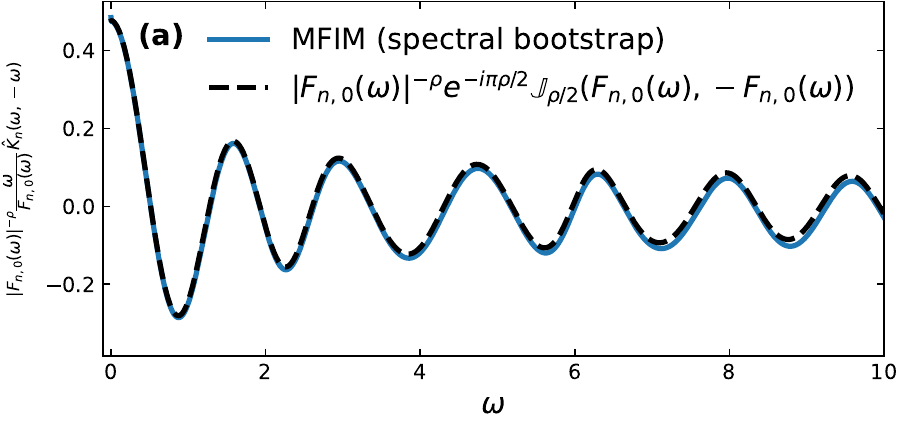}
    \includegraphics[width=\columnwidth]{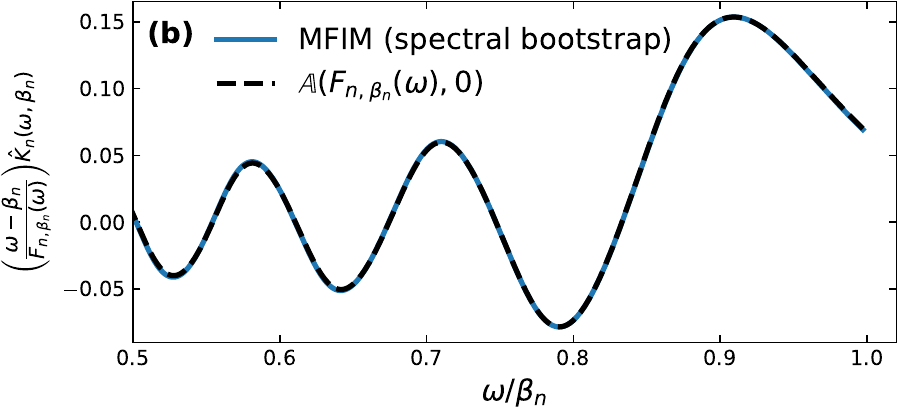}
    \caption{\textbf{(a)} Bessel universality near $\omega=0$ in the mixed field Ising model (MFIM). The initial Lanczos operator is the local energy density; since energy transport is diffusive in this model, the spectral function should scale like $\Phi(\omega\to 0) \sim |\omega|^{\rho}$ with $\rho = -\frac{1}{2}$. There is good agreement with the corresponding Bessel kernel $\mathbb{J}_{\rho/2}$ (\cref{eq:bessel_kernel_def}) in an $\mathcal{O}(1)$ region around the origin. \textbf{(b)} Airy universality near the spectral edge $\omega=\beta_{n}$ in the MFIM. We compare the results from the spectral bootstrap to the Airy kernel $\mathbb{A}$ (\cref{eq:airy_kernel_def}), finding very close agreement.}
    \label{fig:bessel_and_edge_universality}
\end{figure}

With this assurance, we then use these estimates of $\Phi(\omega)$ and $\sigma_{n}(\omega)$ to numerically compute the inverse function $F_{n,\omega}^{-1}$, and then evaluate the LHS of \cref{eq:unfolded_bulk_universality}, with the kernel $K_{n}(x,y) = \sum_{m=0}^{n-1} p_{m}(x) p_{m}(y)$ computed from the exact recurrence coefficients $\{b_{n}\}$ using the three-term recurrence \cref{eq:three_term_recursion}.
The results are shown in \cref{fig:bulk_universality}(a), where we see that all three models collapse on to the sine kernel $\mathbb{S}(u,v) = \sin[\pi(u-v)] / \pi(u-v)$, indicating the emergence of sine universality in the bulk. We emphasize that there are no free fitting parameters---everything is computed from the Lanczos coefficients---so getting such close agreement is nontrivial.
Note that the curve continues for larger separations $u-v$ for the MFIM than the XXZ model and then the TFIM simply because the frequency bandwidth $\beta_{n}$ is larger, allowing us to probe larger separations before hitting the edge of the spectrum.
This verification of bulk universality can be seen as an \textit{a posteriori} explanation of the effectiveness of our spectral bootstrap algorithm to estimate the spectral functions of these physical models.
\subsubsection{Bessel universality at the origin}
Next we check for Bessel universality near $\omega = 0$, which is expected when the spectral function has a power-law at the origin, $\Phi(\omega \to 0) \sim |\omega|^{\rho}$. We will consider the energy density $h_{0} = \frac{1}{2}(Z_{-1}Z_{0} + Z_{0}Z_{1}) + g_{x} X_{0} + g_{z} Z_{0}$ at site zero of the mixed field Ising model (MFIM), which should have $\Phi(\omega \to 0) \sim |\omega|^{-1/2}$ (i.e.~$\rho=-\frac{1}{2}$) due to energy diffusion~\cite{kimBallisticSpreadingEntanglement2013}. We compute $n=40$ Lanczos coefficients for this model, and then use the `Bessel bootstrap' described in \cref{sec:bessel_bootstrap} to extract the spectral function and the equilibrium measure. After checking again that these estimates successfully reproduce the true Lanczos coefficients (not shown), we then evaluate the LHS of \cref{eq:unfolded_bulk_universality} for $v=-u$ and using $F_{n,\omega=0}$ (note the RHS of \cref{eq:unfolded_bulk_universality} should now be replaced by $e^{-i \pi \rho/2} \mathbb{J}_{\rho/2}(u,-u)$). For clarity, we opt to plot as a function of the physical frequency $\omega$, with $u = F_{n,0}(\omega)$. Note that, due to the even symmetry of the spectral function, we have $v=-u = F_{n,0}(-\omega)$. To account for the $\omega\to 0$ divergent power-law of both the spectral function and the $\rho=-\frac{1}{2}$ Bessel kernel, we rescale both sides by $|u|^{-\rho} = |F_{n,0}(\omega)|^{-\rho}$ when plotting. The results are shown in \cref{fig:bessel_and_edge_universality}(a). We find good agreement with the Bessel kernel for $\mathcal{O}(1)$ frequencies, providing evidence for Bessel universality at $\omega=0$ for this model. There is gradually increasing disagreement for larger frequencies, but this is not surprising since Bessel universality is only expected to hold in an $\mathcal{O}(1)$-sized region around the origin (c.f.~\cref{fig:plancherel_schematic}(b) and \cref{sec:bessel_bootstrap}).

\subsubsection{Airy universality at the edge}
Finally we check for Airy universality at the edge of the spectrum, $\omega \approx \beta_{n}$. We consider again the mixed field Ising model, and take the energy current as our initial Lanczos operator, computing $n=40$ Lanczos coefficients. We use the `Airy bootstrap' of \cref{sec:airy_bootstrap} to approximate the spectral function $\Phi(\omega)$ and the equilibrium measure $\sigma_{n}(\omega)$ for frequencies in a range $\omega \in [\omega_{\mathrm{min}},\beta_{n}]$ close to the edge $\omega = \beta_{n}$. With these estimates, we evaluate the LHS of \cref{eq:unfolded_bulk_universality}, with the RHS now replaced by the Airy kernel $\mathbb{A}(u,v)$ (\cref{eq:airy_kernel_def}). Note that, for Airy universality, the unfolding map $F_{n}$ must now be defined according to \cref{eq:airy_unfolding_def}. Since our spectral bootstrap algorithm estimates the weight only up to the spectral edge $\omega = \beta_{n}$, we will fix $v=0$, where $F_{n,\omega=\beta_{n}}^{-1}(v=0) = \beta_{n}$, and take the other argument $u$ to be negative. Again for clarity we plot in terms of the physical frequency $\omega$, so negative $u=F_{n,\beta_{n}}(\omega)$ corresponds to frequencies below $\omega=\beta_{n}$. The results are shown in \cref{fig:bessel_and_edge_universality}(b). We find very good agreement with the Airy kernel, indicating the emergence of Airy universality near the edge of the spectrum in the mixed field Ising model.

\subsection{Scaling of the equilibrium measure}
\label{sec:eq_measure_scaling}
\begin{figure}[t]
    \centering
    \includegraphics[width=\columnwidth]{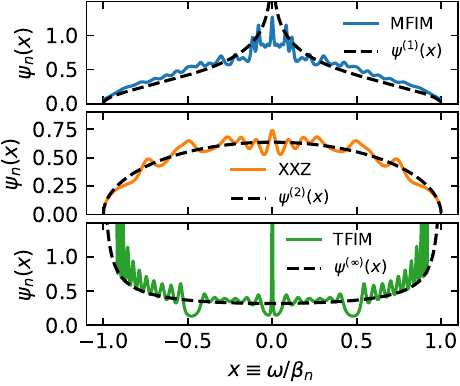}
    \caption{Rescaled equilibrium measures $\psi_{n}(x) = (\beta_{n}/n) \sigma_{n}(\beta_{n}x)$ for the mixed field Ising model (MFIM), the XXZ chain, and the transverse field Ising model (TFIM), as extracted using the spectral bootstrap. Given that they respectively have $b_{n} \sim \mathcal{O}(n / \log{n})$, $b_{n} \sim \mathcal{O}(\sqrt{n})$, and $b_{n} \sim \mathcal{O}(1)$, we compare with the equilibrium measures $\psi^{(p)}(x)$ for the Freud weights with equivalent growth rates, as indicated in \cref{eq:freud_p1,eq:freud_p2,eq:freud_pinf}. In each case, $\psi^{(p)}$ captures the qualitative shape of the equilibrium measure for the physical model, but there are significant fluctuations, which have implications for quantitative extraction of the spectral function.}
    \label{fig:rescaled_eq_measure}
\end{figure}

Since our class of spectral functions behaves like the Freud weights $w^{(p)}(x) = \exp(-\kappa_{p} |x|^{p})$ at large frequency scales (c.f.~\cref{eq:limit_to_freud}), it is also interesting to compare our estimates of the rescaled equilibrium measure $\psi_{n}(x) = (\beta_{n} / n) \sigma_{n}(\beta_{n} x)$ with the corresponding function $\psi^{(p)}(x)$ for the Freud weights, which is given by the Ullman distribution defined in \cref{eq:freud_eq_measure}~\cite{saffLogarithmicPotentialsExternal1997}.
For our physical test systems, we will use the same models as in \cref{sec:bulk_universality}---the mixed field Ising model (MFIM), the XXZ chain, and the transverse field Ising model (TFIM)---take the same initial Lanczos operators, and extract the equilibrium measure in the same way. Since these models have Lanczos coefficients growing like $b_{n} \sim \mathcal{O}(n/\log{n})$, $b_{n} \sim \mathcal{O}(n^{1/2})$, and $b_{n} \sim \mathcal{O}(1)$ respectively, we will compare to the following cases of the Ullman distribution:
\begin{align}
    \psi^{(p=1)}(x) &= \dfrac{1}{\pi} \artanh\left(\sqrt{1-x^{2}}\right),\label{eq:freud_p1}\\
    \psi^{(p=2)}(x) &= \dfrac{2}{\pi} \sqrt{1-x^{2}},\label{eq:freud_p2}\\
    \psi^{(p\to\infty)}(x) &= \dfrac{1}{\pi} \dfrac{1}{\sqrt{1-x^{2}}}.\label{eq:freud_pinf}
\end{align}
Notice that $\psi^{(p=2)}(0)$ and $\psi^{(p\to\infty)}(0)$ are finite, while $\psi^{(p=1)}(x)$ has a logarithmic divergence as $x \to 0$. This is an instance of the Coulomb gas confinement transition discussed in \cref{sec:coulomb}, which occurs at $p=1$. The divergence of $\psi^{(p\to \infty)}(x)$ as $|x| \to 1$ is consistent with the scaling $h_{n}(1) \xrightarrow{n \to \infty} 2p$ in \cref{lem:hn1_scaling}. For $p \to \infty$, the rescaled potential $V_{n}(x)=Q(\beta_{n}x)/n$ approaches a box potential which is zero for $|x|\leq 1$ and infinite outside, and the logarithmic repulsion between charges causes a build up of charge at the boundaries $x=\pm 1$ of the box.

The comparisons between the rescaled equilibrium measures $\psi_{n}(x)$ for the physical models and the corresponding Ullman distributions are shown in \cref{fig:rescaled_eq_measure}. Again we note there are no free fitting parameters. In each case the Ullman distribution $\psi^{(p)}(x)$ gives a good qualitative description of the large-scale profile of the equilibrium measure, but the physical equilibrium measures have significant fluctuations on smaller scales. These are a reflection of the fact that the nonuniversal structure of these spectral functions has not yet been washed out by the rescaling by $\beta_{n}$. Given these noticeable fluctuations, if one is aiming for quantitative accuracy in recovering a spectral function from Lanczos coefficients, it is not usually enough to simply approximate the equilibrium measure $\sigma_{n}(\omega)$ by a rescaled Ullman distribution $(n/\beta_{n}) \psi^{(p)}(\omega/\beta_{n})$. Rather, one has to account for these fluctuations by using something like the spectral bootstrap to more accurately extract the equilibrium measure.

\begin{figure}[t]
    \centering
    \includegraphics[width=\columnwidth]{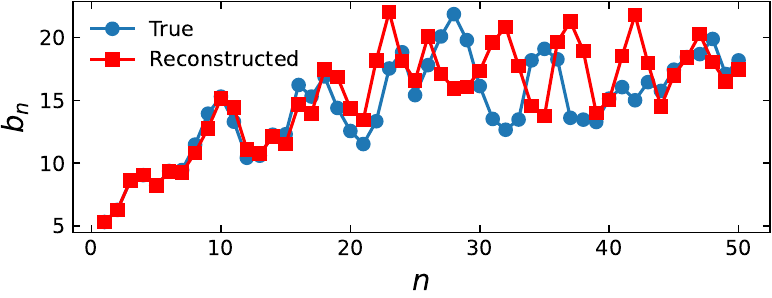}
    \caption{Reconstructed Lanczos coefficients $b_{n}$ for the disordered transverse-field Ising model, which is Anderson localized. The reconstructed Lanczos coefficients are calculated using the estimate of the spectral function obtained using the spectral bootstrap, and compared with the exact Lanczos coefficients. The large discrepancies between the exact Lanczos coefficients and the reconstructed coefficients suggests the failure of fast space universality in this disordered model, owing to the lack of smoothness of the spectral function.}
    \label{fig:bn_reconstruction_anderson_localized}
\end{figure}

We emphasize that we considered this class of `Freud-like' weights for our proofs because they are sufficiently regular to be amenable to rigorous proofs, but we expect the expressions for the polynomial asymptotics to be more universal, at least to leading order in $n \to \infty$. Thus we do not think there is a sense in which we have inadvertently assumed Freud-like behavior, so any agreement should be interpreted as a statement about the physical behavior of these models.

\section{The role of smoothness: failure of universality in a localized model}
\label{sec:smoothness}
In order to prove our results about emergent universality in the $n\to\infty$ limit, we needed to assume that the spectral function $\Phi(\omega)$ was very \textit{smooth} as a function of $\omega$ (see assumptions in \cref{sec:assumptions_summary}). In this section we give some evidence that this is not merely a requirement for our proof technique, but rather has physical relevance. In particular, we will demonstrate how a non-smooth spectral function can result in a failure of large-$n$ universality. To be concrete, we consider the 1D transverse-field Ising model (TFIM) with quenched disorder, defined by
\begin{equation}
    H = \sum_{x\in\mathbb{Z}} \left( X_{x} X_{x+1} + h_{x} Z_{x} \right),
\end{equation}
where the transverse fields $\{h_{x}\}$ are now independent random variables sampled from a Gaussian distribution $\mathcal{N}(0,\sigma^{2})$. In 1D, for any nonzero variance $\sigma^{2} \neq 0$, this model is provably Anderson localized~\cite{aizenmanRandomOperators2015}. As such, the spectrum of $H$ is pure point like even in the thermodynamic limit~\cite{goldshteinPurePointSpectrum1977}, as opposed to the continuous spectrum that one usually obtains in the absence of disorder. We fix $\sigma=4$ for this discussion, and consider a single disorder realization rather than averaging over disorder. We have checked that the following results do not depend significantly on the particular disorder realization.

We check for the effects of a point like spectrum by performing the same consistency check that we performed in \cref{fig:spectral_bootstrap_rho=0}c and \cref{fig:bulk_universality}c. That is, we \textit{assume} the validity of the universal scaling forms underlying the spectral bootstrap, and use that to produce an estimate $\Phi_{\mathrm{est}}(\omega)$ of the spectral function using the first $n=50$ Lanczos coefficients. With this estimate, we can then attempt to reconstruct the Lanczos coefficients solely by performing numerical integration weighted by $\Phi_{\mathrm{est}}$, as described in the text surrounding \cref{eq:bn_reconstruction}. If universality is valid in this model, then the reconstructed Lanczos coefficients should match the true ones used to estimate $\Phi_{\mathrm{est}}$. This reconstruction worked well for the clean version of the TFIM (\cref{fig:bulk_universality}c). However, as shown in \cref{fig:bn_reconstruction_anderson_localized}, for the disordered version the reconstruction fails badly, suggesting the breakdown of fast space universality in this model, owing to the lack of a smooth spectrum.

\clearpage
\bibliography{bibliography}

\clearpage
\onecolumngrid
\begin{center}
    \textbf{\large Supplemental Material}
\end{center}
\setcounter{equation}{0}
\setcounter{theorem}{0}
\setcounter{lemma}{0}
\setcounter{corollary}{0}
\setcounter{definition}{0}
\setcounter{example}{0}
\setcounter{figure}{0}
\setcounter{table}{0}
\setcounter{section}{0}
\counterwithin*{equation}{section}
\makeatletter
\renewcommand\theequation{\thesection.\arabic{equation}}
\renewcommand{\thefigure}{S\arabic{figure}}
\renewcommand{\bibnumfmt}[1]{[S#1]}
\renewcommand{\thetheorem}{S\arabic{theorem}}
\renewcommand{\thelemma}{S\arabic{lemma}}
\renewcommand{\thedefinition}{S\arabic{definition}}
\renewcommand{\theproposition}{S\arabic{proposition}}
\renewcommand{\theexample}{S\arabic{example}}
\renewcommand{\thesection}{S\arabic{section}}
\renewcommand{\thecorollary}{S\arabic{corollary}}

\renewcommand{\theHsection}{S\arabic{section}}
\renewcommand{\theHfigure}{S\arabic{figure}}
\renewcommand{\theHlemma}{S\arabic{lemma}}
\renewcommand{\theHequation}{\thesection.\arabic{equation}}

\def\TK{\textcolor{red}}

\section{Class of spectral functions (weights)}
For ease of reference, we first restate here our precise assumptions on the spectral function.
\subsection{Definition of potentials}
We want to consider weights $w(\omega) \equiv \Phi(\omega)/2\pi$ which decay at least exponentially at large $\omega$. We will decompose them as $w(\omega) \equiv |\omega|^{\rho} \exp[-Q(\omega)]$, where $Q$ is called the \textit{potential}. Given the exponential decay of $w$, $Q(\omega)$ should grow at least linearly as $|\omega| \to \infty$. To model this behavior, we will consider a class of weights inspired by the `very smooth Freud weights' of Refs.~\cite{lubinskyProofFreudConjecture1988,lubinskyUniformMeanApproximation1988}, which they denote by $\mathrm{VSF}(p)$, with $p$ an exponent governing the degree of the polynomial growth of $Q(\omega) \sim |\omega|^{p}$ as $|\omega| \to \infty$. Our weights will be a subset of $\mathrm{VSF}(p)$, where we add the requirement of analyticity, and also require a specification of the logarithmic corrections to the leading polynomial growth of $Q(\omega)$.

Our Riemann-Hilbert analysis draws heavily from Ref.~\cite{deiftStrongAsymptoticsOrthogonal1999}, where they take $Q$ to be a polynomial of even order. However, we are particularly interested in the marginal case where $Q(\omega \to \infty)$ grows linearly with $|\omega|$, since the Operator Growth Hypothesis~\cite{parkerUniversalOperatorGrowth2019} conjectures this to be generic for spectral functions in chaotic many-body quantum systems. But it is clearly not possible to simultaneously have i) $Q(\omega) \sim |\omega|$ as $|\omega| \to \infty$, and ii) $Q$ be a polynomial. This was a primary motivation for considering this generalized class $\mathrm{VSLF}(p,q)$ of `polynomial-like' weights. 
\begin{definition}[$\mathrm{VSLF}(p,q)$: log-Freud potentials of order $(p,q)$]
    Let $Q : \mathbb{R} \to \mathbb{R}$ be real-analytic, even, and satisfy
    \begin{equation}
        Q^{\prime}(\omega) > 0, \quad \text{for } \omega \text{ large enough},
        \label{eq:vsf_prop1}
    \end{equation}
    \begin{equation}
        \lim_{\omega \to \infty}\left(\dfrac{\omega Q^{\prime\prime}(\omega)}{Q^{\prime}(\omega)}\right) = p - 1,
        \label{eq:vsf_prop3}
    \end{equation}
    \begin{equation}
        \lim_{\omega\to\infty} \left(\log(\omega) \left[-p + \dfrac{\omega Q^{\prime}(\omega)}{Q(\omega)}\right]\right) = q.
        \label{eq:vslf_limit}
    \end{equation}
    for some $p > 0$ and $q \in \mathbb{R}$. Then we shall call $Q$ a \textit{log-Freud} potential of order $(p,q)$ and write $Q \in \mathrm{VSLF}(p,q)$.
\end{definition}
We will see in \cref{eq:vslf_characterization_prop2} that these potentials grow as $Q(\omega) \sim |\omega|^{p} (\log{|\omega|})^{q + o(1)}$ as $|\omega| \to \infty$, where the $o(1)$ in the exponent refers to scaling with $\omega$. In this sense assumptions \cref{eq:vsf_prop3,eq:vslf_limit} are similar to but slightly weaker than assuming $Q \in \Theta(|\omega|^{p} (\log{|\omega|})^{q})$. %
We make the assumption \cref{eq:vsf_prop1} for technical convenience; it allows us to prove that for large enough $n$ we need consider only the simplest case, where the support of the equilibrium measure consists of a single interval (see \cref{sec:equilibrium_measure}). Similar analyses have been performed in the more complicated case where the support consists of multiple disjoint intervals, but mostly for so-called `varying weights' of the form $w(x) = \exp[-n Q(x)]$ where the weight depends on $n$~\cite{deiftUniformAsymptoticsPolynomials1999,kuijlaarsUniversalityEigenvalueCorrelations2003}.

In order to apply Riemann-Hilbert techniques, we also need to assume that some of these properties continue to hold in a region of the complex plane near the real axis.
\begin{definition}[$\mathrm{CVSLF}(p,q,\theta,\gamma)$: complex log-Freud potentials of order $(p,q)$]
    For an angle $0 < \theta \leq \pi/2$, define the `complex cone' $C_{\theta}$ by
    \begin{equation}
        C_{\theta} \coloneqq \left\{z : |\arg{z}| < \theta \right\} \cup \left\{ z : |\arg{z}| > \pi - \theta\right\},
        \label{eq:complex_cone}
    \end{equation}
    using the convention $-\pi < \arg{z} \leq \pi$. We consider the open cone, so $z=0$ is not included in $C_{\theta}$. Now suppose there is some $0 < \theta \leq \pi/2$ and $\gamma > 0$ such that $Q \in \mathrm{VSLF}(p,q)$ can be analytically continued to $C_{\theta} \cup \left\{z : |z| < \gamma\right\}$, the union of $C_{\theta}$ and the disk of radius $\gamma$ centered at the origin (see \cref{fig:analyticity}).
    Also assume that \cref{eq:vsf_prop3,eq:vslf_limit} generalize to this region, in the sense that for $z$ restricted to $C_{\theta}$ we have
    \begin{equation}
        \lim_{|z|\to\infty} \dfrac{z Q^{\prime\prime}(z)}{Q^{\prime}(z)} = p-1,
        \label{eq:complex_assumption}
    \end{equation}
    \begin{equation}
              \lim_{|z|\to\infty} \left(\log(z) \left[-p + \dfrac{z Q^{\prime}(z)}{Q(z)}\right]\right) = q.
        \label{eq:complex_assumption_T}
    \end{equation}
    Given these properties, we say that $Q$ is a complex log-Freud potential of order $(p,q)$, and write $Q \in \mathrm{CVSLF}(p,q,\theta,\gamma)$.
\end{definition}

\fbox{\begin{minipage}{\textwidth}Our proofs will apply for $Q \in \mathrm{CVSLF}(p,q,\theta,\gamma)$ in all cases where the corresponding Hamburger moment problem is determined. This encompasses $p>1,q\in \mathbb{R}$, and $p=1,q>-1$.\end{minipage}}

\begin{example}
    All even polynomials with positive leading coefficient, $Q(x) = q_{2m} x^{2m} + \cdots$, $q_{2m} > 0$, lie in $\mathrm{CVSLF}(p,q,\theta,\gamma)$ with $p=2m$, $q=0$, $\theta=\pi/2$, $\gamma=\infty$.
\end{example}
\begin{example}
    Certain fractional powers of polynomials satisfy our assumptions, e.g. $Q(x) = \sqrt{1 + x^{2}}$ has $p=1$, $q=0$, $0 < \theta < \pi/2$, and $0 < \gamma < 1$. A similar example is $Q(x) = \sqrt{1+x^{2}} \log(1 + x^{2})$, which has $p=1,q=1$.
\end{example}
\begin{example}
    The symmetric Meixner-Pollaczek weights have $w(x) = \exp[-Q(x)] = \Gamma(\lambda + i x)\Gamma(\lambda -i x)$. ($Q$ can then be defined as in \cref{eq:Q_integral_def} below.) With $\lambda > 0$, we have $p=1$, $q=0$, $0 < \theta < \pi/2$, and $0 < \gamma < \lambda$. These weights appear in a rescaled form as the 2-point Wightman spectral function in the SYK model~\cite{maldacenaRemarksSachdevYeKitaevModel2016,dodelsonThermalProductFormula2024}, and were utilized in Ref.~\cite{parkerUniversalOperatorGrowth2019} to give an exactly solvable spectral function with exponential decay for use with the recursion method~\cite{viswanathRecursionMethodApplication2013}.
\end{example}
\begin{example}
    Taking $Q^{\prime}(z) = \erf(z) = (2/\sqrt{\pi}) \int_{0}^{z} e^{-t^{2}} \diff t$ gives an example of a $p=1$ potential which is also entire, unlike the previous $p=1$ examples. The conditions \cref{eq:complex_assumption,eq:complex_assumption_T} hold for $0<\theta < \pi/4$.
\end{example}

\subsection{Properties of log-Freud weights}
\label{sec:freud_properties}
First we adapt some results from Refs.~\cite{lubinskyProofFreudConjecture1988,lubinskyUniformMeanApproximation1988} to characterize the asymptotic behavior of these weights on the real line. Note that those references include a requirement on the 3rd derivative of $Q$ in the definition of their $\mathrm{VSF}(p)$ class, which we do not include in our $\mathrm{VSLF}(p,q)$ definition, since it is not necessary to prove any of the properties relevant for our purposes. When quoting results from Refs.~\cite{lubinskyProofFreudConjecture1988,lubinskyUniformMeanApproximation1988}, we will sometimes use the shorthand $Q \in \mathrm{VSF}(p)$, with the implicit understanding that the relevant proof uses only properties of the $\mathrm{VSF}(p)$ class that we do include in our definition of the $\mathrm{VSLF}(p,q)$ class.

\begin{definition}[$n$th Mhaskar-Rakhmanov-Saff (MRS) number $\beta_{n}$]
    Given an even potential $Q(x)$ and a positive integer $n$, define $\beta_{n}$ as the positive solution to the integral equation
    \begin{equation}
        \dfrac{1}{2\pi} \int_{-1}^{1} \dfrac{\beta_{n} s Q^{\prime}(\beta_{n} s)}{\sqrt{1-s^{2}}} \diff s = n.
        \label{eq:MRS_def_sup}
    \end{equation}
\end{definition}
As discussed in \cref{sec:coulomb} of the main text, the MRS number $\beta_{n}$ is important because it sets the bandwidth of the region in which the weighted orthogonal polynomials $p_{n}(x) w(x)^{1/2}$ are non-negligible. This also turns out to set the location of the dominant contribution to the Riemann-Hilbert problem for the orthogonal polynomials, and the leading behavior of the recurrence coefficients is given by $b_{n} \approx \beta_{n} / 2$.

\begin{lemma}
    Suppose $Q \in \mathrm{VSLF}(p,q)$ for some $p>0$, $q \in \mathbb{R}$. Then the following hold.
    \begin{enumerate}[label=(\roman*)]
        \item $x Q^{\prime}(x)$ is increasing for large enough $x$.
        \item For any $\epsilon>0$, there exists $x_{0}$ such that for $|x| \geq x_{0}$ we have
            \begin{equation}
                |x|^{p} \left(\log{|x|}\right)^{q-\epsilon/2} \leq |Q(x)| \leq |x|^{p} \left(\log{|x|}\right)^{q+\epsilon/2}.
                \label{eq:vslf_characterization_prop2}
            \end{equation}
        \item For any $\epsilon>0$, there exists $x_{0}$ such that for $|x| \geq x_{0}$ we have
            \begin{equation}
                |x|^{p-1} \left(\log{|x|}\right)^{q-\epsilon/2} \leq |Q^{\prime}(x)| \leq |x|^{p-1} \left(\log{|x|}\right)^{q+\epsilon/2}.
                \label{eq:vslf_characterization_prop3}
            \end{equation}
        \item For large enough $n$, the $n$th MRS number $\beta_{n}$ for $Q$ is uniquely defined, monotonically increasing, and furthermore
            \begin{equation}
                \lim_{n \to \infty} \dfrac{\beta_{n} Q^{\prime}(\beta_{n})}{n} = \dfrac{p}{\lambda_{p}},
                \label{eq:vslf_characterization_prop4}
            \end{equation}
            where $\lambda_{p}$ is defined by
            \begin{equation}
                \lambda_{p} = \dfrac{\Gamma\left(\frac{p+1}{2}\right)}{\Gamma\left(\frac{1}{2}\right)\Gamma\left(\frac{p}{2}\right)}.
                \label{eq:lambda_def}
            \end{equation}
        \item Uniformly for $s$ in any compact subinterval of $(0,\infty)$, we have
            \begin{equation}
                \lim_{n \to \infty} \dfrac{1}{n} \beta_{n}s Q^{\prime}(\beta_{n} s) = \dfrac{p}{\lambda_{p}} s^{p}.
                \label{eq:vslf_characterization_prop6}
            \end{equation}
        \item For any $\epsilon >0$, if $n$ is large enough then we have
            \begin{equation}
                \left(\dfrac{n}{\left(\log{n}\right)^{q + \epsilon/2}}\right)^{1/p} \leq \beta_{n} \leq \left(\dfrac{n}{\left(\log{n}\right)^{q - \epsilon/2}}\right)^{1/p}.
                \label{eq:vslf_characterization_prop5}
            \end{equation}
    \end{enumerate}
    \label{lem:vslf_characterization}
\end{lemma}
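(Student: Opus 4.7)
The proof proceeds by a sequence of asymptotic computations that extract each successive claim from the three defining limits \cref{eq:vsf_prop1,eq:vsf_prop3,eq:vslf_limit}. First, I would establish (i) by direct differentiation: $(xQ'(x))' = Q'(x)\bigl[1 + xQ''(x)/Q'(x)\bigr]$, and by \cref{eq:vsf_prop3} the bracket tends to $p>0$, so positivity for large $x$ follows from \cref{eq:vsf_prop1}.

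For (ii), the plan is to rewrite \cref{eq:vslf_limit} in logarithmic-derivative form as
\begin{equation*}
    \frac{Q'(x)}{Q(x)} = \frac{p}{x} + \frac{q}{x\log x} + o\!\left(\frac{1}{x\log x}\right),
\end{equation*}
and integrate from a fixed large $x_0$ to obtain $\log Q(x) = p\log x + q\log\log x + o(\log\log x)$. Exponentiating and absorbing the residual $o(\log\log x)$ into an arbitrarily small $\epsilon/2$ yields the claimed two-sided bounds. Part (iii) then follows from writing $Q'(x) = (Q(x)/x)\cdot(xQ'(x)/Q(x))$, using that the second factor tends to $p$ by \cref{eq:vslf_limit} and inserting (ii).

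The core of the lemma is (iv). I would first rewrite the MRS equation as $F(\beta_n)=n$, where $F(\beta)=(1/\pi)\int_0^1 \beta s Q'(\beta s)/\sqrt{1-s^2}\,\diff s$ (the integrand being even in $s$). By (i), $\beta s Q'(\beta s)$ is eventually strictly increasing in $\beta$, so $F$ is strictly monotone for large $\beta$, giving both uniqueness of $\beta_n$ and monotonicity $\beta_{n+1}>\beta_n$. For the limit \cref{eq:vslf_characterization_prop4}, factor out $\beta_n Q'(\beta_n)$:
\begin{equation*}
    \frac{n}{\beta_n Q'(\beta_n)} = \frac{1}{\pi}\int_0^1 s\,\frac{Q'(\beta_n s)}{Q'(\beta_n)}\,\frac{\diff s}{\sqrt{1-s^2}}.
\end{equation*}
By (iii), the ratio $Q'(\beta_n s)/Q'(\beta_n)$ converges pointwise to $s^{p-1}$ on $(0,1]$. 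To pass to the limit under the integral, I would use (i) to obtain the dominating bound $sQ'(\beta_n s)\leq Q'(\beta_n)$ wherever $t\mapsto tQ'(t)$ is monotone, and peel off a vanishing contribution from $s\in[0,s_0/\beta_n]$ for a fixed large $s_0$. Dominated convergence reduces the right-hand side to $(1/\pi)\int_0^1 s^p(1-s^2)^{-1/2}\,\diff s=\lambda_p/p$ via a Beta-function evaluation, giving the result.

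Parts (v) and (vi) are corollaries of (iv) and (iii). Writing $(1/n)\beta_n s Q'(\beta_n s) = s\cdot[Q'(\beta_n s)/Q'(\beta_n)]\cdot[\beta_n Q'(\beta_n)/n]$ reduces (v) to the convergence $Q'(\beta_n s)/Q'(\beta_n)\to s^{p-1}$, which holds uniformly on compact subsets of $(0,\infty)$ by the explicit log-Freud bounds of (iii). For (vi), inserting those bounds into (iv) gives $\beta_n^p(\log\beta_n)^{q-\epsilon/2}\lesssim n\lesssim\beta_n^p(\log\beta_n)^{q+\epsilon/2}$, and solving self-consistently using $\log\beta_n\sim(\log n)/p$ yields the stated bounds on $\beta_n$. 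The main technical obstacle is the dominated-convergence step in (iv): the ratio $Q'(\beta_n s)/Q'(\beta_n)$ must be controlled uniformly for small $s$ \emph{before} the universal bounds of (iii) become effective, which is precisely where the monotonicity of $tQ'(t)$ from (i) and the positivity assumption \cref{eq:vsf_prop1} are indispensable.
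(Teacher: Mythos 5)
Your (i)--(iii) match the paper's arguments exactly (the paper proves (i) by the same differentiation, (ii) by the same integrate-and-exponentiate, and (iii) by $Q' = (Q/x)\cdot(xQ'/Q)$ followed by (ii)). Where you diverge is in (iv)--(vi): the paper simply cites Lubinsky (Lemma 3.1(ii), 3.2(i)--(ii), 3.3(iii)) for the convergence $Q'(\beta_n s)/Q'(\beta_n)\to s^{p-1}$, for the uniqueness/monotonicity of $\beta_n$, and for the preliminary lower bound $\beta_n\geq n^{1/(p+\epsilon)}$ needed to close (vi), whereas you attempt a self-contained argument. That is a reasonable choice, and your dominated-convergence scaffold for (iv) and the self-consistent bootstrapping for (vi) would work once the convergence of the derivative ratio is secured.

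However, there is a genuine gap precisely at that point. You assert that ``by (iii), the ratio $Q'(\beta_n s)/Q'(\beta_n)$ converges pointwise to $s^{p-1}$ on $(0,1]$,'' and again in (v) that this follows ``by the explicit log-Freud bounds of (iii).'' It does not. Property (iii) gives, for any fixed $\epsilon>0$ and $x$ large,
\begin{equation}
    s^{p-1}\,\frac{(\log(\beta_n s))^{q-\epsilon/2}}{(\log\beta_n)^{q+\epsilon/2}} \leq \frac{Q'(\beta_n s)}{Q'(\beta_n)} \leq s^{p-1}\,\frac{(\log(\beta_n s))^{q+\epsilon/2}}{(\log\beta_n)^{q-\epsilon/2}} ,
\end{equation}
and since $\log(\beta_n s)/\log\beta_n\to 1$ the two sides behave like $s^{p-1}(\log\beta_n)^{-\epsilon}$ and $s^{p-1}(\log\beta_n)^{+\epsilon}$; for any fixed $\epsilon>0$ these bounds open up rather than pinch, so (iii) is perfectly consistent with the ratio oscillating and not converging at all. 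The correct route, which is Lubinsky's and which the paper itself uses when it proves the complex analogue later, is to write
\begin{equation}
    \log Q'(\beta s)-\log Q'(\beta)-(p-1)\log s=\int_{\beta}^{\beta s}\left[\frac{uQ''(u)}{Q'(u)}-(p-1)\right]\frac{\diff u}{u},
\end{equation}
and observe that by the defining limit \cref{eq:vsf_prop3} the integrand is $o(1)$ uniformly in $u$ over the integration range, while $\int_{\beta}^{\beta s}\diff u/u=\log s$ is $\beta$-independent, so the right-hand side is $o(1)$ as $\beta\to\infty$ uniformly for $s$ in compact subsets of $(0,\infty)$. You should replace the appeal to (iii) with this direct integral argument from \cref{eq:vsf_prop3}; with that fix the rest of your outline for (iv)--(vi), including the domination $sQ'(\beta_n s)\leq Q'(\beta_n)$ from the monotonicity of $tQ'(t)$ and the peel-off of the $s\in[0,s_0/\beta_n]$ tail, goes through.
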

\begin{proof}
    \begin{enumerate}[label=(\roman*)]
        \item This is \cite[Lemma 3.1(i)]{lubinskyUniformMeanApproximation1988}, whose proof we repeat here. \cref{eq:vsf_prop3,eq:vsf_prop1} imply that for large enough $x$ we have
            \begin{align}
                \dfrac{\diff}{\diff x}\left[x Q^{\prime}(x)\right] &= Q^{\prime}(x)\left[1 + \dfrac{x Q^{\prime\prime}(x)}{Q^{\prime}(x)}\right],\\
                &\geq Q^{\prime}(x) (p/2) > 0.
            \end{align}
        \item 
            Fixing $\epsilon>0$, for large enough $x$ \cref{eq:vslf_limit} yields
            \begin{equation}
                \dfrac{p}{x} + \dfrac{q - \epsilon/2}{x \log{x}} \leq \dfrac{Q^{\prime}(x)}{Q(x)} \leq \dfrac{p}{x} + \dfrac{q + \epsilon/2}{x \log{x}}.
                \label{eq:relation_for_proof_prop2_T}
            \end{equation}
            Integrating w.r.t.~$x$ and exponentiating then gives the result (the integration constant can be absorbed by slightly increasing $\epsilon$).
        \item The condition in \cref{eq:vslf_limit} implies
            \begin{equation}
                \lim_{x \to \infty}\left[\dfrac{x Q^{\prime}(x)}{Q(x)}\right] = p.
            \end{equation}
            Fixing $0 < \epsilon^{\prime} < p$, for $x$ large enough this limit implies
            \begin{equation}
                (p - \epsilon^{\prime}/2) \dfrac{Q(x)}{x} \leq Q^{\prime}(x) \leq (p + \epsilon^{\prime}/2) \dfrac{Q(x)}{x}.
            \end{equation}
            Then we apply \cref{eq:vslf_characterization_prop2} to get
            \begin{equation}
                (p - \epsilon^{\prime}/2) x^{p-1} \left(\log{x}\right)^{q-\epsilon/2} \leq Q^{\prime}(x) \leq (p + \epsilon^{\prime}/2) x^{p-1} \left(\log{x}\right)^{q+\epsilon/2}.
            \end{equation}
            We can get rid of the constants $(p \pm \epsilon^{\prime}/2)$ by slightly increasing $\epsilon$, and so we obtain \cref{eq:vslf_characterization_prop3}.
        \item This is \cite[Lemma 3.2(i) and (ii)]{lubinskyUniformMeanApproximation1988}, the proofs of which directly carry over because $Q \in \mathrm{VSF}(p)$.
        \item This follows after using \cite[Lemma 3.1(ii)]{lubinskyUniformMeanApproximation1988}, which states that for $Q \in \mathrm{VSF}(p)$, uniformly for $s$ in any compact subinterval of $(0,\infty)$ we have
            \begin{equation}
                \lim_{n \to \infty} \dfrac{Q^{\prime}(\beta_{n}s)}{Q^{\prime}(\beta_{n})} = s^{p-1}.
            \end{equation}
            The statement then follows by combining this with \cref{eq:vslf_characterization_prop4}, since
            \begin{equation*}
                \lim_{n \to \infty} \dfrac{1}{n} \beta_{n} s Q^{\prime}(\beta_{n}s) = \lim_{n \to \infty} \dfrac{\beta_{n} Q^{\prime}(\beta_{n})}{n} \dfrac{Q^{\prime}(\beta_{n}s)}{Q^{\prime}(\beta_{n})} s = \dfrac{p}{\lambda_{p}} s^{p-1} s = \dfrac{p}{\lambda_{p}} s^{p}.
            \end{equation*}
        \item We will prove the upper bound on $\beta_{n}$, with the lower bound proceeding analogously. From \cref{eq:vslf_characterization_prop4} we have $\lim_{n\to\infty} \beta_{n} Q^{\prime}(\beta_{n})/n = p / \lambda_{p}$, and hence, given $0 < \epsilon^{\prime} < p / \lambda_{p}$, for sufficiently large $n$ we have
            \begin{equation*}
                \dfrac{\beta_{n} Q^{\prime}(\beta_{n})}{n} \leq \dfrac{p}{\lambda_{p}} + \dfrac{\epsilon^{\prime}}{2}.
            \end{equation*}
            Then lower bounding $Q^{\prime}(\beta_{n})$ using \cref{eq:vslf_characterization_prop3} gives
            \begin{equation*}
                \beta_{n} \leq \left(\dfrac{p}{\lambda_{p}} + \dfrac{\epsilon^{\prime}}{2}\right)^{1/p} \left(\dfrac{n}{\left(\log{\beta_{n}}\right)^{q - \epsilon/2}}\right)^{1/p}.
            \end{equation*}
            Since $Q \in \mathrm{VSF}(p)$, we can apply \cite[Lemma 3.3(iii)]{lubinskyProofFreudConjecture1988}, which gives the bound $\beta_{n} \geq n^{1/(p + \epsilon^{\prime\prime})}$ for $0 < \epsilon^{\prime\prime} < p$, and hence $\log{\beta_{n}} \geq (p+\epsilon^{\prime\prime})^{-1} \log{n}$. Using this to bound the $\log{\beta_{n}}$ factor on the RHS, and redefining $\epsilon$ to absorb the constant prefactor, we get the upper bound in \cref{eq:vslf_characterization_prop5}.
    \end{enumerate}
\end{proof}

Next we generalize some of these properties to the region of the complex plane where $Q$ is assumed to be complex analytic. 
First we generalize the upper bound of \cref{eq:vslf_characterization_prop3} to this complex region.
\begin{lemma}
    Assume $Q \in \mathrm{CVSLF}(p,q,\theta,\gamma)$. For any $\epsilon>0$, there exists $A$ such that for $z \in C_{\theta}$ with $|z| \geq A$ we have
            \begin{equation}
                 |Q^{\prime}(z)| \leq |z|^{p-1} \left(\log{|z|}\right)^{q+\epsilon/2}.
                \label{eq:complex_bound_T}
            \end{equation}
    \label{lem:Qprime_complex_bound}
\end{lemma}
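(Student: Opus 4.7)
The plan is to reduce to the real-line bound \cref{eq:vslf_characterization_prop3} by integrating the complexified identity \cref{eq:complex_assumption_T} along a contour from the real axis to a general point of $C_\theta$. By analytic continuation, the evenness $Q(-x)=Q(x)$ on $\mathbb{R}$ extends to $Q(-z)=Q(z)$ on $C_\theta$, so $|Q'(-z)|=|Q'(z)|$ and it suffices to treat the right component $\{|\arg z|<\theta\}$; the left component follows by reflection.

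First I will rewrite \cref{eq:complex_assumption_T} in working form: for any $\epsilon'>0$ there exists $R_0\ge e$ such that for $w\in C_\theta$ with $|w|\ge R_0$,
\begin{equation*}
    \frac{Q'(w)}{Q(w)} \;=\; \frac{p}{w} + \frac{q+r(w)}{w\log w}, \qquad |r(w)|\le \epsilon',
\end{equation*}
using the principal branch of $\log$ (well-defined for $|\arg w|<\theta\le \pi/2$). Since $|wQ'(w)/Q(w)|\to p>0$ in $C_\theta$, $Q$ has no zeros for $|w|$ sufficiently large, so $\log Q$ admits a single-valued primitive on any simply connected region of $\{w\in C_\theta:|w|\ge R_0\}$.

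Next, for a target $z$ with $|z|\ge R_0$ and $0\le \arg z<\theta$, I will integrate this identity along the contour $\Gamma_z$ consisting of the real segment from $x_0 \coloneqq R_0$ out to $|z|$, followed by the arc $w=|z|e^{i\phi}$, $\phi\in[0,\arg z]$; both lie inside $C_\theta$. The $p/w$ term integrates to $p(\log z-\log x_0)$. On the real segment the remainder contributes
\begin{equation*}
    \left|\int_{x_0}^{|z|}\frac{q+r(t)}{t\log t}\,dt\right| \;\le\; (|q|+\epsilon')\bigl(\log\log|z|-\log\log x_0\bigr),
\end{equation*}
while on the arc $|dw/w|=d\phi$ and $|\log w|\ge \log|z|$, giving an arc contribution bounded by $(|q|+\epsilon')\theta/\log|z|=o(1)$. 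Taking real parts and exponentiating,
\begin{equation*}
    |Q(z)| \;\le\; |Q(x_0)|\,(|z|/x_0)^{p}\,(\log|z|)^{\,|q|+\epsilon'}\,e^{O(1)},
\end{equation*}
so after slightly enlarging $\epsilon'$ to absorb the constants, $|Q(z)|\le |z|^{p}(\log|z|)^{q+\epsilon/4}$ for $|z|\ge A_1$ in the right component of $C_\theta$. Finally, using \cref{eq:complex_assumption_T} once more in the form $|Q'(z)|=|Q(z)|\cdot |p+O(1/\log|z|)|/|z|$ and combining with the $|Q|$ bound yields $|Q'(z)|\le p(1+o(1))|z|^{p-1}(\log|z|)^{q+\epsilon/4}$, and the prefactor is eventually dominated by $(\log|z|)^{\epsilon/4}$, giving \cref{eq:complex_bound_T}.

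The main technical nuisance is keeping $\Gamma_z$ inside $C_\theta$ and on a single branch of $\log$ throughout: this is exactly what the radial--then--arc construction achieves, since the real segment avoids the imaginary axis entirely and the arc stays at constant radius within $|\arg w|<\theta\le \pi/2$. The only other subtlety is verifying that the $o(1)$ convergence in \cref{eq:complex_assumption_T} is uniform on the arc (which follows from its formulation as a limit as $|z|\to\infty$ in $C_\theta$, equivalent to uniform smallness outside a large disk).
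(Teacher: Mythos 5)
Your overall strategy---integrating the reformulated ratio $Q'/Q$ along a radial-then-arc contour to propagate the real-line bound out into the cone, then converting from $|Q|$ to $|Q'|$---is the natural way to execute the sketch the paper gives (the paper simply says to proceed as in the real-line proofs of parts (ii) and (iii), with the complex estimate replacing \cref{eq:relation_for_proof_prop2_T}). The contour choice is sound, and the conversion $|Q'(z)|=|Q(z)|\cdot|p+O(1/\log|z|)|/|z|$ is the right final step.

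However, there is a genuine arithmetic gap when $q<0$. On the real segment you bound
\begin{equation*}
\left|\int_{x_0}^{|z|}\frac{q+r(t)}{t\log t}\,dt\right|\le (|q|+\epsilon')\bigl(\log\log|z|-\log\log x_0\bigr),
\end{equation*}
which leads you to $|Q(z)|\le |Q(x_0)|(|z|/x_0)^{p}(\log|z|)^{|q|+\epsilon'}e^{O(1)}$, and you then assert that ``enlarging $\epsilon'$'' yields $|Q(z)|\le |z|^{p}(\log|z|)^{q+\epsilon/4}$. But when $q<0$ this implication is false: $(\log|z|)^{|q|+\epsilon'}$ grows, while $(\log|z|)^{q+\epsilon/4}$ may decay, and no finite adjustment of $\epsilon'$ can bridge an exponent gap of $|q|-q=2|q|>0$. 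The error is the unnecessary absolute value on the real-segment integral. Since $t>0$, $\log t>0$, and $r(t)$ is real there (the assumption restricted to $\mathbb{R}$ is the real-line condition \cref{eq:vslf_limit}, and $Q$ is real-analytic), the integrand is real and $q+r(t)\le q+\epsilon'$ pointwise, giving the \emph{signed} upper bound
\begin{equation*}
\int_{x_0}^{|z|}\frac{q+r(t)}{t\log t}\,dt\le (q+\epsilon')\bigl(\log\log|z|-\log\log x_0\bigr),
\end{equation*}
which is what must be fed into the exponential to obtain $(\log|z|)^{q+\epsilon'}$ rather than $(\log|z|)^{|q|+\epsilon'}$. (You only need an upper bound on $\Re[\log Q(z)-\log Q(x_0)]$, so a one-sided estimate is exactly what is wanted.) With that correction the rest of your argument, including the $o(1)$ arc contribution and the final passage from $|Q|$ to $|Q'|$, closes the proof for all $q\in\mathbb{R}$.
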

\begin{proof}
One may proceed as in the proofs of properties (ii) and (iii) of \cref{lem:vslf_characterization} where the starting point \cref{eq:relation_for_proof_prop2_T} is replaced by the estimate
  \begin{equation}
                \left| \dfrac{Q^{\prime}(z)}{Q(z)} - \dfrac{p}{z} - \dfrac{q }{z \log{z}} \right| \leq   \dfrac{\epsilon/2}{\left| z \log{z} \right|} \,.
\end{equation}                
\end{proof}
Second we generalize \cite[Lemma 3.1(ii)]{lubinskyUniformMeanApproximation1988} to the complex region.
\begin{lemma}
    Assume $Q \in \mathrm{CVSLF}(p,q,\theta,\gamma)$. Then uniformly for $z$ in any compact subset of $C_{\theta}$ with $\Re{z} > 0$, we have
    \begin{equation}
        \lim_{\beta \to \infty} \dfrac{Q^{\prime}(\beta z)}{Q^{\prime}(\beta)} = z^{p-1},
        \label{eq:complex_limit}
    \end{equation}
    with the behavior for $\Re{z} < 0$ given by symmetry, since $Q^{\prime}(z)$ is assumed odd.
\end{lemma}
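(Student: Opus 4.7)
The plan is to adapt the proof of \cite[Lemma 3.1(ii)]{lubinskyUniformMeanApproximation1988} to the complex setting, by integrating the logarithmic derivative of $Q'$ along a suitable contour from $1$ to $z$ inside $C_\theta$ and invoking the uniform complex estimate \cref{eq:complex_assumption}. As a preliminary, \cref{eq:complex_assumption} forces $Q'(w)\neq 0$ for all $w\in C_\theta$ with $|w|$ sufficiently large, since $wQ''(w)/Q'(w)$ must be bounded near $p-1$ there. Hence $Q'$ admits a holomorphic logarithm on any simply connected subregion of $\{w\in C_\theta : |w|\geq R_0\}$ for some fixed $R_0$.

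Fix a compact subset $K\subset C_\theta\cap\{\Re z>0\}$, and for $z\in K$ parametrize the straight line from $1$ to $z$ by $\ell_z(s) \coloneqq (1-s)+sz$, $s\in[0,1]$. An elementary argument shows that $\Re\ell_z(s)\geq \min(1,\Re z)>0$ and $|\arg\ell_z(s)|\leq |\arg z|<\theta$ for all $s\in[0,1]$, so $\ell_z$ lies entirely in $C_\theta\cap\{\Re>0\}$; moreover $|\ell_z(s)|$ is bounded above and below by positive constants uniformly in $s\in[0,1]$ and $z\in K$. For $\beta$ large enough, the rescaled path $\beta\,\ell_z$ then satisfies $|\beta\,\ell_z(s)|\geq R_0$ throughout, so the fundamental theorem of calculus applied to the logarithmic derivative gives
\begin{equation}
\frac{Q'(\beta z)}{Q'(\beta)} = \exp\left(\int_{\ell_z} \frac{1}{w}\left[\frac{(\beta w)\,Q''(\beta w)}{Q'(\beta w)}\right] dw\right).
\end{equation}

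The final step is a direct application of the hypothesis. Setting $\eta(w)\coloneqq wQ''(w)/Q'(w)-(p-1)$, \cref{eq:complex_assumption} gives $|\eta(w)|\to 0$ as $|w|\to\infty$ in $C_\theta$. For any $\varepsilon>0$, choose $R\geq R_0$ so that $|\eta(w)|<\varepsilon$ for all $w\in C_\theta$ with $|w|>R$; then as soon as $\beta$ is large enough that $\beta\,\inf_{z\in K,\,s\in[0,1]}|\ell_z(s)|>R$, we obtain
\begin{equation}
\int_{\ell_z} \frac{(p-1)+\eta(\beta w)}{w}\,dw = (p-1)\log z + O(\varepsilon)
\end{equation}
uniformly in $z\in K$. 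Here the principal branch of $\log$ is used since $\ell_z$ lies in the right half-plane, and the $O(\varepsilon)$ estimate uses that $\int_{\ell_z}|dw|/|w|$ is bounded uniformly in $z\in K$. Exponentiating yields $Q'(\beta z)/Q'(\beta)\to z^{p-1}$ uniformly on $K$. I anticipate no serious obstacle: the only point requiring care is confirming that the straight-line contour stays in $C_\theta$ and bounded away from zero, which follows immediately from $\Re z>0$ and $|\arg z|<\theta\leq\pi/2$.
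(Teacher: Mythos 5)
Your proof is correct and takes essentially the same approach as the paper's: both express $\log[Q'(\beta z)/Q'(\beta)]$ as an integral of the logarithmic derivative along the straight line from $\beta$ to $\beta z$ (equivalently $\beta\,\ell_z$), verify the contour stays in the cone with modulus of order $\beta$, invoke the uniform bound from \cref{eq:complex_assumption}, and bound $\int |dw|/|w|$ uniformly on compacts. The only cosmetic difference is that the paper establishes containment of the contour in $C_\theta$ via convexity of the sector while you compute $\Re\ell_z$ and $\arg\ell_z$ explicitly.
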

\begin{proof}
    Throughout we consider $z \in C_{\theta}$ with $\Re{z} > 0$. First note that $C_{\theta}$ is invariant under rescaling, so $z \in C_{\theta} \Rightarrow \beta z \in C_{\theta}$ for $\beta > 0$. Then we use the observation that
    \begin{equation}
        \log{Q^{\prime}(\beta z)} - \log{Q^{\prime}(\beta)} - (p-1)\log{z} = \int_{\beta}^{\beta z} \left[\dfrac{u Q^{\prime \prime}(u)}{Q^{\prime}(u)} - (p-1)\right] \dfrac{\diff u}{u},
    \end{equation}
    where the integral is along the straight line contour $u(t) = \beta[1 + t(z-1)]$, $t \in [0,1]$, connecting $\beta$ and $\beta z$. This contour remains within $C_{\theta}$ because the subset of $C_{\theta}$ with $\Re{z} > 0$ is convex. By the assumption \cref{eq:complex_assumption}, given $\epsilon > 0$, for sufficiently large $\beta$ we have
    \begin{equation}
        \left| \dfrac{u Q^{\prime \prime}(u)}{Q^{\prime}(u)} - (p-1) \right| \leq \epsilon
        \label{eq:inequality_in_proof_T}
    \end{equation}
    for all $u$ along the contour. Then changing variables from $u$ to $t$ and applying the bound gives
    \begin{equation}
        \left|\log{Q^{\prime}(\beta z)} - \log{Q^{\prime}(\beta)} - (p-1)\log{z} \right| \leq \epsilon \int_{0}^{1} \left| \dfrac{z-1}{1 + t(z-1)}\right| \diff t,
    \end{equation}
    where we note that the $\beta$ dependence has disappeared from the RHS. Since $\Re{z} > 0$, the integrand denominator is always nonzero, so the integral converges, and hence there exists finite $c_{z} > 0$, independent of $\beta$, such that
    \begin{equation}
        \left|\log{Q^{\prime}(\beta z)} - \log{Q^{\prime}(\beta)} - (p-1)\log{z} \right| \leq \epsilon \, c_{z}.
    \end{equation}
    Since $\epsilon$ can be taken arbitrarily small by increasing $\beta$, so that $\epsilon \, c_{z} \to 0$ uniformly for $z$ in any compact subset of $C_{\theta}$, we have
    \begin{equation}
        \lim_{\beta \to \infty} \left[ \log{Q^{\prime}(\beta z)} - \log{Q^{\prime}(\beta)} \right] = (p-1) \log{z}.
    \end{equation}
    Exponentiating both sides then gives \cref{eq:complex_limit}.
\end{proof}

Combining this lemma with \cref{lem:vslf_characterization}(iv), we get the complex analogue of \cref{lem:vslf_characterization}(v) (the proof is identical):
\begin{lemma}
    Uniformly for $z$ in any compact subset of $C_{\theta}$,
    \begin{equation}
        \lim_{n \to \infty} \dfrac{1}{n} \beta_{n} z Q^{\prime}(\beta_{n} z) = \dfrac{p}{\lambda_{p}} z^{p}.
    \end{equation}
    \label{lem:complex_VSF_to_freud}
\end{lemma}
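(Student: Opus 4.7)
The plan is to reduce the complex statement to a product of two factors, each controlled by a result already established in the excerpt, and inherit uniformity from the inputs.

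First I would write the elementary factorization
\begin{equation}
    \dfrac{1}{n}\,\beta_n z\, Q'(\beta_n z)
    \;=\; \dfrac{\beta_n Q'(\beta_n)}{n}\cdot \dfrac{Q'(\beta_n z)}{Q'(\beta_n)}\cdot z,
\end{equation}
valid for any $z\in C_\theta$ and any $n$ large enough that $Q'(\beta_n)\neq 0$ (guaranteed by \cref{eq:vsf_prop1} once $\beta_n$ exceeds the threshold given in \cref{lem:vslf_characterization}(i)). The first factor is a pure sequence in $n$ and tends to $p/\lambda_p$ by \cref{lem:vslf_characterization}(iv), so uniformly in $z$. The trailing factor of $z$ is obviously uniform on any compact set. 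Hence the only object requiring actual analysis is the middle factor $Q'(\beta_n z)/Q'(\beta_n)$.

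Next I would invoke the complex limit proved just above, namely $\lim_{\beta\to\infty} Q'(\beta z)/Q'(\beta) = z^{p-1}$ uniformly on compact subsets of $C_\theta\cap\{\Re z>0\}$, with the principal branch of $z^{p-1}$. Since $\beta_n\to\infty$ monotonically by \cref{lem:vslf_characterization}(iv), specializing $\beta=\beta_n$ gives the same uniform convergence for the sequence. Multiplying the three factors and taking the limit yields $(p/\lambda_p)\,z^{p-1}\cdot z=(p/\lambda_p)\,z^p$, which is the claim for $z$ in any compact subset of $C_\theta\cap\{\Re z>0\}$.

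For the other half of $C_\theta$, namely $|\arg z|>\pi-\theta$, I would use the oddness of $Q'$ (which follows from evenness of $Q$): writing $z=-w$ with $w\in C_\theta\cap\{\Re w>0\}$,
\begin{equation}
    \dfrac{1}{n}\,\beta_n z\, Q'(\beta_n z)
    \;=\; \dfrac{1}{n}\,\beta_n w\, Q'(\beta_n w),
\end{equation}
so the limit immediately reduces to the case already handled, and equals $(p/\lambda_p)\,w^p$, which we identify with $(p/\lambda_p)\,z^p$ under the even extension of the power function that makes the whole expression analytic in $z$ on $C_\theta$ (the same branch convention used in the excerpt, in which $z^p$ agrees with $|z|^p$ on the real axis). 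Uniformity on compact subsets carries over since the map $z\mapsto -z$ is a homeomorphism preserving compactness.

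The only mildly delicate point is this branch-consistency issue: the function $\beta_n z\,Q'(\beta_n z)$ is even and analytic in $z$, so its limit must be even and analytic, which forces a specific choice of branch for $z^p$ on $C_\theta$. I expect this to be the main obstacle to writing the proof cleanly, but it is purely a bookkeeping matter once one observes that the even extension of $w\mapsto w^p$ from $\{\Re w>0\}$ across the imaginary axis along $C_\theta$ exists and is analytic because the sector $C_\theta$ avoids the branch cut; the two halves of $C_\theta$ are then glued consistently through this analytic extension. No further estimates are needed beyond those already assembled.
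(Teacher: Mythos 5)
Your proof is correct and follows the same route the paper takes: it factors $\tfrac{1}{n}\beta_n z\,Q'(\beta_n z)$ as $\tfrac{\beta_n Q'(\beta_n)}{n}\cdot\tfrac{Q'(\beta_n z)}{Q'(\beta_n)}\cdot z$, applies \cref{lem:vslf_characterization}(iv) to the first factor and the complex version of \cite[Lemma~3.1(ii)]{lubinskyUniformMeanApproximation1988} (proved just before this lemma) to the middle factor, exactly as the paper does by declaring the proof "identical" to that of \cref{lem:vslf_characterization}(v). The only cosmetic difference is that you spell out the $\Re z<0$ half of $C_\theta$ via oddness of $Q'$, which the paper also invokes (by symmetry) in the preceding lemma; note though that the two halves of $C_\theta$ are disjoint sectors, so there is no actual gluing of branches across the imaginary axis to worry about—one simply reads $z^p$ as $(-z)^p$ on the left sector, which is the paper's convention.
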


Next we mention a useful corollary of the proof of \cref{eq:complex_limit}. The first part follows from the integral inequality
\begin{equation}
 \int_{\beta s}^{\beta} \dfrac{Q^{\prime \prime}(u)}{Q^{\prime}(u)} \diff u  \geq \int_{\beta s}^{\beta} \dfrac{p-1-\epsilon}{u} \diff u,
\end{equation}
for $0<s \leq 1$ with $\beta s$ large enough so that the estimate~\eqref{eq:inequality_in_proof_T} can be obtained from condition~\eqref{eq:vsf_prop3}. The second part follows after using \cref{eq:vslf_characterization_prop4}.
\begin{corollary}
Assume $\epsilon >0$ and $\beta > 0$. Then there exists $A>0$ such that 
\begin{equation}
Q^{\prime}(\beta s) \le Q^{\prime}(\beta) s^{p-1-\epsilon}
	\end{equation}
for all $A/\beta \le s \le 1$. Thus, for sufficiently large $n$ and all $A/\beta_{n} \le s \le 1$, we have
\begin{equation}
0 < V_{n}^{\prime}(s) = \dfrac{\beta_{n}}{n} Q^{\prime}(\beta_{n} s) \le \dfrac{2p}{\lambda_{p}}   s^{p-1-\epsilon} .
\end{equation}  
\label{cor:useful_estimate_T}
\end{corollary}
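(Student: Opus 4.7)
My plan is to prove the first inequality by integrating the logarithmic derivative of $Q'$ and using the hypothesis~\eqref{eq:vsf_prop3}, then deduce the second from the MRS-limit~\eqref{eq:vslf_characterization_prop4}.

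First I would observe that, by~\eqref{eq:vsf_prop1}, $Q'$ is positive on $[A_{0},\infty)$ for some $A_{0}$, so $\log Q'$ is well defined there, and by~\eqref{eq:vsf_prop3} for any $\epsilon>0$ there exists $A\ge A_{0}$ with
\begin{equation}
\frac{Q''(u)}{Q'(u)} \;\ge\; \frac{p-1-\epsilon}{u} \qquad \text{for all } u\ge A.
\end{equation}
Fix $\beta>0$ and $s\in[A/\beta,1]$, so that the entire interval $[\beta s,\beta]$ lies in $[A,\infty)$. Integrating the above inequality from $\beta s$ to $\beta$ gives
\begin{equation}
\log Q'(\beta)-\log Q'(\beta s) \;\ge\; (p-1-\epsilon)\bigl(\log\beta-\log(\beta s)\bigr) \;=\; -(p-1-\epsilon)\log s,
\end{equation}
and exponentiating yields the claimed bound $Q'(\beta s)\le Q'(\beta)\,s^{p-1-\epsilon}$. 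This $A$ depends only on $\epsilon$ (through the onset of the asymptotic regime of $\omega Q''/Q'$), not on $\beta$, which is exactly what the statement demands.

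For the second inequality, I would apply the first with $\beta=\beta_{n}$, so that for $A/\beta_{n}\le s\le 1$ we have
\begin{equation}
V_{n}'(s) \;=\; \frac{\beta_{n}}{n}\,Q'(\beta_{n}s) \;\le\; \frac{\beta_{n}Q'(\beta_{n})}{n}\,s^{p-1-\epsilon}.
\end{equation}
The limit~\eqref{eq:vslf_characterization_prop4} states $\beta_{n}Q'(\beta_{n})/n\to p/\lambda_{p}$, so for $n$ sufficiently large the prefactor is bounded by $2p/\lambda_{p}$, giving the stated upper bound. Positivity $V_{n}'(s)>0$ is immediate from~\eqref{eq:vsf_prop1} once $n$ is large enough that $A\le\beta_{n}s$ forces $\beta_{n}s$ into the region where $Q'>0$.

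There is no real obstacle here; the only care required is to make sure the single constant $A$ is chosen before $\beta$ (or $\beta_{n}$) is sent to infinity, so that the range $[A/\beta,1]$ shrinks but the inequality applies uniformly on it. Everything else is a direct consequence of the two quoted properties of $\mathrm{VSLF}(p,q)$.
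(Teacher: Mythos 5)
Your proof is correct and follows essentially the same route as the paper: integrate the lower bound $Q''(u)/Q'(u)\ge(p-1-\epsilon)/u$ (coming from~\eqref{eq:vsf_prop3}) over $[\beta s,\beta]$, exponentiate to get $Q'(\beta s)\le Q'(\beta)\,s^{p-1-\epsilon}$, then substitute $\beta=\beta_n$ and invoke~\eqref{eq:vslf_characterization_prop4} to bound the prefactor $\beta_n Q'(\beta_n)/n$ by $2p/\lambda_p$ for large $n$. The observation that $A$ is chosen once and for all (independent of $\beta$) is exactly the uniformity the paper also relies on.
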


Finally, since it will be relevant later, we consider the Lagrange multiplier $l_{n} \in \mathbb{R}$, which appears in the Euler-Lagrange equation
\begin{equation}
    g_{n,+}(x) + g_{n,-}(x) - V_{n}(x) - l_{n} = 0, \quad \text{for } x \in [-1,1],
\end{equation}
for the Coulomb gas energy minimization problem discussed in \cref{sec:coulomb}, which will be relevant in \cref{sec:euler_lagrange}. Here $V_{n}(x) = Q(\beta_{n}x)/n$ is the rescaled potential, and $g_{n}$ is the logarithmic transform of the equilibrium measure, which we will define later in \cref{eq:log_transform_def}. The Lagrange multiplier $l_{n}$ can be determined using the expression
\begin{equation}
    l_{n} = -2 \log{2} - \dfrac{2}{\pi} \int_{0}^{1} \dfrac{V_{n}(s)}{\sqrt{1-s^{2}}} \diff s.
    \label{eq:lagrange_at_zero}
\end{equation}
(This follows from \cite[Theorem IV.3.1]{saffLogarithmicPotentialsExternal1997}, where in their language we have $l_{n} = -2 C_{\frac{1}{2}V_{n}}$.) For our class of weights, the limiting value is the same as that for the simple Freud weight $\kappa_{p} |x|^{p}$, where $\kappa_{p} = 1/\lambda_{p}$ with $\lambda_{p}$ defined in \cref{eq:lambda_def}.
\begin{lemma}
    For $Q \in \mathrm{VSLF}(p,q)$ with $p>0$, we have
    \begin{equation}
        \lim_{n \to \infty} l_{n} = -2 \log{2} - \dfrac{2}{p}.
    \end{equation}
    \label{lem:lagrange_multiplier}
\end{lemma}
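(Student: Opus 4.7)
Starting from the identity \cref{eq:lagrange_at_zero}, the claim reduces to showing
\[
\lim_{n\to\infty}\frac{1}{\pi}\int_{0}^{1}\frac{V_{n}(s)}{\sqrt{1-s^{2}}}\,\diff s \;=\; \frac{1}{p}.
\]
The plan is to (i) verify this in the Freud benchmark, and (ii) transport the result to the full $\mathrm{VSLF}(p,q)$ class by a dominated-convergence argument using \cref{lem:vslf_characterization,cor:useful_estimate_T}. For the benchmark $\wt{Q}(x)=\kappa_{p}|x|^{p}$ with $\kappa_{p}=1/\lambda_{p}$, a direct calculation from \cref{eq:MRS_def_sup} gives $\wt{\beta}_{n}=n^{1/p}$ and the $n$-independent profile $\wt{V}_{n}(s)=\kappa_{p}|s|^{p}$; then the Beta-function identity $\int_{0}^{1}s^{p}/\sqrt{1-s^{2}}\,\diff s = \tfrac{\sqrt{\pi}}{2}\Gamma(\tfrac{p+1}{2})/\Gamma(\tfrac{p}{2}+1)$ yields $\frac{1}{\pi}\int_{0}^{1}\kappa_{p}s^{p}/\sqrt{1-s^{2}}\,\diff s = \kappa_{p}\lambda_{p}/p = 1/p$, as required.

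To handle the general case I would first perform a Fubini swap to move the endpoint singularity to a mild one: writing $V_{n}(s)=V_{n}(0)+\int_{0}^{s}V_{n}^{\prime}(t)\,\diff t$ and exchanging orders of integration gives
\[
\int_{0}^{1}\frac{V_{n}(s)}{\sqrt{1-s^{2}}}\,\diff s \;=\; \frac{\pi}{2}V_{n}(0)\;+\;\int_{0}^{1}V_{n}^{\prime}(t)\,\arccos(t)\,\diff t.
\]
Since $V_{n}(0)=Q(0)/n\to 0$, the first term is negligible. For the remaining integral I would split $\int_{0}^{1}=\int_{0}^{\delta}+\int_{\delta}^{1}$ for a small fixed $\delta>0$. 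On $[\delta,1]$, \cref{eq:vslf_characterization_prop6} (divided by $s$) gives $V_{n}^{\prime}(t)\to \kappa_{p}p\,t^{p-1}$ uniformly on compact subsets of $(0,\infty)$, and since $\arccos(t)$ is bounded there, dominated convergence yields $\int_{\delta}^{1}V_{n}^{\prime}(t)\arccos(t)\,\diff t \to \int_{\delta}^{1}\kappa_{p}p\,t^{p-1}\arccos(t)\,\diff t$.

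The main obstacle is the tail $\int_{0}^{\delta}V_{n}^{\prime}(t)\arccos(t)\,\diff t$, which I would control uniformly in $n$. On the range $A/\beta_{n}\le t\le \delta$, \cref{cor:useful_estimate_T} supplies the clean majorization $V_{n}^{\prime}(t)\le (2p/\lambda_{p})\,t^{p-1-\epsilon}$, so the integrand is dominated by $(2p/\lambda_{p})t^{p-1-\epsilon}\cdot(\pi/2)$, which is Lebesgue-integrable near $0$ for any $0<\epsilon<p$ and integrates to a quantity that vanishes as $\delta\to 0$. On $[0,A/\beta_{n}]$, real-analyticity of $Q$ bounds $|Q^{\prime}(\beta_{n}t)|$ by a constant $M$ on $[0,A]$, so $V_{n}^{\prime}(t)=(\beta_{n}/n)Q^{\prime}(\beta_{n}t)\le M\beta_{n}/n$, contributing at most $(M\beta_{n}/n)\cdot(\pi A/2\beta_{n})=\mathcal{O}(1/n)$. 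Letting $n\to\infty$ and then $\delta\to 0$ then gives
\[
\lim_{n\to\infty}\int_{0}^{1}V_{n}^{\prime}(t)\arccos(t)\,\diff t \;=\; \int_{0}^{1}\kappa_{p}p\,t^{p-1}\arccos(t)\,\diff t.
\]
Running the Fubini identity backwards for the Freud case identifies the right-hand side as $\int_{0}^{1}\kappa_{p}t^{p}/\sqrt{1-t^{2}}\,\diff t=\pi/p$, completing the proof.
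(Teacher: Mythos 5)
Your proof is correct and closes the lemma, but it takes a genuinely different route from the paper's. The paper applies dominated convergence directly to $\int_{0}^{1}V_{n}(s)/\sqrt{1-s^{2}}\,\diff s$: it invokes the external convergence result $V_{n}(s)\to\kappa_{p}|s|^{p}$ (Lubinsky's Lemma 3.2(iii)) for the pointwise limit, appeals to \cref{cor:useful_estimate_T} to build a dominating function on $[A/\beta_{n},1]$ (implicitly integrating the $V_{n}^{\prime}$ bound up to a bound on $V_{n}$), and handles $[0,A/\beta_{n}]$ by the change of variables $u=\beta_{n}s$ giving an $\mathcal{O}(1/(n\beta_{n}))$ remainder in terms of $\int_{0}^{A}|Q(u)|\,\diff u$. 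You instead perform a Fubini swap first, converting the integral to $\frac{\pi}{2}V_{n}(0)+\int_{0}^{1}V_{n}^{\prime}(t)\arccos(t)\,\diff t$, which exchanges the integrable endpoint singularity $1/\sqrt{1-s^{2}}$ for the bounded weight $\arccos(t)$ and makes $V_{n}^{\prime}$---the object \cref{cor:useful_estimate_T} actually bounds---the direct integrand. This has a few advantages: the domination step is fully explicit rather than requiring an unstated integration of Cor.~\ref{cor:useful_estimate_T}; the limiting integrand is identified via the paper's own \cref{eq:vslf_characterization_prop6} rather than an external citation; and the near-origin remainder is estimated using boundedness of $Q^{\prime}$ rather than of $Q$, giving the cleaner $\mathcal{O}(1/n)$ rather than $\mathcal{O}(1/(n\beta_{n}))$ (both suffice). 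The paper's version is shorter and avoids the $\delta$-splitting. One small note: your step on $[\delta,1]$ invokes dominated convergence, but uniform convergence of $V_{n}^{\prime}$ on that compact together with boundedness of $\arccos$ already suffices; the DCT machinery is only needed on the tail $[A/\beta_{n},\delta]$.
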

Thus the Lagrange multiplier should be $\mathcal{O}(1)$ and negative for sufficiently large $n$.
\begin{proof}
    By \cite[Lemma 3.2(iii)]{lubinskyUniformMeanApproximation1988} (our weights satisfy all the relevant conditions), we have $\lim_{n\to\infty} V_{n}(s) = \kappa_{p} |s|^{p}$ uniformly for $s$ in compact subsets of $\mathbb{R}$. This implies pointwise convergence of the integrand for $s \in (0,1)$. In order to apply the dominated convergence theorem we appeal to \cref{cor:useful_estimate_T}, choosing $\epsilon$ so that $p-1-\epsilon>-1$. However, this upper bound is only available for $s \in [A/\beta_{n},1]$. Observe that the remaining part of the integral can be bounded by
    \begin{equation}
        \int_{0}^{A/\beta_{n}}\dfrac{|V_{n}(s)|}{\sqrt{1 - s^{2}}} \diff s \leq \dfrac{1}{\sqrt{1 - (A/\beta_{n})^{2}}} \dfrac{1}{n \beta_{n}} \int_{0}^{A} |Q(u)| \diff u \leq \mathcal{O}\left(\dfrac{1}{n \beta_{n}}\right),
    \end{equation}
    which goes to zero as $n\to\infty$. Then using the dominated convergence theorem gives
    \begin{align}
        \lim_{n\to\infty} l_{n} &= -2\log{2} - \dfrac{2}{\pi} \int_{0}^{1}\dfrac{\kappa_{p} s^{p}}{\sqrt{1-s^{2}}}\diff s,\\
        &= -2\log{2} - \dfrac{2}{p},
    \end{align}
    where the final step comes from explicitly evaluating the integral and simplifying using \cref{eq:lambda_def} with $\kappa_{p} = 1/\lambda_{p}$.
\end{proof}

\clearpage
\section{Riemann-Hilbert steepest descent analysis}
\label{sec:rhp}

To obtain $n\to\infty$ asymptotics of the orthogonal polynomials with respect to $w(x)\equiv \Phi(x)/2\pi$, we make use of a `steepest descent'-inspired analysis of a Riemann-Hilbert problem (RHP) associated with $w$~\cite{deiftSteepestDescentMethod1993,deiftStrongAsymptoticsOrthogonal1999,deiftRiemannHilbertApproach2001,deiftUniformAsymptoticsPolynomials1999,deiftOrthogonalPolynomialsRandom2000,kuijlaarsRiemannHilbertAnalysisOrthogonal2003,kuijlaarsUniversalityEigenvalueCorrelations2003}. This fundamental RHP looks for a $2 \times 2$ matrix-valued function $Y$ which is analytic in $\mathbb{C} \setminus \mathbb{R}$, with a specified jump across $\mathbb{R}$ that is related to $w(x)$. This RHP is formulated in such a way to have a unique solution which encodes much valuable information about the orthogonal polynomials with respect to $w$.

\subsection{Fundamental Riemann-Hilbert problem for $Y$}
\label{sec:fundamental_rhp}
Let us decompose $w(x) \equiv |x|^{\rho} e^{-Q(x)}$, and note that we must have $\rho > -1$ in order for $w$ to be integrable. Throughout we will assume $w(-x)=w(x)$ is even, and to account for the power-law at $x=0$, we will need to add an extra condition to ensure uniqueness of the solution to the RHP. Then we look for a $2 \times 2$ matrix-valued function $Y(z)$ satisfying the following conditions.
\begin{enumerate}[label=(\alph*)]
    \item[$(Y_{a})$] $Y(z)$ is analytic in $\mathbb{C} \setminus \mathbb{R}$.
    \item[$(Y_{b})$] $Y(z)$ takes continuous boundary values $Y_{\pm} (x) \coloneqq \lim_{y \to 0^{\pm}} Y(x + i y)$ such that
        \begin{equation}
            Y_{+}(x) = Y_{-}(x) \begin{pmatrix}
                1 & w(x) \\
                0 & 1
            \end{pmatrix}, \mathclap{\hspace{9em} \text{for } x \in \mathbb{R} \setminus \{0\}.}
        \end{equation}
    \item[$(Y_{c})$] $Y(z)$ has the following asymptotic behavior at infinity:
        \begin{equation}
            Y(z) = \left[ \mathds{1} + \mathcal{O}(1/z) \right] \begin{pmatrix}
                z^{n} & 0 \\
                0 & z^{-n}
            \end{pmatrix}, \mathclap{\hspace{14em} \text{as } |z| \to \infty \text{ for } z \in \mathbb{C} \setminus \mathbb{R}.}
            \label{eq:RH_infty_condition}
        \end{equation}
        
    \item[$(Y_{d})$] $Y(z)$ has the following behavior near $z = 0$:
        \begin{equation}
            Y(z) = \begin{dcases}
                \mathcal{O} \begin{pmatrix}
                    1 & |z|^{\rho} \\
                    1 & |z|^{\rho}
                \end{pmatrix},
                 & \rho < 0;\\
                \mathcal{O} \begin{pmatrix}
                    1 & 1\\
                    1 & 1
                \end{pmatrix},
                 & \rho \geq 0,\\
            \end{dcases}
            \label{eq:RH_zero_condition}
        \end{equation}
        as $z \to 0$ for $z \in \mathbb{C} \setminus \mathbb{R}$, where the $\mathcal{O}$ notation is taken elementwise. (For noneven weights with $\rho=0$, the second column of $Y(z)$ would diverge as $\mathcal{O}(\log|z|)$ as $z\to 0$, but this logarithmic divergence is exactly canceled for an even weight.)
\end{enumerate}
Then by the result of Fokas, Its and Kitaev~\cite{fokasIsomonodromyApproachMatrix1992} (see also \cite{deiftOrthogonalPolynomialsRandom2000,kuijlaarsRiemannHilbertApproach2004,vanlessenStrongAsymptoticsRecurrence2003,kuijlaarsUniversalityEigenvalueCorrelations2003}), we have
\begin{theorem}
    The unique solution to the above RHP is given by
    \begin{equation}
        \boxed{Y(z) = \begin{pmatrix}
            P_{n}(z) & C[P_{n} w](z) \\
            c_{n} P_{n-1}(z) & c_{n} C[P_{n-1}w](z)
        \end{pmatrix},}
        \label{eq:fki_sol}
    \end{equation}
    where the $P_{n}(x)$ are the \textit{monic} orthogonal polynomials associated with $w$. Here $c_{n} = -2 \pi i y_{n-1}^{2}$, with $y_{n-1}>0$ the coefficient of the leading term in the corresponding \textit{normalized} orthogonal polynomial $p_{n-1}(x) = y_{n-1} x^{n-1} + \cdots$, and
    \begin{equation*}
        C[f](z) \coloneqq \dfrac{1}{2\pi i} \int_{-\infty}^{\infty} \dfrac{f(s)}{s - z} \diff s, \quad z \in \mathbb{C} \setminus \mathbb{R}
    \end{equation*}
    is the Cauchy-Stieltjes transform of $f \in L^{2}(\mathbb{R})$. Furthermore, there exists $Y_{1} \in \mathbb{C}^{2\times 2}$ such that
    \begin{equation*}
        Y(z) \begin{pmatrix}
            z^{-n} & 0 \\
            0 & z^{n}
        \end{pmatrix} = \mathds{1} + \dfrac{Y_{1}}{z} + \mathcal{O}\left(\dfrac{1}{|z|^{2}}\right) \quad \text{as } |z| \to \infty.
    \end{equation*}
    The recurrence coefficient $b_{n}$ and the leading coefficient $y_{n}$ are then given by
    \begin{empheq}[box=\widefbox]{align}
        b_{n} &= \sqrt{(Y_{1})_{12} (Y_{1})_{21}},\label{eq:Y1_to_bn}\\
        y_{n} &= \dfrac{1}{\sqrt{-2\pi i (Y_{1})_{12}}}. \label{eq:Y1_to_yn}
    \end{empheq}
\end{theorem}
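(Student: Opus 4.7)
The plan is a three-stage argument: verify that the explicit formula satisfies all four conditions $(Y_a)$--$(Y_d)$, establish uniqueness, then read off $b_n$ and $y_n$ from the subleading coefficient at infinity.

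For existence, I would check the conditions on the candidate $Y$ in \cref{eq:fki_sol} one at a time. Analyticity off $\mathbb{R}$ is immediate since $P_n, P_{n-1}$ are entire and $C[f](z)$ is analytic on $\mathbb{C}\setminus\mathbb{R}$ for $f\in L^1(\mathbb{R})$. The jump condition $(Y_b)$ is the Plemelj--Sokhotski formula $C[f]_+(x)-C[f]_-(x)=f(x)$ applied to $f=P_n w$ and $f=P_{n-1}w$: comparing columns, $Y_+ = Y_- \bigl(\begin{smallmatrix}1 & w\\ 0 & 1\end{smallmatrix}\bigr)$ as required. The asymptotic condition $(Y_c)$ is where orthogonality enters. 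The geometric series expansion $1/(s-z)=-\sum_{k\ge 0} s^k/z^{k+1}$ gives
\begin{equation}
    C[P_n w](z) = -\frac{1}{2\pi i}\sum_{k=0}^{\infty}\frac{1}{z^{k+1}}\int_{\mathbb{R}} s^k P_n(s)\,w(s)\,\diff s,
\end{equation}
and orthogonality kills all terms with $k<n$, leaving a leading $-h_n/(2\pi i z^{n+1})$ term with $h_n=1/y_n^2$. A parallel computation for $C[P_{n-1}w]$ produces a leading $-h_{n-1}/(2\pi i z^n)$ term, and the normalization $c_n=-2\pi i y_{n-1}^2$ is precisely what makes the $(2,2)$ entry of $Y(z)\bigl(\begin{smallmatrix}z^{-n} & 0\\ 0 & z^n\end{smallmatrix}\bigr)$ approach $1$. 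The local condition $(Y_d)$ follows from a direct estimate of $\int \tfrac{|s|^\rho e^{-Q(s)}}{s-z}\diff s$ near $z=0$, which has at worst a $|z|^\rho$ singularity when $\rho<0$ and is bounded when $\rho\geq 0$ (the latter uses evenness to cancel what would otherwise be a logarithmic divergence).

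For uniqueness, I would follow the standard RHP argument. The jump matrix has determinant $1$, so $\det Y(z)$ has no jump across $\mathbb{R}\setminus\{0\}$ and extends analytically there; the behavior in $(Y_d)$ bounds $|\det Y(z)|$ by $|z|^\rho$ near the origin, which is integrable for $\rho>-1$ and hence removable, so $\det Y$ is entire. Combined with $\det Y(z) \to 1$ at infinity from $(Y_c)$, Liouville gives $\det Y \equiv 1$. If $\tilde Y$ is another solution, then $R(z) := Y(z)\tilde Y(z)^{-1}$ has no jump, has at worst a $|z|^\rho$-type singularity at $0$ which is removable (again using $\rho>-1$), and tends to $\mathds{1}$ at infinity, so $R\equiv \mathds{1}$ by Liouville.

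Finally, to extract $b_n$ and $y_n$ from $Y_1$, I would compare the $1/z$ coefficients after multiplying by $\bigl(\begin{smallmatrix}z^{-n} & 0\\ 0 & z^n\end{smallmatrix}\bigr)$. The $(1,2)$ entry gives $(Y_1)_{12} = -h_n/(2\pi i) = -1/(2\pi i y_n^2)$, and the $(2,1)$ entry gives $(Y_1)_{21} = c_n = -2\pi i y_{n-1}^2$. Multiplying, $(Y_1)_{12}(Y_1)_{21} = y_{n-1}^2/y_n^2 = b_n^2$, which follows from the three-term recurrence \cref{eq:three_term_recursion} applied to the leading coefficients. The formula $y_n = 1/\sqrt{-2\pi i (Y_1)_{12}}$ is then immediate.

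The main technical obstacle is the local condition at $z=0$ for $\rho\neq 0$: condition $(Y_d)$ must be strong enough to uniquely pin down the solution but weak enough that the formula \cref{eq:fki_sol} satisfies it. The asymmetric block structure in \cref{eq:RH_zero_condition} for $\rho<0$ is calibrated precisely so that both the polynomial entries (bounded at $0$) and the Cauchy-transform entries (which inherit the $|z|^\rho$ singularity from $w$) fit, while still forcing $\det Y$ to be removable at the origin. I would handle this by establishing sharp local estimates on $C[P_k w]$ near $z=0$ via splitting the integral into a contribution from a neighborhood of the origin (which can be computed explicitly using the $|s|^\rho$ factor) and a remainder that is analytic at $0$.
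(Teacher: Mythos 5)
Your proposal is correct and follows the standard Fokas--Its--Kitaev argument: verify $(Y_a)$--$(Y_d)$ by Plemelj and orthogonality, establish uniqueness via $\det Y \equiv 1$ and Liouville (using $\rho>-1$ for removability at the origin), and read off $(Y_1)_{12}=-1/(2\pi i y_n^2)$, $(Y_1)_{21}=c_n=-2\pi i y_{n-1}^2$ to get $b_n=y_{n-1}/y_n$ and the expression for $y_n$. This is the same route the paper takes by citation (Deift's lectures for existence, Kuijlaars 2004 for uniqueness, and Kuijlaars--Vanlessen for the $|x|^\rho$ local condition), so your sketch matches the paper's intended proof.
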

Note that many authors in the orthogonal polynomials literature label what we call $b_{n}$ as $b_{n-1}$, but we will stick with this convention to be consistent with the physics literature, which typically starts at $b_{1}$ rather than $b_{0}$.

For a proof that \cref{eq:fki_sol} solves this Riemann-Hilbert problem, see Sec 3.2 of Ref.~\cite{deiftOrthogonalPolynomialsRandom2000}; the crucial point is that the Cauchy-Stieltjes transform satisfies the operator identity $C_{+} - C_{-} = \mathrm{Id}$, where $C_{\pm}[f](x) = \lim_{y \to 0^{+}} C[f](x \pm i y)$ denote its limits on the $\pm$ sides of the contour (c.f.~the Sokhotski–Plemelj theorem). For a proof that this is the unique solution for $Y(z)$, see Lemma 2.3 of Ref.~\cite{kuijlaarsRiemannHilbertApproach2004}.

\subsection{Proof overview}
The initial Riemann-Hilbert problem (RHP) is too complicated to solve immediately, so we perform a sequence of transformations which gradually simplify the problem until it is solvable by standard techniques. We can then reverse the transformations to obtain the desired asymptotics of $Y(z)$. We denote the sequence of transformations by
\begin{equation*}
    Y \mapsto U \mapsto T \mapsto S \mapsto R
\end{equation*}
\begin{itemize}
    \item $Y \mapsto U$ is a simple rescaling by the Mhaskar-Rakhmanov-Saff number $\beta_{n}$, so that the dominant contributions will come from near the interval $[-1,1]$ in the rescaled coordinates. This finite interval is the analogue of a point of stationary phase when performing a saddle point approximation of a contour integral. 
    \item $U \mapsto T$ involves the function $g_{n}(z) = \int_{-1}^{1} \log(z-t) \psi_{n}(t) \diff t$, the logarithmic transform of the equilibrium measure. This step normalizes the RHP at infinity, since $\exp[n g_{n}(z)] \approx z^{n}$ as $z \to \infty$. Furthermore, as $n \to \infty$, $\psi_{n}(t)$ approximates the density of zeros $\{x_{j,n}\}$ of the orthogonal polynomials with respect to $\exp[-Q(x)]$ when appropriately rescaled from $[-\beta_{n}, \beta_{n}]$ to $[-1,1]$~\cite{saffLogarithmicPotentialsExternal1997,deiftStrongAsymptoticsOrthogonal1999}. Morally we have \begin{equation*}\exp[n g_{n}(z)] \approx \exp\left[n \int_{-1}^{1} \log(z-t) \left( \frac{1}{n} \sum_{j=1}^{n} \delta(t - x_{j,n}/\beta_{n})\right) \diff t\right] = \prod_{j=1}^{n} (z - x_{j,n}/\beta_{n}) = \beta_{n}^{-n} P_{n}(\beta_{n} z),\end{equation*}where $P_{n}$ are the monic orthogonal polynomials. For technical convenience we will always define the equilibrium measure with respect to $Q$; this means that for $\rho \neq 0$ the density of zeros of the orthogonal polynomials with respect to the full weight $|x|^{\rho} \exp[-Q(x)]$ is slightly different to that with respect to $\exp[-Q(x)]$ alone, with an enhancement or suppression near $x=0$ depending on whether $\rho$ is negative or positive. In the end this still provides a good enough approximation away from $x=0$ to be useful, and we will handle the effects of the power-law more explicitly using a Szeg\H{o} function (see \cref{sec:outside_parametrix}).
    \item $T \mapsto S$ involves a factorization of the jump matrix, and a subsequent deformation of the jump contours into the complex plane. Under this deformation, oscillatory terms transform into terms which decay with $n$ and are usually subleading. In most cases these terms decay exponentially with $n$, but in the marginal case where $w(\omega)$ decays only exponentially in $\omega$, these terms decay more slowly with $n$.
    \item $S \mapsto R$ is the most technically involved step. We explicitly construct a parametrix $S_{\mathrm{par}}$ which approximately solves the RHP for $S$. This construction is most delicate near the endpoints of the jump contour, and at the location of the power-law; here we construct approximate local solutions out of appropriate special functions: Airy functions near the endpoints $\pm 1$, and Bessel functions near the origin. We then set $R = S S_{\mathrm{par}}^{-1}$, so that $R$ has a jump matrix which is uniformly close enough to the identity that the RHP for $R$ can be solved using standard techniques. One can think of $R$ as a `residual', and $R(z)\approx\mathds{1}$ to leading order in $n$.
\end{itemize}

\subsection{Riemann-Hilbert problem for $U$}
In terms of the Mhaskar-Rakhmanov-Saff number $\beta_{n}$ defined in \cref{eq:MRS_def_sup}, we define
\begin{equation}
    U(z) = \beta_{n}^{-(n + \rho/2) \sigma_{3}} Y(\beta_{n} z) \beta_{n}^{(\rho/2) \sigma_{3}}, \quad \text{for } z \in \mathbb{C} \setminus \mathbb{R},
    \label{eq:U_def}
\end{equation}
where $\sigma_{3} = \begin{pmatrix} 1 & 0 \\ 0 & -1 \end{pmatrix}$ is the third Pauli matrix. Given that $Y$ is the unique solution to the RHP stated above, one can verify that $U$ is the unique solution to an equivalent RHP, stated as follows.
\begin{itemize}
    \item[$(U_{a})$] $U(z)$ is analytic in $\mathbb{C} \setminus \mathbb{R}$;
    \item[$(U_{b})$] $U(z)$ takes continuous boundary values $U_{\pm} (x) \coloneqq \lim_{y \to 0^{\pm}} U(x + i y)$ such that
        \begin{equation}
            U_{+}(x) = U_{-}(x) \begin{pmatrix}
                1 & |x|^{\rho} e^{-n V_{n}(x)} \\
                0 & 1
            \end{pmatrix}
        \end{equation}
        for $ x \in \mathbb{R} \setminus \{0\}$, where $V_{n}(x) \coloneqq Q(\beta_{n} x) / n$ is the rescaled potential.
    \item[$(U_{c})$] $U$ satisfies the same behaviour as $z \to \infty$ as $Y$ does, given by \cref{eq:RH_infty_condition}.
    \item[$(U_{d})$] $U$ satisfies the same behaviour near $z = 0$ as $Y$ does, given by \cref{eq:RH_zero_condition}.
\end{itemize}

\subsection{Equilibrium measures for complex log-Freud weights}
\label{sec:equilibrium_measure}
In this section we wish to construct the equilibrium measure for the rescaled potential $V_{n}(x) \coloneqq Q(\beta_{n}x) / n$. To that end, for $z$ in the domain of analyticity of $V_{n}(z)$, define the function
\begin{equation}
    h_{n}(z) \coloneqq \dfrac{1}{2\pi i} \oint_{\Gamma_{z}} \dfrac{V_{n}^{\prime}(s)}{s-z} \dfrac{\diff s}{r(s)},
    \label{eq:hn_def}
\end{equation}
where $r(s) = (s+1)^{1/2} (s-1)^{1/2}$, taking principal branches of powers such that $r(s) \sim +s$ as $s \to \infty$, and $\Gamma_{z}$ is any simple closed anticlockwise oriented contour within the domain of analyticity of $V_{n}$ and with $[-1,1] \cup \{z\}$ in its interior. Now note that
\begin{equation}
    \oint_{\Gamma_{z}} \dfrac{1}{r(s)}\dfrac{\diff s}{s-z} = 0,
\end{equation}
because the integrand is analytic outside $\Gamma_{z}$, so we can deform the contour to infinity. Then, in order to deal with any potential singularities, we are free to rewrite
\begin{equation}
    h_{n}(z) = \dfrac{1}{2\pi i} \oint_{\Gamma_{z}} \dfrac{V_{n}^{\prime}(s) - V_{n}^{\prime}(z)}{s-z} \dfrac{\diff s}{r(s)}
    \label{eq:hn_representation_useful_T}
\end{equation}
This is helpful because the integrand now has zero residue at the pole $s=z$. In order to obtain an expression for $h_{n}(x)$ on the real line, we pull the contour through $z$ (at zero cost because the residue is zero), and then deform it around $[-1,1]$, obtaining
\begin{equation}
    h_{n}(x) = \dfrac{1}{\pi} \int_{-1}^{1} \dfrac{V_{n}^{\prime}(s) - V_{n}^{\prime}(x)}{s-x} \dfrac{\diff s}{\sqrt{1-s^{2}}}, \quad x \in \mathbb{R},
\end{equation}
where we used $r_{+}(s) = i \sqrt{1-s^{2}}$. Since $V_{n}$ is even, we can also write this as
\begin{equation}
    h_{n}(x) = \dfrac{2}{\pi} \int_{0}^{1} \dfrac{x V_{n}^{\prime}(x) - s V_{n}^{\prime}(s)}{x^{2} - s^{2}} \dfrac{\diff s}{\sqrt{1-s^{2}}}.
    \label{eq:hn_even_integral}
\end{equation}
\subsubsection{Definition of the equilibrium measure}
Having defined $h_{n}(z)$ in \cref{eq:hn_def} (reducing to \cref{eq:hn_even_integral} on the real line), we now define the `candidate' equilibrium measure $\psi_{n}(x)$ for $x \in [-1,1]$ by
\begin{equation}
    \psi_{n}(x) \coloneqq \dfrac{1}{2\pi} \sqrt{1-x^{2}} h_{n}(x), \qquad x \in [-1,1].
    \label{eq:psi_def}
\end{equation}
We will show that this candidate function $\psi_{n}(x)$ is indeed the true equilibrium measure for $V_{n}(x)$ by verifying that the Euler-Lagrange variational conditions for the energy minimization problem are satisfied (for large enough $n$). We will furthermore show that $\psi_{n}(x)$ is `regular' in the sense that it is positive in $(-1,1)$ and vanishes like a square root as $|x| \to 1$. This means we will get Airy universality near the endpoints $x=\pm 1$.

\subsubsection{Support of the equilibrium measure}
\begin{restatable}{lemma}{firstposeqmeasure}
    Suppose $Q \in \mathrm{VSLF}(p,q)$ with $p\geq 1$, and $q$ arbitrary for $p > 1$ but constrained to $q>-1$ for $p=1$. 
    Then there exists $n_{0} \in \mathbb{N}$ such that, for every $M>0$, there exists a constant $C>0$ such that for all $n \geq n_{0}$ we have $h_{n}(x) > C$ for all $|x|<M$.
    \label{lem:hn_positive_x_g1}
\end{restatable}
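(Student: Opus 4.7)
The plan is to compare $h_n$ with the analogous quantity for the limiting Freud potential $\kappa_p|x|^p$ (cf.~\cref{eq:limit_to_freud}), exploiting evenness to restrict attention to $x \in [0, M]$. Since $V_n$ is even, so is $g_n(u) := uV_n'(u)$, and consequently $h_n$ is even by the representation \cref{eq:hn_even_integral}. The uniform convergence \cref{eq:vslf_characterization_prop6}, combined with $\kappa_p = 1/\lambda_p$, gives $g_n(u) \to p\kappa_p\, u^p$ uniformly on compact subsets of $(0,\infty)$, which suggests the comparison quantity
\[
h^{(p)}(x) \;:=\; \dfrac{2p\kappa_p}{\pi}\int_0^1 \dfrac{x^p - s^p}{x^2-s^2}\dfrac{\diff s}{\sqrt{1-s^2}}, \qquad x \geq 0.
\]
The integrand here is manifestly nonnegative, being the product of the divided differences of $u\mapsto u^p$ and $u\mapsto u^2$ (both increasing on $[0,\infty)$), so $h^{(p)}(x) > 0$ on $[0,\infty)$, with $h^{(p)}(0) = 2p/(p-1)$ when $p>1$ (matching the limit of $h_n(0)$ given in \cref{lem:hn0_scaling}) and $h^{(p)}(0) = +\infty$ when $p=1$.

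The first step is, for fixed $\delta\in(0,M)$, to establish that $h_n(x)\to h^{(p)}(x)$ uniformly on $[\delta, M]$. Uniform convergence of the integrand on $s\in[\delta/2,1]$ comes directly from \cref{eq:vslf_characterization_prop6}; a dominator is provided by \cref{cor:useful_estimate_T} in the form $g_n(s) \leq (2p/\lambda_p)\,s^{p-\epsilon}$ on $[A/\beta_n,1]$, together with the trivial bound $|g_n(s)| = O(1/n)$ on $[0,A/\beta_n]$, so that dominated convergence applies uniformly in $x\in[\delta,M]$. Since $h^{(p)}$ is continuous and strictly positive on $[\delta, M]$, for $n$ sufficiently large we then have $h_n(x) \geq \tfrac{1}{2}\min_{[\delta,M]} h^{(p)} > 0$ throughout $[\delta, M]$.

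The remaining range $x\in[0,\delta]$ is the main obstacle, especially in the marginal case $p=1$. When $p>1$, $h^{(p)}$ extends continuously to $x=0$ with value $2p/(p-1)>0$, and a minor modification of the dominated-convergence argument (using that $V_n'(s)/s$ admits a uniformly integrable bound as $s\to 0$ once $p>1$) extends the uniform convergence to all of $[0,M]$, yielding the desired positive lower bound. When $p=1$ with $q>-1$, $h^{(p)}(x)$ blows up logarithmically as $x\to 0$, but \cref{lem:hn0_scaling} already gives $h_n(0) = (\log n)^{1+o(1)} \to \infty$. The plan here is to split the integral in \cref{eq:hn_even_integral} at $s=2\delta$: the contribution from $s\in[2\delta,1]$ is uniformly bounded below by the argument of the previous paragraph (applied to a slightly larger outer interval); the contribution from $s\in[0,2\delta]$ has nonnegative integrand up to $O(1/n)$ errors, since $g_n$ is nondecreasing on $[A/\beta_n,\infty)$ by \cref{lem:vslf_characterization}(i), while $|g_n(s)| = O(1/n)$ on the small initial segment $[0,A/\beta_n]$, so that $g_n(x)-g_n(s)$ and $x^2 - s^2$ share sign in the relevant regime. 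Choosing $\delta$ small enough that the positive outer contribution dominates any $O(1/n)$ inner error yields a uniform positive lower bound on $[0,\delta]$. The most delicate part is establishing quantitative control of the integrand when both $x$ and $s$ lie within the $O(1/\beta_n)$ neighborhood of the origin, where the limiting behavior $g_n(u) \to p\kappa_p u^p$ has not yet set in.
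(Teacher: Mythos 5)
Your reduction to the limiting Freud comparison $h^{(p)}$ via dominated convergence, together with the monotonicity of $g_n(u)=uV_n'(u)$, is essentially what the paper does for $x$ bounded away from the origin (their ``Case 2''), and that part is sound. The gap is exactly where you flag it: the regime $x=\mathcal{O}(1/\beta_n)$. Your claim that the inner contribution from $s\in[0,2\delta]$ is ``nonnegative up to $O(1/n)$ errors'' does not hold up. On $s\in[0,A/\beta_n]$ the divided difference $\bigl(g_n(x)-g_n(s)\bigr)/(x^2-s^2)$ is controlled by $g_n'(\xi)/(x+s)$ with $g_n'(\xi)=(\beta_n/n)f_Q(\beta_n\xi)$; for $x$ and $s$ both $\mathcal{O}(1/\beta_n)$ this quantity is of size $\mathcal{O}(\beta_n^2/n)$ (driven by $Q''(0)$, whose sign is not constrained), and integrating over a segment of length $A/\beta_n$ gives a contribution of size $\mathcal{O}(\beta_n/n)$, not $\mathcal{O}(1/n)$. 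For $p=1$ one has $\beta_n/n\sim(\log n)^{-q}$, which stays bounded at $q=0$ and diverges for $-1<q<0$, so no choice of fixed $\delta$ makes the outer contribution dominate. (Even for $1<p<2$ the naive bound $\beta_n^2/n$ grows, though the integrated contribution $\mathcal{O}(\beta_n/n)\to 0$ and the argument can be salvaged there.)

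The paper closes precisely this gap with a genuinely different tool: \cref{lem:complex_hnz_uniform_lower}, which uses the full $\mathrm{CVSLF}$ analyticity of $Q$ in a cone plus a disk at the origin to deform contours and prove $h_n(z)=h_n(0)[1+o(1)]$ uniformly for $|z|\lesssim (\log n)^\alpha/\beta_n$, whereupon the divergence $h_n(0)=(\log n)^{1+o(1)}$ from \cref{lem:hn0_scaling_sup} finishes that region. That complex-analytic continuity argument is what transports the pointwise information at $x=0$ to the whole shrinking neighbourhood, and it is not recoverable from real-variable dominated convergence alone because the limiting comparison $g_n(u)\to p\kappa_p u^p$ is only valid on compacts of $(0,\infty)$, exactly missing the $\mathcal{O}(1/\beta_n)$ window. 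Without such an ingredient, your argument does not close the $p=1$, $-1<q\leq 0$ case, and even for $q>0$ the stated $O(1/n)$ error estimate needs to be replaced by $\mathcal{O}(\beta_n/n)$ before the $\delta$-selection step can be justified.
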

We defer the proof of this statement to \cref{sec:poseqmeasureproof}. The fact that for sufficiently large $n$ we have $h_{n}(x) > 0$ for $|x| \leq 1$ implies via \cref{eq:psi_def} that the equilibrium measure $\psi_{n}(x) > 0$ for $|x| < 1$ and vanishes like a square root as $|x| \to 1$, as claimed.

\subsubsection{Check of the Euler-Lagrange variational conditions}
\label{sec:euler_lagrange}
\begin{lemma}
    Define $g_{n}(z) \coloneqq \int_{-1}^{1} \log(z-t) \psi_{n}(t) \diff t$ for $z \in \mathbb{C} \setminus [-1,1]$. Then for large enough $n$ the following Euler-Lagrange variational conditions are satisfied (c.f.~\cite{deiftStrongAsymptoticsOrthogonal1999}).
    \begin{enumerate}[label=(\roman*)]
        \item $g_{n,+}(x) + g_{n,-}(x) - V_{n}(x) - l_{n} = 0$, for $|x| \leq 1$.
        \item $g_{n,+}(x) + g_{n,-}(x) - V_{n}(x) - l_{n} < 0$, for $|x| > 1$.
    \end{enumerate}
    The strict inequality in $(ii)$ means that $\psi_{n}$ is `regular' (for large enough $n$).
    \label{lem:variational_cond}
\end{lemma}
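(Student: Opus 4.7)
The plan is to reduce both parts to a single complex-analytic identity relating $g_n'$ and $h_n$, from which (i) follows by taking boundary values on $(-1,1)$ and (ii) follows by a sign argument off $[-1,1]$ that relies on \cref{lem:hn_positive_x_g1}. The identity I aim to establish is $g_n'(z) = \tfrac{1}{2}[V_n'(z) - r(z) h_n(z)]$ for $z \in \mathbb{C} \setminus [-1,1]$.

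First, I will rewrite $h_n$ in a more useful Cauchy form. Deforming the contour $\Gamma_z$ in \eqref{eq:hn_def} through its pole at $s = z$ and collapsing it onto a tight racetrack around $[-1,1]$ yields
\begin{equation*}
    h_n(z) \;=\; \frac{V_n'(z)}{r(z)} + \frac{1}{\pi}\int_{-1}^{1}\frac{V_n'(t)}{(t-z)\sqrt{1-t^2}}\,dt,
\end{equation*}
so the candidate identity reads $g_n'(z) = \widetilde G(z) \coloneqq -\tfrac{r(z)}{2\pi}\int_{-1}^{1} V_n'(t)\,dt/[(t-z)\sqrt{1-t^2}]$, with $\widetilde G$ manifestly analytic on $\mathbb{C}\setminus[-1,1]$. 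The proof of $g_n' \equiv \widetilde G$ will be a Liouville argument: Sokhotski-Plemelj together with the classical identity $\mathrm{PV}\!\int_{-1}^{1}dt/[(t-x)\sqrt{1-t^2}] = 0$ for $x \in (-1,1)$ gives the jump $\widetilde G_+(x) - \widetilde G_-(x) = -i\sqrt{1-x^2}\,h_n(x) = -2\pi i\,\psi_n(x)$, matching that of the Cauchy transform $g_n'$. Boundedness of $\widetilde G$ at $\pm 1$ (the factor $r(z) \sim \sqrt{z \mp 1}$ cancels the endpoint singularities of the Cauchy integral) makes the difference entire. At infinity, expanding $1/(t-z)$ in powers of $1/z$: the constant moment vanishes by oddness of $V_n'$, while the first moment $\int_{-1}^{1} t V_n'(t)/\sqrt{1-t^2}\,dt = 2\pi$ precisely by the MRS equation \eqref{eq:MRS_def_sup}, producing $\widetilde G(z) = 1/z + O(1/z^3)$. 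Since $g_n'(z) \to 0$ at infinity, Liouville forces equality. As a bonus, matching the leading $1/z$ coefficients also yields the normalization $\int_{-1}^{1}\psi_n = 1$, confirming that $\psi_n$ is a probability measure.

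With this identity in hand, part (i) is immediate: summing $\pm$ boundary values on $(-1,1)$ gives $g_{n,+}' + g_{n,-}' = V_n'$, so $g_{n,+} + g_{n,-} - V_n$ is constant on $[-1,1]$, and I define $l_n$ as this constant. For (ii), set $F(x) \coloneqq 2 g_n(x) - V_n(x) - l_n$ on $|x| > 1$; by continuity with (i), $F(\pm 1) = 0$, and differentiating the identity off $[-1,1]$ gives $F'(x) = -r(x) h_n(x)$. Since $r(x) > 0$ on $|x| > 1$, strict negativity of $F'$ reduces to $h_n(x) > 0$ there. For each fixed $n \geq n_0$, \cref{lem:hn_positive_x_g1} covers $1 < |x| \leq M$ for any chosen $M$; for $|x|$ beyond this, I use the expansion $r(x) h_n(x) = V_n'(x) - 2/x + O(1/x^3)$ (derived from the infinity expansion of $\widetilde G$ above) and argue that the leading term $V_n'(x) > 0$, growing at least like $x^{p-1}$ up to logarithmic factors by property (iii) of \cref{lem:vslf_characterization}, dominates the $O(1/x)$ correction for $x$ sufficiently large. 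Choosing $M$ accordingly yields $h_n(x) > 0$ throughout $(1,\infty)$, and hence $F'(x) < 0$ together with $F(1) = 0$ gives (ii) for $x > 1$; the case $x < -1$ follows by evenness of $\psi_n$, $g_n$, $h_n$, and $V_n$.

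The principal obstacle is really contained in \cref{lem:hn_positive_x_g1} itself, whose proof is deferred to the next subsection. Within the present proof the main residual delicacy lies in the large-$|x|$ argument: one must use the VSLF estimates to ensure $V_n'$ dominates the $O(1/x)$ correction for all $n$ beyond some threshold. Everything else---the contour deformation producing the alternative Cauchy representation of $h_n$, the Sokhotski-Plemelj jump calculation, and the Liouville closing step---is a standard exercise in complex analysis once the correct representations of $h_n$ and $g_n'$ are in hand.
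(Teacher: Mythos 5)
Your Liouville-style derivation of the identity $g_n'(z) = \tfrac{1}{2}\bigl[V_n'(z) - r(z)h_n(z)\bigr]$ is correct and yields both parts, but it takes a different route from the paper on part (i). The paper establishes essentially the same identity (there packaged as a scalar Riemann--Hilbert problem for $F_n(z) = \tfrac{1}{\pi i}\int_{-1}^1 \psi_n(s)/(s-z)\,ds$, which equals $-g_n'(z)/(\pi i)$), but it invokes this only for part (ii); for part (i) it cites Theorem~IV.3.1 of Saff--Totik directly. Your derivation of (i) from the boundary-value sum $g_{n,+}' + g_{n,-}' = V_n'$ is self-contained, works for all $n$ rather than just large $n$, and buys the normalization $\int_{-1}^1\psi_n = 1$ for free (it is enforced by the vanishing of the zeroth moment by oddness and the MRS equation pinning the $1/z$ coefficient of $\widetilde G$ at $1$). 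One small loose end: you say ``I define $l_n$ as this constant,'' but $l_n$ already has a meaning via \cref{eq:lagrange_at_zero}; you should note that the two coincide by uniqueness of the Lagrange multiplier once the variational condition holds.

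On part (ii), the separate large-$|x|$ analysis you propose is not needed. The quantifier order in \cref{lem:hn_positive_x_g1} places $n_0$ \emph{before} $M$: there is a single $n_0$ such that for \emph{every} $M>0$ and every $n\geq n_0$, one has $h_n(x)>C(M)>0$ on $(-M,M)$. Since $M$ is arbitrary, this already gives $h_n(x)>0$ for every real $x$ once $n\geq n_0$ (the lower bound may degrade as $M\to\infty$, but strict positivity is all your sign argument requires). Your expansion $r(x)h_n(x) = V_n'(x) - 2/x + \mathcal{O}(1/x^3)$ and the ensuing case analysis are therefore superfluous---not incorrect, and they can be made rigorous via the monotonicity of $uQ'(u)$ from \cref{lem:vslf_characterization}(i) together with $V_n'(x)\to(p/\lambda_p)x^{p-1}$---but they reintroduce exactly the kind of uniform-in-$n$ asymptotic control that the cited lemma was designed to dispose of. The paper's proof of (ii) simply integrates $F'(x)=-r(x)h_n(x)$ from $1$ to $x$ against the positivity furnished by the lemma and stops there.
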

\begin{proof}
    Part (i) follows from the discussion in \cite[Theorem IV.3.1]{saffLogarithmicPotentialsExternal1997}, and does not require a large $n$ limit. Part (ii) follows by combining \cref{lem:hn_positive_x_g1} with the relation 
    \begin{equation}
        g_{n,+}(x) + g_{n,-}(x) - V_{n}(x) - l_{n} = - \int_{1}^{x} h_{n}(t) \sqrt{t^{2}-1} \diff t, \qquad \text{for } |x| > 1,
    \end{equation}
    which we will now prove. We essentially rehash the relevant part of the proof of \cite[Lemma 3.2]{deiftUniformAsymptoticsPolynomials1999}. We start from the Hilbert transform
    \begin{equation}
        H \psi_{n}(t) \coloneqq \dfrac{1}{\pi} \dashint_{-1}^{1} \dfrac{\psi_{n}(s)}{t-s} \diff s,
    \end{equation}
    where $\dashint$ denotes a principal value integral, and integrate $t$ from 1 to $x$ to give
    \begin{equation}
        \int_{1}^{x} H \psi_{n}(t) \diff t = \dfrac{1}{\pi} \int_{-1}^{1} \psi_{n}(s) \left(\log{|x-s|} - \log{|1-s|}\right) \diff s.
    \end{equation}
    (This integration is justified for sufficiently smooth $\psi_{n}$; see proof of \cite[Theorem IV.3.1]{saffLogarithmicPotentialsExternal1997}.) Since $g_{n,+}(x) + g_{n,-}(x) = 2 \int_{-1}^{1} \log{|x-s|} \psi_{n}(s) \diff s$ for $x \in \mathbb{R}$, this means
    \begin{equation}
        g_{n,+}(x) + g_{n,-}(x) - \left[g_{n,+}(1) + g_{n,-}(1)\right] = 2 \pi \int_{1}^{x} H \psi_{n}(t) \diff t.
    \end{equation}
    The variational condition gives $g_{n,+}(1) + g_{n,-}(1) = V_{n}(1) + l_{n}$, and so
    \begin{equation}
        g_{n,+}(x) + g_{n,-}(x) - V_{n}(x) - l_{n} = 2\pi \int_{1}^{x} \left(H \psi_{n}(t) - \dfrac{V_{n}^{\prime}(t)}{2\pi}\right) \diff t.
    \end{equation}
    It remains to show that
    \begin{equation}
        H \psi_{n}(t) - \dfrac{V_{n}^{\prime}(t)}{2\pi} = -\dfrac{1}{2\pi} h_{n}(t) \sqrt{t^{2}-1}, \quad \text{for } t > 1.
        \label{eq:hilbert_identity}
    \end{equation}
    This will follow the proof of \cite[Theorem 3.1]{deiftUniformAsymptoticsPolynomials1999}. Define the function
    \begin{equation}
        F_{n}(z) \coloneqq \dfrac{1}{\pi i} \int_{-1}^{1} \dfrac{\psi_{n}(s)}{s-z} \diff s, \quad z \in \mathbb{C} \setminus [-1,1],
    \end{equation}
    which on $[-1,1]$ has the jump
    \begin{equation}
        F_{n,\pm}(x) = \pm \psi_{n}(x) + i H\psi_{n}(x), \quad x \in [-1,1].
    \end{equation}
    However, differentiating the variational condition on $[-1,1]$ gives
    \begin{equation}
        H \psi_{n}(x) = \dfrac{V_{n}^{\prime}(x)}{2\pi}, \quad x \in [-1,1].
    \end{equation}
    It follows that $F_{n}$ satisfies the following scalar Riemann-Hilbert problem,
    \begin{align}
        F_{n,+}(x) + F_{n,-}(x) &= \dfrac{i V_{n}^{\prime}(x)}{\pi}, \quad x \in [-1,1],\\
        F_{n,+}(x) - F_{n,-}(x) &= 0, \quad x \in \mathbb{R} \setminus [-1,1],\\
        F_{n}(z) &= \dfrac{-1}{i \pi z} + \mathcal{O}(z^{-2}), \quad z \to \infty,
    \end{align}
    which has the standard solution
    \begin{equation}
        F_{n}(z) = r(z) \dfrac{1}{\pi i} \int_{-1}^{1} \dfrac{i V_{n}^{\prime}(s) / (2\pi)}{r_{+}(s)} \dfrac{\diff s}{s-z}, \quad z \in \mathbb{C} \setminus \mathbb{R},
    \end{equation}
    where $r(z) = (z-1)^{1/2} (z+1)^{1/2}$. For nonreal $z$ in the domain of analyticity of $V_{n}$, we can rewrite
    \begin{equation}
        F_{n}(z) = \dfrac{i V_{n}^{\prime}(z)}{2\pi} + \dfrac{1}{2\pi i} h_{n}(z) r(z),
    \end{equation}
    where $h_{n}(z)$ is given by the contour integral
    \begin{equation}
        h_{n}(z) = \dfrac{1}{2\pi i} \oint_{\Gamma_{z}} \dfrac{V_{n}^{\prime}(s)}{r(s)} \dfrac{\diff s}{s-z},
    \end{equation}
    which for $z \to x \in [-1,1]$ gives the familiar expression for $h_{n}(x)$ from the definition of the equilibrium measure. For $t > 1$ we have
    \begin{equation}
        F_{n}(t) = \dfrac{i V_{n}^{\prime}(t)}{2\pi} + \dfrac{1}{2\pi i} h_{n}(t) \sqrt{t^{2}-1},
    \end{equation}
    and the expression \cref{eq:hilbert_identity} comes from noticing that $H \psi_{n}(t) = -i F_{n}(t)$ for $t \in \mathbb{R} \setminus [-1,1]$.
\end{proof}

\subsection{Logarithmic transform of the equilibrium measure, and the Riemann-Hilbert problem for $T$}
Given the rescaled equilibrium measure $\psi_{n}(x)$ for $V_{n}(x) = Q(\beta_{n} x)/n$ (see \cref{sec:equilibrium_measure}), we define its logarithmic transform $g_{n}(z)$ by
\begin{equation}
    g_{n}(z) \coloneqq \int_{-1}^{1} \log(z - t) \psi_{n}(t) \diff t, \quad \text{for } z \in \mathbb{C} \setminus (-\infty,1],
    \label{eq:log_transform_def}
\end{equation}
where we take the principal branch of the logarithm. For $z$ approaching the real axis, we have the limiting values $g_{n,\pm}(x) \coloneqq \lim_{y \to 0^{+}} g_{n}(x \pm i y)$ given by
\begin{equation}
    g_{n,\pm}(x) = \int_{-1}^{1} \log{|x-t|} \psi_{n}(t) \diff t \pm \begin{dcases}
        i \pi, & x \leq 1, \\
        i \pi \int_{x}^{1} \psi_{n}(t) \diff t, & -1<x<1,\\
        0, & x \geq 1.
    \end{dcases}
    \label{eq:log_potential_limits}
\end{equation}
The following proposition is immediate from \cref{eq:log_transform_def,eq:log_potential_limits}.
\begin{proposition} For all $n \in \mathbb{N}$, the following holds. 
    \begin{enumerate}[label=(\alph*)]
        \item $g_{n}$ is analytic and $g_{n} |_{\mathbb{C}_{\pm}}$ have continuous extensions to $\mathbb{C}_{\pm}$.
        \item The map $z \mapsto e^{n g_{n}(z)}$ possesses an analytic continuation to $\mathbb{C} \setminus [-1,1]$, and
            \begin{equation}
                e^{n g_{n}(z)} z^{-n} = 1 + \mathcal{O}\left(\dfrac{1}{|z|}\right), \quad \text{as } z \to \infty.
                \label{eq:log_potential_inf}
            \end{equation}
    \end{enumerate}
\end{proposition}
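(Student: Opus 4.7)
The plan is to prove both parts directly from the definition
\begin{equation*}
g_n(z) = \int_{-1}^{1} \log(z-t)\,\psi_n(t)\,dt,
\end{equation*}
using only three structural facts about the equilibrium density $\psi_n$: it is bounded and continuous on $[-1,1]$ (its regularity is part of \cref{lem:hn_positive_x_g1} together with the square-root vanishing at $\pm 1$), it has total mass $\int_{-1}^{1}\psi_n(t)\,dt = 1$, and it is even, inherited from the evenness of $Q$ (and hence $V_n$). For (a), analyticity on $\mathbb{C}\setminus(-\infty,1]$ follows by differentiation under the integral sign: for any $z_0$ in this set, a small neighborhood $U$ of $z_0$ satisfies $\mathrm{dist}(U,(-\infty,1])>0$, so $(z,t)\mapsto\log(z-t)$ is jointly continuous and holomorphic in $z\in U$ for all $t\in[-1,1]$, uniformly bounded there, giving holomorphy of $g_n$ on $U$ by Morera's theorem and Fubini. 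The continuous extensions of $g_n|_{\mathbb{C}_\pm}$ to $\overline{\mathbb{C}_\pm}$ are obtained by taking $z=x\pm i\varepsilon$, $\varepsilon\to 0^+$, inside the integral: writing $\log(z-t)=\log|z-t|+i\arg(z-t)$, dominated convergence gives the real part (since $\log|x-t|$ is integrable against the bounded density $\psi_n$) and the imaginary part (which converges a.e.\ to $\pm i\pi$, $\pm i\pi\,\mathbf{1}_{\{t>x\}}$, or $0$ in the three cases), yielding the explicit formulas for $g_{n,\pm}(x)$ stated in the excerpt. Continuity of these boundary values as functions of $x\in\mathbb{R}$ is a standard property of the logarithmic potential of a bounded, compactly supported density.

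For (b), the jump of $g_n$ across $\mathbb{R}$, as recorded in the boundary-value formula, reads
\begin{equation*}
g_{n,+}(x) - g_{n,-}(x) = \begin{dcases} 2\pi i, & x\leq -1,\\ 2\pi i\int_x^1 \psi_n(t)\,dt, & -1<x<1,\\ 0, & x\geq 1.\end{dcases}
\end{equation*}
Because $n\in\mathbb{Z}$, multiplying by $n$ and exponentiating renders the jump trivial on $(-\infty,-1)$, since $e^{2\pi i n}=1$, while the jump on $(1,\infty)$ is already zero. Hence $e^{ng_n}$ extends continuously across $(-\infty,-1)\cup(1,\infty)$, and Morera's theorem applied to the resulting boundary-continuous function (or equivalently the edge-of-the-wedge principle) upgrades this to analyticity on $\mathbb{C}\setminus[-1,1]$. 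The cut on $[-1,1]$ cannot be removed because the jump factor $\exp\!\bigl(2\pi i n\int_x^1\psi_n\bigr)$ is not identically unity there.

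For the asymptotic \eqref{eq:log_potential_inf}, I would use $\int_{-1}^{1}\psi_n=1$ to write, for $|z|>1$,
\begin{equation*}
g_n(z) = \log z + \int_{-1}^{1}\log(1-t/z)\,\psi_n(t)\,dt,
\end{equation*}
expand $\log(1-t/z)=-\sum_{k\geq 1}(t/z)^k/k$, and integrate termwise. The $1/z$ coefficient vanishes because $\psi_n$ is even, so $g_n(z)=\log z + O(1/z^2)$, and therefore $e^{ng_n(z)}z^{-n}=1+O(n/|z|^2)=1+O(1/|z|)$ as claimed (indeed with a slightly stronger $O(1/|z|^2)$ bound).

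There is no genuine obstacle; the proof is a bookkeeping exercise in properties of the logarithmic potential of a bounded density on a compact interval. The only mildly delicate point is continuous extension of the boundary values at the endpoints $x=\pm 1$, where $\psi_n$ only vanishes like a square root, but boundedness of $\psi_n$ together with the weak, logarithmically integrable singularity of $\log|x-t|$ make this a routine application of dominated convergence.
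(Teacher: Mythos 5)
Your proof is correct, and it is essentially the only natural approach; the paper declares the proposition ``immediate'' from the definition \cref{eq:log_transform_def} and the boundary-value formula \cref{eq:log_potential_limits} without spelling out the details, so you have simply filled in what the authors left implicit. Two small remarks. First, part (a): as you note, the continuous extension of the boundary values relies only on $\psi_n$ being bounded and continuous on $[-1,1]$ (the square-root vanishing at $\pm 1$ is not needed for this; local integrability of $\log|x-t|$ against an $L^\infty$ density already does the job), and Hölder continuity of the logarithmic potential is indeed standard. Second, part (b): your observation that evenness of $\psi_n$ kills the $1/z$ term, improving the asymptotic to $e^{n g_n(z)} z^{-n} = 1 + \mathcal{O}(1/|z|^2)$, is correct and slightly sharper than what the paper states; the paper's $\mathcal{O}(1/|z|)$ is the bound one would quote for a general (not necessarily even) weight, and since $n$ is fixed in the proposition, the $n$-dependence of your implicit constant is irrelevant. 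One typo to be aware of in the paper: the case split in \cref{eq:log_potential_limits} should read $x \leq -1$ (not $x \leq 1$) in the first line; you correctly used $x \leq -1$ in the jump formula.
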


In terms of the logarithmic transform $g_{n}(z)$, we define the matrix-valued function $T$ by
\begin{equation}
    T(z) = e^{-\frac{n l_{n}}{2} \sigma_{3}} U(z) e^{\frac{n l_{n}}{2} \sigma_{3}} e^{-n g_{n}(z) \sigma_{3}}.
    \label{eq:T_def}
\end{equation}
Given that $U(z)$ satisfies the Riemann-Hilbert problem $(U_{a})$-$(U_{d})$, one can combine \cref{eq:log_potential_limits,eq:log_potential_inf} to verify that $T(z)$ satisfies the following equivalent Riemann-Hilbert problem.
\begin{itemize}
    \item[$(T_{a})$] $T(z)$ is analytic in $\mathbb{C} \setminus \mathbb{R}$.
    \item[$(T_{b})$] $T$ satisfies the following jump relations on $\mathbb{R}$:
        \begin{align}
            T_{+}(x) &= T_{-}(x) \begin{pmatrix} e^{-n(g_{n,+}(x) - g_{n,-}(x))} & |x|^{\rho} \\ 0 & e^{n(g_{n,+}(x) - g_{n,-}(x))} \end{pmatrix}, \quad \text{for } x \in [-1,1] \setminus \{0\},\\
            T_{+}(x) &= T_{-}(x) \begin{pmatrix} 1 & |x|^{\rho} e^{n(g_{n,+}(x) + g_{n,-}(x) - V_{n}(x) - l_{n})} \\ 0 & 1 \end{pmatrix}, \quad \text{for } |x| > 1.
        \end{align}
    \item[$(T_{c})$] $T(z) = \mathds{1} + \mathcal{O}(1/z)$ as $z \to \infty$.
    \item[$(T_{d})$] $T(z)$ has the same behavior as $Y(z)$ as $z \to 0$, given by \cref{eq:RH_zero_condition}.
\end{itemize}
Notice that, by $(T_{c})$, the Riemann-Hilbert problem is now normalized at infinity. Furthermore, by \cref{lem:hn_positive_x_g1,lem:variational_cond}, the jump matrix is exponentially close to the identity for $|x| > 1$. For $|x|<1$, we see from \cref{eq:log_potential_limits} that the diagonal elements of the jump matrix are complex phases which rapidly oscillate for large $n$; these can be viewed as the analogue of the `fast phase' arising in the analysis of linear differential equations in the WKB limit~\cite{deiftStrongAsymptoticsOrthogonal1999}.

\subsection{Analytic continuation of the equilibrium measure and its logarithmic transform}
In terms of $h_{n}(z)$, which is defined by \cref{eq:hn_def} within the domain of analyticity of $V_{n}(z)$, we can define an analytic continuation $\hat{\psi}_{n}(z)$ of the equilibrium measure by
\begin{equation}
    \hat{\psi}_{n}(z) = \dfrac{1}{2\pi i} r(z) h_{n}(z),
    \label{eq:psi_hat_def}
\end{equation}
with $r(z) = (z+1)^{1/2}(z-1)^{1/2}$. Since $h_{n}(z)$ is analytic, while $r(z)$ flips sign across $[-1,1]$, we have
\begin{align}
    \hat{\psi}_{n,+}(x) = - \hat{\psi}_{n,-}(x) = \psi_{n}(x), \quad \text{for } x \in (-1,1),\label{eq:psi_cont_jump}\\
    \hat{\psi}_{n,+}(x) = \hat{\psi}_{n,-}(x), \quad \text{for } x \in \mathbb{R} \setminus (-1,1),\label{eq:psi_cont_smooth}
\end{align}
so $\hat{\psi}_{n}(z)$ is an analytic continuation of $\psi_{n}(x)$ to $\mathbb{C}_{+}$, and similarly $-\hat{\psi}_{n}(z)$ provides an analytic continuation to $\mathbb{C}_{-}$. Finally, having defined these continuations, we define
\begin{equation}
    \phi_{n}(z) \coloneqq -\pi i \int_{1}^{z} \hat{\psi}_{n}(y) \diff y, \quad \text{for } z \in \mathbb{C} \setminus \mathbb{R},
    \label{eq:phi_n_def}
\end{equation}
where the path of integration does not cross the real axis. By using the analyticity of $\hat{\psi}_{n}(z)$ and the fact that $\int_{0}^{1} \psi_{n}(x) \diff x = 1/2$ by symmetry, one can show that $\phi_{n}(z)$ can equivalently be written as
\begin{equation}
    \phi_{n}(z) = \pm \dfrac{i \pi}{2} - \pi i \int_{0}^{z} \hat{\psi}_{n}(y) \diff y, \quad \text{for } z \in \mathbb{C}_{\pm},
    \label{eq:phi_n_origin}
\end{equation}
where again the path of integration does not cross the real axis. Note that \cref{eq:psi_cont_smooth} implies that $\phi_{n}(z)$ is analytic across $\mathbb{R} \setminus (-1,1)$. Also, \cref{eq:phi_n_origin} shows that $2 \phi_{n,+}(0) = 2\pi i \int_{0}^{1} \psi_{n}(t) \diff t = \pi i$. This is a special property of \textit{even} weight functions, but should be true generically given that symmetry. This factor (exponentiated) is what ends up giving rise to the $(-1)^{n}$ staggering factor in the recurrence coefficients, so here we see that this is a direct consequence of the $x \to -x$ symmetry, which itself is a consequence of unitarity.

For the logarithmic transform, note that by \cref{eq:log_potential_limits} we have
\begin{equation}
    g_{n,+}(x) - g_{n,-}(x) = \begin{cases}
        2\pi i, & x \leq -1,\\
        2\pi i \int_{x}^{1} \psi_{n}(t) \diff t, & |x| < 1,\\
        0, & x \geq 1.
    \end{cases}
    \label{eq:g_diff}
\end{equation}
From \cref{eq:psi_cont_jump,eq:phi_n_def} we deduce that the functions $2\phi_{n}(z)$ and $-2\phi_{n}(z)$ provide analytic continuations of $(g_{n,+} - g_{n,-})(z)$ into the upper and lower half planes respectively.

\subsection{Riemann-Hilbert problem for $S$: contour deformation and `opening the lens'}
Next we notice that the jump matrix for $T(x)$ on $x \in [-1,1]$ can be factorized as
\begin{align}
    \begin{pmatrix} e^{-n(g_{n,+}(x) - g_{n,-}(x))} & |x|^{\rho} \\ 0 & e^{n(g_{n,+}(x) - g_{n,-}(x))} \end{pmatrix} &= \\
    \begin{pmatrix} 1 & 0 \\ |x|^{-\rho} e^{n(g_{n,+}(x) - g_{n,-}(x))} & 1 \end{pmatrix} &\begin{pmatrix} 0 & |x|^{\rho} \\ -|x|^{-\rho} & 0 \end{pmatrix} \begin{pmatrix} 1 & 0 \\ |x|^{-\rho} e^{-n(g_{n,+}(x) - g_{n,-}(x))} & 1 \end{pmatrix}. \nonumber
\end{align}
Remembering that $\pm 2\phi_{n}(z)$ provides an analytic continuation of $g_{n,+} - g_{n,-}$ to $\mathbb{C}_{\pm}$, and defining the analytic continuation of $|x|^{\rho}$ to be
\begin{equation}
    \omega(z) = \begin{cases}
        (-z)^{\rho}, \quad \text{if } \Re{z} < 0,\\
        z^{\rho}, \quad \text{if } \Re{z} > 0,
    \end{cases}
    \label{eq:omega_def}
\end{equation}
with principal branches of powers, we define the matrix-valued function $S$ as follows.
\begin{equation}
    S(z) = \begin{dcases}
        T(z), & \text{for } z \text{ outside the lens},\\
        T(z) \begin{pmatrix} 1 & 0 \\ -\omega(z)^{-1} e^{-2n \phi_{n}(z)} & 1 \end{pmatrix}, & \text{for } z \text{ in the upper parts of the lens},\\
        T(z) \begin{pmatrix} 1 & 0 \\ \omega(z)^{-1} e^{-2n \phi_{n}(z)} & 1 \end{pmatrix}, & \text{for } z \text{ in the lower parts of the lens}.
    \end{dcases}
    \label{eq:S_def}
\end{equation}
By the upper parts of the lens, we mean the regions enclosed by $\Sigma_{1} \cup \Sigma_{2}$ and $\Sigma_{4} \cup \Sigma_{5}$, and by the lower parts of the lens, we mean the regions enclosed by $\Sigma_{3} \cup \Sigma_{2}$ and $\Sigma_{6} \cup \Sigma_{5}$, with the contours $\Sigma_{j}$ shown schematically in \cref{fig:lens}. Because of the shape of these contours, this step is often referred to as `opening the lens'.

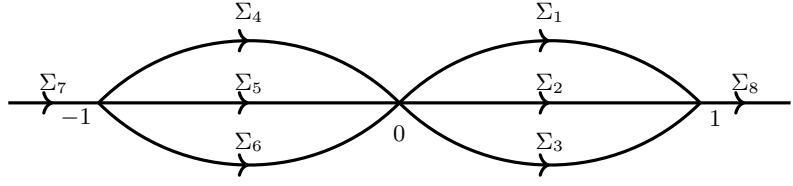
\begin{figure}[t]
\centering
\begin{tikzpicture}[scale=4]
    \begin{scope}[very thick,decoration={markings,mark=at position 0.5 with {\arrow{>}}}] 

        \draw[postaction={decorate}] (-1,0) -- (0,0) node[midway,above] {$\Sigma_{5}$};
        \draw[postaction={decorate}] (0,0) -- (1,0) node[midway,above] {$\Sigma_{2}$};

        \draw[postaction={decorate}] (-1.3,0) -- (-1,0) node[midway,above] {$\Sigma_{7}$};
        \draw[postaction={decorate}] (1,0) -- (1.3,0) node[midway,above] {$\Sigma_{8}$};

        \draw[postaction={decorate}] (0,0) to[out=45,in=135] (1,0);
        \draw[postaction={decorate}] (-1,0) to[out=45,in=135] (0,0);
        \draw[postaction={decorate}] (-1,0) to[out=-45,in=-135] (0,0);
        \draw[postaction={decorate}] (0,0) to[out=-45,in=-135] (1,0);
        
        \node[] at (0.5, 0.3) {$\Sigma_{1}$};
        \node[] at (0.5, -0.13) {$\Sigma_{3}$};
        \node[] at (-0.5, 0.3) {$\Sigma_{4}$};
        \node[] at (-0.5, -0.13) {$\Sigma_{6}$};
        
        \node[] at (-1.075,-0.05) {$-1$};
        \node[] at (1.05,-0.05) {$1$};
        \node[] at (0,-0.1) {$0$};
    \end{scope}
\end{tikzpicture}
\caption{The lens shaped contour $\Sigma = \bigcup_{j=1}^{6} \Sigma_{j}$.}
\label{fig:lens}
\end{figure}

Given the conditions for the Riemann-Hilbert problem for $T$, one can verify that $S$ is the unique solution of the following Riemann-Hilbert problem.
\begin{itemize}
    \item[$(S_{a})$] $S(z)$ is analytic in $\mathbb{C} \setminus \Sigma$.
    \item[$(S_{b})$] $S$ satisfies the following jump relations on $\Sigma$:
        \begin{align}
            S_{+}(z) &= S_{-}(z) \begin{pmatrix} 1 & 0 \\ \omega(z)^{-1} e^{-2n \phi_{n}(z)} & 1 \end{pmatrix}, \quad \text{for } z \in \Sigma \cap \mathbb{C}_{\pm},\label{eq:Sjump1}\\
            S_{+}(x) &= S_{-}(x) \begin{pmatrix} 0 & |x|^{\rho} \\ -|x|^{-\rho} & 0 \end{pmatrix}, \quad \text{for } x \in (-1,1) \setminus \{0\},\label{eq:Sjump2}\\
            S_{+}(x) &= S_{-}(x) \begin{pmatrix} 1 & |x|^{\rho} e^{n(g_{n,+}(x) + g_{n,-}(x) - V_{n}(x) - l_{n})} \\ 0 & 1 \end{pmatrix}, \quad \text{for } |x| > 1.\label{eq:Sjump3}
        \end{align}
    \item[$(S_{c})$] $S(z) = \mathds{1} + \mathcal{O}(1/z)$ as $z \to \infty$.
    \item[$(S_{d})$] For $\rho < 0$, the matrix function $S(z)$ has the following behavior as $z \to 0$:
        \begin{equation}
            S(z) = \mathcal{O} \begin{pmatrix} 1 & |z|^{\rho} \\ 1 & |z|^{\rho} \end{pmatrix}, \quad \text{as } z \to 0, z \in \mathbb{C} \setminus \Sigma. 
        \end{equation}
        For $\rho > 0$, the matrix function $S(z)$ has the following behavior as $z \to 0$:
        \begin{equation}
            S(z) = \begin{dcases}
                \mathcal{O} \begin{pmatrix} 1 & 1 \\ 1 & 1 \end{pmatrix}, & \text{as } z \to 0 \text{ from outside the lens},\\
                \mathcal{O} \begin{pmatrix} |z|^{-\rho} & 1 \\ |z|^{-\rho} & 1 \end{pmatrix}, & \text{as } z \to 0 \text{ from inside the lens}.
            \end{dcases}
        \end{equation}
\end{itemize}
We already know from \cref{lem:variational_cond} that the jump matrix for $|x| > 1$ is exponentially close in $n$ to the identity matrix. For the new jump on $\Sigma \cap \mathbb{C}_{\pm}$, we will show in \cref{lem:lens_boundaries} that $\sup_{z\in\Sigma} |e^{-2n \phi_{n}(z)}| \xrightarrow{n\to\infty} 0$, so this jump matrix will also tend to the identity as $n \to \infty$. Precisely how fast this happens depends on the scaling of $\Re[\phi_{n}(z)]$ with $n$. We will see later that in most cases we can get $\exp[-n \phi_{n}(z)]$ to decay superpolynomially in $n$. However, in the marginal case where the weight $w(x)$ decays only exponentially as $|x| \to \infty$, the decay of $\exp[-n \phi_{n}(z)]$ with $n$ will be slower, only super-polylogarithmic in the worst case (see \cref{lem:lens_boundaries} for details).

\subsection{Parametrix $N$ for the outside region}
\label{sec:outside_parametrix}
Since the jumps for $S(z)$ for $z$ away from $(-1,1)$ decay with $n$, it is useful to solve the Riemann-Hilbert problem for only the jump on $(-1,1)$, whose solution we will denote by $N(z)$.
\paragraph*{Riemann-Hilbert problem for $N$}
\begin{itemize}
    \item[$(N_{a})$] $N : \mathbb{C} \setminus [-1,1] \to \mathbb{C}^{2 \times 2}$ is analytic.
    \item[$(N_{b})$] $N_{+}(x) = N_{-}(x) \begin{pmatrix} 0 & |x|^{\rho} \\ -|x|^{-\rho} & 0 \end{pmatrix}$ for $x \in (-1,1) \setminus \{0\}$.
    \item[$(N_{c})$] $N(z) = \mathds{1} + \mathcal{O}\left(\dfrac{1}{|z|}\right)$ as $|z| \to \infty$.
\end{itemize}
We will construct a solution to this RH problem in terms of the Szeg\H{o} function $D$ associated with $|x|^{\rho}$ on $(-1,1)$. This is a scalar function which is the solution to a multiplicative scalar RH problem with the following conditions.

\paragraph*{Riemann-Hilbert problem for $D$}
\begin{itemize}
    \item[$(D_{a})$] $D : \mathbb{C} \setminus [-1,1] \to \mathbb{C}$ is analytic and nonzero.
    \item[$(D_{b})$] $D_{+}(x) D_{-}(x) = |x|^{\rho}$ for $x \in (-1,1) \setminus \{0\}$.
    \item[$(D_{c})$] $\lim_{z \to \infty} D(z) = D_{\infty}$ exists and is a positive real number.
\end{itemize}
In this case, one can check that the Szeg\H{o} function is given by
\begin{equation}
    D(z) = \dfrac{z^{\rho/2}}{\varphi(z)^{\rho/2}},
    \label{eq:D_def}
\end{equation}
where
\begin{equation}
    \varphi(z) \coloneqq z + (z + 1)^{1/2}(z - 1)^{1/2}
    \label{eq:Pinf_varphi_def}
\end{equation}
is the conformal map from $\mathbb{C} \setminus [-1,1]$ to the exterior of the unit circle; $\varphi(z)$ has a branch cut along $[-1,1]$ and behaves like $\varphi(z) \approx 2z$ as $z \to \infty$.

Having obtained $D$, it turns out that the solution to the RH problem for $N$ is~\cite{kuijlaarsRiemannHilbertAnalysisOrthogonal2003}
\begin{equation}
    N(z) = \begin{pmatrix}
        D_{\infty} & 0 \\
        0 & \frac{1}{D_{\infty}}
    \end{pmatrix}
    \begin{pmatrix}
        \frac{a(z) + a^{-1}(z)}{2} & \frac{a(z) - a^{-1}(z)}{2i} \\
        \frac{a(z) - a^{-1}(z)}{-2i} & \frac{a(z) + a^{-1}(z)}{2}
    \end{pmatrix}
    \begin{pmatrix}
        \frac{1}{D(z)} & 0 \\
        0 & D(z)
    \end{pmatrix},
    \label{eq:Pinf_sol}
\end{equation}
where
\begin{equation}
    a(z) \coloneqq \dfrac{(z-1)^{1/4}}{(z+1)^{1/4}} \quad \text{and} \quad D_{\infty} \coloneqq \lim_{z \to \infty} D(z) = 2^{-\rho/2}.
    \label{eq:a_Pinf_def}
\end{equation}
Since $N(z)$ is a product of three matrices each with $\det = 1$, we have $\det{N(z)} = 1$, so $N(z)$ is invertible. Furthermore, as $z \to \infty$, this matrix scales as
\begin{equation}
    N(z) = \mathds{1} + \dfrac{i}{2z} \begin{pmatrix}
        0 & 2^{-\rho} \\
        -2^{\rho} & 0
    \end{pmatrix} + \mathcal{O}\left(\dfrac{1}{z^{2}}\right),
    \label{eq:N_large_z_asymptotics}
\end{equation}
and as $z \to 0$ it scales as
\begin{equation}
    N(z) \sim \mathcal{O} \begin{pmatrix}
        z^{-\rho/2} & z^{\rho/2} \\
        z^{-\rho/2} & z^{\rho/2}
    \end{pmatrix},
\end{equation}
where the big-$\mathcal{O}$ notation is taken elementwise.

\subsection{Local parametrix $P$ near the endpoints $z = \pm 1$}
\label{sec:endpoints}
While it is true that the jumps of $S(z)$ decay with $n$ for $z$ away from the real axis, these jumps become nonnegligible at the points $z = 0$ and $z = \pm 1$ where the contour $\Sigma$ meet the real axis. We need to construct `local parametrices' which approximately solve the Riemann-Hilbert problem for $S$ near these points, and handle any potential singularities.

In this subsection we will construct the parametrices for the endpoints $z = \pm 1$, which we will build out of Airy functions (because the equilibrium measure is regular for large enough $n$). Our analysis here is standard and will closely follow that of Refs.~\cite{deiftStrongAsymptoticsOrthogonal1999,vanlessenStrongAsymptoticsLaguerreType2007}. First we focus on the parametrix near $z = 1$, since the construction near $z = -1$ will be closely related by symmetry. We will make use of the function $h_{n}(z)$ defined in \cref{eq:hn_def}, which is analytic in the region of analyticity of $V_{n}$, which by assumption contains a neighbourhood of $z = 1$. We will construct the parametrix $P(z)$ in a disk $U_{\delta_{2}} = \{z \in \mathbb{C}: |z-1| < \delta_{2}\}$ around $z = 1$, where the radius $\delta_{2} > 0$ is a small constant. $P(z)$ is the solution to the following Riemann-Hilbert problem.

\paragraph*{Riemann-Hilbert problem for $P$}
\begin{itemize}
    \item[$(P_{a})$] $P$ is analytic in $U_{\delta_{2}} \setminus \Sigma$.
    \item[$(P_{b})$] $P_{+}(z) = P_{-}(z) v_{S}(z)$ for $z \in \Sigma \cap U_{\delta_{2}}$, with $v_{S}$ the jump matrix for $S$.
    \item[$(P_{c})$] $P(z) N(z)^{-1} \sim \mathds{1} + \mathcal{O}(1/n)$ as $n \to \infty$, uniformly for $z$ on the boundary $\partial U_{\delta}$ of the disk $U_{\delta}$ for $\delta$ in compact subsets of $(0,\delta_{2})$.   
\end{itemize}
The construction of $P$ involves three steps. First we will construct a matrix-valued function that satisfies conditions $(P_{a})$ and $(P_{b})$. In order to do this we will transform this RHP into a RHP for $P^{(1)}$ with constant jump matrices and then construct a solution of the latter RHP. Finally, we will take the matching condition $(P_{c})$ into account.

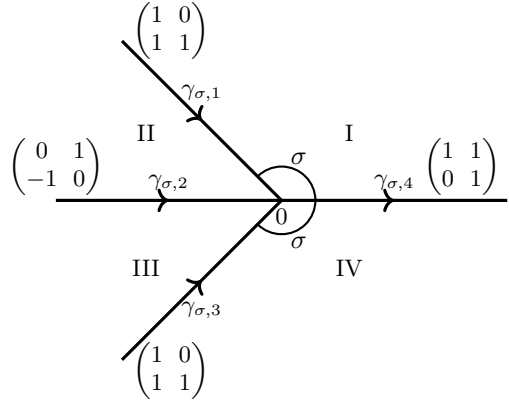
\begin{figure}[t]
\centering
\begin{tikzpicture}[scale=3]
    \begin{scope}[very thick,decoration={markings,mark=at position 0.5 with {\arrow{>}}}] 
        \draw[postaction={decorate}] (-1,0) -- (0,0) node[midway,above,label distance=0.1] {$\gamma_{\sigma,2}$};
        \draw[postaction={decorate}] (0,0) -- (1,0) node[midway,above] {$\gamma_{\sigma,4}$};
        \draw[postaction={decorate}] (135:1) -- (0,0) node[midway,label={[label distance=0.1]above:$\gamma_{\sigma,1}$}] {};
        \draw[postaction={decorate}] (225:1) -- (0,0) node[midway,label={[label distance=0.8]below:$\gamma_{\sigma,3}$}] {};

        \node[] at (0.3, 0.3) {I};
        \node[] at (0.3, -0.3) {IV};
        \node[] at (-0.6,0.3) {II};
        \node[] at (-0.6,-0.3) {III};
        
        \node[] at (0.8, 0.15) {$\begin{pmatrix} 1 & 1 \\ 0 & 1\end{pmatrix}$};
        \node[] at (-1, 0.15) {$\begin{pmatrix} 0 & 1 \\ -1 & 0\end{pmatrix}$};
        \node[] at (123:0.9) {$\begin{pmatrix} 1 & 0 \\ 1 & 1\end{pmatrix}$};
        \node[] at (237:0.9) {$\begin{pmatrix} 1 & 0 \\ 1 & 1\end{pmatrix}$};
    \end{scope}
    \coordinate[label=below:$0$] (O) at (0,0);
    \coordinate (R) at (0.05,0);
    \coordinate (Lu) at (135:0.05);
    \coordinate (Ld) at (225:0.05);
    \tkzMarkAngle[mark=none,size=0.15,thick](R,O,Lu)
    \tkzLabelAngle[pos = 0.2](R,O,Lu){$\sigma$}
    \tkzMarkAngle[mark=none,size=0.15,thick](Ld,O,R)
    \tkzLabelAngle[pos = 0.2](Ld,O,R){$\sigma$}
\end{tikzpicture}
\caption{The oriented contour $\gamma_{\sigma}$ and the jump matrix $v_{1}$ for $\Psi$ on $\gamma_{\sigma}$. The four straight rays $\gamma_{\sigma,1},\dots,\gamma_{\sigma,4}$ divide the complex plane into four regions I, II, III, and IV.}
\label{fig:unity_endpoint}
\end{figure}

\subsubsection{Transformation to constant jump matrices}
Following Ref.~\cite{vanlessenStrongAsymptoticsLaguerreType2007}, in order to transform to constant jump matrices, we seek the parametrix $P$ near $1$ in the following form
\begin{equation}
    P(z) = E(z) P^{(1)}(z) e^{-n \phi_{n}(z) \sigma_{3}} z^{-(1/2)\rho \sigma_{3}}, \quad \text{for } z \in U_{\delta_{2}} \setminus \Sigma,
    \label{eq:P_def}
\end{equation}
with $E$ an invertible analytic matrix-valued function in $U_{\delta_{2}}$, to be determined to ensure that the matching condition $(P_{c})$ is satisfied. Since $E$ is invertible and analytic, one can use \cref{eq:psi_cont_jump,eq:psi_cont_smooth,eq:phi_n_def} to show that the jump matrix for $P^{(1)}$ is piecewise constant:
\begin{equation}
    P^{(1)}_{+}(z) = \begin{dcases}
        P^{(1)}_{-}(z) \begin{pmatrix} 1 & 0 \\ 1 & 1 \end{pmatrix}, & z \in (\Sigma_{1} \cup \Sigma_{3}) \cap U_{\delta_{2}},\\
        P^{(1)}_{-}(z) \begin{pmatrix} 0 & 1 \\ -1 & 0 \end{pmatrix}, & z \in \Sigma_{2} \cap U_{\delta_{2}} = (1 - \delta_{2}, 1),\\
        P^{(1)}_{-}(z) \begin{pmatrix} 1 & 1 \\ 0 & 1 \end{pmatrix}, & z \in \Sigma_{8} \cap U_{\delta_{2}} = (1, 1 + \delta_{2}).
    \end{dcases}
\end{equation}
We will determine $P^{(1)}$ subject to these constant jump matrices by an explicit construction based on an auxiliary problem for $\Psi(\zeta)$ in the $\zeta$-plane with jumps on the oriented contour $\gamma_{\sigma} = \cup_{j=1}^{4} \gamma_{\sigma,j}$, shown in \cref{fig:unity_endpoint}, consisting of four straight rays
\begin{equation*}
    \gamma_{\sigma,1} : \arg{\zeta} = \sigma, \quad \gamma_{\sigma,2} : \arg{\zeta} = \pi, \quad \gamma_{\sigma,3} : \arg{\zeta} = -\sigma, \quad \gamma_{\sigma,4} : \arg{\zeta} = 0,
\end{equation*}
with $\sigma \in (\pi/3,\pi)$. These four rays divide the complex plane into four regions I, II, III, and IV, also shown in \cref{fig:unity_endpoint}.

\paragraph*{Riemann-Hilbert problem for $\Psi$}
\begin{itemize}
    \item[$(\Psi_{a})$] $\Psi$ is analytic in $\mathbb{C} \setminus \gamma_{\sigma}$. 
    \item[$(\Psi_{b})$] $\Psi_{+}(\zeta) = \Psi_{-}(\zeta) v_{1}(\zeta)$ for $\zeta \in \gamma_{\sigma}$, where $v_{1}$ is the piecewise constant matrix-valued function on $\gamma_{\sigma}$ defined as shown in \cref{fig:unity_endpoint}, i.e., $v_{1}(\zeta) = \begin{pmatrix} 1 & 0 \\ 1 & 1\end{pmatrix}$ for $\zeta \in \gamma_{\sigma,1}$, etc. This means that $\Psi$ has the same jumps on $\gamma_{\sigma}$ as $P^{(1)}$ does on $\Sigma_{S} \cap U_{\delta_{2}}$.
    \item[$(\Psi_{c})$] $\Psi$ has the following behaviour at infinity,
        \begin{align}
            \label{eq:Psi_asymptotic}
            \Psi(\zeta) \sim \, &\zeta^{-\sigma_{3}/4} \dfrac{1}{\sqrt{2}} \begin{pmatrix}
                1 & 1 \\ -1 & 1
            \end{pmatrix}\\
            &\times \left[\mathds{1} + \sum_{k=1}^{\infty} \frac{1}{2}\left(\frac{2}{3} \zeta^{3/2}\right)^{-k} \begin{pmatrix}
                (-1)^{k} (s_{k} + t_{k}) & s_{k} - t_{k} \\
                (-1)^{k} (s_{k} - t_{k}) & s_{k} + t_{k}
            \end{pmatrix}\right] e^{-(\pi i/4) \sigma_{3}} e^{-(2/3) \zeta^{3/2} \sigma_{3}},\nonumber
        \end{align}
    as $\zeta \to \infty$, uniformly for $\zeta \in \mathbb{C} \setminus \gamma_{\sigma}$ and $\sigma$ in compact subsets of $(\pi/3,\pi)$. Here,
    \begin{equation}
        s_{k} = \dfrac{\Gamma\left(3k + \frac{1}{2}\right)}{54^{k} k! \Gamma\left(k + \frac{1}{2}\right)}, \quad t_{k} = - \dfrac{6k+1}{6k-1} s_{k}, \quad \text{for } k \geq 1.
        \label{eq:sk_tk_def}
    \end{equation}
\end{itemize}
It is well known \cite{deiftOrthogonalPolynomialsRandom2000,deiftStrongAsymptoticsOrthogonal1999} that, with $\varpi \coloneqq e^{2\pi i/3}$ and $\Ai$ the Airy function, the solution to this RHP is
\begin{equation}
    \Psi(\zeta) = \sqrt{2\pi} e^{-\pi i/12} \times \begin{dcases}
        \begin{pmatrix} \Ai(\zeta) & \Ai(\varpi^{2} \zeta) \\ \Ai^{\prime}(\zeta) & \varpi^{2} \Ai^{\prime}(\varpi^{2}\zeta) \end{pmatrix} e^{-(\pi i/6) \sigma_{3}}, & \zeta \in \text{I},\\
        \begin{pmatrix} \Ai(\zeta) & \Ai(\varpi^{2} \zeta) \\ \Ai^{\prime}(\zeta) & \varpi^{2} \Ai^{\prime}(\varpi^{2}\zeta) \end{pmatrix} e^{-(\pi i/6) \sigma_{3}} \begin{pmatrix} 1 & 0 \\ -1 & 1 \end{pmatrix}, & \zeta \in \text{II},\\
        \begin{pmatrix} \Ai(\zeta) & -\varpi^{2} \Ai(\varpi \zeta) \\ \Ai^{\prime}(\zeta) & -\Ai^{\prime}(\varpi\zeta) \end{pmatrix} e^{-(\pi i/6) \sigma_{3}} \begin{pmatrix} 1 & 0 \\ 1 & 1 \end{pmatrix}, & \zeta \in \text{III},\\
        \begin{pmatrix} \Ai(\zeta) & -\varpi^{2} \Ai(\varpi \zeta) \\ \Ai^{\prime}(\zeta) & -\Ai^{\prime}(\varpi\zeta) \end{pmatrix} e^{-(\pi i/6) \sigma_{3}}, & \zeta \in \text{IV},\\
    \end{dcases}
    \label{eq:Psi_def}
\end{equation}
With this definition, we can now specify the precise form of the contour $\Sigma$ near $z=1$: $\Sigma$ is defined in $U_{\delta_{2}}$ as the inverse $f_{n}$-image of $\gamma_{\sigma} \cap f_{n}(U_{\delta_{2}})$, where $f_{n}(z)$ is a biholomorphic map between the $z$- and $\zeta$-planes to be constructed in the next section. We then define
\begin{equation}
    P^{(1)}(z) \coloneqq \Psi(f_{n}(z)), \quad \text{for } z \in U_{\delta_{2}} \setminus f_{n}^{-1}(\gamma_{\sigma}).
    \label{eq:P1_def}
\end{equation}

\subsubsection{Construction of biholomorphic map}
\begin{proposition}
    There exists a $\delta_{2} > 0$ such that for sufficiently large $n$ there exists a function $f_{n} : U_{\delta_{2}} \to f_{n}(U_{\delta_{2}}) \subset \mathbb{C}$ with the following properties.
    \begin{enumerate}[label=(\alph*)]
        \item $f_{n} : U_{\delta_{2}} \to f_{n}(U_{\delta_{2}})$ is biholomorphic.
        \item $f_{n}(U_{\delta_{2}} \cap \mathbb{R}) = f_{n}(U_{\delta_{2}}) \cap \mathbb{R}$, $f_{n}(U_{\delta_{2}} \cap \mathbb{C}_{\pm}) = f_{n}(U_{\delta_{2}}) \cap \mathbb{C}_{\pm}$.
        \item $\frac{2}{3}(f_{n}(z))^{3/2} = -n \phi_{n}(z)$ for all $z \in U_{\delta_{2}} \setminus (-\infty,1]$.
    \end{enumerate}
    \label{prop:fn_endpoint}
\end{proposition}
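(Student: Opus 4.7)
The plan is to build $f_n$ explicitly as a fractional power of $-n\phi_n$, then verify (c) by construction, (b) by real-analyticity, and (a) by the inverse function theorem applied on a disk whose radius is controlled uniformly in $n$.

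First I extract the structure of $\phi_n$ near $z=1$. Writing $r(y)=(y-1)^{1/2}(y+1)^{1/2}$ with principal branches, the function $k_n(y):=(y+1)^{1/2}h_n(y)$ is analytic in a neighborhood of $y=1$ of $n$-independent radius, since the cone $C_\theta$ in the definition of $\mathrm{CVSLF}$ is invariant under positive rescaling, so $V_n(z)=Q(\beta_n z)/n$ is analytic on a fixed disk around $z=1$ for large $n$, and hence so is $h_n$ via its contour-integral representation \eqref{eq:hn_def}. Term-by-term integration of $(y-1)^{1/2}k_n(y)$ in $\hat\psi_n(y)=\tfrac{1}{2\pi i}(y-1)^{1/2}k_n(y)$ then gives
\begin{equation*}
-n\phi_n(z)=(z-1)^{3/2}\,A_n(z),\qquad A_n(1)=\tfrac{n\sqrt{2}}{3}\,h_n(1),
\end{equation*}
with $A_n$ analytic near $z=1$. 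By \cref{lem:hn_positive_x_g1} applied with some $M>1$, $h_n(1)\ge C_1>0$ uniformly in $n$, so $A_n(1)$ is positive and grows linearly in $n$.

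Second, define
\begin{equation*}
f_n(z):=(z-1)\,\bigl[\tfrac{3}{2}A_n(z)\bigr]^{2/3},
\end{equation*}
using the principal branch of the $2/3$ power, which is well-defined and analytic wherever $A_n(z)$ stays in a narrow sector around the positive real axis. Then $f_n(1)=0$ and $f_n'(1)=[\tfrac{3}{2}A_n(1)]^{2/3}=(n h_n(1)/\sqrt{2})^{2/3}$, matching the expression $c_n n^{2/3}$ used in the main text. A direct computation gives $\tfrac{2}{3}f_n(z)^{3/2}=-n\phi_n(z)$ with the branch of $f_n^{3/2}$ inherited from the factorization through $(z-1)$, proving (c). Real-analyticity of $V_n$, hence of $h_n$ and $A_n$, makes the Taylor expansion of $f_n$ at $z=1$ real, so $f_n$ maps $U_{\delta_2}\cap\mathbb{R}$ into $\mathbb{R}$; since $f_n'(1)>0$, the upper and lower half planes are preserved, giving (b).

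The main technical point is to choose the radius $\delta_2$ uniformly in $n$. By the quantitative inverse function theorem it suffices to bound $|f_n''(z)|$ from above by $|f_n'(1)|/(2\delta_2)$ on the disk $U_{\delta_2}$. Writing $f_n(z)=(z-1)B_n(z)$ with $B_n:=(\tfrac{3}{2}A_n)^{2/3}\sim n^{2/3}$, this reduces via the chain rule to the bound $|A_n'(z)|\lesssim n$ on a disk of fixed radius around $z=1$. These estimates follow from Cauchy bounds applied to the contour-integral representation \eqref{eq:hn_representation_useful_T} of $h_n$, using a contour $\Gamma$ lying entirely in $C_\theta$ at $n$-independent positive distance from $[-1,1]$ (possible since $C_\theta$ is scale-invariant), together with the pointwise bound $|V_n'(s)|=\tfrac{\beta_n}{n}|Q'(\beta_n s)|\lesssim 1$ on $\Gamma$, which follows from the complex upper bound \cref{lem:Qprime_complex_bound} together with the asymptotic $\beta_n Q'(\beta_n)\sim n$ of \cref{lem:vslf_characterization}(iv). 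The expected obstacle is not logical but bookkeeping, namely tracking these constants carefully to ensure $\delta_2>0$ really is independent of $n$; once done, the construction at $z=-1$ is identical by the evenness of $Q$.
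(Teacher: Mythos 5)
Your construction is the same as the paper's up to normalization: your $A_n(z)$ is $\tfrac{2n}{3}\hat\varphi_n(z)$ in the paper's notation, and $f_n=(z-1)[\tfrac32 A_n]^{2/3}=n^{2/3}\varphi_n(z)$ is precisely the map the paper builds. The factorization $-n\phi_n=(z-1)^{3/2}A_n(z)$ with $A_n$ analytic, the verification of (c) by construction, (b) by reality on the real axis, and (a) via a quantitative inverse-function argument controlling $|f_n''|$ against $|f_n'(1)|\sim n^{2/3}$ are all the paper's steps.

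There is, however, one genuine gap you flag only in passing and then treat as bookkeeping: you invoke the principal branch of $A_n^{2/3}$ ``wherever $A_n(z)$ stays in a narrow sector around the positive real axis,'' but you never verify that this holds on all of $U_{\delta_2}$ uniformly in $n$. The lower bound $h_n(1)\ge C_1$ from \cref{lem:hn_positive_x_g1} applies only to real $x$, so by itself it does not prevent $A_n(z)$ from leaving the sector for complex $z$ near $1$. What is needed is a uniform \emph{complex} estimate of the form $|h_n(z)-h_n(1)|\le C|z-1|$ for $|z-1|\le\varrho$ with $C,\varrho$ independent of $n$, together with the limiting value $h_n(1)=2p+o(1)$ from \cref{lem:hn1_scaling}; these combine to put $\Re\hat\varphi_n(z)>0$ on a fixed disk. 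The paper isolates exactly this as \cref{lem:hn1_lower_bound} and cites it in the proof. Your Cauchy-bound sketch for $|h_n'|$ on a fixed contour in $C_\theta$, using $|V_n'|\lesssim 1$ via \cref{lem:Qprime_complex_bound} and \cref{lem:vslf_characterization}(iv), does in fact supply such a Lipschitz bound, so the gap is fixable within your framework --- but you need to chain it explicitly to the lower bound at $z=1$ to justify both the fractional power being single-valued on $U_{\delta_2}$ and the uniform lower bound on $|A_n|^{1/3}$ in the denominator of $B_n'$, which your reduction to ``$|A_n'|\lesssim n$'' silently assumes.
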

\begin{proof}[Proof (adapted from \cite{vanlessenStrongAsymptoticsLaguerreType2007}, Prop 3.19)]
Given the analyticity of $h_{n}(z)$ near $z = 1$, we will explicitly construct $f_{n}(z)$ as follows.
We define the auxiliary function $\hat{\varphi}_{n} : \mathbb{C} \setminus (-\infty,1] \to \mathbb{C}$ by
\begin{align}
    \hat{\varphi}_{n}(z) &= \dfrac{3}{4}(z-1)^{-3/2} \int_{1}^{z} h_{n}(s) (s+1)^{1/2} (s-1)^{1/2} \diff s,\label{eq:hatvarphi_def}\\
    &= \dfrac{h_{n}(1)}{\sqrt{2}} + \dfrac{3}{4}(z-1)^{-3/2} \int_{1}^{z} \left[h_{n}(s) (s+1)^{1/2} - h_{n}(1) \sqrt{2}\right] (s-1)^{1/2} \diff s.\label{eq:hatvarphi_int}
\end{align}
Note that $\hat{\varphi}_{n}(z)$ actually has no jumps across $(-1,1)$ because both $(z-1)^{-3/2}$ and $(s-1)^{1/2}$ flip sign, so we already know that $\hat{\varphi}_{n}(z)$ has an analytic continuation to $U_{\delta} \setminus \{1\}$ for some $\delta > 0$. To deal with the potential difficulty at $z=1$, note that, for $|s-1| < \delta$, Cauchy's theorem tells us that
\begin{align*}
    \left|h_{n}(s) (s+1)^{1/2} - h_{n}(1) \sqrt{2}\right| &= \left|(s-1) \dfrac{1}{2\pi i} \oint_{|w-1|=2\delta} \dfrac{h_{n}(w) (w+1)^{1/2} - h_{n}(1) \sqrt{2}}{(w-1)(w-s)} \diff w \right|,\\
    &\leq \dfrac{|s-1|}{\delta} \sup_{|w-1|=2\delta} \left|h_{n}(w) (w+1)^{1/2} - h_{n}(1) \sqrt{2}\right|.
\end{align*}
In \cref{lem:hn1_lower_bound} we prove that, for sufficiently large $n$ and sufficiently small $\delta$, there exists a constant $\epsilon>0$ (independent of $n$) such that $|h_{n}(w) - h_{n}(1)|<\epsilon$ for $|w-1|<2\delta$. Together with \cref{lem:hn1_scaling}, which tells us that $h_{n}(1) = 2p + o(1)$, it is easy to see that there is a constant $c > 0$ such that for $|s-1|<\delta$ and sufficiently large $n$ we have
\begin{equation}
    \left|h_{n}(s) (s+1)^{1/2} - h_{n}(1) \sqrt{2}\right| \leq c |s-1|.
\end{equation}
Inserting this into \cref{eq:hatvarphi_int}, we conclude there is a constant $C_{1} > 0$ such that for sufficiently large $n$ we have
\begin{equation}
    |\hat{\varphi}_{n}(z) - \hat{\varphi}_{n}(1)| \leq C_{1} |z-1|, \quad \text{for } |z-1|<\delta,
    \label{eq:hatvarphi_bound}
\end{equation}
where $\hat{\varphi}_{n}(1) = h_{n}(1) / \sqrt{2}$. Therefore the isolated singularity of $\hat{\varphi}_{n}(z)$ at $z=1$ is removable, so that $\hat{\varphi}_{n}$ is indeed analytic for $z \in U_{\delta}$. Having established these properties of $\hat{\varphi}_{n}(z)$, we then define the function
\begin{equation}
    \varphi_{n}(z) \coloneqq (z-1) \left(\hat{\varphi}_{n}(z)\right)^{2/3}, \quad \text{for } z \in U_{\delta}.
    \label{eq:varphi_def}
\end{equation}
Since $\hat{\varphi}_{n}(1) = h_{n}(1) / \sqrt{2}$ and $h_{n}(1) = 2p + o(1)$ by \cref{lem:hn1_scaling}, we conclude that $\Re[\hat{\varphi}_{n}(z)] > 0$ for $z \in U_{\delta}$ and sufficiently large $n$. This implies that $\varphi_{n}(z)$ is analytic for $z \in U_{\delta}$.

To establish that $\varphi_{n}(z)$ is injective, observe that, by \cref{eq:hatvarphi_bound,eq:varphi_def,lem:hn1_scaling,lem:hn1_lower_bound}, $\varphi_{n}(z)$ is uniformly (in $n$ and $z$) bounded in $U_{\delta}$. By Cauchy's theorem for derivatives, this implies that $\varphi_{n}^{\prime\prime}(z)$ is also uniformly (in $n$ and $z$) bounded in $U_{\delta}$ for a smaller $\delta$. Since $\hat{\varphi}_{n}(1) = h_{n}(1) / \sqrt{2} \neq 0$, we have $\varphi_{n}^{\prime}(1) = (h_{n}(1)/\sqrt{2})^{2/3}$, so that
\begin{equation*}
    \left|\varphi_{n}^{\prime}(z) - (h_{n}(1)/\sqrt{2})^{2/3}\right| = \left| \int_{1}^{z} \varphi_{n}^{\prime\prime}(s) \diff s \right| \leq C_{2} |z-1|, \quad \text{for } z \in U_{\delta},
\end{equation*}
for some constant $C_{2} > 0$ and sufficiently large $n$. By \cref{lem:hn1_scaling} we know there exists $h_{0} > 0$ such that $h_{n}(1) > h_{0}$ for sufficiently large $n$, and so by the inverse function theorem we conclude there exists $0 < \delta_{2} < \delta$ such that $\varphi_{n}$ is injective and hence biholomorphic in $U_{\delta_{2}}$.

Finally, we define the biholomorphic map $f_{n}(z)$ by
\begin{equation}
    f_{n}(z) \coloneqq n^{2/3} \varphi_{n}(z).
    \label{eq:biholomorphic_def}
\end{equation}
Property (b) follows from the fact that $\varphi_{n}$ and its inverse $\varphi_{n}^{-1}$ are real on the real axis, while property (c) follows from
\begin{equation}
    \phi_{n}(z) = -\dfrac{2}{3} (z-1)^{3/2} \hat{\varphi}_{n}(z),
\end{equation}
which follows from \cref{eq:phi_n_def,eq:hatvarphi_def}.

\end{proof}
To calculate residues we will need the values of $\hat{\varphi}_{n}(1)$ and $\hat{\varphi}_{n}^{\prime}(1)$, which can be calculated from the values of $h_{n}(1)$ and $h_{n}^{\prime}(1)$ via
\begin{align}
    \hat{\varphi}_{n}(1) &= \dfrac{1}{\sqrt{2}} h_{n}(1),\\
    \hat{\varphi}^{\prime}_{n}(1) &= \dfrac{3 \sqrt{2}}{40} \left[h_{n}(1) + 4 h_{n}^{\prime}(1)\right].
\end{align}

\subsubsection{Satisfying the matching condition}
In this final step, we construct the invertible analytic matrix-valued function $E(z)$ to ensure that the matching function $(P_{c})$ is satisfied. In the $n \to \infty$ limit, by comparing \cref{eq:P1_def,eq:Psi_asymptotic,eq:biholomorphic_def}, we see that we must define $E$ as
\begin{equation}
    E(z) \coloneqq N(z) z^{(1/2)\rho \sigma_{3}} e^{(\pi i/4) \sigma_{3}} \dfrac{1}{\sqrt{2}} \begin{pmatrix} 1 & -1 \\ 1 & 1 \end{pmatrix} f_{n}(z)^{\sigma_{3}/4}, \quad \text{for } z \in U_{\delta_{2}}.
    \label{eq:E_def}
\end{equation}
It is straightforward to check that $E(z)$ so defined is indeed analytic and invertible (see \cite{vanlessenStrongAsymptoticsLaguerreType2007}, Remark 3.21). This ends the construction of the parametrix $P$ near $z = 1$.

\subsubsection{Summary of the local analysis near $z=1$}
Fixing an angle $\sigma \in (\pi/3,\pi)$, we have defined the contour $\Sigma$ in a disk $U_{\delta_{2}}$ near $z=1$ as the inverse image of the contour $\gamma_{\sigma}$ under the map $f_{n}$ defined in \cref{eq:biholomorphic_def}. We have then found a solution
\begin{equation}
    P(z) = E(z) \Psi(f_{n}(z)) e^{-n \phi_{n}(z)\sigma_{3}} z^{-(1/2) \rho \sigma_{3}}, \quad \text{for } z \in \Sigma \cap U_{\delta_{2}}
    \label{eq:P_def_endpoint}
\end{equation}
to the RHP for $P$, where the matrix-valued function $E$ is defined in \cref{eq:E_def}, and the matrix-valued function $\Psi$ is defined in \cref{eq:Psi_def}. Furthermore, we have the asymptotic expansion
\begin{equation}
    P(z)N(z)^{-1} \sim \mathds{1} + \sum_{k=1}^{\infty} \Delta_{k}(z) \dfrac{1}{n^{k}}, \quad \text{as } n \to \infty,
    \label{eq:P_Pinf_expansion}
\end{equation}
uniformly for $z$ in compact subsets of $\{0 < |z-1| < \delta_{2}\}$ and $\sigma$ in compact subsets of $(\pi/3,\pi)$, where $\Delta_{k}$ is a meromorphic $2 \times 2$ matrix-valued function determined by \cref{eq:P_def,eq:P1_def,eq:E_def,eq:Psi_asymptotic,eq:biholomorphic_def}, and given explicitly by
\begin{equation}
    \Delta_{k}(z) = \dfrac{1}{2 \left(-\phi_{n}(z)\right)^{k}} N(z) z^{(1/2)\rho \sigma_{3}} \begin{pmatrix}
                (-1)^{k} (s_{k} + t_{k}) & i(s_{k} - t_{k}) \\
                -i(-1)^{k} (s_{k} - t_{k}) & s_{k} + t_{k}
            \end{pmatrix} z^{-(1/2) \rho \sigma_{3}} N(z)^{-1},
    \label{eq:Delta_k_def}
\end{equation}
where $\phi_{n}$ is defined in \cref{eq:phi_n_def}, $N(z)$ is given in \cref{eq:Pinf_sol}, and the scalars $s_{k}$ and $t_{k}$ are defined in \cref{eq:sk_tk_def}. One can verify that $\Delta_{k}(z)$ is meromorphic in a neighbourhood of $z=1$ by using the jump conditions of $\phi_{n}(z)$ and $N(z)$.

The first term in this asymptotic expansion is (suppressing $z$ arguments to reduce notational clutter)
\begin{align*}
    \Delta_{1} = \dfrac{-1}{144 \phi_{n}} \Big[ &\begin{pmatrix}
        a^{2} + a^{-2} & 2^{-\rho} i(a^{2} - a^{-2}) \\
        2^{\rho} i(a^{2} - a^{-2}) & -(a^{2} + a^{-2})
    \end{pmatrix}\\
    +3
    &\begin{pmatrix}
        (\varphi^{\rho} + \varphi^{-\rho})(a^{2} - a^{-2}) & 2^{-\rho}i\left[\varphi^{\rho}(a + a^{-1})^{2} + \varphi^{-\rho}(a - a^{-1})^{2} \right] \\
        2^{\rho}i\left[\varphi^{-\rho}(a + a^{-1})^{2} + \varphi^{\rho}(a - a^{-1})^{2} \right] & -(\varphi^{\rho} + \varphi^{-\rho})(a^{2} - a^{-2})
    \end{pmatrix}\Big].
\end{align*}
Note that $\Delta_{k}(z)$ has a pole of order $(3k+1)/2$ at $z = 1$.

\subsubsection{Parametrix near $z = -1$}
We can use symmetry to carry over most of the analysis of the parametrix near $z = 1$ to obtain a parametrix near $z = -1$. In particular, consider the transformation
\begin{equation}
    \wt{\Psi}(z) \coloneqq \sigma_{3} \Psi(-z) \sigma_{3}.
\end{equation}
Under the transformation $z \to -z$, the approach to a contour from the $(+)$-side gets mapped to the approach to the equivalent contour at $-z$ but from the $(-)$-side. This implies that the corresponding jump matrix is the \textit{inverse} of the jump matrix at $-z$. For the set of jump matrices considered here, the extra conjugation by $\sigma_{3}$ is sufficient to correct for this inverse, i.e.~$\sigma_{3} v_{j}^{-1} \sigma_{3} = v_{j}$, so that $\wt{\Psi}(z)$ has the correct jump behaviour near $z = -1$.

We construct the rest of the parametrix in a similar way to before, with the result
\begin{equation}
    P(z) = \wt{E}(z) \wt{\Psi}(-f_{n}(-z)) e^{-n \phi_{n}(-z) \sigma_{3}} (-z)^{-(\rho/2) \sigma_{3}},
    \label{eq:Pdef_left_endpoint}
\end{equation}
where $\wt{E}(z)$ is an analytic matrix-valued function given by
\begin{equation}
    \wt{E}(z) = N(z) (-z)^{(\rho/2) \sigma_{3}} e^{(\pi i/4) \sigma_{3}} \dfrac{1}{\sqrt{2}}\begin{pmatrix} 1 & 1 \\ -1 & 1 \end{pmatrix} f_{n}(-z)^{\sigma_{3}/4}.
\end{equation}
Notice the slight difference in the constant matrices in the definitions of $\wt{E}(z)$ and $E(z)$, which comes about from conjugating with $\sigma_{3}$. Note that for $\rho = 0$ we simply have $P(z) = \sigma_{3} P(-z) \sigma_{3}$, but this is no longer quite true for $\rho \neq 0$, the reason being that the identity $\sigma_{3} N(z) \sigma_{3} = N(-z)$ holds only for $\rho = 0$.

As before, we have an asymptotic expansion
\begin{equation}
    P(z) N^{-1}(z) \sim \mathds{1} + \sum_{k=1}^{\infty} \wt{\Delta}_{k}(z) \dfrac{1}{n^{k}}, \quad \text{as } n \to \infty,
    \label{eq:DeltaR_expansion_z=-1}
\end{equation}
uniformly for $z$ in compact subsets of $U_{\delta_{2}} \coloneqq \{0 < |z+1| < \delta_{2}\}$ and $\sigma$ in compact subsets of $(\pi/3,\pi)$, where $\wt{\Delta}_{k}$ is a meromorphic $2 \times 2$ matrix-valued function given by
\begin{equation}
    \wt{\Delta}_{k}(z) = \dfrac{1}{2 \left(-\phi_{n}(-z)\right)^{k}} N(z) (-z)^{(1/2)\rho \sigma_{3}} \begin{pmatrix}
                (-1)^{k} (s_{k} + t_{k}) & -i(s_{k} - t_{k}) \\
                i(-1)^{k} (s_{k} - t_{k}) & s_{k} + t_{k}
            \end{pmatrix} (-z)^{-(1/2) \rho \sigma_{3}} N(z)^{-1}.
\end{equation}
The first term in this expansion is
\begin{equation}
    \wt{\Delta}_{1}(z) = \dfrac{-1}{144 \phi_{n}(-z)}\left[
        \begin{pmatrix}
            a^{2} + a^{-2} & i 2^{-\rho}(a^{2} - a^{-2}) \\
            i 2^{\rho}(a^{2} - a^{-2}) & -(a^{2} + a^{-2})
        \end{pmatrix} - 3 \Omega_{1}(z)\right],
\end{equation}
where $\Omega_{1}(z)$ is defined for $U_{\delta_{2}} \cap \mathbb{C}_{+}$ as 
\begin{equation}
    \Omega_{1}(z) \coloneqq 
        \begin{pmatrix}
            (e^{-i \pi \rho} \varphi^{\rho} + e^{i \pi \rho} \varphi^{-\rho})(a^{2} - a^{-2}) & 2^{-\rho}i \left[e^{i \pi \rho} \varphi^{-\rho} (a - a^{-1})^{2} + e^{-i \pi \rho} \varphi^{\rho} (a + a^{-1})^{2}\right] \\
            2^{\rho}i \left[e^{i \pi \rho} \varphi^{-\rho} (a + a^{-1})^{2} + e^{-i \pi \rho} \varphi^{\rho} (a - a^{-1})^{2}\right] & -(e^{-i \pi \rho} \varphi^{\rho} + e^{i \pi \rho} \varphi^{-\rho})(a^{2} - a^{-2})
        \end{pmatrix},
\end{equation}
and for $U_{\delta_{2}} \cap \mathbb{C}_{-}$ as 
\begin{equation}
    \Omega_{1}(z) \coloneqq 
    \begin{pmatrix}
        (e^{i \pi \rho} \varphi^{\rho} + e^{-i \pi \rho} \varphi^{-\rho})(a^{2} - a^{-2}) & 2^{-\rho}i \left[e^{-i \pi \rho} \varphi^{-\rho} (a - a^{-1})^{2} + e^{i \pi \rho} \varphi^{\rho} (a + a^{-1})^{2}\right] \\
        2^{\rho}i \left[e^{-i \pi \rho} \varphi^{-\rho} (a + a^{-1})^{2} + e^{i \pi \rho} \varphi^{\rho} (a - a^{-1})^{2}\right] & -(e^{i \pi \rho} \varphi^{\rho} + e^{-i \pi \rho} \varphi^{-\rho})(a^{2} - a^{-2})
    \end{pmatrix}.
\end{equation}
In these expressions we have suppressed for brevity the arguments of the functions $a$ and $\varphi$, which are meant to be evaluated at $z$; the only function evaluated at $-z$ is $\phi_{n}$.

\subsection{Local parametrix $P$ near the origin}
\label{sec:origin}
In this section we construct a local parametrix $P$ which approximately solves the Riemann-Hilbert problem for $S$ in a disk $U_{\delta_{n}} = \{z \in \mathbb{C} : |z| < \delta_{n}\}$ of radius $\delta_{n} = \gamma / \beta_{n}$ centered at the origin. Our analysis will be inspired by that of \cite{kuijlaarsUniversalityEigenvalueCorrelations2003,vanlessenStrongAsymptoticsRecurrence2003,vanlessenStrongAsymptoticsLaguerreType2007}, but the main novelty will be that we construct the parametrix in a disk of radius shrinking like $\mathcal{O}(1/\beta_{n})$. We do this to avoid having to deal with any singularities in the equilibrium measure; it was for this purpose that we assumed analyticity of the weight function in a region $C_{\theta}$ which is invariant under rescaling, as well as in a disk of constant radius centered at the origin. The price we will pay for shrinking the disk is error bounds in the matching condition $P(z) N(z)^{-1} \sim \mathds{1} + o(1)$ which decay more slowly with $n$.
\paragraph*{Riemann-Hilbert problem for $P$}
\begin{itemize}
    \item[$(P_{a})$] $P$ is analytic in $U_{\delta_{n}} \setminus \Sigma$.
    \item[$(P_{b})$] $P_{+}(z) = P_{-}(z) v_{S}(z)$ for $z \in \Sigma \cap U_{\delta_{n}}$, with $v_{S}$ the jump matrix for $S$.
    \item[$(P_{c})$] $P(z) N(z)^{-1} \sim \mathds{1} + o(1)$ as $n \to \infty$, uniformly for $z$ on the boundary $\partial U_{\delta}$ of the disk $U_{\delta}$ for $\delta$ in compact subsets of $(0,\delta_{n})$.   
\end{itemize}
As in \cref{sec:endpoints}, we will first transform to a problem $P^{(1)}$ with constant jump matrices, and then construct an explicit solution to this problem using special functions.

\subsubsection{Transformation to constant jump matrices}
\label{sec:bessel_sol}
Following Ref.~\cite{kuijlaarsUniversalityEigenvalueCorrelations2003}, in order to transform to constant jump matrices, we seek the parametrix $P$ near $0$ in the following form
\begin{equation}
    P(z) = E_{n}(z) P^{(1)}(z) W(z)^{-\sigma_{3}} e^{-n \phi_{n}(z) \sigma_{3}},
\end{equation}
where $E_{n}(z)$ is an analytic matrix-valued function to be determined to ensure the matching condition $(P_{c})$, and $W(z)$ is defined by
\begin{equation}
    W(z) \coloneqq \begin{cases}
        z^{\rho/2}, & \text{if } \pi/2 < \left| \arg{f_{n}(z)} \right| < \pi,\\
        (-z)^{\rho/2}, & \text{if } 0 < \left| \arg{f_{n}(z)} \right| < \pi/2,
    \end{cases}
    \label{eq:W_def}
\end{equation}
where $f_{n} : U_{\delta_{n}} \to f_{n}(U_{\delta_{n}})$ is a biholomorphic map to be constructed in the next section. Note that $W(z)$ has a branch cut along the whole real axis. One can then verify that $P^{(1)}$ has a jump matrix which is piecewise constant~\cite{kuijlaarsUniversalityEigenvalueCorrelations2003}:
\begin{align}
    P^{(1)}_{+}(x) &= P^{(1)}_{-}(x) \begin{pmatrix} 0 & 1 \\ -1 & 0 \end{pmatrix}, \hspace{-8em} &\text{for } x \in \Sigma_{1}^{o} \cup \Sigma_{5}^{o},\\
    P^{(1)}_{+}(z) &= P^{(1)}_{-}(z) \begin{pmatrix} 1 & 0 \\ e^{- i \pi \rho} & 1 \end{pmatrix}, \hspace{-8em} &\text{for } z \in \Sigma_{2}^{o} \cup \Sigma_{6}^{o},\\
    P^{(1)}_{+}(z) &= P^{(1)}_{-}(z) e^{\frac{i \pi \rho}{2} \sigma_{3}}, \hspace{-8em} &\text{for } z \in \Sigma_{3}^{o} \cup \Sigma_{7}^{o},\\
    P^{(1)}_{z}(x) &= P^{(1)}_{-}(z) \begin{pmatrix} 1 & 0 \\ e^{i \pi \rho} & 1 \end{pmatrix}, \hspace{-8em} &\text{for } z \in \Sigma_{4}^{o} \cup \Sigma_{8}^{o},
\end{align}
where the contours $\Sigma_{j}^{o}$, $j=1,\dots,8$ are sketched in \cref{fig:bessel_parametrix}(a), with the understanding that the notation $\Sigma_{i}^{o}$ denotes $\Sigma_{i}$ without the origin. We construct an explicit solution for this RHP using a model solution $\Psi_{\rho/2}(\zeta)$ for an auxiliary RHP in the $\zeta$-plane, as given in \cite{kuijlaarsUniversalityEigenvalueCorrelations2003,vanlessenStrongAsymptoticsRecurrence2003} and illustrated in \cref{fig:bessel_parametrix}(b).

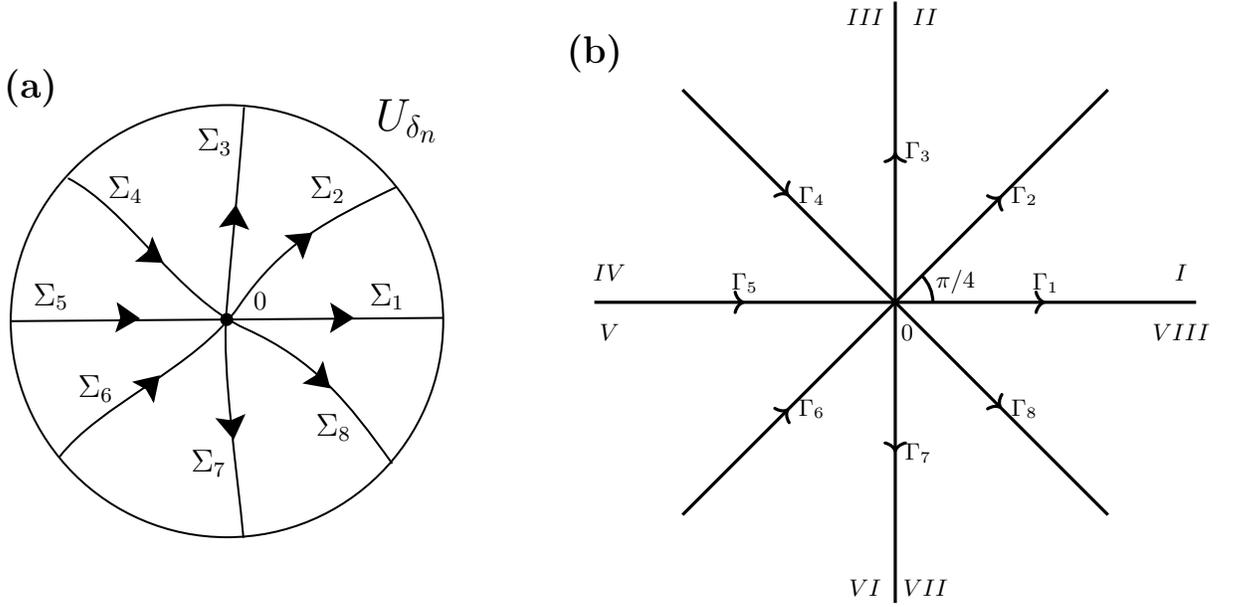
\begin{figure}[t]
\centering
\begin{minipage}{0.49\textwidth}
    \tikzset{every picture/.style={line width=0.75pt}} %

\begin{tikzpicture}[x=0.75pt,y=0.75pt,yscale=-1,xscale=1,scale=1.5]

\draw    (206.57,223.6) .. controls (204.57,201.6) and (199.3,167.3) .. (200.82,148) .. controls (202.33,128.7) and (204.78,106.67) .. (206.78,78.67) ;
\draw    (144.57,196.6) .. controls (157.8,178.91) and (190.99,166.26) .. (205.16,145.48) .. controls (219.33,124.7) and (234.78,116.67) .. (257.78,105.67) ;
\draw    (147.57,102.6) .. controls (167.28,112.57) and (184.45,142.92) .. (207.37,153.28) .. controls (230.29,163.64) and (241.75,177.85) .. (256.57,198.6) ;
\draw    (128.33,150.7) -- (273.57,149.6) ;
\draw  [fill={rgb, 255:red, 0; green, 0; blue, 0 }  ,fill opacity=1 ] (199.1,150.15) .. controls (199.1,149.13) and (199.93,148.3) .. (200.95,148.3) .. controls (201.97,148.3) and (202.8,149.13) .. (202.8,150.15) .. controls (202.8,151.17) and (201.97,152) .. (200.95,152) .. controls (199.93,152) and (199.1,151.17) .. (199.1,150.15) -- cycle ;
\draw  [fill={rgb, 255:red, 0; green, 0; blue, 0 }  ,fill opacity=1 ] (199.06,118.71) -- (203.99,112.12) -- (207.84,119.39) -- (203.55,117.8) -- cycle ;
\draw  [fill={rgb, 255:red, 0; green, 0; blue, 0 }  ,fill opacity=1 ] (206.61,182.55) -- (203.19,190.03) -- (197.89,183.74) -- (202.42,184.38) -- cycle ;
\draw  [fill={rgb, 255:red, 0; green, 0; blue, 0 }  ,fill opacity=1 ] (236.28,145.16) -- (243.19,149.61) -- (236.22,153.97) -- (237.5,149.57) -- cycle ;
\draw  [fill={rgb, 255:red, 0; green, 0; blue, 0 }  ,fill opacity=1 ] (164.46,145.03) -- (171.19,149.75) -- (164.05,153.82) -- (165.5,149.48) -- cycle ;
\draw  [fill={rgb, 255:red, 0; green, 0; blue, 0 }  ,fill opacity=1 ] (221.16,122.04) -- (229.35,121.29) -- (226.53,129.02) -- (224.83,124.77) -- cycle ;
\draw  [fill={rgb, 255:red, 0; green, 0; blue, 0 }  ,fill opacity=1 ] (170.03,171.22) -- (178.04,169.36) -- (176.3,177.4) -- (174.04,173.42) -- cycle ;
\draw  [fill={rgb, 255:red, 0; green, 0; blue, 0 }  ,fill opacity=1 ] (233.23,164.98) -- (235.38,172.92) -- (227.29,171.48) -- (231.18,169.07) -- cycle ;
\draw  [fill={rgb, 255:red, 0; green, 0; blue, 0 }  ,fill opacity=1 ] (175.94,123.13) -- (179.16,130.7) -- (170.94,130.38) -- (174.47,127.47) -- cycle ;
\draw   (128.33,150.7) .. controls (128.33,110.54) and (160.89,77.99) .. (201.04,77.99) .. controls (241.2,77.99) and (273.75,110.54) .. (273.75,150.7) .. controls (273.75,190.86) and (241.2,223.41) .. (201.04,223.41) .. controls (160.89,223.41) and (128.33,190.86) .. (128.33,150.7) -- cycle ;

\draw (209,140) node [anchor=north west][inner sep=0.75pt]  [font=\normalsize]  {$0$};
\draw (248,137) node [anchor=north west][inner sep=0.75pt]  [font=\large]  {$\Sigma _{1}$};
\draw (228,101) node [anchor=north west][inner sep=0.75pt]  [font=\large]  {$\Sigma _{2}$};
\draw (190,85) node [anchor=north west][inner sep=0.75pt]  [font=\large]  {$\Sigma _{3}$};
\draw (160,101) node [anchor=north west][inner sep=0.75pt]  [font=\large]  {$\Sigma _{4}$};
\draw (135,137) node [anchor=north west][inner sep=0.75pt]  [font=\large]  {$\Sigma _{5}$};
\draw (150,168) node [anchor=north west][inner sep=0.75pt]  [font=\large]  {$\Sigma _{6}$};
\draw (188,193) node [anchor=north west][inner sep=0.75pt]  [font=\large]  {$\Sigma _{7}$};
\draw (230,181) node [anchor=north west][inner sep=0.75pt]  [font=\large]  {$\Sigma _{8}$};
\draw (250,75) node [anchor=north west][inner sep=0.75pt]  [font=\LARGE]  {$U_{\delta _{n}}$};

\draw (125,65) node [anchor=north west][inner sep=0.75pt]  [font=\Large]  {\textbf{(a)}};

\end{tikzpicture}
\end{minipage}
\begin{minipage}{0.49\textwidth}
\begin{tikzpicture}[scale=4]
    \begin{scope}[very thick,decoration={markings,mark=at position 0.5 with {\arrow{>}}}] 

        \draw[postaction={decorate}] (-1,0) -- (0,0) node[midway,above] {$\Gamma_{5}$};
        \draw[postaction={decorate}] (0,0) -- (1,0) node[midway,above] {$\Gamma_{1}$};
        
        \draw[postaction={decorate}] (0,0) -- (0,1) node[midway,right] {$\Gamma_{3}$};
        \draw[postaction={decorate}] (0,0) -- (0,-1) node[midway,right] {$\Gamma_{7}$};

        \draw[postaction={decorate}] (0,0) -- (0.707,0.707) node[midway,right] {$\Gamma_{2}$};
        \draw[postaction={decorate}] (0,0) -- (0.707,-0.707) node[midway,right] {$\Gamma_{8}$};
        \draw[postaction={decorate}] (-0.707,0.707) -- (0,0) node[midway,right] {$\Gamma_{4}$};
        \draw[postaction={decorate}] (-0.707,-0.707) -- (0,0) node[midway,right] {$\Gamma_{6}$};
        
        \node[] at (0.04,-0.1) {$0$};

        \node[] at (0.95,0.1) {$I$};
        \node[] at (0.95,-0.1) {$VIII$};
        \node[] at (0.1,0.95) {$II$};
        \node[] at (-0.1,0.95) {$III$};

        \node[] at (-0.95,0.1) {$IV$};
        \node[] at (-0.95,-0.1) {$V$};
        \node[] at (0.1,-0.95) {$VII$};
        \node[] at (-0.1,-0.95) {$VI$};
        
        \coordinate (origin) at (0,0);
        \coordinate (r) at (0.707,0);
        \coordinate (ur) at (0.707,0.707);
        \coordinate (dr) at (0.707,-0.707);
        \coordinate (l) at (-0.707,0);
        \coordinate (ul) at (-0.707,0.707);
        \coordinate (dl) at (-0.707,-0.707);

        \pic [draw, "", angle eccentricity = 1.5] {angle = r--origin--ur};
        \node[] at (0.2,0.07) {$\pi/4$};

    \end{scope}
    \draw (-1.1,0.9) node [anchor=north west][inner sep=0.75pt]  [font=\Large]  {\textbf{(b)}};
\end{tikzpicture}
\end{minipage}
\caption{\textbf{(a)} The contour $\Sigma = \bigcup_{i=1}^{8} \Sigma_{i}$ in the $z$-plane on which $P^{(1)}(z)$ has jumps. We construct the parametrix in a disk $U_{\delta_{n}}$ of radius $\delta_{n}$ centered at the origin. \textbf{(b)} The contour $\Gamma_{\Psi}$ for the auxiliary RHP in the $\zeta$-plane used to obtain the parametrix near the origin.
}
\label{fig:bessel_parametrix}
\end{figure}

\paragraph*{Riemann-Hilbert problem for $\Psi_{\rho/2}$}
\begin{itemize}
    \item[$(\Psi_{\rho/2,a})$] $\Psi_{\rho/2}$ is analytic in $\mathbb{C} \setminus \Gamma_{\Psi}$.
    \item[$(\Psi_{\rho/2,b})$] $\Psi_{\rho/2}$ satisfies the following jump relations on $\Gamma_{\Psi}$:
        \begin{align}
            \Psi_{\rho/2,+}(\zeta) &= \Psi_{\rho/2,-}(\zeta) \begin{pmatrix} 0 & 1 \\ -1 & 0 \end{pmatrix}, \hspace{-8em} &\text{for } \zeta \in \Gamma_{1} \cup \Gamma_{5},\\
            \Psi_{\rho/2,+}(\zeta) &= \Psi_{\rho/2,-}(\zeta) \begin{pmatrix} 1 & 0 \\ e^{-i \pi \rho} & 1 \end{pmatrix},\hspace{-8em} &\text{for } \zeta \in \Gamma_{2} \cup \Gamma_{6},\\
            \Psi_{\rho/2,+}(\zeta) &= \Psi_{\rho/2,-}(\zeta) e^{\frac{i \pi \rho}{2}\sigma_{3}}, \hspace{-8em} &\text{for } \zeta \in \Gamma_{3} \cup \Gamma_{7},\\
            \Psi_{\rho/2,+}(\zeta) &= \Psi_{\rho/2,-}(\zeta) \begin{pmatrix} 1 & 0 \\ e^{i \pi \rho} & 1 \end{pmatrix},\hspace{-8em} &\text{for } \zeta \in \Gamma_{4} \cup \Gamma_{8}.
        \end{align} 
    \item[$(\Psi_{\rho/2,c})$] For $\rho<0$, $\Psi_{\rho/2}$ has the following behavior as $\zeta \to 0$:
        \begin{equation}
            \Psi_{\rho/2}(\zeta) = \mathcal{O}\begin{pmatrix}
                |\zeta|^{\rho/2} & |\zeta|^{\rho/2}\\
                |\zeta|^{\rho/2} & |\zeta|^{\rho/2}
            \end{pmatrix}, \quad \text{as } \zeta \to 0.
        \end{equation}
        For $\rho \geq 0$, $\Psi_{\rho/2}$ has the following behavior as $\zeta \to 0$:
        \begin{equation}
            \Psi_{\rho/2}(\zeta) = \begin{dcases}
                \mathcal{O}\begin{pmatrix} 
                |\zeta|^{\rho/2} & |\zeta|^{-\rho/2}\\
                |\zeta|^{\rho/2} & |\zeta|^{-\rho/2}
                \end{pmatrix}, & \text{as } \zeta \to 0 \text{ with } \zeta \in \text{II, III, VI, VII};\\
                \mathcal{O}\begin{pmatrix} 
                |\zeta|^{-\rho/2} & |\zeta|^{-\rho/2}\\
                |\zeta|^{-\rho/2} & |\zeta|^{-\rho/2}
                \end{pmatrix}, & \text{as } \zeta \to 0 \text{ with } \zeta \in \text{I, IV, V, VIII}.
            \end{dcases}
        \end{equation}
\end{itemize}
This RHP was solved in \cite[Eqs.~(4.26)-(4.33)]{vanlessenStrongAsymptoticsRecurrence2003}. It is built out of the modified Bessel functions $I_{\frac{1}{2}(\rho \pm 1)}$, $K_{\frac{1}{2}(\rho \pm 1)}$ and the modified Hankel functions $H^{(1)}_{\frac{1}{2}(\rho\pm 1)}$, $H^{(2)}_{\frac{1}{2}(\rho\pm 1)}$. The explicit formula for $\Psi_{\rho/2}$ in sector I is 
\begin{equation}
    \Psi_{\rho/2}(\zeta) = \dfrac{\sqrt{\pi}}{2} \zeta^{1/2} \begin{pmatrix}
        H^{(2)}_{\frac{1}{2}(\rho+1)}(\zeta) & -i H^{(1)}_{\frac{1}{2}(\rho+1)}(\zeta)\\[1em]
        H^{(2)}_{\frac{1}{2}(\rho-1)}(\zeta) & -i H^{(1)}_{\frac{1}{2}(\rho-1)}(\zeta)
    \end{pmatrix} e^{-(\frac{\rho}{2} + \frac{1}{4})i \pi \sigma_{3}}, \quad \text{for } 0 < \arg{\zeta} < \dfrac{\pi}{4},
    \label{eq:Psi_rho_def}
\end{equation}
and the expressions in the other sectors can be obtained from this one by following the jumps given in $(\Psi_{\rho/2,b})$. As $\zeta \to \infty$ in the first quadrant, we can use the asymptotic formulae \cite[9.2.7--9.2.10]{abramowitzHandbookMathematicalFunctions1965} for the Hankel functions to get the asymptotic expansion
\begin{equation}
    \Psi_{\rho/2}(\zeta) \sim \dfrac{1}{\sqrt{2}} \sum_{k=0}^{\infty} \dfrac{i^{k}}{(2 \zeta)^{k}} \begin{pmatrix} (-1)^{k} \left(\frac{1}{2}(\rho+1), k\right) & -i \left(\frac{1}{2}(\rho+1), k\right) \\[0.5em] -i (-1)^{k} \left(\frac{1}{2}(\rho-1), k\right) & \left(\frac{1}{2}(\rho-1), k\right) \end{pmatrix} e^{\frac{i \pi}{4} \sigma_{3}} e^{-\frac{i \pi \rho}{4} \sigma_{3}} e^{-i \zeta \sigma_{3}},
    \label{eq:Psi_rho_asymptotic}
\end{equation}
uniformly in $\zeta$, where for $k=0$ we define $(v,0) = 1$ and for $k \geq 1$
\begin{equation}
    (v,k) \coloneqq \dfrac{(4v^{2}-1)(4v^{2}-9) \cdots (4v^{2} - (2k-1)^{2})}{2^{2k} k!}.
    \label{eq:vk_def}
\end{equation}
See Ref.~\cite{vanlessenStrongAsymptoticsRecurrence2003} for expressions in the other quadrants.

We then define
\begin{equation}
    P^{(1)}(z) \coloneqq \Psi_{\rho/2}(n f_{n}(z)), \quad \text{for } z \in U_{\delta_{n}} \setminus f_{n}^{-1}(\Gamma),
    \label{eq:P1_def_origin}
\end{equation}
where $f_{n}(z)$ is a biholomorphic map to be constructed in the next section. 

\subsubsection{Construction of biholomorphic map}
Following Ref.~\cite{kuijlaarsUniversalityEigenvalueCorrelations2003}, we define the map $f_{n} : U_{\delta_{n}} \to f_{n}(U_{\delta_{n}})$ by
\begin{equation}
    f_{n}(z) \coloneqq \begin{cases}
        i \phi_{n}(z) - i\phi_{n,+}(0), & \text{if } \Im{z} > 0,\\
        -i \phi_{n}(z) - i\phi_{n,+}(0), & \text{if } \Im{z} < 0,
    \end{cases}
    \label{eq:fn_zero_def}
\end{equation}
where $\phi_{n,+}(0) = i \pi/2$ for an even weight function. From \cref{eq:psi_cont_jump,eq:phi_n_def} we know that $\phi_{n}(z)$ is analytic in $\mathbb{C}_{\pm}$ (within its domain) and flips sign across $(-1,1)$, which implies that $f_{n}(z)$ is analytic in a neighborhood of $z = 0$. On the real line we have
\begin{equation}
    f_{n}(x) = \pi \int_{0}^{x} \psi_{n}(s) \diff s, \quad \text{for } x \in (-\delta_{n}, \delta_{n}),
\end{equation}
which implies that $f_{n}^{\prime}(0) = \pi \psi_{n}(0) > 0$. The behavior of $f_{n}(z)$ near the origin is then given by
\begin{equation}
    f_{n}(z) = \pi \psi_{n}(0) z + \mathcal{O}(z^{3}), \quad \text{as } z \to 0,
\end{equation}
where $f_{n}^{\prime\prime}(0) = 0$ follows from the fact that $\psi_{n}(x)$ is even. The big-$\mathcal{O}$ hides potential $n$-dependence, which we will need to be more careful about in order to guarantee that $f_{n}(z)$ is actually biholomorphic in a neighborhood of the origin.

\begin{proposition}
    There exists a $\gamma > 0$ such that for $\delta_{n} = \gamma / \beta_{n}$ and sufficiently large $n$, $f_{n} : U_{\delta_{n}} \to f_{n}(U_{\delta_{n}})$ satisfies the following properties.
    \begin{enumerate}[label=(\alph*)]
        \item $f_{n} : U_{\delta_{n}} \to f_{n}(U_{\delta_{n}})$ is biholomorphic.
        \item $f_{n}(U_{\delta_{n}} \cap \mathbb{R}) = f_{n}(U_{\delta_{n}}) \cap \mathbb{R}$, $f_{n}(U_{\delta_{n}} \cap \mathbb{C}_{\pm}) = f_{n}(U_{\delta_{n}}) \cap \mathbb{C}_{\pm}$.
    \end{enumerate}
    \label{lem:fn_origin}
\end{proposition}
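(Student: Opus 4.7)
My plan is to establish, after choosing $\gamma$ to be a sufficiently small (but $n$-independent) positive constant, the uniform estimate $|f_n'(z)/f_n'(0) - 1| < 1/2$ on $U_{\delta_n}$ for all sufficiently large $n$. This immediately implies $f_n' \neq 0$ on $U_{\delta_n}$, so $f_n$ is locally biholomorphic, and the bound $|f_n(z_2) - f_n(z_1)| \geq (1/2)|f_n'(0)||z_2 - z_1|$ obtained by integrating $f_n'$ along the straight segment from $z_1$ to $z_2$ gives injectivity; together these yield (a) via the inverse function theorem.

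The first task is to represent $f_n$ as a single analytic function on $U_{\delta_n}$. Using the analytic continuation $\hat{\psi}_n(z) = r(z) h_n(z)/(2\pi i)$ from \cref{eq:psi_hat_def}, one can verify by matching boundary values that $f_n$ extends to a single analytic function on $U_{\delta_n}$ given by $f_n(z) = (1/2)\int_0^z h_n(y)\sqrt{1-y^2}\,dy$, with the principal branch of $\sqrt{1-z^2}$. In particular $f_n'(z) = h_n(z)\sqrt{1-z^2}/2$, and by \cref{lem:hn_positive_x_g1}, $f_n'(0) = h_n(0)/2 \geq h_0/2 > 0$ for large $n$. Factoring $f_n'(z)/f_n'(0) = [h_n(z)/h_n(0)]\sqrt{1-z^2}$, the $\sqrt{1-z^2}$ factor trivially contributes $O(\gamma^2/\beta_n^2)$, and the bulk of the work reduces to showing $|h_n(z)/h_n(0) - 1| < 1/4$ uniformly on $U_{\delta_n}$.

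To obtain the latter bound, I would exploit evenness of $h_n$---which gives $h_n(z) - h_n(0) = z^2 \tilde K_n(z)$ with $\tilde K_n$ analytic---together with the contour integral \cref{eq:hn_def}. Choosing a fixed contour $\Gamma$ that encloses $[-1,1]$ and lies inside $C_\theta \cup U_{\gamma/\beta_n}$ (such contours exist because the analyticity domain contains a fixed neighborhood of $[-1,1]\setminus\{0\}$ joined to a small disk at the origin), and using \cref{lem:complex_VSF_to_freud} to ensure $V_n'$ is uniformly bounded on $\Gamma$ as $n\to\infty$, Cauchy's formula yields an explicit integral expression for $\tilde K_n(z)$. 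Comparing it to $h_n(0) = (2/\pi)\int_0^1 V_n'(s)/s\cdot ds/\sqrt{1-s^2}$ then produces $|z^2 \tilde K_n(z)|/h_n(0) \leq C\gamma^2$ for a constant $C$ independent of $n$ and $z \in U_{\delta_n}$, after which choosing $\gamma$ small finishes the estimate. Property (b) follows quickly: $f_n|_\mathbb{R}(x) = \pi\int_0^x \psi_n(s)\,ds$ is real, the real-valued Taylor coefficients at $0$ yield the Schwarz symmetry $f_n(\bar z) = \overline{f_n(z)}$, and combined with injectivity and $f_n'(0)>0$ this gives preservation of the upper/lower half planes.

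The hard part is the bound $|\tilde K_n(z)| \leq C h_n(0)/\gamma^2$. This is delicate in the marginal case $p=1$, where $h_n(0) \sim \log n$ but naive Cauchy estimates on the shrinking disk $U_{\gamma/\beta_n}$ would produce polynomial-in-$n$ bounds on $\tilde K_n$, because they do not exploit the structural cancellations present in the integral representation of $h_n$. Concretely, the same fact that $V_n'(s)/s$ integrates to only $\log n$ on $(0,1)$, despite peaking at $V_n''(0) = \beta_n^2 Q''(0)/n$ near $s=0$, is what keeps both $h_n(0)$ and $\tilde K_n(z)$ small relative to $V_n''(0)$; making this cancellation quantitative and uniform---by splitting the integral defining $\tilde K_n(z)$ into a Taylor-expandable piece on $|s| \leq \gamma'/\beta_n$ and a remainder on which $V_n'$ is well approximated by the corresponding Freud derivative---is the main technical work.
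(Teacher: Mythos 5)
Your high-level strategy — reduce (a) to the bound $|f_n'(z)/f_n'(0)-1|<1/2$ on $U_{\delta_n}$ and then to $|h_n(z)/h_n(0)-1|<1/4$ — is the right framing and essentially matches what the paper does; the paper proves the displacement bound $|f_n'(z)-f_n'(0)| \le f_n'(0)/2$ by applying Cauchy's formula for derivatives to $f_n$ on concentric circles of radius $r_0=r_1/N$, both $\Theta(1/\beta_n)$, bounding $\sup_{|u|=r_1}|f_n(u)|$ via the separately-established fact (\cref{lem:complex_hnz_uniform_lower}) that $h_n(z)=h_n(0)[1+o(1)]$ uniformly on the shrinking disk, and then letting $N$ (not $n$) become large. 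Your treatment of (b) via $f_n|_{\mathbb{R}}$ real, Schwarz reflection, injectivity, and $f_n'(0)>0$ is sound.

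The gap is in the step where you try to control $\tilde K_n(z) = (h_n(z)-h_n(0))/z^2$ by a Cauchy-type integral over ``a fixed contour $\Gamma$ that encloses $[-1,1]$ and lies inside $C_\theta \cup U_{\gamma/\beta_n}$.'' Such a contour does not exist. The domain of analyticity of $V_n(z)=Q(\beta_n z)/n$ in the $z$-variable is $C_\theta \cup \{|z|<\gamma/\beta_n\}$: the cone $C_\theta$ is $n$-independent but pinches to zero width as $z\to 0$ (for $z$ near the real axis it has width $\sim |z|\tan\theta$), and the disk shrinks with $n$. Any closed contour enclosing $[-1,1]$ must cross the imaginary axis, and the only way to do so inside the domain is through the shrinking disk $U_{\gamma/\beta_n}$ (there is no ``fixed neighborhood of $[-1,1]\setminus\{0\}$'' — $C_\theta$ is not a tube of uniform width). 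Consequently the contour must depend on $n$, which is exactly why the paper builds the $n$-dependent bowtie contour $\Gamma_n$ in \cref{fig:bowtie_sketch} and proves the needed uniformity statement as \cref{lem:complex_hnz_uniform_lower} with nontrivial estimates ($J_1,J_2$ splitting, $\lambda^2/(u^2-\lambda^2)$ control, and so on), especially in the marginal case $p=1$ where $h_n(0)\sim\log n$. Your own final paragraph acknowledges that the estimate ``is the main technical work'' and is not carried out; that unfinished step is precisely the content of \cref{lem:complex_hnz_uniform_lower}, and without it (or an alternative replacement for the impossible fixed-contour argument) the proof is incomplete.
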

\begin{proof}
    Property (b) will follow as in \cref{prop:fn_endpoint}. Regarding property (a), we have already established that $f_{n}$ is analytic in a neighborhood of $z = 0$, so it remains to show that it is invertible. To that end, for $|z|\leq r_{0}$ we use Cauchy's theorem for derivatives on a circle of radius $r_{1} > r_{0}$ to give
    \begin{align}
        \left|f_{n}^{\prime\prime}(z)\right| &= \left|\dfrac{2}{2\pi i} \oint_{|u|=r_{1}} \dfrac{f_{n}(u)}{(u-z)^{3}} \diff u \right|,\\
        &\leq \dfrac{2 r_{1}}{(r_{1} - r_{0})^{3}} \sup_{|u|=r_{1}} |f_{n}(u)|.\label{eq:f2_deriv_bound}
    \end{align}
    We are implicitly taking $r_{0}$ and $r_{1}$ to scale like $\mathcal{O}(1/\beta_{n})$, but we will not explicitly denote their $n$ dependence. Now, from \cref{eq:fn_zero_def,eq:phi_n_def,eq:psi_hat_def}, we see that
    \begin{equation}
        \sup_{|u|=r_{1}} |f_{n}(u)| \leq \dfrac{r_{1}}{2} \sup_{|u|=r_{1}} \left| r(u) h_{n}(u) \right|.
    \end{equation}
    We show in \cref{lem:complex_hnz_uniform_lower} that, for sufficiently large $n$, we have $h_{n}(z) = h_{n}(0)[1 + o(1)]$ uniformly for $z$ in a sufficiently small disk of radius $\mathcal{O}(1/\beta_{n})$ centered at the origin, where the $o(1)$ refers to scaling with $n$. Combined with the fact that $r(u) = (u+1)^{1/2}(u-1)^{1/2}$ is bounded near zero, for sufficiently large $n$ we have
    \begin{equation}
        \sup_{|u|=r_{1}} |f_{n}(u)| \leq C r_{1} h_{n}(0).
    \end{equation}
    for some constant $C > 1/2$ (independent of $n$). Inserting this into \cref{eq:f2_deriv_bound} gives
    \begin{equation}
        \left|f_{n}^{\prime\prime}(z)\right| \leq 2 C \dfrac{r_{1}^{2}}{(r_{1} - r_{0})^{3}} h_{n}(0), \quad \text{for } |z| \leq r_{0} < r_{1}.
    \end{equation}
    Again for $|z| \leq r_{0}$, we then have
    \begin{equation}
        \left|f_{n}^{\prime}(z) - f_{n}^{\prime}(0)\right| = \left| \int_{0}^{z} f_{n}^{\prime\prime}(s) \diff s \right| \leq 2 C \dfrac{r_{0} r_{1}^{2}}{(r_{1} - r_{0})^{3}} h_{n}(0).
    \end{equation}
    We then set $r_{0} = r_{1} / N$ for some large $N > 1$, and use $f_{n}^{\prime}(0) = \pi \psi_{n}(0) = h_{n}(0) / 2$, so that
    \begin{equation}
        \left|f_{n}^{\prime}(z) - f_{n}^{\prime}(0)\right| \leq 4 C \dfrac{N^{2}}{(N-1)^{3}} f_{n}^{\prime}(0).
    \end{equation}
    By taking $N$ sufficiently large (but independent of $n$), we can make the RHS less than, say, $f_{n}^{\prime}(0) / 2$, so we have a uniform (in $z$) nonzero lower bound on $|f_{n}^{\prime}(z)|$ in the disk of radius $r_{0}$. We can also make this uniform in $n$ by using \cref{lem:hn0_scaling_sup} to establish that $f_{n}^{\prime}(0) = h_{n}(0)/2$ is $\Omega(1)$ as $n \to \infty$. This is enough to conclude that $f_{n}(z)$ is invertible and hence biholomorphic in a sufficiently small $\mathcal{O}(1/\beta_{n})$ neighborhood of the origin.
\end{proof}

\subsubsection{Satisfying the matching condition}
In this final step, we construct the invertible analytic matrix-valued function $E(z)$ to ensure that the matching function $(P_{c})$ is satisfied. Following Ref.~\cite{kuijlaarsUniversalityEigenvalueCorrelations2003}, we define $E_{n}(z)$ by
\begin{equation}
    E_{n}(z) \coloneqq E(z) e^{n \phi_{n,+}(0) \sigma_{3}} e^{-\frac{\pi i}{4} \sigma_{3}} \dfrac{1}{\sqrt{2}} \begin{pmatrix} 1 & i \\ i & 1 \end{pmatrix},
    \label{eq:En_def_origin}
\end{equation}
where the matrix valued function $E$ is defined in different regions of the complex plane by
\begin{equation}
    E(z) \coloneqq \begin{dcases} N(z) W(z)^{\sigma_{3}} e^{\frac{1}{4} \rho \pi i \sigma_{3}}, & \text{for } z \in f_{n}^{-1}(\text{I} \cup \text{II}),\\
    N(z) W(z)^{\sigma_{3}} e^{-\frac{1}{4} \rho \pi i \sigma_{3}}, & \text{for } z \in f_{n}^{-1}(\text{III} \cup \text{IV}),\\
    N(z) W(z)^{\sigma_{3}} \begin{pmatrix} 0 & 1 \\ -1 & 0 \end{pmatrix} e^{-\frac{1}{4} \rho \pi i \sigma_{3}}, & \text{for } z \in f_{n}^{-1}(\text{V} \cup \text{VI}),\\
    N(z) W(z)^{\sigma_{3}} \begin{pmatrix} 0 & 1 \\ -1 & 0 \end{pmatrix} e^{\frac{1}{4} \rho \pi i \sigma_{3}}, & \text{for } z \in f_{n}^{-1}(\text{VII} \cup \text{VIII}).
    \end{dcases}
\end{equation}
This ends the construction of the parametrix $P$ near $z = 0$.

\subsubsection{Summary of the local analysis near $z=0$}
We have defined the contour $\Sigma$ in a shrinking disk $U_{\delta_{n}}$ near $z=0$ as the inverse image of the contour $\Gamma_{\Psi}$ under the map $f_{n}$ defined in \cref{eq:fn_zero_def}. We have then found a solution
\begin{equation}
    P(z) = E_{n}(z) \Psi_{\rho/2}(n f_{n}(z)) W(z)^{-\sigma_{3}} e^{-n \phi_{n}(z) \sigma_{3}}, \quad \text{for } z \in \Sigma \cap U_{\delta_{n}}
    \label{eq:P_def_origin}
\end{equation}
to the RHP for $P$, where the matrix-valued function $E_{n}$ is defined in \cref{eq:En_def_origin}, the matrix-valued function $\Psi_{\rho/2}$ is defined in \cref{eq:Psi_rho_def}, and $W(z)$ is defined in \cref{eq:W_def}. Furthermore, following Ref.~\cite{vanlessenStrongAsymptoticsRecurrence2003}, on the boundary $\partial U_{\delta_{n}}$, we have the asymptotic expansion
\begin{equation}
    P(z) N(z)^{-1} \sim \mathds{1} + \sum_{k=1}^{\infty} \dfrac{\Delta(k,n)(z)}{[n f_{n}(z)]^{k}}, \quad \text{as } n \to \infty, \quad \text{for } z \in \partial U_{\delta_{n}},
    \label{eq:DeltaR_expansion_zero}
\end{equation}
where the coefficient matrix $\Delta(k,n)(z)$ is determined by \cref{eq:P_def_origin,eq:P1_def_origin,eq:Psi_rho_asymptotic,eq:fn_zero_def,eq:En_def_origin}, and given explicitly by
\begin{equation}
    \Delta(k,n)(z) = \dfrac{i^{k}}{2^{k+1}} E(z) e^{n \phi_{n,+}(0) \sigma_{3}} \begin{pmatrix} (-1)^{k} s_{\rho/2, k} & -t_{\rho/2,k} \\ -(-1)^{k} t_{\rho/2,k} & s_{\rho/2,k} \end{pmatrix} e^{-n \phi_{n,+}(0) \sigma_{3}} E(z)^{-1},
\end{equation}
and the constants $s_{\rho/2,k}$ and $t_{\rho/2,k}$ are defined as
\begin{equation}
    s_{\rho/2,k} \coloneqq \left(\dfrac{1}{2}(\rho+1), k\right) + \left(\dfrac{1}{2}(\rho-1),k\right), \quad \text{and} \quad t_{\rho/2,k} \coloneqq \left(\dfrac{1}{2}(\rho+1), k\right) - \left(\dfrac{1}{2}(\rho-1),k\right),
\end{equation}
where the expression $(v,k)$ was defined in \cref{eq:vk_def}. For later, we note that
\begin{equation}
    s_{\rho/2, 1} = \dfrac{\rho^{2}}{2} \quad \text{and} \quad t_{\rho/2, 1} = \rho.
\end{equation}

We will soon discuss this in more detail, but let us note two things about $\Delta(k,n)(z)$: first, that it is analytic in a neighbourhood of $z = 0$, and second, that its only $n$ dependence comes from the factors of $\exp[\pm n \phi_{n,+}(0)] = \exp[\pm n \pi i/2]$, where the value of $\phi_{n,+}(0)$ is fixed by the even symmetry of the weight function. Since this factor is $\mathcal{O}(1)$, the asymptotic $n$-decay of $P(z) N(z)^{-1} - \mathds{1}$ will be determined solely by the factors of $[n f_{n}(z)]^{k}$ in \cref{eq:DeltaR_expansion_zero}.

Finally, in the case $\rho=0$, this whole construction reduces to $P(z) = N(z)$ exactly. This will mean that $R(z)$ has no jump near the origin if $\rho=0$.

\subsection{Riemann-Hilbert problem for $R$}
Having constructed local parametrices $P(z)$ in disks centered at $z=0$ and $z=\pm 1$, we define
\begin{equation}
    S_{\mathrm{par}}(z) \coloneqq \begin{cases}
        P(z), & \text{if } |z-1| < \delta_{2} \text{ or } |z+1| < \delta_{2} \text{ or } |z| < \delta_{n},\\
        N(z), & \text{otherwise}.
    \end{cases}
    \label{eq:Spar_def}
\end{equation}
Now we define
\begin{equation}
    R(z) \coloneqq S(z) S_{\mathrm{par}}^{-1}(z).
    \label{eq:R_def}
\end{equation}
One can verify that $R(z)$ is the unique solution of the following RHP, with jumps on the contour $\wt{\Sigma}$ sketched in \cref{fig:R_jump_contour}.
\begin{figure}[t]
\centering
\begin{tikzpicture}[scale=4]
    \begin{scope}[very thick,decoration={markings,mark=at position 0.5 with {\arrow{>}}}] 

        \draw[postaction={decorate}] (-1.5,0) -- (-1.25,0) node[midway,above] {$\wt{\Sigma}_{8}$};
        \draw[postaction={decorate}] (1.25,0) -- (1.5,0) node[midway,above] {$\wt{\Sigma}_{9}$};
        
        \draw[postaction={decorate}] (-0.85,0.2) to[out=0,in=165] (-0.106,0.106);
        \draw[postaction={decorate}] (-0.85,-0.2) to[out=0,in=-165] (-0.106,-0.106);

        \draw[postaction={decorate}] (0.106,0.106) to[out=15,in=180] (0.85,0.2);
        \draw[postaction={decorate}] (0.106,-0.106) to[out=-15,in=-180] (0.85,-0.2);

        \node[] at (0.5, 0.3) {$\wt{\Sigma}_{1}$};
        \node[] at (0.5, -0.075) {$\wt{\Sigma}_{2}$};
        \node[] at (-0.5, 0.3) {$\wt{\Sigma}_{3}$};
        \node[] at (-0.5, -0.075) {$\wt{\Sigma}_{4}$};

        \node[] at (0, 0.25) {$\wt{\Sigma}_{5}$};
        \node[] at (1, 0.35) {$\wt{\Sigma}_{6}$};
        \node[] at (-1, 0.35) {$\wt{\Sigma}_{7}$};
        
        \node[] at (-1,0) {$-1$};
        \node[] at (1,0) {$1$};
        \node[] at (0,0) {$0$};
    \end{scope}
    \draw[very thick,decoration={markings, mark=at position 0.25 with {\arrow{<}}},postaction={decorate}] (-1,0) circle [radius=0.25];
    \draw[very thick,decoration={markings, mark=at position 0.25 with {\arrow{<}}},postaction={decorate}] (1,0) circle [radius=0.25];
    \draw[very thick,decoration={markings, mark=at position 0.25 with {\arrow{<}}},postaction={decorate}] (0,0) circle [radius=0.15];
\end{tikzpicture}
\caption{The contour $\wt{\Sigma}_{R} = \bigcup_{j=1}^{9} \wt{\Sigma}_{j}$ on which $R(z)$ has jumps. The circles $\wt{\Sigma}_{j=5,6,7}$ are determined in the construction of the local parametrices in \cref{sec:endpoints,sec:origin}; note that $\wt{\Sigma}_{5}$ has radius $\mathcal{O}(1/\beta_{n})$, so as $n\to\infty$ it is much smaller than $\wt{\Sigma}_{6,7}$ which have $\mathcal{O}(1)$ radius. The precise shape of the lens boundaries $\wt{\Sigma}_{j=1..4}$ is given in \cref{fig:lens_height_function}.}
\label{fig:R_jump_contour}
\end{figure}
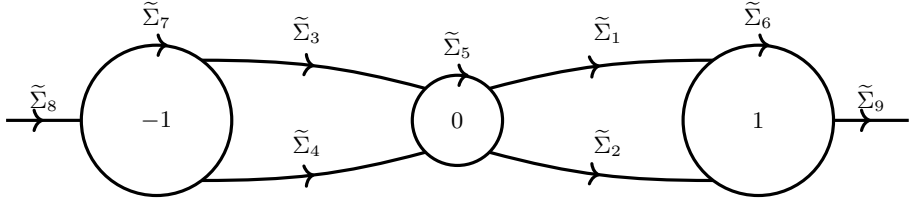
\paragraph*{Riemann-Hilbert problem for $R$}
\begin{itemize}
    \item[$(R_{a})$] $R : \mathbb{C} \setminus \wt{\Sigma}_{R} \to \mathbb{C}^{2 \times 2}$ is analytic. 
    \item[$(R_{b})$] $R$ has the following jump matrices.
        \begin{equation}
            R_{+}(s) = R_{-}(s) \begin{cases}
                N(s) v_{i}(s) N^{-1}(s), & \text{for } s \in \wt{\Sigma}_{i}, i = 1,2,3,4,8,9;\\
                P(s) N^{-1}(s), & \text{for } s \in \wt{\Sigma}_{5} \cup \wt{\Sigma}_{6} \cup \wt{\Sigma}_{7};\\
                \mathds{1}, & \text{otherwise.}
            \end{cases}
        \end{equation}
        Here $v_{i}(s)$ is the jump matrix for $S$ on the relevant contour (see \cref{eq:Sjump1,eq:Sjump2,eq:Sjump3}).
    \item[$(R_{c})$] $R(z) = \mathds{1} + \mathcal{O}(1/|z|)$ as $|z| \to \infty$.
\end{itemize}
In particular, the jump matrix $v_{R}(z)$ of $R$ is uniformly close to the identity, in the sense that, with $\Delta_{R}(z) \coloneqq v_{R}(z) - \mathds{1}$, as $n \to \infty$ we have $\norm{\Delta_{R}(z)}_{L_{\infty}(\wt{\Sigma})} \to 0$. In more detail, we have
\begin{equation}
    \begin{gathered}
    \norm{\Delta_{R}}_{L_{\infty}\left(\wt{\Sigma}_{j=1..4}\right)} = o\left(\dfrac{1}{\mathrm{poly}(\log{n})}\right),
    \quad
    \norm{\Delta_{R}}_{L_{2}\left(\wt{\Sigma}_{j=1..4}\right)} = o\left(\dfrac{1}{n^{1/2} \mathrm{poly}(\log{n})}\right),
    \quad
    \norm{\Delta_{R}}_{L_{1}\left(\wt{\Sigma}_{j=1..4}\right)} = o\left(\dfrac{1}{n \, \mathrm{poly}(\log{n})}\right),\\
    \begin{aligned}[c]
        &\norm{\Delta_{R}}_{L_{\infty}\left(\wt{\Sigma}_{5}\right)} = \mathcal{O}\left(\dfrac{\beta_{n}}{n h_{n}(0)}\right),\\
        &\norm{\Delta_{R}}_{L_{\infty}\left(\wt{\Sigma}_{6}\right)} = \mathcal{O}\left(\dfrac{1}{n}\right),\\
        &\norm{\Delta_{R}}_{L_{\infty}\left(\wt{\Sigma}_{7}\right)} = \mathcal{O}\left(\dfrac{1}{n}\right),\\
        &\norm{\Delta_{R}}_{L_{\infty}\left(\wt{\Sigma}_{8,9}\right)} = \mathcal{O}\left(e^{-c n}\right),
    \end{aligned}
    \qquad
    \begin{aligned}[c]
        &\norm{\Delta_{R}}_{L_{2}\left(\wt{\Sigma}_{5}\right)} = \mathcal{O}\left(\dfrac{\beta_{n}^{1/2}}{n h_{n}(0)}\right),\\
        &\norm{\Delta_{R}}_{L_{2}\left(\wt{\Sigma}_{6}\right)} = \mathcal{O}\left(\dfrac{1}{n}\right),\\
        &\norm{\Delta_{R}}_{L_{2}\left(\wt{\Sigma}_{7}\right)} = \mathcal{O}\left(\dfrac{1}{n}\right),\\
        &\norm{\Delta_{R}}_{L_{2}\left(\wt{\Sigma}_{8,9}\right)} = \mathcal{O}\left(e^{-c n}\right),
    \end{aligned}
    \end{gathered}
    \label{eq:DeltaR_bounds}
\end{equation}
where $c>0$ is a positive $\mathcal{O}(1)$ constant. When $\beta_{n}$ is sublinear in $n$, the bounds on $\wt{\Sigma}_{5}$ are polynomially small in $n$. In the marginal case $p=1$ where $\beta_{n} \sim \mathcal{O}(n / \log^{q}{n})$, the $L_{\infty}(\wt{\Sigma}_{5})$ bound decays polylogarithmically in $n$ provided $q > -1$, given the logarithmic divergence $h_{n}(0) = (\log{n})^{1 + o(1)}$ of the equilibrium measure proved in \cref{lem:hn0_scaling_sup}. The precise rate of convergence for $\wt{\Sigma}_{j=1..4}$ depends on $p$ and $q$; for $p>1$ the convergence is actually superpolynomial, while for $p=1$ the convergence may be superpolynomial or subpolynomial (but still super-polylogarithmic) depending on whether $q > 0$ or $q \leq 0$ (see the proof of \cref{lem:lens_boundaries} for details).

The bounds on $\wt{\Sigma}_{6,7,8,9}$ are standard; see \cite{deiftStrongAsymptoticsOrthogonal1999}. The $L_{\infty}$ bound on $\wt{\Sigma}_{5}$ follows from the fact that $\wt{\Sigma}_{5}$ has a radius of $\mathcal{O}(1/\beta_{n})$, combined with the asymptotic expansion \cref{eq:DeltaR_expansion_zero}, the definition \cref{eq:fn_zero_def} of $f_{n}(z)$, and the uniform lower bound for $h_{n}(z)$ given in \cref{lem:complex_hnz_uniform_lower}. The $L_{2}(\wt{\Sigma}_{5})$ bound follows from the $L_{\infty}(\wt{\Sigma}_{5})$ bound by Hölder's inequality. The $L_{\infty}$ bound on $\wt{\Sigma}_{j=1..4}$ follows from the fact that the matrix elements of $\Delta_{R}(z)$ are proportional to $e^{-2n \phi_{n}(z)} D(z)^{2}\omega(z)^{-1}$ on the lens boundaries, and $D(z)^{2}\omega(z)^{-1} = \mathcal{O}(|z|^{0})$ as $z\to 0$, while $|e^{-2n \phi_{n}(z)}| \xrightarrow{n\to\infty} 0$ super-polylogarithmically in $n$ by \cref{lem:lens_boundaries}. The $L_{1,2}(\wt{\Sigma}_{j=1..4})$ bounds follows from the second part of \cref{lem:lens_boundaries}.

\subsubsection{Solution for $R$}
We have constructed $R(z)$ such that its jumps are `small', in the sense that its jump matrix $v_{R}(z)$ is uniformly close to the identity as $n \to \infty$. This will allow us to write down its solution using standard techniques (see \cite[Appendix A]{deiftStrongAsymptoticsOrthogonal1999}). For $f \in L_{2}(\wt{\Sigma}_{R})$, we can define the Cauchy-Stieltjes transform
\begin{equation}
    (Cf)(z) \coloneqq \dfrac{1}{2\pi i} \int_{\wt{\Sigma}_{R}} \dfrac{f(s)}{s-z} \diff s, \quad z \in \mathbb{C} \setminus \wt{\Sigma}_{R}.
\end{equation}
We denote by $(C_{\pm}f)(z)$ the limiting functions as $z$ approaches $\wt{\Sigma}_{R}$ from the $\pm$ sides. Now with $\Delta_{R}(z) \coloneqq v_{R}(z) - \mathds{1}$, we define for matrix-valued functions $f : \wt{\Sigma}_{R} \to \mathbb{C}^{2 \times 2}$ the weighted Cauchy-Stieltjes transform
\begin{equation}
    C_{\Delta_{R}}f \coloneqq C_{-}(f \Delta_{R}).
\end{equation}

Now, by the discussion in the previous section, for $n \to \infty$ we have $\norm{\Delta_{R}}_{L_{\infty}} \to 0$, which by the boundedness of the Cauchy-Stieltjes transform implies that the $L_{2}$-operator norm $\norm{C_{\Delta_{R}}}_{L_{2}} \to 0$ as $n \to \infty$. Then for sufficiently large $n$, the operator $\mathrm{Id} - C_{\Delta_{R}}$ can be inverted by a Neumann series, and so we can employ the standard solution~\cite{deiftStrongAsymptoticsOrthogonal1999}
\begin{equation}
    R(z) = \mathds{1} + C\left(\Delta_{R} + \mu_{R} \Delta_{R}\right)(z), \quad \text{where } \mu_{R} \coloneqq \left(\mathrm{Id} - C_{\Delta_{R}}\right)^{-1} C_{-} \Delta_{R},
    \label{eq:R_exact_sol}
\end{equation}
Expanding $R(z) = \mathds{1} + R_{1}/z + \cdots$, this formula gives
\begin{equation}
    R_{1} = -\dfrac{1}{2\pi i} \int_{\wt{\Sigma}_{R}} \left[\Delta_{R}(y) + \mu_{R}(y) \Delta_{R}(y)\right] \diff y.
\end{equation}
From the definition of $\mu_{R}$, we have $\norm{\mu_{R}}_{L_{2}} \leq c \norm{\Delta_{R}}_{L_{2}}$ for some $c > 0$ by the boundedness of the Cauchy-Stieltjes transform, and hence by Cauchy-Schwarz we have
\begin{equation}
    R_{1} = -\dfrac{1}{2\pi i} \int_{\wt{\Sigma}_{R}} \Delta_{R}(y) \diff y + \mathcal{O}\left(\norm{\Delta_{R}}_{L_{2}(\wt{\Sigma}_{R})}^{2}\right).
    \label{eq:R1_integral}
\end{equation}

\subsubsection{Error bounds for $R$}
\label{sec:R_error}
By construction, we have $R(z) \to \mathds{1}$ as $n \to \infty$. In this section we give the rate of this convergence, which follows from combining \cref{eq:R_exact_sol} with \cref{eq:DeltaR_bounds}. In order to deal with points near the contour $\wt{\Sigma}$, one can employ a contour deformation argument as in the proof of \cite[Corollary 7.9]{deiftStrongAsymptoticsOrthogonal1999}. The sitation is most delicate for $z$ near the origin. For $|z| \leq \mathcal{O}(1/\beta_{n})$, we have the bound
\begin{equation}
    \norm{R(z) - \mathds{1}} \leq \mathcal{O}\left(\dfrac{\beta_{n}}{n h_{n}(0)}\right), \mathclap{\hspace{15em}\text{uniformly for } |z| \leq \mathcal{O}(1/\beta_{n}),}
    \label{eq:R_error_bound}
\end{equation}
with the dominant contribution coming from integrating over the contour $\wt{\Sigma}_{5}$. Again, this may decay only polylogarithmically in the worst case $p=1$, $q>-1$. For larger $|z|$, similar arguments using \cref{eq:DeltaR_bounds} lead to
\begin{equation}
    \norm{R(z) - \mathds{1}} \leq \mathcal{O}\left(\dfrac{1}{\beta_{n}|z|} \dfrac{\beta_{n}}{n h_{n}(0)}\right) + \mathcal{O}\left(\dfrac{1}{n}\right), \mathclap{\hspace{10em}\text{for } |z| \geq \Omega(1/\beta_{n}),}
    \label{eq:R_error_bound_bulk}
\end{equation}
where the factor of $\mathcal{O}(1/\beta_{n}|z|)$ gets cut off at $\mathcal{O}(1)$ for $|z| \sim \mathcal{O}(1/\beta_{n})$, and the factor of $\mathcal{O}(1/n)$ comes from the contours $\wt{\Sigma}_{6,7}$. The proof of \cref{lem:lens_boundaries} shows that the contribution from the lens boundaries $\wt{\Sigma}_{1..4}$ is $o(1/n \, \mathrm{poly}(\log{n}))$ and so is subleading. 

One can similarly establish bounds on the derivatives: near the origin we have
\begin{equation}
    \norm{R^{(k)}(z)} \leq \mathcal{O}\left(\beta_{n}^{k}\dfrac{\beta_{n}}{n h_{n}(0)}\right), \mathclap{\hspace{15em}\text{uniformly for } |z| \leq \mathcal{O}(1/\beta_{n}),}
    \label{eq:R_deriv_error_bound}
\end{equation}
and for larger $z$ this gets replaced with
\begin{equation}
    \norm{R^{(k)}(z)} \leq \mathcal{O}\left(\dfrac{1}{|z|^{k}}\dfrac{1}{\beta_{n}|z|} \dfrac{\beta_{n}}{n h_{n}(0)}\right) + \mathcal{O}\left(\dfrac{1}{n}\right), \mathclap{\hspace{10em}\text{for } |z| \geq \Omega(1/\beta_{n}).}
    \label{eq:R_deriv_error_bound_bulk}
\end{equation}
\begin{remark}
\label{rem:R_deriv}
Note that these bounds imply that the derivatives of $R(z)$ could in principle grow with $n$, in contrast to the more standard situation where all derivatives of $R(z)$ are bounded~\cite{deiftStrongAsymptoticsOrthogonal1999,kuijlaarsUniversalityEigenvalueCorrelations2003}. This is a consequence of shrinking the contour $\wt{\Sigma}_{5}$ like $\mathcal{O}(1/\beta_{n})$, which was necessitated by our fairly weak analyticity assumptions on $Q$. One might worry that these growing derivatives could dominate asymptotics for the Christoffel-Darboux (CD) kernel, which involves taking derivatives of $Y(z)$ (see \cref{sec:polynomial_asymptotics}). However, the contributions from $R$ derivatives turn out to still be subleading relative to the derivatives of leading terms, since those derivatives themselves bring factors which grow with $n$. To illustrate this, a comparison one has to make for CD asymptotics in the bulk is between $\norm{R^{\prime}(x)}$ and $n \phi_{n,+}^{\prime}(x)$, the latter coming from differentiating $e^{-n \phi_{n,+}(x) \sigma_{3}}$. From \cref{eq:phi_n_origin} we have 
\begin{equation}
    n \phi_{n,+}^{\prime}(x) = -i \pi \, n \psi_{n}(x) = \dfrac{-i\sqrt{1-x^{2}}}{2} n h_{n}(x).
\end{equation}
In the bulk $\sqrt{1-x^{2}}$ is bounded away from 0, so one can then use the uniform lower bounds on $h_{n}(x)$ available from \cref{lem:hn_positive_x_g1,lem:complex_hnz_uniform_lower,rem:hnz_lower_bound} to see that $n \phi_{n,+}^{\prime}(x)$ dominates over $\norm{R^{\prime}(x)}$ as estimated from \cref{eq:R_deriv_error_bound_bulk}. Similar conclusions can be made for asymptotics in the Bessel and Airy regions near $z=0$ and $z=\pm 1$ respectively.
\end{remark}

\clearpage
\section{Extracting the recurrence coefficients: proof of Theorem 1}
\label{sec:recurrence_proof}
\subsection{Reversing the transformations}
We have made a series of transformations from the original Riemann-Hilbert problem:
\begin{equation*}
    Y \mapsto U \mapsto T \mapsto S \mapsto R.
\end{equation*}
We will now reverse these transformations, with the goal of extracting the $z$-dependence of $Y(z)$ as $z \to \infty$.

Recall that $U(z) = \beta_{n}^{-(n+\rho/2)\sigma_{3}} Y(\beta_{n}z) \beta_{n}^{(\rho/2)\sigma_{3}}$. Making the expansions $U(z) = (\mathds{1} + U_{1}/z + \cdots) z^{n \sigma_{3}}$ and $Y(z) = (\mathds{1} + Y_{1}/z + \cdots) z^{n \sigma_{3}}$, linear independence gives
\begin{equation}
    Y_{1} = \beta_{n} \beta_{n}^{(n+\rho/2)\sigma_{3}} U_{1} \beta_{n}^{-(n+\rho/2)\sigma_{3}}.
\end{equation}
Next we recall that $U(z) = e^{\frac{1}{2} n l_{n} \sigma_{3}} T(z) e^{-\frac{1}{2} n l_{n}\sigma_{3}} e^{n g(z) \sigma_{3}}$. Expanding $T_{1} = \mathds{1} + T_{1}/z + \cdots$ and $e^{n g(z) \sigma_{3}} = (\mathds{1} + G_{1}/z + \cdots) z^{n \sigma_{3}}$, we get
\begin{equation}
    U_{1} = e^{\frac{1}{2} n l_{n} \sigma_{3}} T_{1} e^{-\frac{1}{2} n l_{n} \sigma_{3}} + G_{1},
\end{equation}
and hence
\begin{align}
    Y_{1} = &\beta_{n} \beta_{n}^{(n+\rho/2)\sigma_{3}} e^{\frac{1}{2}n l_{n} \sigma_{3}} T_{1} e^{-\frac{1}{2} n l_{n} \sigma_{3}} \beta_{n}^{-(n+\rho/2) \sigma_{3}}\\
    + &\beta_{n} \beta_{n}^{(n+\rho/2) \sigma_{3}} G_{1} \beta_{n}^{-(n+\rho/2) \sigma_{3}}.\nonumber
\end{align}
Note that, since $G_{1}$ came from the expansion of $e^{n g(z) \sigma_{3}}$, which is a diagonal matrix, $G_{1}$ itself is also diagonal, and hence will not contribute to the off-diagonal elements of $Y_{1}$, which are the ones that matter for the recurrence coefficients.

For the $T \to S$ transformation, we simply have $T_{1} = S_{1}$, since $S$ differs from $T$ only for small values of $z$. Finally, for the $S \to R$ transformation, we have $S(z) = R(z) N(z)$ for $z \to \infty$, and hence
\begin{equation}
    S_{1} = R_{1} + N_{1}.
\end{equation}
Substituting into the formula for $Y_{1}$ and using $b_{n}^{2} = (Y_{1})_{12} (Y_{1})_{21}$, we have
\begin{equation}
    b_{n}^{2} = \beta_{n}^{2} (R_{1} + N_{1})_{12} (R_{1} + N_{1})_{21}.
\end{equation}
Using the $z \to \infty$ asymptotics for $N(z)$, we get
\begin{equation}
    b_{n} = \dfrac{\beta_{n}}{2} \left(1 + 2i \left[2^{-\rho} (R_{1})_{21} - 2^{\rho} (R_{1})_{12} \right] + 4(R_{1})_{12} (R_{1})_{21} \right)^{1/2}.
\end{equation}
Note that the leading order behaviour is given by $b_{n} \simeq \beta_{n} / 2$, since $R_{1}$ is $o(1)$ as $n \to \infty$. From the solution for $R(z)$, we have
\begin{equation}
    2^{-\rho} (R_{1})_{21} - 2^{\rho} (R_{1})_{12} = \dfrac{1}{2\pi i} \int_{\wt{\Sigma}_{R}} \left[ 2^{\rho} (\Delta_{R})_{12}(y) - 2^{-\rho} (\Delta_{R})_{21}(y) + \cdots\right] \diff y.
    \label{eq:bn_contour_integral}
\end{equation}
The $(R_{1})_{12}(R_{1})_{21}$ term in the formula for $b_{n}$ will be subleading relative to this term because it is quadratic in $R_{1}$.

\subsection{Recurrence coefficients from residue calculus}
Next we use the asymptotic expansions of $\Delta_{R}$ constructed in the various local parametrices to evaluate the RHS of \cref{eq:bn_contour_integral} using the residue theorem.
\subsubsection{Integral over the circular contour $\wt{\Sigma}_{5}$ near $z = 0$}
Since $\Delta_{R}(z) = P(z)N(z)^{-1} - \mathds{1}$ on this contour, taking the first term of the asymptotic expansion \cref{eq:DeltaR_expansion_zero} and using $[a(z) + a(z)^{-1}]^{2} - [a(z) - a(z)^{-1}]^{2} = 4$, we get
\begin{equation}
    2^{\rho} (\Delta_{R})_{12}(y) - 2^{-\rho} (\Delta_{R})_{21}(y) = (-1)^{n+1} \dfrac{i \rho}{4n} \dfrac{1}{f_{n}(z)} \begin{dcases}
        \left(e^{\frac{i \pi \rho}{2}} \varphi(z)^{-\rho} + e^{-\frac{i \pi \rho}{2}} \varphi(z)^{\rho}\right), & z \in \mathbb{C}_{+},\\
        \left(e^{\frac{-i \pi \rho}{2}} \varphi(z)^{-\rho} + e^{\frac{i \pi \rho}{2}} \varphi(z)^{\rho}\right), & z \in \mathbb{C}_{-}.
    \end{dcases}
\end{equation}
Near $z = 0$, we have
\begin{equation}
    f_{n}(z) = \pi \psi_{n}(0)z + \mathcal{O}(z^{3}).
\end{equation}
However, $\varphi(z)$ has a branch cut along the real axis, with
\begin{equation}
    \varphi(z) = \begin{cases}
        i + \mathcal{O}(z), & z \in \mathbb{C}_{+},\\
        -i + \mathcal{O}(z), & z \in \mathbb{C}_{-}.
    \end{cases}
\end{equation}
This cancels out the differing phase factors in $\mathbb{C}_{\pm}$, so we get the Laurent expansion
\begin{equation}
    2^{\rho} (\Delta_{R})_{12}(y) - 2^{-\rho} (\Delta_{R})_{21}(y) = (-1)^{n+1} \dfrac{i\rho}{2n} \dfrac{1}{\pi \psi_{n}(0)} \dfrac{1}{z} + \mathcal{O}(z^{0}),
\end{equation}
valid in both $\mathbb{C}_{+}$ and $\mathbb{C}_{-}$. Then the contour integral gives
\begin{equation}
    \dfrac{1}{2\pi i} \int_{\wt{\Sigma}_{5}} \left[ 2^{\rho} (\Delta_{R})_{12}(y) - 2^{-\rho} (\Delta_{R})_{21}(y) + \cdots\right] \diff y = (-1)^{n} i\rho \dfrac{1}{n h_{n}(0)} + \cdots,
    \label{eq:bn_origin_contrib}
\end{equation}
where we note that $\wt{\Sigma}_{5}$ is defined with clockwise orientation, and we have used $\psi_{n}(0) = h_{n}(0) / 2\pi$. 

\subsubsection{Integral over the circular contour $\wt{\Sigma}_{6}$ near $z = 1$}
Since $\Delta_{R}(z) = P(z)N(z)^{-1} - \mathds{1}$ on this contour, taking the first term of the asymptotic expansion \cref{eq:P_Pinf_expansion}, we have
\begin{equation}
    2^{\rho} (\Delta_{R})_{12}(y) - 2^{-\rho} (\Delta_{R})_{21}(y) = -\dfrac{i}{12 n \phi_{n}(z)} \left[\varphi(z)^{\rho} - \varphi(z)^{-\rho} \right].
\end{equation}
Near $z = 1$, we have
\begin{equation}
    \varphi(z)^{\rho} - \varphi(z)^{-\rho} = 2 \sqrt{2} \rho (z-1)^{1/2} + \mathcal{O}[(z-1)^{3/2}].
\end{equation}
Since $\phi_{n}(z) \sim \mathcal{O}[(z-1)^{3/2}]$ near $z = 1$, the overall expression therefore has a simple pole at $z = 1$. Recalling that $\phi_{n}(z) = -(2/3) (z-1)^{3/2} \hat{\varphi}_{n}(z)$, with $\hat{\varphi}_{n}(z)$ analytic near $z = 1$, we get
\begin{equation}
    2^{\rho} (\Delta_{R})_{12}(y) - 2^{-\rho} (\Delta_{R})_{21}(y)  = \dfrac{i \rho}{2} \dfrac{1}{n h_{n}(1)} \dfrac{1}{z-1} + \mathcal{O}(z^{0}),
\end{equation}
where we have used $\hat{\varphi}_{n}(1) = h_{n}(1) / \sqrt{2}$. Therefore the contour integral gives
\begin{equation}
    \dfrac{1}{2\pi i} \int_{\wt{\Sigma}_{6}} \left[ 2^{\rho} (\Delta_{R})_{12}(y) - 2^{-\rho} (\Delta_{R})_{21}(y) + \cdots\right] \diff y = -\dfrac{i \rho}{2} \dfrac{1}{n h_{n}(1)} + \cdots,
    \label{eq:bn_z1_contrib}
\end{equation}
where again we take into account the clockwise orientation of $\wt{\Sigma}_{6}$. Note that \cref{lem:hn1_scaling} tells us that $h_{n}(1) \sim \mathcal{O}(1)$ as $n \to \infty$, so this contribution scales as $\mathcal{O}(1/n)$.

\subsubsection{Integral over the circular contour $\wt{\Sigma}_{7}$ near $z = -1$}
Similarly, taking into account the different expressions for $\wt{\Delta}_{1}(z)$ in $\mathbb{C}_{\pm}$ from \cref{eq:DeltaR_expansion_z=-1}, we get
\begin{equation}
    2^{\rho} (\Delta_{R})_{12}(y) - 2^{-\rho} (\Delta_{R})_{21}(y) = \dfrac{i}{12 n \phi_{n}(-z)} \left[e^{\mp i \pi \rho} \varphi(z)^{\rho} - e^{\pm i \pi \rho} \varphi(z)^{-\rho} \right].
\end{equation}
To find the residue, we exploit analyticity and evaluate this expression as $z \to x + i 0^{+}$ with $x < -1$. This gives
\begin{equation}
    \varphi(x)^{\rho} = e^{i \pi \rho} \left(1 + \sqrt{2} \rho (-x-1)^{1/2} + \cdots \right).
\end{equation}
Then we have
\begin{equation}
    e^{-i \pi \rho} \varphi(x)^{\rho} - e^{i \pi \rho} \varphi(x)^{-\rho} = 2 \sqrt{2} \rho (-x-1)^{1/2} + \cdots.
\end{equation}
Then again using $\phi_{n}(-z) = -(2/3)(-z-1)^{3/2} \hat{\varphi}_{n}(-z)$, we get
\begin{equation}
    2^{\rho} (\Delta_{R})_{12}(x) - 2^{-\rho} (\Delta_{R})_{21}(x)  = \dfrac{i \rho}{2} \dfrac{1}{n h_{n}(1)} \dfrac{1}{x+1} + \mathcal{O}(x^{0}),
\end{equation}
so the residue is $i \rho / 2 n h_{n}(1)$. Hence the contour integral gives
\begin{equation}
    \dfrac{1}{2\pi i} \int_{\wt{\Sigma}_{7}} \left[ 2^{\rho} (\Delta_{R})_{12}(y) - 2^{-\rho} (\Delta_{R})_{21}(y) + \cdots\right] \diff y = -\dfrac{i \rho}{2} \dfrac{1}{n h_{n}(1)} + \cdots,
    \label{eq:bn_zm1_contrib}
\end{equation}
so to leading order we get the same contribution as from the circle around $z = 1$.

\subsubsection{Contribution from the lens boundaries $\wt{\Sigma}_{j=1..4}$}
Let us focus on the lens boundary in the first quadrant, $\wt{\Sigma}_{1}$, with the analysis for the other lens boundaries following similarly. From \cref{eq:R_def,eq:Sjump1} we have
\begin{equation}
    \Delta_{R}(z) = e^{-2n \phi_{n}(z)} \omega(z)^{-1} N(z) \begin{pmatrix} 0 & 0 \\ 1 & 0 \end{pmatrix} N(z)^{-1}.
    \label{eq:DeltaR_lens_boundaries}
\end{equation}
By using the solution \cref{eq:Pinf_sol} for $N(z)$, one can check that the factor following $e^{-2n \phi_{n}(z)}$ is $\mathcal{O}(|z|^{0})$ as $z \to \infty$. Then we have
\begin{equation}
    \left| \int_{\wt{\Sigma}_{1}} \Delta_{R}(z) \diff z \right| \leq \mathcal{O}\left(\left| \int_{\wt{\Sigma}_{1}}  |e^{-2n \phi_{n}(z)}| \diff z\right|\right)
\end{equation}
In \cref{lem:lens_boundaries} we show that as $n\to\infty$ the RHS is $o(1/n \, \mathrm{poly}(\log{n}))$, so that the contribution from the lens boundary is subleading compared with the contributions from the local parametrices at $z=0$ and $z=\pm 1$ (c.f.~\cref{eq:bn_origin_contrib,eq:bn_z1_contrib,eq:bn_zm1_contrib}).

\subsubsection{Combining the contributions}
\label{sec:bn_combining}
With the dominant contributions from the circular contours from the local parametrices, we get
\begin{equation}
    b_{n} = \dfrac{\beta_{n}}{2} \left(1 + 2 \rho \dfrac{1}{n h_{n}(1)} - (-1)^{n} 2\rho \dfrac{1}{n h_{n}(0)} +  \cdots \right)^{1/2}.
\end{equation}
Expanding the square root gives
\begin{equation}
    \boxed{b_{n} = \dfrac{\beta_{n}}{2} \left(1 + \rho \left[\dfrac{1}{h_{n}(1)} - (-1)^{n} \dfrac{1}{h_{n}(0)}\right] \dfrac{1}{n} + \cdots \right),}
\end{equation}
which concludes the proof of \cref{thm:recurrence_theorem}. The leading error term encoded in the dots depends on the growth exponents $p$ and $q$ of the potential $Q(x)$. For $p>1$, the leading error is $\wt{\mathcal{O}}(1/n^{2-1/p})$, with equal order contributions from the $k=2$ term in the asymptotic expansion of the $z=0$ local parametrix, and from the $\mathcal{O}(\norm{\Delta_{R}}_{L_{2}(\Sigma_{5})}^{2})$ error in \cref{eq:R1_integral}. For $p=1$, this error reduces to $\mathcal{O}(1/n (\log{n})^{2+q+o(1)})$, which is subleading relative to the $1/n h_{n}(0) \sim 1/n (\log{n})^{1 + o(1)}$ term because we are assuming $q>-1$ if $p=1$.

As a check of the theorem, we can consider the generalized Hermite polynomials, which have weight function $\Phi(x)/2\pi = |x|^{\rho} \exp[-x^{2}]$. With $Q(x)=x^{2}$, we have $\beta_{n} = \sqrt{2n}$, and $h_{n}(0) = h_{n}(1) = 4$ (c.f.~\cref{lem:hn0_scaling,lem:hn1_scaling} with $p=2$). Substituting into \cref{thm:recurrence_theorem}, we get agreement to $\mathcal{O}(1/n)$ with the exact recurrence coefficients, which are known to be $b_{n} = (1/\sqrt{2}) \sqrt{n + \frac{1}{2}[1 - (-1)^{n}]\rho}$~\cite{mastroianniInterpolationProcessesBasic2008}. 

\subsection{Scaling of the leading coefficient $y_{n}$}
The leading coefficients $y_{n} > 0$ of the polynomial $p_{n}(x) = y_{n} x^{n} + \cdots$ can be extracted from the solution $Y(z)$ to the fundamental Riemann-Hilbert problem using \cref{eq:Y1_to_yn}, which requires the matrix element $(Y_{1})_{12}$. Reversing the transformations for $Y$ as before gives
\begin{equation}
    (Y_{1})_{12} = \beta_{n}^{2n + 1 + \rho} e^{n l_{n}} \left(i 2^{-(1+\rho)} + (R_{1})_{12}\right).
\end{equation}
The matrix element $(R_{1})_{12}$ can be derived from a residue calculation like that for $b_{n}$, and takes the value
\begin{equation}
(R_{1})_{12} = \dfrac{1}{n} \dfrac{i}{2^{1+\rho}}\left[\dfrac{\rho}{h_{n}(0)}\left(\dfrac{\rho}{2} - (-1)^{n} \right) + \dfrac{1}{12 h_{n}(1)}\left(2(2 + 6\rho + 3\rho^{2}) - 3 \dfrac{h_{n}^{\prime}(1)}{h_{n}(1)}\right)\right] + \cdots.
\end{equation}
The contribution to $(Y_{1})_{12}$ from $(R_{1})_{12}$ is therefore subleading by a factor of $1/n$ compared with the factor $i 2^{-(1+\rho)}$ coming from $(N_{1})_{12}$. Thus we have
\begin{equation}
    (Y_{1})_{12} = \dfrac{i}{2^{1+\rho}} \beta_{n}^{2n} \beta_{n}^{1+\rho} e^{n l_{n}} \left[1 + \mathcal{O}(1/n)\right].
    \label{eq:Y1_12_asymptotic}
\end{equation}
Then the leading coefficient $y_{n}$ scales like
\begin{equation}
    \boxed{y_{n} = \dfrac{2^{\rho/2}}{\sqrt{\pi}} \beta_{n}^{-n} \beta_{n}^{-(1+\rho)/2} e^{-\frac{1}{2} n l_{n}} \left[1 + \mathcal{O}(1/n)\right].}
    \label{eq:yn_asymptotic}
\end{equation}
By \cref{lem:lagrange_multiplier}, we know that the Lagrange multiplier $l_{n}$ will be $\mathcal{O}(1)$ and negative as $n \to \infty$. On the other hand, $\beta_{n} \sim n^{1/p}$, so the leading order scaling of $y_{n}$ will be dictated by the factor $\beta_{n}^{-n}$, such that
\begin{equation}
    y_{n} \sim e^{-\frac{1}{p} n \log{n} + \mathcal{O}(n)}.
\end{equation}
As a simple check of this scaling, we note that $y_{n}$ is related to the recurrence coefficients by
\begin{equation}
    y_{n} = (b_{1} b_{2} \cdots b_{n})^{-1}.
\end{equation}
Since the recurrence coefficients scale to leading order like $b_{n} \sim \beta_{n}$, this validates the factorial-like scaling $y_{n} \sim \beta_{n}^{-n}$. In Ref.~\cite{parkerUniversalOperatorGrowth2019}, by counting Dyck paths they give a lower bound for the moments
\begin{equation}
    \mu_{2n} \geq y_{n}^{-2},
\end{equation}
so our result gives
\begin{equation}
    \mu_{2n} \geq e^{\frac{2}{p} n \log{n} + \mathcal{O}(n)},
\end{equation}
so the moments grow at least as fast as a power of a factorial to leading order. Of course, this moment bound is immediate from just taking $b_{n} \sim n^{1/p}$, but our more precise estimate of $y_{n}$ will be relevant in the next section.

\clearpage
\section{Asymptotics of the orthogonal polynomials}
\label{sec:polynomial_asymptotics}
In this section we will use our constructed asymptotic solution of the Riemann-Hilbert problem for $Y(z)$ to obtain the $n \to \infty$ asymptotics of the orthogonal polynomials $p_{n}(z)$, as well as the associated Christoffel-Darboux (CD) kernel $K_{n}(x,y) = \sum_{m=0}^{n-1} p_{m}(x) p_{m}(y)$. From \cref{eq:fki_sol}, the polynomials $p_{n}(z)$ and $p_{n-1}(z)$ will be determined by the first column of $Y(z)$ via
\begin{align}
    p_{n}(z) &= y_{n} Y_{11}(z),\label{eq:pn_from_Y}\\
    p_{n-1}(z) &= \dfrac{-1}{2\pi i} \dfrac{1}{y_{n} b_{n}} Y_{21}(z),\label{eq:pnm1_from_Y}
\end{align}
where the recurrence coefficient $b_{n}$ and the leading coefficient $y_{n}>0$ of $p_{n}(x) = y_{n} x^{n} + \cdots$ are determined from $Y_{1}$ as in \cref{eq:Y1_to_bn,eq:Y1_to_yn}.

For the CD kernel, the Christoffel-Darboux formula gives
\begin{align}
    K_{n}(x,t) = b_{n} \dfrac{p_{n}(x) p_{n-1}(t) - p_{n-1}(x) p_{n}(t)}{x-t},
\end{align}
Again, we can evaluate this in terms of the matrix elements of the solution $Y$ to the Riemann-Hilbert problem:
\begin{equation}
    K_{n}(x,t) = \dfrac{-1}{2\pi i} \dfrac{Y_{11}(x)Y_{21}(t) - Y_{21}(x) Y_{11}(t)}{x-t}.
\end{equation}
We will mainly be interested in the diagonal case $t=x$, where we get
\begin{align}
    K_{n}(x,x) &= \dfrac{-1}{2\pi i} \left(Y_{11}^{\prime}(x) Y_{21}(x) - Y_{21}^{\prime}(x) Y_{11}(x)\right),\\
    &= \dfrac{1}{2\pi i} \det \begin{pmatrix} Y_{11}(x) & Y_{11}^{\prime}(x) \\ Y_{21}(x) & Y_{21}^{\prime}(x) \end{pmatrix}.\label{eq:cd_kernel_det}
\end{align}

\subsection{Behavior near the origin}
\label{sec:origin_unpacking}
Here we discuss the asymptotics near the origin $z=0$, where the scaling will be determined in terms of Bessel functions, by virtue of the local parametrix $\Psi_{\rho/2}$ constructed in \cref{sec:bessel_sol}. Following Ref.~\cite{kuijlaarsUniversalityEigenvalueCorrelations2003}, we reverse the transformations $Y \mapsto U \mapsto T \mapsto S \mapsto R$ for $z \to 0$ in the upper lens, in the region enclosed by the local parametrix near the origin. Combining \cref{eq:U_def,eq:T_def,eq:S_def,eq:P_def_origin,eq:Spar_def,eq:R_def}, this gives
\begin{align}
    Y(\beta_{n} z) &= \beta_{n}^{(n + \rho/2) \sigma_{3}} e^{(n l_{n}/2) \sigma_{3}} R(z) E_{n}(z) \Psi_{\rho/2}(n f_{n}(z)) z^{-(\rho/2) \sigma_{3}} e^{(i \pi \rho/2) \sigma_{3}} e^{-n \phi_{n}(z) \sigma_{3}} \label{eq:Y_unpacking_origin}\\
    &\times \begin{pmatrix} 1 & 0 \\ z^{-\rho} e^{-2n \phi_{n}(z)} & 1 \end{pmatrix} e^{-(n l_{n}/2) \sigma_{3}} e^{n g_{n}(z) \sigma_{3}} \beta_{n}^{-(\rho/2) \sigma_{3}}, \nonumber
\end{align}
where we are taking $z$ to be $\mathcal{O}(1/\beta_{n})$. The first column of $Y$ is then given by
\begin{equation}
    \begin{pmatrix} Y_{11}(\beta_{n} z) \\ Y_{21}(\beta_{n}z) \end{pmatrix} = (\beta_{n} z)^{-\rho/2} e^{n \left[g_{n}(z) - \phi_{n}(z) - (l_{n}/2)\right]} \beta_{n}^{(n+\rho/2)\sigma_{3}} e^{(n l_{n}/2) \sigma_{3}} R(z) E_{n}(z) \Psi_{\rho/2}(n f_{n}(z)) e^{(i \pi \rho/2) \sigma_{3}} \begin{pmatrix} 1 \\ 1 \end{pmatrix}.
\end{equation}
For $z$ in the sector $f_{n}^{-1}(I)$, we can combine \cref{eq:Psi_rho_def} with the Bessel function identities 9.1.3 and 9.1.4 of Ref.~\cite{abramowitzHandbookMathematicalFunctions1965} to get
\begin{equation}
    \Psi_{\rho/2}(n f_{n}(z)) e^{(i \pi \rho/2) \sigma_{3}} \begin{pmatrix} 1 \\ 1 \end{pmatrix} = e^{-i \pi /4} \sqrt{\pi} \left(n f_{n}(z)\right)^{1/2} \begin{pmatrix} J_{\frac{1}{2}(\rho+1)} \left(n f_{n}(z)\right) \\ J_{\frac{1}{2}(\rho-1)} \left(n f_{n}(z)\right) \end{pmatrix},
\end{equation}
where $J_{\alpha}$ is a Bessel function of the first kind. Sending $z \downarrow x \in (0, \delta_{n})$, \cref{lem:variational_cond,eq:g_diff} tell us that
\begin{equation}
    e^{n \left[g_{n,+}(x) - \phi_{n,+}(x) - (l_{n}/2)\right]} = e^{\frac{n}{2}\left[g_{n,+}(x) + g_{n,-}(x) - l_{n}\right]} = e^{\frac{n}{2} V_{n}(x)} = e^{\frac{1}{2} Q(\beta_{n}x)},
    \label{eq:potential_to_weight}
\end{equation}
where we used $2 \phi_{n,+}(x) = g_{n,+}(x) - g_{n,-}(x)$. We therefore get
\begin{equation}
    \begin{pmatrix} Y_{11}(\beta_{n} x) \\ Y_{21}(\beta_{n}x) \end{pmatrix} = e^{-i \pi /4} \sqrt{\pi} \beta_{n}^{-\rho/2} e^{\frac{1}{2}Q(\beta_{n}x)} \beta_{n}^{(n+\rho/2)\sigma_{3}} e^{(n l_{n}/2) \sigma_{3}} R(x) E_{n}(x) \left(n f_{n}(x)\right)^{1/2} x^{-\rho/2} \begin{pmatrix} J_{\frac{1}{2}(\rho+1)} \left(n f_{n}(x)\right) \\ J_{\frac{1}{2}(\rho-1)} \left(n f_{n}(x)\right) \end{pmatrix}.
    \label{eq:kernel_x_intermediate}
\end{equation}
Since by construction $R(x) = \mathds{1} + o(1)$, where the $o(1)$ refers to scaling with $n$, for large $n$ we can approximate $R(x) \approx \mathds{1}$. Then using the definition \cref{eq:En_def_origin} of $E_{n}(x)$, we can unpack this formula to get the explicit expressions (temporarily suppressing some function arguments to reduce notation)
\begin{align}
    Y_{11}(\beta_{n}x) \approx \dfrac{\pi e^{-i \pi/4}}{2^{3/2}} \sqrt{\dfrac{n f_{n}(x)}{w(\beta_{n}x)}} \beta_{n}^{n+\rho/2} e^{\frac{n l_{n}}{2}} 2^{-\rho/2} \Bigg[ &e^{\frac{i \pi}{4}(2n-\rho)} \varphi_{+}^{\rho/2}(a_{+}+a_{+}^{-1})\left(e^{-\frac{i \pi}{4}} J_{\frac{1}{2}(\rho+1)} + e^{\frac{i \pi}{4}} J_{\frac{1}{2}(\rho-1)}\right) \label{eq:Y12_origin} \\
    + &e^{-\frac{i \pi}{4}(2n-\rho)} \varphi_{+}^{-\rho/2}(a_{+}-a_{+}^{-1})\left(e^{\frac{i \pi}{4}} J_{\frac{1}{2}(\rho+1)} + e^{-\frac{i \pi}{4}} J_{\frac{1}{2}(\rho-1)}\right) \Bigg],\nonumber
\end{align}
\begin{align}
    Y_{21}(\beta_{n}x) \approx \dfrac{\pi e^{-i \pi/4}}{2^{3/2}} \sqrt{\dfrac{n f_{n}(x)}{w(\beta_{n}x)}} i \beta_{n}^{-(n+\rho/2)} e^{-\frac{n l_{n}}{2}} 2^{\rho/2} \Bigg[ &e^{\frac{-i \pi}{4}(2n-\rho)} \varphi_{+}^{-\rho/2}(a_{+}+a_{+}^{-1})\left(e^{\frac{i \pi}{4}} J_{\frac{1}{2}(\rho+1)} + e^{-\frac{i \pi}{4}} J_{\frac{1}{2}(\rho-1)}\right) \label{eq:Y21_origin} \\
    + &e^{\frac{i \pi}{4}(2n-\rho)} \varphi_{+}^{\rho/2}(a_{+}-a_{+}^{-1})\left(e^{-\frac{i \pi}{4}} J_{\frac{1}{2}(\rho+1)} + e^{\frac{i \pi}{4}} J_{\frac{1}{2}(\rho-1)}\right) \Bigg],\nonumber
\end{align}
with $w(x) = x^{\rho} e^{-Q(x)}$, and where the functions $a \equiv a(x)$ and $\varphi \equiv \varphi(x)$ are defined in \cref{eq:a_Pinf_def,eq:Pinf_varphi_def}. In this region their $+$ side expressions are $a_{+}(x) = e^{i \pi/4} (1-x)^{1/4} / (1+x)^{1/4}$ and $\varphi_{+}(x) = x + i \sqrt{1-x^{2}} = e^{i \arccos(x)}$. The Bessel functions $J_{\frac{1}{2}(\rho \pm 1)}$ should be evaluated with the argument $n f_{n}(x)$.

\subsubsection{Proof of \cref{lem:pn0_scaling}}
\label{sec:thm2_proof}
With these expressions for the first column of $Y$ near the origin, we can now get expressions for the orthogonal polynomials $p_{n}(x)$ and $p_{n-1}(x)$ using \cref{eq:pn_from_Y,eq:pnm1_from_Y}, as well as the expression \cref{eq:yn_asymptotic} for the leading coefficient $y_{n}$. For example, for $p_{n}$ we get
\begin{align}
    p_{n}(\beta_{n}x) \approx \dfrac{e^{-i \pi/4}}{2^{3/2}} \sqrt{\dfrac{n f_{n}(x)}{\beta_{n} w(\beta_{n}x)}}\Bigg[ &e^{\frac{i \pi}{4}(2n-\rho)} \varphi_{+}^{\rho/2}(a_{+}+a_{+}^{-1})\left(e^{-\frac{i \pi}{4}} J_{\frac{1}{2}(\rho+1)} + e^{\frac{i \pi}{4}} J_{\frac{1}{2}(\rho-1)}\right) \label{eq:pn_bessel} \\
    + &e^{-\frac{i \pi}{4}(2n-\rho)} \varphi_{+}^{-\rho/2}(a_{+}-a_{+}^{-1})\left(e^{\frac{i \pi}{4}} J_{\frac{1}{2}(\rho+1)} + e^{-\frac{i \pi}{4}} J_{\frac{1}{2}(\rho-1)}\right) \Bigg],\nonumber
\end{align}
where the arguments of $\varphi_{+} \equiv \varphi_{+}(x)$, $a_{+} \equiv a_{+}(x)$ and $J_{\frac{1}{2}(\rho \pm 1)} \equiv J_{\frac{1}{2}(\rho \pm 1)}(n f_{n}(x))$ are suppressed as before. For $\rho = 0$, this reduces to \cref{eq:pn_bulk_asymptotic} of the main text after using $n f_{n}(x) = \pi I_{n}(\beta_{n}x)$, with $I_{n}$ defined in \cref{eq:In_def}. By the discussion surrounding \cref{eq:R_error_bound}, this expression has a multiplicative error of $\mathcal{O}(\beta_{n} / n h_{n}(0))$.

Using $f_{n}(x) = \pi \int_{0}^{x} \psi_{n}(s) \diff s \approx \pi \psi_{n}(0) x$, we can take the limit $x \to 0^{+}$, giving
\begin{align}
    p_{n}(0) &\approx \cos\left(\dfrac{n \pi}{2}\right) \dfrac{2^{\frac{1}{2}-\frac{\rho}{2}}}{\Gamma\left[\frac{1+\rho}{2}\right]} e^{Q(0)/2} \dfrac{\left[\pi \sigma_{n}(0)\right]^{\rho/2}}{\beta_{n}^{1/2}},\\
    p_{n}^{\prime}(0) &\approx \sin\left(\dfrac{n \pi}{2}\right) \dfrac{2^{-\frac{1}{2}-\frac{\rho}{2}}}{\Gamma\left[\frac{3+\rho}{2}\right]} e^{Q(0)/2} \dfrac{\left[\pi \sigma_{n}(0)\right]^{1+\rho/2}}{\beta_{n}^{1/2}} \left(1 + \dfrac{(\rho+1)^{2}}{2\pi \beta_{n} \sigma_{n}(0)}\right),
\end{align}
where $\sigma_{n}(0) = n \psi_{n}(0) / \beta_{n}$ is the equilibrium density of the Coulomb gas at $\omega=0$. This proves \cref{lem:pn0_scaling} in the main text. As a sanity check, we can see that $p_{n}(0)$ vanishes for odd $n$ and $p_{n}^{\prime}(0)$ vanishes for even $n$. One can also check using \cref{eq:R_deriv_error_bound} that the contribution from the derivative $R^{\prime}(z)$ is subleading relative to the leading term by a factor of $\mathcal{O}(\beta_{n} / n h_{n}(0))$ (c.f.\ \cref{rem:R_deriv}), so the derivative $p_{n}^{\prime}(0)$ has the same multiplicative error as $p_{n}(0)$.

Similarly, using \cref{eq:pnm1_from_Y,eq:Y21_origin,eq:yn_asymptotic}, for $p_{n-1}$ we get
\begin{align}
    p_{n-1}(\beta_{n}x) \approx -\dfrac{e^{-i \pi/4}}{2^{3/2}} \dfrac{\beta_{n}}{2 b_{n}} \sqrt{\dfrac{n f_{n}(x)}{\beta_{n} w(\beta_{n}x)}}\Bigg[ &e^{-\frac{i \pi}{4}(2n-\rho)} \varphi_{+}^{-\rho/2}(a_{+}+a_{+}^{-1})\left(e^{\frac{i \pi}{4}} J_{\frac{1}{2}(\rho+1)} + e^{-\frac{i \pi}{4}} J_{\frac{1}{2}(\rho-1)}\right) \label{eq:pnm1_bessel} \\
    + &e^{\frac{i \pi}{4}(2n-\rho)} \varphi_{+}^{\rho/2}(a_{+}-a_{+}^{-1})\left(e^{-\frac{i \pi}{4}} J_{\frac{1}{2}(\rho+1)} + e^{\frac{i \pi}{4}} J_{\frac{1}{2}(\rho-1)}\right) \Bigg],\nonumber
\end{align}
with the same argument suppression as before. Note that \cref{thm:recurrence_theorem} tells us that the factor $\beta_{n} / 2 b_{n} = 1 + \mathcal{O}(1/n)$; for $\rho=0$, this equation reduces to \cref{eq:pnm1_bulk_asymptotic} of the main text after approximating $\beta_{n} / 2 b_{n} \approx 1$.

Keeping $\rho$ general, again taking the limit $x \to 0^{+}$ gives
\begin{align}
    p_{n-1}(0) &\approx \sin\left(\dfrac{n \pi}{2}\right)\dfrac{\beta_{n}}{2 b_{n}} \dfrac{2^{\frac{1}{2}-\frac{\rho}{2}}}{\Gamma\left[\frac{1+\rho}{2}\right]} e^{Q(0)/2} \dfrac{\left[\pi \sigma_{n}(0)\right]^{\rho/2}}{\beta_{n}^{1/2}},\\
    p_{n-1}^{\prime}(0) &\approx -\cos\left(\dfrac{n \pi}{2}\right)\dfrac{\beta_{n}}{2 b_{n}} \dfrac{2^{-\frac{1}{2}-\frac{\rho}{2}}}{\Gamma\left[\frac{3+\rho}{2}\right]} e^{Q(0)/2} \dfrac{\left[\pi \sigma_{n}(0)\right]^{1+\rho/2}}{\beta_{n}^{1/2}} \left(1 + \dfrac{\rho^{2}-1}{2\pi \beta_{n} \sigma_{n}(0)}\right),
\end{align}
For the CD kernel, the Christoffel-Darboux formula $K_{n}(0,0) = b_{n} \left(p_{n}^{\prime}(0) p_{n-1}(0) - p_{n-1}^{\prime}(0) p_{n}(0)\right)$ gives
\begin{equation}
    K_{n}(0,0) \approx \dfrac{1}{2^{1+\rho}\Gamma\left[\frac{1}{2}(1+\rho)\right]\Gamma\left[\frac{1}{2}(3+\rho)\right]} e^{Q(0)} \left[\pi \sigma_{n}(0)\right]^{1+\rho}. %
\end{equation}
This proves \cref{eq:exact_kn_formula} of the main text, with a multiplicative error term of $\mathcal{O}(\beta_{n} / n h_{n}(0))$ coming from neglecting the $R(0) - \mathds{1}$ and $R^{\prime}(0)$ contributions.

\subsection{Expressions for the spectral bootstrap}
\label{sec:spectral_bootstrap_derivation}
\subsubsection{Bessel bootstrap: behavior near the origin}
First let us derive \cref{eq:expmQ_finite} of the main text. Squaring \cref{eq:pn_bessel,eq:pnm1_bessel} and using the expressions $a_{+}(x) = e^{i \pi/4} (1-x)^{1/4} / (1+x)^{1/4}$ and $\varphi_{+}(x) = e^{i \arccos(x)}$, we get (after some algebraic simplifications)
\begin{align}
    p_{n-1}(\beta_{n} x)^{2} &+ p_{n}(\beta_{n}x)^{2} \approx \dfrac{n f_{n}(x)}{\beta_{n} w(\beta_{n}x)} \dfrac{1}{\sqrt{1-x^{2}}} \times \Bigg[ \left(J_{\frac{1}{2}(\rho-1)}^{2} + J_{\frac{1}{2}(\rho+1)}^{2}\right)(n f_{n}(x))\\
    &- (-1)^{n} x \left(\left(J_{\frac{1}{2}(\rho-1)}^{2} - J_{\frac{1}{2}(\rho+1)}^{2}\right)(n f_{n}(x)) \sin[\rho \arcsin{x}] + 2 \left(J_{\frac{1}{2}(\rho-1)} J_{\frac{1}{2}(\rho+1)}\right)(n f_{n}(x)) \cos[\rho \arcsin{x}]\right)\Bigg].\nonumber
\end{align}
Then we use the fact that $n f_{n}(x) = \pi I_{n}(\beta_{n} x)$ with $I_{n}$ defined in \cref{eq:In_def}. Rescaling $\beta_{n} x \mapsto \omega$, and using $w(x) = x^{\rho} \exp[-Q(x)]$, we get
\begin{align}
    e^{-Q(\omega)} &\approx \dfrac{1}{p_{n-1}(\omega)^{2} + p_{n}(\omega)^{2}} \dfrac{1}{\sqrt{\beta_{n}^{2} - \omega^{2}}} \dfrac{\pi I_{n}(\omega)}{\omega^{\rho}} \Bigg[\left(J_{\frac{1}{2}(\rho-1)}^{2} + J_{\frac{1}{2}(\rho+1)}^{2}\right)(\pi I_{n}(\omega)) \label{eq:expmQ_finite_full} \\
    - &(-1)^{n} \dfrac{\omega}{\beta_{n}} \left(\left(J_{\frac{1}{2}(\rho-1)}^{2} - J_{\frac{1}{2}(\rho+1)}^{2}\right)(\pi I_{n}(\omega)) \sin\left\{\rho \arcsin\left(\dfrac{\omega}{\beta_{n}}\right)\right\} + 2 \left(J_{\frac{1}{2}(\rho-1)} J_{\frac{1}{2}(\rho+1)}\right)(\pi I_{n}(\omega)) \cos\left\{\rho \arcsin\left(\dfrac{\omega}{\beta_{n}}\right)\right\}\right)\Bigg]. \nonumber
\end{align}
This simplifies to \cref{eq:expmQ_finite} upon dropping the term proportional to $\sin\{\rho \arcsin(\omega/\beta_{n})\}$, which for $\omega/\beta_{n} \ll 1$ is subleading by a factor of $\mathcal{O}(\beta_{n} \sigma_{n}(\omega))$ relative to the term proportional to $\cos\{\rho \arcsin(\omega/\beta_{n})\}$.

Now let us derive the expression \cref{eq:sigman_finite} involving the diagonal Christoffel-Darboux kernel. The general strategy will be to use the Christoffel-Darboux formula $K_{n}(\omega,\omega) = b_{n}\left(p_{n}^{\prime}(\omega) p_{n-1}(\omega) - p_{n-1}^{\prime}(\omega) p_{n}(\omega)\right)$, together with the expressions \cref{eq:pn_bessel,eq:pnm1_bessel} we derived for $p_{n}$ and $p_{n-1}$. After a lengthy but straightforward algebraic computation, we get the expression
\begin{align}
    \hspace{-3em} K_{n}(\omega,\omega) &\approx \dfrac{1}{4} \dfrac{\pi I_{n}(\omega)}{w(\omega)} \Bigg(\pi \sigma_{n}(\omega) \left[J_{\frac{1}{2}(\rho-1)}^{2} + J_{\frac{1}{2}(\rho+1)}^{2} - J_{\frac{1}{2}(\rho-3)} J_{\frac{1}{2}(\rho+1)} - J_{\frac{1}{2}(\rho-1)} J_{\frac{1}{2}(\rho+3)}\right](\pi I_{n}(\omega)) + \dfrac{\rho\left(J_{\frac{1}{2}(\rho-1)}^{2} + J_{\frac{1}{2}(\rho+1)}^{2}\right)(\pi I_{n}(\omega))}{\sqrt{\beta_{n}^{2} - \omega^{2}}} \nonumber  \\
    + &(-1)^{n} \dfrac{ \beta_{n}}{\beta_{n}^{2} - \omega^{2}} \left[\left(J_{\frac{1}{2}(\rho+1)}^{2} - J_{\frac{1}{2}(\rho-1)}^{2}\right)(\pi I_{n}(\omega)) \cos\left\{\rho \arcsin\left(\dfrac{\omega}{\beta_{n}}\right)\right\} + 2 \left(J_{\frac{1}{2}(\rho-1)} J_{\frac{1}{2}(\rho+1)}\right)(\pi I_{n}(\omega)) \sin\left\{\rho \arcsin\left(\dfrac{\omega}{\beta_{n}}\right)\right\}\right]\Bigg) \label{eq:sigman_finite_full},
\end{align}
where we used $\partial_{\omega}I_{n}(\omega) = \sigma_{n}(\omega)$. This expression reduces to \cref{eq:sigman_finite} after dropping all but the first term proportional to $\pi \sigma_{n}(\omega)$ in the large curly bracketed expression, since for $\omega /\beta_{n} \ll 1$ the other terms are subleading by a factor of either $\mathcal{O}(\beta_{n} \sigma_{n}(\omega))$ or $\mathcal{O}(\beta_{n}^{2} / \omega^{2})$.

\subsubsection{Bulk bootstrap}
Having already derived the spectral bootstrap equations near an algebraic divergence $\Phi(\omega) \sim |\omega|^{\rho}$, a quick way to arrive at the spectral bootstrap equations in the bulk is to simply send $\rho \to 0$ and then replace $\exp[-Q(\omega)] \mapsto \Phi(\omega)/2\pi$. One can check that properly doing the calculation by unpacking the expression for $Y(z)$ in the bulk gives the same expressions.

To derive \cref{eq:Phi_bulk} of the main text, we send $\rho \to 0$ in \cref{eq:expmQ_finite_full}; using $J_{\frac{-1}{2}}(x) = \sqrt{2/\pi x} \cos{x}$ and $J_{\frac{1}{2}}(x) = \sqrt{2/\pi x} \sin{x}$, we get
\begin{equation}
    e^{-Q(\omega)} \approx \dfrac{1}{p_{n-1}(\omega)^{2} + p_{n}(\omega)^{2}} \dfrac{1}{\sqrt{\beta_{n}^{2} - \omega^{2}}} \dfrac{2}{\pi}\left[1 - (-1)^{n} \dfrac{\omega}{\beta_{n}} \sin[2\pi I_{n}(\omega)]\right],
\end{equation}
which is equivalent to \cref{eq:Phi_bulk} once we set $\Phi(\omega) \equiv 2\pi \exp[-Q(\omega)]$.

To get \cref{eq:sigma_bulk} of the main text, we take $\rho \to 0$ of \cref{eq:sigman_finite_full} to get
\begin{equation}
    K_{n}(\omega,\omega) \approx \dfrac{2\pi\sigma_{n}(\omega)}{\Phi(\omega)}\left(1 - (-1)^{n} \dfrac{\beta_{n}}{\beta_{n}^{2}-\omega^{2}} \dfrac{\cos[2 \pi I_{n}(\omega)]}{2\pi \sigma_{n}(\omega)} \right).  
\end{equation}
This reduces to \cref{eq:sigma_bulk} upon dropping the second term in the brackets, which is subleading by a factor of $\mathcal{O}(\beta_{n} \sigma_{n}(\omega))$.
From \cref{eq:R_error_bound_bulk,eq:R_deriv_error_bound_bulk}, this asymptotic has a multiplicative error of $\mathcal{O}(|\omega|^{-1} \beta_{n} / n h_{n}(0)) + \mathcal{O}(1/n)$, where the factor of $|\omega|^{-1}$ gets cut off by $\mathcal{O}(1)$ for $|\omega| \leq \mathcal{O}(1)$.

\subsubsection{Airy bootstrap: behavior near the edge}
\label{sec:airy_bootstrap_derivation}
In this section we will consider frequencies near the edge of the spectrum, $\omega \approx \beta_{n}$. Similarly to \cref{sec:origin_unpacking}, we reverse the transformations $Y \mapsto U \mapsto T \mapsto S \mapsto R$, but now within the local parametrix centered at $z=1$. With the aim of approaching the real axis, we will take $z$ in the sector mapped to sector II of \cref{fig:unity_endpoint}. Combining \cref{eq:U_def,eq:T_def,eq:S_def,eq:P_def,eq:Spar_def,eq:R_def}, this gives
\begin{align}
    Y(\beta_{n} z) &= \beta_{n}^{(n + \rho/2) \sigma_{3}} e^{(n l_{n}/2) \sigma_{3}} R(z) E(z) \Psi(f_{n}(z)) e^{-n \phi_{n}(z) \sigma_{3}} z^{-(\rho/2)\sigma_{3}} \label{eq:Y_unpacking_edge}\\
    &\times \begin{pmatrix} 1 & 0 \\ z^{-\rho} e^{-2n \phi_{n}(z)} & 1 \end{pmatrix} e^{-(n l_{n}/2) \sigma_{3}} e^{n g_{n}(z) \sigma_{3}} \beta_{n}^{-(\rho/2) \sigma_{3}}. \nonumber
\end{align}
The first column of $Y$ is then given by
\begin{equation}
    \begin{pmatrix} Y_{11}(\beta_{n} z) \\ Y_{21}(\beta_{n}z) \end{pmatrix} = (\beta_{n} z)^{-\rho/2} e^{n \left[g_{n}(z) - \phi_{n}(z) - (l_{n}/2)\right]} \beta_{n}^{(n+\rho/2)\sigma_{3}} e^{(n l_{n}/2) \sigma_{3}} R(z) E(z) \Psi(f_{n}(z)) \begin{pmatrix} 1 \\ 1 \end{pmatrix}.
\end{equation}
Taking $z \downarrow x < 1$, using \cref{eq:potential_to_weight}, and replacing $\Psi$ with the sector II expression from \cref{eq:Psi_def}, we get
\begin{equation}
    \begin{pmatrix} Y_{11}(\beta_{n} x) \\ Y_{21}(\beta_{n}x) \end{pmatrix} = \sqrt{2\pi} e^{-i \pi/4} (\beta_{n} x)^{-\rho/2} e^{\frac{1}{2}Q(\beta_{n}x)} \beta_{n}^{(n+\rho/2)\sigma_{3}} e^{(n l_{n}/2) \sigma_{3}} R(x) E(x) \begin{pmatrix} \Ai(f_{n}(x)) \\ \Ai^{\prime}(f_{n}(x)) \end{pmatrix},
\end{equation}
where we recall that both $E(z)$ and $R(z)$ are analytic near $z=1$, and the biholomorphic map $f_{n}(z)$ is defined in \cref{prop:fn_endpoint}. Using the definition \cref{eq:E_def} of $E(x)$, after some algebra this becomes
\begin{align}
    &\begin{pmatrix} Y_{11}(\beta_{n} z) \\ Y_{21}(\beta_{n}z) \end{pmatrix} = \sqrt{\pi} e^{-i \pi/4} (\beta_{n}x)^{-\rho/2} e^{\frac{1}{2}Q(\beta_{n}x)} \beta_{n}^{(n + \rho/2)\sigma_{3}} e^{(n l_{n}/2)\sigma_{3}} R(x) 2^{-(\rho/2)\sigma_{3}} e^{(i \pi/4) \sigma_{3}} \\
    &\times \begin{pmatrix}
        \left[i a_{+} \sin\left(\frac{\rho}{2} \arccos{x}\right) + a_{+}^{-1}\cos\left(\frac{\rho}{2} \arccos{x}\right)\right] f_{n}^{1/4} \Ai(f_{n}) + \left[-a_{+}\cos\left(\frac{\rho}{2} \arccos{x}\right) - i a_{+}^{-1}\sin\left(\frac{\rho}{2} \arccos{x}\right)\right] f_{n}^{-1/4} \Ai^{\prime}(f_{n}) \\[1em]
        \left[-i a_{+} \sin\left(\frac{\rho}{2} \arccos{x}\right) + a_{+}^{-1}\cos\left(\frac{\rho}{2} \arccos{x}\right)\right] f_{n}^{1/4} \Ai(f_{n}) + \left[a_{+}\cos\left(\frac{\rho}{2} \arccos{x}\right) - i a_{+}^{-1}\sin\left(\frac{\rho}{2} \arccos{x}\right)\right] f_{n}^{-1/4} \Ai^{\prime}(f_{n}) \nonumber
    \end{pmatrix},
\end{align}
where for notational simplicity we have suppressed the arguments of $f_{n} \equiv f_{n}(x)$ and $a_{+} \equiv a_{+}(x) = e^{i \pi/4} (1-x)^{1/4} / (1+x)^{1/4}$, and used $\varphi_{+}(x) = e^{i \arccos{x}}$ for $0<x<1$. Approximating $R(x) \approx \mathds{1}$, then using \cref{eq:pn_from_Y,eq:pnm1_from_Y} to convert to $p_{n}(z)$ and $p_{n-1}(z)$, and the asymptotic \cref{eq:yn_asymptotic} for the leading coefficient $y_{n}$, we get
\small
\begin{align}
    \textstyle \hspace{-4.5em}p_{n}(\beta_{n}x) \approx \dfrac{1}{\sqrt{\beta_{n} w(\beta_{n}x)}}\left(\left[i a_{+} \sin\left(\frac{\rho}{2} \arccos{x}\right) + a_{+}^{-1}\cos\left(\frac{\rho}{2} \arccos{x}\right)\right] f_{n}^{1/4} \Ai(f_{n}) + \left[-a_{+}\cos\left(\frac{\rho}{2} \arccos{x}\right) - i a_{+}^{-1}\sin\left(\frac{\rho}{2} \arccos{x}\right)\right] f_{n}^{-1/4} \Ai^{\prime}(f_{n}) \right), \label{eq:airy_pn_asymptotic}\\
    \textstyle \hspace{-5.5em}p_{n-1}(\beta_{n}x) \approx \dfrac{\beta_{n}}{2 b_{n}} \dfrac{1}{\sqrt{\beta_{n} w(\beta_{n}x)}}\left(\left[-i a_{+} \sin\left(\frac{\rho}{2} \arccos{x}\right) + a_{+}^{-1}\cos\left(\frac{\rho}{2} \arccos{x}\right)\right] f_{n}^{1/4} \Ai(f_{n}) + \left[a_{+}\cos\left(\frac{\rho}{2} \arccos{x}\right) - i a_{+}^{-1}\sin\left(\frac{\rho}{2} \arccos{x}\right)\right] f_{n}^{-1/4} \Ai^{\prime}(f_{n}) \right).\label{eq:airy_pnm1_asymptotic}
\end{align}
\normalsize
Note that, by \cref{thm:recurrence_theorem}, the prefactor $\beta_{n}/2 b_{n} = 1 + \mathcal{O}(1/n)$, so we will henceforth approximate it as 1. Now, from the definition of $f_{n}(z)$ in \cref{prop:fn_endpoint}, we can deduce that the behavior of $f_{n}(x)$ near $x=1$ is approximately
\begin{equation}
    f_{n}(x) \approx (x-1) f_{n}^{\prime}(1), \quad x \to 1,
\end{equation}
where the derivative $f_{n}^{\prime}(1)$ is related to the equilibrium measure by
\begin{equation}
    f_{n}^{\prime}(1) = \left(n \hat{\varphi}_{n}(1)\right)^{2/3} = \left(\dfrac{n h_{n}(1)}{\sqrt{2}}\right)^{2/3}.
\end{equation}
This allows us to take the limit $x \to 1^{-}$ of \cref{eq:airy_pn_asymptotic,eq:airy_pnm1_asymptotic}, with the result
\begin{align}
    p_{n}(\beta_{n}) &\approx \dfrac{1}{\sqrt{\beta_{n}w(\beta_{n})}} \left[(2 n h_{n}(1))^{1/6} \Ai(0) - (\rho+1) (2n h_{n}(1))^{-1/6} \Ai^{\prime}(0)\right],\label{eq:pn_betan}\\
    p_{n-1}(\beta_{n}) &\approx \dfrac{1}{\sqrt{\beta_{n}w(\beta_{n})}} \left[(2 n h_{n}(1))^{1/6} \Ai(0) - (\rho-1) (2n h_{n}(1))^{-1/6} \Ai^{\prime}(0)\right],
\end{align}
where $\Ai(0) = (3^{\frac{2}{3}} \Gamma[\frac{2}{3}])^{-1}$ and $\Ai^{\prime}(0) = -(3^{\frac{1}{3}} \Gamma[\frac{1}{3}])^{-1}$. By taking the ratio of these expressions we eliminate the unknown weight $w(\beta_{n})$ and get a quadratic equation in $(2n h_{n}(1))^{1/6}$, either solution of which gives
\begin{equation}
    h_{n}(1) \approx \dfrac{1}{2n} \left(\dfrac{\Ai(0)}{\Ai^{\prime}(0)}\right)^{3} \left[\rho - \left(\dfrac{p_{n}(\beta_{n}) + p_{n-1}(\beta_{n})}{p_{n}(\beta_{n}) - p_{n-1}(\beta_{n})}\right)\right]^{3}.
\end{equation}
This gives a means of determining $h_{n}(1)$ solely in terms of the orthogonal polynomials, which can be computed in terms of the Lanczos coefficients using the three-term recurrence. One can then determine the weight $w(\beta_{n})$ at the endpoint $\omega=\beta_{n}$ by inverting \cref{eq:pn_betan}:
\begin{equation}
    w(\beta_{n}) \approx \dfrac{1}{\beta_{n} p_{n}(\beta_{n})^{2}}\left[(2 n h_{n}(1))^{1/6} \Ai(0) - (\rho+1) (2n h_{n}(1))^{-1/6} \Ai^{\prime}(0)\right].
\end{equation}
To get the first of the bootstrap equations, we sum the squares of \cref{eq:airy_pn_asymptotic,eq:airy_pnm1_asymptotic} and rearrange for $w(\beta_{n}x)$ to give
\begin{align}
    w(\beta_{n}x) &\approx \dfrac{2}{\beta_{n}} \dfrac{1}{p_{n}(\beta_{n}x)^{2} + p_{n-1}(\beta_{n}x)^{2}} \dfrac{1}{\sqrt{1-x^{2}}}\Big[- 2x \Ai\big(f_{n}(x)\big) \Ai^{\prime}\big(f_{n}(x)\big) \sin\left[\rho \arccos{x}\right] \\
    &+\big(-f_{n}(x)\big)^{1/2} \Ai\big(f_{n}(x)\big)^{2}\big(x \cos\left[\rho \arccos{x}\right]+1\big) - \big(-f_{n}(x)\big)^{-1/2} \Ai^{\prime}\big(f_{n}(x)\big)^{2}\big(x \cos\left[\rho \arccos{x}\right]-1\big)\Big]. \nonumber
\end{align}
To get the second bootstrap equation concerning the derivative $f_{n}^{\prime}(x)$, we need to compute the Christoffel-Darboux kernel $K_{n}(\beta_{n} x,\beta_{n} x)$, which involves differentiating the expressions \cref{eq:airy_pn_asymptotic,eq:airy_pnm1_asymptotic} for $p_{n}(\beta_{n}x)$ and $p_{n-1}(\beta_{n}x)$. Note that, because of the determinantal structure of $K_{n}$ (c.f.~\cref{eq:cd_kernel_det}), it is not necessary to differentiate the weight $w(\beta_{n}x)$, since all terms involving its derivative cancel exactly. It is then a matter of algebra to arrive at the expression
\begin{align}
    f_{n}^{\prime}(x) &\approx \dfrac{1}{2\Ai^{2} f_{n}^{2} - 2 (\Ai^{\prime})^{2} f_{n} - \Ai \Ai^{\prime}}\Big[-2 \beta_{n} w(\beta_{n}x) K_{n}(\beta_{n}x, \beta_{n}x) f_{n} + \frac{2}{1-x^{2}} \Ai \Ai^{\prime} \cos[\rho \arccos{x}] f_{n} \\
    &- \dfrac{\sin[\rho \arccos{x}]}{1-x^{2}} \left(\Ai^{2} \big(-f_{n}\big)^{3/2} - (\Ai^{\prime})^{2} (-f_{n})^{1/2}\right) - \dfrac{\rho}{\sqrt{1-x^{2}}}\left(\Ai^{2} \big(-f_{n}\big)^{3/2} + (\Ai^{\prime})^{2} (-f_{n})^{1/2}\right)\Big], \nonumber
\end{align}
where we have suppressed the arguments of $f_{n} \equiv f_{n}(x)$, $\Ai \equiv \Ai(f_{n}(x))$, and $\Ai^{\prime} \equiv \Ai^\prime(f_{n}(x))$. From \cref{eq:R_error_bound_bulk,eq:R_deriv_error_bound_bulk}, this expression has a multiplicative error of $\mathcal{O}(1/n)$ from neglecting the contributions from $R(z) - \mathds{1}$ and $R^{\prime}(z)$.

\subsection{Level-$n$ Green's function} 
\label{sec:greens_func_derivation}
In this section we derive the expressions in \cref{eq:n_gf_asymptotic,eq:greens_func_bulk} for the level-$n$ Green's function $G_{n}(z)$. From the definition in \cref{eq:greens_func_def}, we can deduce that
\begin{equation}
    G_{n}(z) = \int_{\mathbb{R}} \dfrac{\diff \mu^{(n)}(x)}{z-x}, \quad z \in \mathbb{C} \setminus \mathbb{R},
\end{equation}
where $\mu^{(n)}$ is the spectral measure which generates the `associated orthogonal polynomials' $\{p_{m}^{(n)}\}_{m=0}^{\infty}$, which are defined by the `$n$-shifted' recursion relation
\begin{equation}
    b_{m+1+n} p_{m+1}^{(n)}(x) = x p_{m}^{(n)}(x) - b_{m+n} p_{m-1}^{(n)}(x), \quad m \geq 0,
\end{equation}
with initial conditions $p_{-1}^{(n)}(x) = 0$, $p_{0}^{(n)}(x) = 1$. For $n=0$ we have $\diff \mu^{(0)}(x) = w(x) \diff x$, where $w$ is the original weight function.

The starting point for our analysis will be the relation
\begin{equation}
    G_{n}(z) = \dfrac{1}{b_{n}} \dfrac{C_{n}(z)}{C_{n-1}(z)},
    \label{eq:Gn_from_cauchy}
\end{equation}
where
\begin{equation}
    C_{n}(z) = \int_{\mathbb{R}} \dfrac{p_{n}(x)}{z-x} w(x) \diff x, \quad z \in \mathbb{C} \setminus \mathbb{R},
\end{equation}
is the weighted Cauchy-Stieltjes transform of $p_{n}(x)$ with respect to the original weight function $w$. This relation appears as Eq.~(3.7) in Ref.~\cite{vanasscheOrthogonalPolynomialsAssociated1991}. This reformulation is helpful because these weighted Cauchy-Stieltjes transforms appear (up to a normalization factor) in the second column of the solution $Y(z)$ to the fundamental Riemann-Hilbert problem discussed in \cref{sec:fundamental_rhp}. Indeed, a short computation gives
\begin{equation}
    G_{n}(z) = \dfrac{1}{(Y_{1})_{12}} \dfrac{Y_{12}(z)}{Y_{22}(z)},
    \label{eq:Gn_from_Y}
\end{equation}
where we used the fundamental solution \cref{eq:fki_sol}, together with the relations $b_{n} = y_{n-1} / y_{n}$ and \cref{eq:Y1_to_bn,eq:Y1_to_yn}. We have already determined the asymptotics of $(Y_{1})_{12}$ in \cref{eq:Y1_12_asymptotic}, so it remains to determine the scaling of $Y_{12}(z)$ and $Y_{22}(z)$. 

\subsubsection{Behavior away from $z=0$ and $z=\pm \beta_{n}$}
Now let us derive the expression \cref{eq:greens_func_bulk} for the level-$n$ Green's function away from the special points $z=0$ and $z = \pm \beta_{n}$. Unpacking the transformations \cref{eq:U_def,eq:T_def,eq:S_def,eq:R_def} for $Y(z)$, we have
\begin{equation}
    \begin{pmatrix} Y_{12}(\beta_{n}z) \\ Y_{22}(\beta_{n}z) \end{pmatrix} = \beta_{n}^{\rho/2} e^{\frac{n l_{n}}{2}} e^{-n g_{n}(z)} \beta_{n}^{(n+\rho/2)\sigma_{3}} e^{\frac{n l_{n}}{2} \sigma_{3}} R(z) N(z) \begin{pmatrix} 0 \\ 1 \end{pmatrix}.
\end{equation}
Note that, since we have focused only on the second column of $Y(z)$, this expression is accurate regardless of whether $z$ is inside or outside the lens, provided it is not near the centers $z=0$ and $z=\pm\beta_{n}$ of the local parametrices constructed in \cref{sec:endpoints,sec:origin}. Approximating $R(z) \approx \mathds{1}$, and using the form \cref{eq:Pinf_sol} for the outside solution $N(z)$, we get
\begin{equation}
    \dfrac{Y_{12}(\beta_{n} z)}{Y_{22}(\beta_{n} z)} \approx -i 2^{-\rho} \beta_{n}^{2n+\rho} e^{n l_{n}} \left(\dfrac{a(z) - a(z)^{-1}}{a(z) + a(z)^{-1}}\right).
\end{equation}
Using the definition $a(z) = (z-1)^{1/4} / (z+1)^{1/4}$ and the asymptotic form \cref{eq:Y1_12_asymptotic} for $(Y_{1})_{12}$, we get \cref{eq:greens_func_bulk} of the main text:
\begin{equation}
    \beta_{n} G_{n}(\beta_{n} z) \approx 2 \left(z - \sqrt{z + 1} \sqrt{z - 1}\right).
\end{equation}
There will be a multiplicative error from the $R(z) \approx \mathds{1}$ approximation; by the discussion surrounding \cref{eq:R_error_bound_bulk}, this error will be $\mathcal{O}(1/n)$ for $z = \mathcal{O}(1)$. For $\rho = 0$ it will remain of this order even as $z$ shrinks to 0, but for $\rho \neq 0$ the error will grow like $\mathcal{O}(1/(n h_{n}(0)|z|))$, eventually reaching $\mathcal{O}(\beta_{n}/n h_{n}(0))$ at $z \sim 1/\beta_{n}$.

\subsubsection{Behavior near the origin}
We will focus on the limiting behavior as $z$ approaches the positive real axis from $\mathbb{C}_{+}$; the other limits can be computed analogously. Starting from \cref{eq:Y_unpacking_origin}, for $z$ in the first quadrant and satisfying $|\beta_{n} z| \ll 1$, we have
\begin{equation}
    \begin{pmatrix} Y_{12}(\beta_{n}z) \\ Y_{22}(\beta_{n}z) \end{pmatrix} = (\beta_{n} z)^{\rho/2} e^{-\frac{i \pi \rho}{2}} e^{-n[g_{n}(z) - \phi_{n}(z) - l_{n}/2]} \beta_{n}^{(n + \rho/2)\sigma_{3}} e^{\frac{n l_{n}}{2} \sigma_{3}} R(z) E_{n}(z) \Psi_{\rho/2}(n f_{n}(z)) \begin{pmatrix} 0 \\ 1 \end{pmatrix}.
\end{equation}
Using the expression \cref{eq:Psi_rho_def} for $\Psi_{\rho/2}(n f_{n}(z))$ with $z$ in the first quadrant, and taking the limit $z \downarrow x \in (0,\delta_{n})$, this becomes
\begin{equation}
    \begin{pmatrix} Y_{12,+}(\beta_{n}x) \\ Y_{22,+}(\beta_{n}x) \end{pmatrix} = \dfrac{\sqrt{\pi}}{2} e^{-\frac{i \pi}{4}} \sqrt{w(\beta_{n} x)} \beta_{n}^{(n + \rho/2)\sigma_{3}} e^{\frac{n l_{n}}{2} \sigma_{3}} \sqrt{n f_{n}(x)} R(x) E_{n}(x) \begin{pmatrix} H_{\frac{1}{2}(\rho+1)}^{(1)}(n f_{n}(x)) \\[0.5em] H_{\frac{1}{2}(\rho-1)}^{(1)}(n f_{n}(x)) \end{pmatrix}.
\end{equation}
Using \cref{eq:En_def_origin} for $E_{n}(x)$, approximating $R(x) \approx \mathds{1}$, and using the asymptotic \cref{eq:Y1_12_asymptotic} for $(Y_{1})_{12}$, we get
\begin{equation}
    \setlength\arraycolsep{1pt}
    G_{n,+}(\beta_{n}x) \approx -\dfrac{2}{\beta_{n}} \dfrac{\begin{pmatrix} & H^{(1)}_{\frac{1}{2}(\rho+1)}(n f_{n}(x)) \left[e^{\frac{i \pi}{4}(2n-\rho-1)} \varphi_{+}^{\rho/2}(a_{+} + a_{+}^{-1}) + e^{-\frac{i \pi}{4}(2n-\rho-1)} \varphi_{+}^{-\rho/2}(a_{+} - a_{+}^{-1})\right] \\[0.5em] + & H^{(1)}_{\frac{1}{2}(\rho-1)}(n f_{n}(x)) \left[e^{\frac{i \pi}{4}(2n-\rho+1)} \varphi_{+}^{\rho/2}(a_{+} + a_{+}^{-1}) + e^{-\frac{i \pi}{4}(2n-\rho+1)} \varphi_{+}^{-\rho/2}(a_{+} - a_{+}^{-1})\right]\end{pmatrix}}{\begin{pmatrix} & H^{(1)}_{\frac{1}{2}(\rho+1)}(n f_{n}(x)) \left[e^{-\frac{i \pi}{4}(2n-\rho-1)} \varphi_{+}^{-\rho/2}(a_{+} + a_{+}^{-1}) + e^{\frac{i \pi}{4}(2n-\rho-1)} \varphi_{+}^{\rho/2}(a_{+} - a_{+}^{-1})\right] \\[0.5em] + & H^{(1)}_{\frac{1}{2}(\rho-1)}(n f_{n}(x)) \left[e^{-\frac{i \pi}{4}(2n-\rho+1)} \varphi_{+}^{-\rho/2}(a_{+} + a_{+}^{-1}) + e^{\frac{i \pi}{4}(2n-\rho+1)} \varphi_{+}^{\rho/2}(a_{+} - a_{+}^{-1})\right]\end{pmatrix}},
\end{equation}
where we have suppressed the arguments of $\varphi_{+} \equiv \varphi_{+}(x)$ and $a_{+} \equiv a_{+}(x)$. Let us now rescale $\beta_{n} x \mapsto \omega$. As $\omega \to 0$, both $\varphi_{+}(\omega/\beta_{n})$ and $a_{+}(\omega/\beta_{n})$ tend to $\mathcal{O}(1)$ constants, while in general the Hankel functions will have an algebraic divergence. Thus, to leading order in $\omega/\beta_{n} \ll 1$, we can approximate $\varphi_{+}(\omega/\beta_{n}) \approx \varphi_{+}(0) = i$ and $a_{+}(\omega/\beta_{n}) \approx a_{+}(0) = e^{i \pi/4}$. With these approximations, we get the simpler expression
\begin{equation}
    G_{n}(\omega + i 0^{+}) \approx -\dfrac{2}{\beta_{n}} \dfrac{\cos\left(\frac{n \pi}{2}\right) H_{\frac{1}{2}(\rho-1)}^{(1)}(\pi I_{n}(\omega)) + \sin\left(\frac{n \pi}{2}\right) H_{\frac{1}{2}(\rho+1)}^{(1)}(\pi I_{n}(\omega))}{\cos\left(\frac{n \pi}{2}\right) H_{\frac{1}{2}(\rho+1)}^{(1)}(\pi I_{n}(\omega)) - \sin\left(\frac{n \pi}{2}\right) H_{\frac{1}{2}(\rho-1)}^{(1)}(\pi I_{n}(\omega))},
\end{equation}
where we have replaced $n f_{n}(\omega/\beta_{n}) = \pi I_{n}(\omega)$. Note that the difference between even and odd $n$ is now manifest: for $-1 < \rho < 1$ we have $G_{2n}(\omega + i 0^{+}) \sim \omega^{\rho}$ as $\omega \to 0 $ while $G_{2n+1}(\omega + i 0^{+}) \sim 1/\omega^{\rho}$, consistent with the recursion relation \cref{eq:greens_func_recursion}. Taking $n$ even and approximating $I_{n}(\omega) = \int_{0}^{\omega} \sigma_{n}(\omega^{\prime}) \diff \omega^{\prime} \approx \sigma_{n}(0) \omega$, we recover \cref{eq:n_gf_asymptotic} of the main text. By the discussion surrounding \cref{eq:R_error_bound}, this expression has a multiplicative error of $\mathcal{O}(\beta_{n} / n h_{n}(0))$.

\subsubsection{Behavior near $z=\pm \beta_{n}$}
Finally let us derive an expression for the level-$n$ Green's function near the endpoints $z=\pm \beta_{n}$. We will focus on the behavior near $z=\beta_{n}$, with similar results holding near $z=-\beta_{n}$. Starting from \cref{eq:Y_unpacking_edge}, the second column of $Y$ in a neighborhood of $\beta_{n}$ is given by
\begin{equation}
    \begin{pmatrix}
        Y_{12}(\beta_{n}z) \\ Y_{22}(\beta_{n}z)
    \end{pmatrix}
    = (\beta_{n}z)^{\rho/2} e^{-n[g_{n}(z) - \phi_{n}(z) - l_{n}/2]} \beta_{n}^{(n + \rho/2)\sigma_{3}} e^{\frac{n l_{n}}{2} \sigma_{3}} R(z) E(z) \Psi(f_{n}(z)) \begin{pmatrix} 0 \\ 1 \end{pmatrix},
\end{equation}
where the Airy parametrix $\Psi(\zeta)$ was defined in \cref{eq:Psi_def}, and the map $f_{n}(z)$ was defined in \cref{eq:biholomorphic_def}. Focusing on $z$ in quadrant $I$ of \cref{fig:unity_endpoint}, this becomes
\begin{equation}
    \begin{pmatrix}
        Y_{12}(\beta_{n}z) \\ Y_{22}(\beta_{n}z)
    \end{pmatrix}
    = \sqrt{2\pi} e^{\frac{i \pi}{12}} (\beta_{n}z)^{\rho/2} e^{-n[g_{n}(z) - \phi_{n}(z) - l_{n}/2]} \beta_{n}^{(n + \rho/2)\sigma_{3}} e^{\frac{n l_{n}}{2} \sigma_{3}} R(z) E(z) \begin{pmatrix} \Ai(\xi_{n}(z)) \\ e^{\frac{4\pi i}{3}} \Ai^{\prime}(\xi_{n}(z)) \end{pmatrix},
\end{equation}
where $\xi_{n}(z) \equiv e^{4 \pi i/3} f_{n}(z)$. Using the expression \cref{eq:E_def} for $E(z)$, approximating $R(z) \approx \mathds{1}$, and using the asymptotic formula \cref{eq:Y1_12_asymptotic}, substituting into \cref{eq:Gn_from_Y} gives
\begin{equation}
    \setlength\arraycolsep{0.5pt}
    G_{n}(\beta_{n}z) \approx \dfrac{-2}{\beta_{n}} \dfrac{\begin{pmatrix}& \left[a(z)(\varphi^{\rho/2} - \varphi^{-\rho/2})(z) + a(z)^{-1}(\varphi^{\rho/2} + \varphi^{-\rho/2})(z)\right] \xi_{n}(z)^{1/4} \Ai(\xi_{n}(z)) \\ - &  \left[a(z)(\varphi^{\rho/2} + \varphi^{-\rho/2})(z) + a(z)^{-1}(\varphi^{\rho/2} - \varphi^{-\rho/2})(z)\right] \xi_{n}(z)^{-1/4} \Ai^{\prime}(\xi_{n}(z))\end{pmatrix}}{\begin{pmatrix}& \left[a(z)(\varphi^{\rho/2} - \varphi^{-\rho/2})(z) - a(z)^{-1}(\varphi^{\rho/2} + \varphi^{-\rho/2})(z)\right] \xi_{n}(z)^{1/4} \Ai(\xi_{n}(z)) \\ - & \left[a(z)(\varphi^{\rho/2} + \varphi^{-\rho/2})(z) - a(z)^{-1}(\varphi^{\rho/2} - \varphi^{-\rho/2})(z)\right] \xi_{n}(z)^{-1/4} \Ai^{\prime}(\xi_{n}(z))\end{pmatrix}},
\end{equation}
where $a(z) = (z-1)^{1/4} / (z+1)^{1/4}$ and $\varphi(z) = z + (z-1)^{1/2}(z+1)^{1/2}$. Near $z=1$, we have $f_{n}(z) \approx f_{n}^{\prime}(1) (z-1)$, where $f_{n}^{\prime}(1) = (n h_{n}(1)/\sqrt{2})^{2/3} \sim \mathcal{O}(n^{2/3})$. Then for $z = 1 + \mathcal{O}(n^{-2/3})$, we have $G_{n}(\beta_{n}z) \approx G_{n}(\beta_{n}) + \mathcal{O}(z-1)$, with
\begin{equation}
    G_{n}(\beta_{n}) \approx \dfrac{2}{\beta_{n}} \left(\dfrac{2 f_{n}^{\prime}(1)^{1/2} \Gamma\left[\frac{1}{3}\right] - \sqrt{2} (-3)^{1/3} (\rho+1) \Gamma\left[\frac{2}{3}\right]}{2 f_{n}^{\prime}(1)^{1/2} \Gamma\left[\frac{1}{3}\right] - \sqrt{2} (-3)^{1/3} (\rho-1) \Gamma\left[\frac{2}{3}\right]}\right).
\end{equation}
Since $f_{n}^{\prime}(1) \sim \mathcal{O}(n^{2/3}) \to \infty$ as $n \to \infty$, we conclude that $G_{n}(\beta_{n}) \approx 2/\beta_{n}$ to leading order as $n\to\infty$. By the discussion surrounding \cref{eq:R_error_bound_bulk}, this expression has a multiplicative error of $\mathcal{O}(1/n)$.

\clearpage
\section{Detailed analysis of the equilibrium measure}
\label{sec:endpoint_analysis}
\subsection{Behavior of the equilibrium measure near the origin}
\subsubsection{Behavior at the origin}
\begin{lemma}
    Consider $W = \exp(-Q(x))$ with $Q \in \mathrm{VSLF}(p,q)$ for $p \geq 1$.

    If $p > 1$, then for all $q \in \mathbb{R}$ we have
    \begin{equation}
        \lim_{n \to \infty} h_{n}(0) = \dfrac{2p}{p-1}.
    \end{equation}
    If $p=1$ and $q>-1$, then as $n\to \infty$ we have
    \begin{equation}
        h_{n}(0) = (\log{n})^{1+o(1)}.
    \end{equation}
    \label{lem:hn0_scaling_sup}
\end{lemma}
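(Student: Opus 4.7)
The plan is to start from the integral representation
\begin{equation*}
h_n(0) = \frac{2}{\pi} \int_0^1 \frac{V_n'(s)}{s\,\sqrt{1-s^2}}\,ds, \qquad V_n(s) = \frac{Q(\beta_n s)}{n},
\end{equation*}
which follows from \eqref{eq:hn_even_integral} after using that $V_n'(0)=0$ (since $Q$ is even) to cancel the term $V_n'(0)/s$. The integrand $V_n'(s)/s$ is in fact continuous on $[0,1]$ because $Q'(0)=0$ and $Q''$ is bounded near the origin, so there is no issue at $s=0$ for finite $n$; the delicacy is purely asymptotic in $n$.

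For $p>1$, I would apply Lemma \ref{lem:vslf_characterization}(v), which gives $V_n'(s) \to (p/\lambda_p) s^{p-1}$ uniformly on compact subsets of $(0,\infty)$, combined with Corollary \ref{cor:useful_estimate_T} to produce the uniform bound $V_n'(s) \leq (2p/\lambda_p)\,s^{p-1-\epsilon}$ on $[A/\beta_n,1]$ for any small $\epsilon>0$. Choosing $\epsilon$ so that $p-2-\epsilon > -1$, this gives an integrable majorant for $V_n'(s)/(s\sqrt{1-s^2})$ on $[A/\beta_n,1]$, while the piece on $[0,A/\beta_n]$ contributes only $O(A\beta_n/n) = o(1)$ by Taylor expansion of $Q'$ around zero. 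Dominated convergence then yields
\begin{equation*}
\lim_{n\to\infty} h_n(0) = \frac{2p}{\pi\lambda_p}\int_0^1 s^{p-2}(1-s^2)^{-1/2}\,ds
= \frac{p\,\Gamma((p-1)/2)}{\Gamma((p+1)/2)} = \frac{2p}{p-1},
\end{equation*}
after evaluating the Beta integral with $s=\sin\theta$ and using the duplication identity $\Gamma((p+1)/2)=\tfrac{p-1}{2}\Gamma((p-1)/2)$ together with the definition \eqref{eq:lambda_def} of $\lambda_p$. This step is essentially routine.

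The marginal case $p=1$ is where the real work is. Here the limiting integrand is proportional to $1/(s\sqrt{1-s^2})$, which is \emph{not} integrable at the origin, and the claimed answer $h_n(0) = (\log n)^{1+o(1)}$ must come from the natural ultraviolet cutoff at $s \sim 1/\beta_n$. My plan is to split $\int_0^1 = \int_0^{A/\beta_n} + \int_{A/\beta_n}^{1}$ for a fixed large constant $A$. The first piece is $O(\beta_n/n) = O(1/(\log n)^q)$ by the same Taylor argument as above, hence negligible. For the second piece I would use Lemma \ref{lem:vslf_characterization}(v) and Corollary \ref{cor:useful_estimate_T} together with \eqref{eq:vslf_characterization_prop3} to show that $V_n'(s)$ is pinched between $c_1(\log(\beta_n s))^{q-\epsilon}/(\log n)^q$ and $c_2(\log(\beta_n s))^{q+\epsilon}/(\log n)^q$ uniformly for $s \in [A/\beta_n,1]$, using $V_n'(1) \to p/\lambda_1 = \pi$ and the uniform-on-compacts convergence to normalise. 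After substituting $u = \beta_n s$, the dominant integral reduces to
\begin{equation*}
\frac{1}{(\log n)^q}\int_A^{\beta_n} \frac{(\log u)^{q+o(1)}}{u}\,du \;=\; \frac{1}{(\log n)^q}\cdot\frac{(\log \beta_n)^{q+1+o(1)}}{q+1+o(1)},
\end{equation*}
which is valid because $q>-1$ guarantees the primitive of $(\log u)^q/u$ is $(\log u)^{q+1}/(q+1)$ to leading order. Since $\log\beta_n = \log n - q\log\log n \sim \log n$, this yields $(\log n)^{1+o(1)}$ as claimed.

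The main obstacle, as the paragraph above hints, is controlling the logarithmic factors uniformly in $s$ over a range where $\beta_n s$ varies from $O(1)$ to $\beta_n \sim n$: the error $\epsilon$ in \eqref{eq:vslf_characterization_prop3} is only guaranteed to become small eventually, and one must verify that the resulting multiplicative factor $(\log u)^{\pm\epsilon}$ does not corrupt the $(\log n)^{1+o(1)}$ scaling. I expect this to reduce to a standard Karamata-style manipulation for slowly varying functions: integrating $(\log u)^{q+\epsilon}/u$ gives $(\log \beta_n)^{q+1+\epsilon}/(q+1+\epsilon)$, and since $\epsilon$ may be taken to zero as $n\to\infty$, the exponent of $\log n$ is $1+o(1)$. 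The condition $q>-1$ is exactly what keeps $1/(q+1+o(1))$ bounded away from blowing up; at $q=-1$ the integral would develop a $\log\log n$ instead, which matches the Carleman determinacy threshold mentioned in Section~\ref{sec:coulomb}.
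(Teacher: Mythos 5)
Your overall strategy coincides with the paper's: both proofs start from $h_{n}(0) = \frac{2}{\pi}\int_{0}^{1}\frac{V_{n}^{\prime}(s)}{s\sqrt{1-s^{2}}}\,ds$, use Lemma~\ref{lem:vslf_characterization}(v) together with Corollary~\ref{cor:useful_estimate_T} and dominated convergence for $p>1$, and for $p=1$ change variables $u=\beta_{n}s$ and identify the logarithmic divergence coming from the range $u\in[A,\beta_{n}]$. Your $p>1$ argument and the closed-form evaluation to $2p/(p-1)$ are correct and match the paper's.

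There is, however, a genuine gap in the $p=1$ case. After the substitution $u=\beta_{n}s$, the integrand is $\frac{Q^{\prime}(u)}{u}\cdot\frac{1}{\sqrt{1-(u/\beta_{n})^{2}}}$, but your ``dominant integral'' $\frac{1}{(\log n)^{q}}\int_{A}^{\beta_{n}}\frac{(\log u)^{q+o(1)}}{u}\,du$ silently drops the factor $(1-(u/\beta_{n})^{2})^{-1/2}$. That factor is everywhere $\geq 1$ and is unbounded as $u\to\beta_{n}$, so dropping it only gives you a lower bound on $h_{n}(0)$ — exactly the wrong direction for establishing the matching upper bound $(\log n)^{1+o(1)}$. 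The paper sidesteps this by splitting the integral into three pieces, with $I_{3}$ covering $u\in[\beta_{n}/2,\beta_{n}]$, and showing $|I_{3}| \leq 2(\log\beta_{n})^{q+\epsilon/2}\int_{\beta_{n}/2}^{\beta_{n}}\frac{du}{\beta_{n}\sqrt{1-(u/\beta_{n})^{2}}} = \mathcal{O}((\log\beta_{n})^{q+\epsilon/2})$, i.e.\ down by one power of $\log$ compared to $I_{2}$. You should add an analogous endpoint estimate: on $[\beta_{n}/2,\beta_{n}]$ bound $Q^{\prime}(u)/u$ by $(\log\beta_{n})^{q+\epsilon/2}/\beta_{n}$ (via \eqref{eq:vslf_characterization_prop3}) and use that $\int_{\beta_{n}/2}^{\beta_{n}}(1-(u/\beta_{n})^{2})^{-1/2}\,du = \mathcal{O}(\beta_{n})$, while on $[A,\beta_{n}/2]$ the square-root factor is uniformly bounded between $1$ and $2/\sqrt{3}$, which validates the reduction to $\int_{A}^{\beta_{n}/2}(\log u)^{q+o(1)}/u\,du$.

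A second, more minor point: for $q<0$ (allowed since only $q>-1$ is assumed) your estimate $I_{1}=\mathcal{O}(\beta_{n}/n)=\mathcal{O}((\log n)^{-q})$ is a growing quantity, not ``negligible'' on its face; you should explicitly note that $(\log n)^{-q} = (\log n)^{|q|}$ with $|q|<1$ is still dominated by the main term $(\log n)^{1+o(1)}$ precisely because $q>-1$. The paper uses a growing cutoff $A_{n}=(\log n)^{\alpha}$ instead of your constant $A$, which is not strictly necessary for this particular lemma but is designed so that the same decomposition can be reused in the proof of the complex-argument uniform bound (Lemma~\ref{lem:complex_hnz_uniform_lower}), where a constant cutoff would not suffice.
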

\begin{proof}
    \textbf{(Case: $p>1$)}
    From \cref{eq:hn_even_integral} we have
    \begin{equation}
        h_{n}(0) = \dfrac{2}{\pi} \int_{0}^{1} \dfrac{V_{n}^{\prime}(s)}{s} \dfrac{\diff s}{\sqrt{1 - s^{2}}}.
    \end{equation}
    From \cref{lem:vslf_characterization}(v) we know that $\lim_{n \to \infty} V_{n}^{\prime}(s) = (p/\lambda_{p}) s^{p-1}$, uniformly for $s$ in any compact subinterval of $(0,\infty)$, where the constant $\lambda_{p}$ is defined in \cref{eq:lambda_def}. 
This implies pointwise convergence of the integrand for $s \in (0, 1)$. In order to apply the dominated convergence theorem we appeal to \cref{cor:useful_estimate_T} 
choosing $\epsilon$ so that $p-1-\epsilon >0$. However, this upper bound is only available for $s \in [A/\beta_{n},1]$. Observe that the remaining part of the integral can be bounded by
\begin{equation}
\int_{0}^{A/\beta_{n}} \dfrac{|V_{n}^{\prime}(s)|}{s} \dfrac{\diff s}{\sqrt{1 - s^{2}}} \leq \dfrac{1}{\sqrt{1 - (A/\beta_{n})^{2}}}  \dfrac{\beta_{n}}{n} \int_{0}^{A} \dfrac{|Q^{\prime}(u)|}{u} \diff u 
\leq \mathcal{O} \left( \dfrac{\beta_{n}}{n} \right),
\end{equation}
because $Q^{\prime}(0)=0$ due to the assumed evenness of $Q$. As $\lim_{n \to \infty} \beta_{n}/n =0$ for $p > 1$ by \cref{eq:vslf_characterization_prop5} we have shown
\begin{align}
    \lim_{n \to \infty} h_{n}(0) &= \dfrac{2}{\pi} \dfrac{p}{\lambda_{p}} \int_{0}^{1} \dfrac{s^{p-2}}{\sqrt{1 - s^{2}}} \diff s,\\
     &= \dfrac{2p}{p-1},
\end{align}
where the last step follows from evaluating the integral and using the definition of $\lambda_{p}$.
    
\textbf{(Case: $p=1$)}

The proof for $p>1$ does not work for $p=1$ because the limiting integrand is no longer integrable, having a logarithmic divergence at the origin. We will therefore need to be more careful about handling this divergence. From \cref{eq:hn_even_integral}, we have the integral expression for $h_{n}(0)$,
\begin{align}
    h_{n}(0) = \dfrac{2}{\pi} \dfrac{\beta_{n}}{n} \int_{0}^{\beta_{n}} \dfrac{Q^{\prime}(u)}{u \sqrt{1-(u  / \beta_{n})^{2}}} \diff u.
    \label{eq:helper_one_T}
\end{align}
where we have changed variables for convenience. We split the integral into three parts, $I_{1}$, $I_{2}$, and $I_{3}$,
\begin{equation}
    I_1 + I_2 + I_3 \equiv \left( \int_{0}^{A_n} + \int_{A_{n}}^{\beta_{n}/2} + \int_{\beta_{n}/2}^{\beta_{n}} \right) \dfrac{Q^{\prime}(u)}{u \sqrt{1-(u / \beta_{n})^{2}}} \diff u,
    \label{eq:helper_three_T}
\end{equation}
where 
we define $A_{n}$ as
\begin{equation}
    A_{n} \coloneqq (\log{n})^{\alpha}, \quad \text{for some }  \alpha > 0 .
\end{equation}
We show that the integrals $I_3$ and $I_1$ are dominated by $I_2$. Estimating 
\begin{equation}
1 \leq \left[1-(u  / \beta_{n})^{2}\right]^{-1/2} \leq 2/\sqrt{3}, \quad \text{for all } A_n \leq u \leq \beta_{n}/2,
\end{equation}
and using \cref{lem:vslf_characterization}(iii) we obtain for $I_2$ lower and upper bounds
\begin{equation}
\Omega\left((\log{\beta_{n})^{q+1-\epsilon/2}}\right) \leq I_2 \leq \mathcal{O}\left((\log{\beta_{n})^{q+1+\epsilon/2}}\right),
\label{eq:helper_two_T}
\end{equation}
where we have chosen $\epsilon$ so small that $q+1-\epsilon/2 > 0$. The dominance of $I_2$ now follows from
\begin{equation}
| I_3 | \le 2 (\log{\beta_{n}})^{q+\epsilon/2} \int_{\beta_{n}/2}^{\beta_{n}} \dfrac{1}{ \beta_{n} \sqrt{1-(u / \beta_{n})^{2}}} \diff u = \mathcal{O}\left((\log{\beta_n})^{q+\epsilon/2}\right),
\label{eq:helper_four_T}
\end{equation}
\begin{equation}
| I_1 | = \mathcal{O}\left((\log{A_{n})^{q+1+\epsilon/2}}\right),
\label{eq:helper_five_T}
\end{equation}
where the latter estimate uses again \cref{lem:vslf_characterization}(iii) and that $Q^{\prime}(u)/u$ is bounded on any compact subset of $\mathbb{R}$ due to $Q^{\prime}(0)=0$. Note further that $\beta_{n}/n = \Theta\left(1/Q^{\prime}(\beta_{n})\right)$; see \cref{lem:vslf_characterization}(iv). Then it follows from \cref{eq:helper_one_T,eq:helper_two_T} and \cref{lem:vslf_characterization}(iii)
that
\begin{equation}
\Omega\left((\log{\beta_{n})^{1-\epsilon}}\right) \leq h_n(0) \leq \mathcal{O}\left((\log{\beta_{n})^{1+\epsilon}}\right),
\end{equation}
and the claim follows from  \cref{lem:vslf_characterization}(vi) as $\epsilon$ can be chosen arbitrarily small.
\end{proof}

\subsubsection{Uniform lower bounds near the origin}
In this section we derive uniform lower bounds on the real part of $h_{n}(z)$ for $z$ near the origin. The general strategy will be to reduce to the analysis we performed for $z = 0$.
\begin{lemma}
    Consider $Q \in \mathrm{CVSLF}(p,q,\theta,\gamma)$. Let $0 < \gamma^{\prime} < \gamma$, $0 < \theta^{\prime} < \theta$, and $\alpha^{\prime} \geq 0$ arbitrary for $p>1$ but constrained to $\alpha^{\prime} < 1+q$ for $p=1$. Define the set $\Lambda_{n}$ by
    \begin{equation}
    \Lambda_{n} \coloneqq \{ \lambda \in C_{\theta^{\prime}} : | \lambda | <  (\log{n})^{\alpha^{\prime}} \} \,  \cup \, \left\{\lambda : | \lambda | < \gamma^{\prime}\right\},
    \label{eq:set_T}
    \end{equation}
    Then, as $n\to\infty$, we have uniformly for $z \in \Lambda_{n} /\beta_{n}$ that
    \begin{equation}
        h_{n}(z) = h_{n}(0) \left[1 + o(1)\right].
    \end{equation}
    \label{lem:complex_hnz_uniform_lower}
\end{lemma}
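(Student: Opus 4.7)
The plan is to exploit the evenness of $h_n$ (inherited from the evenness of $Q$) to extract an explicit $z^2$ factor from the difference $h_n(z)-h_n(0)$, reducing the problem to a uniform sup bound on the integrand on a well-chosen contour. Concretely, I would start from the representation \cref{eq:hn_def} and use $h_n(-z)=h_n(z)$ (a consequence of $V_n$ being even together with $r(-s)=-r(s)$), averaging and invoking the identity $\frac{1}{s-z}+\frac{1}{s+z}-\frac{2}{s}=\frac{2z^2}{s(s^2-z^2)}$ to write
\begin{equation*}
h_n(z)-h_n(0)=\frac{z^2}{2\pi i}\oint_{\Gamma}\frac{V_n'(s)}{s(s^2-z^2)\,r(s)}\,ds,
\end{equation*}
valid for any contour $\Gamma$ encircling $[-1,1]\cup\{\pm z\}$ inside the analyticity domain $C_\theta\cup\{|s|<\gamma/\beta_n\}$ of $V_n$. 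The integrand is holomorphic at $s=0$ because $V_n'(0)=0$ makes $V_n'(s)/s$ extend analytically there.

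Next, fixing any $\theta''\in(\theta',\theta)$, I would take $\Gamma$ to be an $n$-independent ``peanut'' contour hugging $[-1,1]$ at $\mathcal{O}(1)$ distance, contained inside $C_{\theta''}$. Since $|z|\leq\max(\gamma',(\log n)^{\alpha'})/\beta_n=o(1)$, both $\pm z$ lie inside $\Gamma$ for large $n$, and the factors $|s|$, $|s\pm z|$, $|r(s)|$ are bounded below by absolute constants on $\Gamma$. The essential bound is on $|V_n'(s)|=(\beta_n/n)|Q'(\beta_n s)|$: combining \cref{lem:Qprime_complex_bound} with the MRS scaling \cref{lem:vslf_characterization}(iv) and (vi) yields $|V_n'(s)|=\mathcal{O}((\log\beta_n)^{\epsilon})$ for any $\epsilon>0$, and hence
\begin{equation*}
|h_n(z)-h_n(0)|\leq C\,|z|^2(\log n)^{\epsilon}.
\end{equation*}

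Dividing by $h_n(0)$ using \cref{lem:hn0_scaling_sup} gives the claim in both regimes: for $p>1$, $h_n(0)\to 2p/(p-1)$ is $\Theta(1)$ and $|z|^2=\mathcal{O}((\log n)^{2\alpha'+2q/p}/n^{2/p})=o(1)$; for $p=1$, $h_n(0)=(\log n)^{1+o(1)}$ and $|z|^2=\mathcal{O}((\log n)^{2\alpha'+2q}/n^{2})$, so the ratio is $\mathcal{O}((\log n)^{2\alpha'+2q+\epsilon-1-o(1)}/n^{2})=o(1)$, with the restriction $\alpha'<1+q$ being the natural window in which this cancellation is cleanly quantitative. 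The main technical obstacle is the geometric construction of the peanut contour $\Gamma$: it must sit inside $C_{\theta''}$ uniformly in $n$ while keeping $|s|$, $|s\pm z|$, and $|r(s)|$ bounded away from zero for \emph{all} $z\in\Lambda_n/\beta_n$ simultaneously. Since $C_\theta$ opens around the real axis, this is routine, but some care is needed at the transition between the straight segments inside the cone and the arcs that close the contour at $\mathcal{O}(1)$ distance from $\pm 1$, to ensure the resulting constants are genuinely $n$-independent.
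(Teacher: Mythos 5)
Your bound $|h_n(z)-h_n(0)|\leq C|z|^2(\log n)^{\epsilon}$ is too strong and cannot be correct, because it would make the hypothesis $\alpha'<1+q$ for $p=1$ completely unnecessary: the factor $|z|^2=\mathcal{O}((\log n)^{2\alpha'+2q}/n^2)$ already beats $h_n(0)=(\log n)^{1+o(1)}$ by a power of $n$, no matter how large $\alpha'$ is. The fact that the paper \emph{does} need the restriction $\alpha'<1+q$ is a strong signal that the true correction decays only polylogarithmically in $n$, not polynomially.

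The concrete flaw is in the contour. A closed contour $\Gamma$ encircling $[-1,1]\cup\{\pm z\}$ must cross the imaginary axis twice (once in each half-plane). The cone $C_{\theta''}$ for $\theta''<\pi/2$ contains \emph{no} purely-imaginary points, so it is impossible for $\Gamma$ to be ``contained inside $C_{\theta''}$'' as you assert: the two cone components meet only at the origin. The only other analyticity region for $V_n$ in the $s$-variable is the disk $\{|s|<\gamma/\beta_n\}$, which \emph{shrinks} with $n$. Any admissible $\Gamma$ must therefore pinch within $\mathcal{O}(1/\beta_n)$ of the origin near $\Re s=0$. Once it does, your claim that ``$|s|$, $|s\pm z|$, $|r(s)|$ are bounded below by absolute constants'' fails (one has $|s|=\mathcal{O}(1/\beta_n)$ there), and the sup-norm bound on the integrand does not give $|z|^2$ as the only small factor; it gives a much weaker estimate. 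In fact, once you force the contour through the shrinking disk, the estimate you need requires breaking $\Gamma$ into pieces and balancing the growth of $1/|s|$ against the smallness of $|V_n'(s)|$ over a wide range of scales in $|s|$ from $\mathcal{O}(1/\beta_n)$ up to $\mathcal{O}(1)$, and the net contribution is only $o(h_n(0))$ polylogarithmically. This is precisely the regime in which the restriction $\alpha'<1+q$ becomes necessary for $p=1$.

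The paper avoids the peanut contour entirely. Starting from \cref{eq:hn_representation_useful_T}, it collapses the far part of the loop onto the real axis (where the integrand is controlled as in the $h_n(0)$ analysis via the split $I_1+I_2+I_3$) and keeps only a small $n$-dependent closed curve $\Gamma_n$ of diameter $\mathcal{O}((\log n)^\alpha)$ in $u=\beta_n s$ coordinates, chosen with a bounded-away distance to $\Lambda_n$. The $\lambda$-dependence is then peeled off using the same identity you propose ($1/(u^2-\lambda^2)=1/u^2+\lambda^2/(u^2(u^2-\lambda^2))$), but the resulting correction is compared term-by-term against $I_1,I_2,I_3$. It is that comparison, not a global sup bound, that requires $\alpha>\alpha'$ and $\alpha<1+q$ for $p=1$. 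Your algebraic identity and the $z^2$-extraction via evenness are correct and in fact parallel the paper's; it is the geometry of the contour and the ensuing estimate where your argument breaks down.
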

Together with the scaling of $h_{n}(0)$ derived in \cref{lem:hn0_scaling_sup}, we conclude that, for sufficiently large $n$, one can obtain a uniform lower bound, say $h_{n}(0) / 2$, for the real part of $h_{n}(z)$ if $z \in \Lambda_{n}/\beta_{n}$.

\begin{proof}
Choose $\alpha > \alpha^{\prime}$  arbitrary for $p>1$ but still constrained to $\alpha < 1+q$ for $p=1$. Set $A_n := \left(\log{n}\right)^{\alpha}$ and denote $r(z) = (z+1)^{1/2}(z-1)^{1/2}$. By the discussion in \cref{sec:equilibrium_measure}, we may represent $h_{n}(z)$ as
\begin{equation}
    h_{n}(z) = -2 \left[\dfrac{1}{2\pi i}\left(\int_{-1}^{-A_{n}/\beta_{n}} + \int_{A_{n}/\beta_{n}}^{1}\right) \dfrac{V_{n}^{\prime}(s)}{r_{+}(s) (s-z)} \diff s + \dfrac{1/2}{2\pi i} \oint_{\Gamma_{n}/\beta_{n}} \dfrac{V_{n}^{\prime}(s)}{r(s) (s-z)} \diff s \right],
\end{equation}
where $r_{+}(s) = i \sqrt{1 - s^{2}}$, and $\Gamma_n$ are simply closed clockwise oriented curves as sketched in \cref{fig:bowtie_sketch}. Important features of the contours are the following. i) $\Gamma_n$ is contained in $C_{\theta} \cup \left\{z : |z| < \gamma\right\}$ and has a positive $n$-independent lower bound on its distance from the set $\Lambda_n$ (see \eqref{eq:set_T}) that is contained in the interior of 
$\Gamma_n$. ii) Moreover, the rightmost and leftmost points of $\Gamma_n$ are $A_n$ and $-A_n$ respectively. iii) The length of $\Gamma_n$ is bounded by $\mathcal{O}(A_n)$.
\begin{figure}[t]
    \begin{center}
        \tikzset{every picture/.style={line width=0.75pt}} %

\begin{tikzpicture}[x=0.75pt,y=0.75pt,yscale=-1,xscale=1]

\draw  [draw opacity=0][fill={rgb, 255:red, 222; green, 222; blue, 222 }  ,fill opacity=1 ] (342.49,212.95) .. controls (338.19,220.74) and (330.02,226) .. (320.65,226) .. controls (311.29,226) and (303.12,220.74) .. (298.82,212.95) -- (320.65,200.34) -- cycle ; \draw   (342.49,212.95) .. controls (338.19,220.74) and (330.02,226) .. (320.65,226) .. controls (311.29,226) and (303.12,220.74) .. (298.82,212.95) ;  
\draw  [draw opacity=0][fill={rgb, 255:red, 222; green, 222; blue, 222 }  ,fill opacity=1 ] (298.81,188.02) .. controls (303.11,180.22) and (311.27,174.97) .. (320.64,174.97) .. controls (330.01,174.97) and (338.18,180.22) .. (342.48,188.02) -- (320.64,200.62) -- cycle ; \draw   (298.81,188.02) .. controls (303.11,180.22) and (311.27,174.97) .. (320.64,174.97) .. controls (330.01,174.97) and (338.18,180.22) .. (342.48,188.02) ;  
\draw  [draw opacity=0][fill={rgb, 255:red, 222; green, 222; blue, 222 }  ,fill opacity=1 ] (320.65,200.34) -- (385.61,162.61) -- (385.61,238.07) -- cycle ;
\draw  [draw opacity=0][fill={rgb, 255:red, 222; green, 222; blue, 222 }  ,fill opacity=1 ] (385.6,163.32) .. controls (404.02,169.67) and (415.77,188.18) .. (412.58,207.62) .. controls (410.2,222.14) and (400.08,233.5) .. (387,238.38) -- (372.23,201) -- cycle ; \draw   (385.6,163.32) .. controls (404.02,169.67) and (415.77,188.18) .. (412.58,207.62) .. controls (410.2,222.14) and (400.08,233.5) .. (387,238.38) ;  
\draw  [draw opacity=0][fill={rgb, 255:red, 222; green, 222; blue, 222 }  ,fill opacity=1 ] (255.54,238.2) .. controls (237.18,231.7) and (225.59,213.08) .. (228.94,193.67) .. controls (231.52,178.76) and (242.25,167.26) .. (255.91,162.74) -- (269.23,200.63) -- cycle ; \draw   (255.54,238.2) .. controls (237.18,231.7) and (225.59,213.08) .. (228.94,193.67) .. controls (231.52,178.76) and (242.25,167.26) .. (255.91,162.74) ;  
\draw  [draw opacity=0][fill={rgb, 255:red, 222; green, 222; blue, 222 }  ,fill opacity=1 ] (320.5,200.47) -- (255.54,238.2) -- (255.54,162.74) -- cycle ;
\draw  (150.75,200.04) -- (491.75,200.04)(321.25,96) -- (321.25,300) (484.75,195.04) -- (491.75,200.04) -- (484.75,205.04) (316.25,103) -- (321.25,96) -- (326.25,103)  ;
\draw [fill={rgb, 255:red, 222; green, 222; blue, 222 }  ,fill opacity=1 ]   (342.04,188.12) -- (385.6,163.32) ;
\draw [fill={rgb, 255:red, 222; green, 222; blue, 222 }  ,fill opacity=1 ]   (298.77,187.62) -- (255.91,162.74) ;
\draw [fill={rgb, 255:red, 222; green, 222; blue, 222 }  ,fill opacity=1 ]   (299.1,213.4) -- (255.54,238.2) ;
\draw [fill={rgb, 255:red, 222; green, 222; blue, 222 }  ,fill opacity=1 ]   (342.31,212.69) -- (387,238.38) ;
\draw  [draw opacity=0] (341.99,236.95) .. controls (337.69,244.74) and (329.52,250) .. (320.15,250) .. controls (310.79,250) and (302.62,244.74) .. (298.32,236.95) -- (320.15,224.34) -- cycle ; \draw  [color={rgb, 255:red, 57; green, 114; blue, 249 }  ,draw opacity=1 ] (341.99,236.95) .. controls (337.69,244.74) and (329.52,250) .. (320.15,250) .. controls (310.79,250) and (302.62,244.74) .. (298.32,236.95) ;  
\draw  [draw opacity=0] (235.07,275.26) .. controls (198.88,259.79) and (177.56,221) .. (186.22,182.04) .. controls (192.14,155.44) and (210.68,134.82) .. (234.37,124.85) -- (268.03,200.22) -- cycle ; \draw  [color={rgb, 255:red, 57; green, 114; blue, 249 }  ,draw opacity=1 ] (235.07,275.26) .. controls (198.88,259.79) and (177.56,221) .. (186.22,182.04) .. controls (192.14,155.44) and (210.68,134.82) .. (234.37,124.85) ;  
\draw [color={rgb, 255:red, 57; green, 114; blue, 249 }  ,draw opacity=1 ]   (298.32,236.95) -- (235.25,274.5) ;
\draw  [draw opacity=0] (298.98,163.35) .. controls (303.26,155.55) and (311.42,150.28) .. (320.79,150.25) .. controls (330.15,150.23) and (338.33,155.47) .. (342.65,163.25) -- (320.84,175.91) -- cycle ; \draw  [color={rgb, 255:red, 57; green, 114; blue, 249 }  ,draw opacity=1 ] (298.98,163.35) .. controls (303.26,155.55) and (311.42,150.28) .. (320.79,150.25) .. controls (330.15,150.23) and (338.33,155.47) .. (342.65,163.25) ;  
\draw [color={rgb, 255:red, 57; green, 114; blue, 249 }  ,draw opacity=1 ]   (298.98,163.35) -- (234.25,125) ;
\draw  [draw opacity=0] (405.71,124.56) .. controls (441.95,139.9) and (463.42,178.61) .. (454.9,217.6) .. controls (449.08,244.23) and (430.62,264.91) .. (406.96,274.98) -- (373.03,199.72) -- cycle ; \draw  [color={rgb, 255:red, 57; green, 114; blue, 249 }  ,draw opacity=1 ] (405.71,124.56) .. controls (441.95,139.9) and (463.42,178.61) .. (454.9,217.6) .. controls (449.08,244.23) and (430.62,264.91) .. (406.96,274.98) ;  
\draw [color={rgb, 255:red, 57; green, 114; blue, 249 }  ,draw opacity=1 ]   (342.6,163.11) -- (405.53,125.33) ;
\draw [color={rgb, 255:red, 57; green, 114; blue, 249 }  ,draw opacity=1 ]   (342.21,236.71) -- (407.08,274.82) ;
\draw  [draw opacity=0][dash pattern={on 4.5pt off 4.5pt}] (298.81,152.02) .. controls (303.11,144.22) and (311.27,138.97) .. (320.64,138.97) .. controls (330.01,138.97) and (338.18,144.22) .. (342.48,152.02) -- (320.64,164.62) -- cycle ; \draw  [dash pattern={on 4.5pt off 4.5pt}] (298.81,152.02) .. controls (303.11,144.22) and (311.27,138.97) .. (320.64,138.97) .. controls (330.01,138.97) and (338.18,144.22) .. (342.48,152.02) ;  
\draw  [dash pattern={on 4.5pt off 4.5pt}]  (342.48,152.02) -- (434,93) ;
\draw  [dash pattern={on 4.5pt off 4.5pt}]  (298.81,152.02) -- (206.5,95) ;

\draw  [draw opacity=0][dash pattern={on 4.5pt off 4.5pt}] (342.3,250.99) .. controls (337.92,258.73) and (329.7,263.91) .. (320.33,263.81) .. controls (310.96,263.71) and (302.85,258.37) .. (298.63,250.53) -- (320.6,238.15) -- cycle ; \draw  [dash pattern={on 4.5pt off 4.5pt}] (342.3,250.99) .. controls (337.92,258.73) and (329.7,263.91) .. (320.33,263.81) .. controls (310.96,263.71) and (302.85,258.37) .. (298.63,250.53) ;  
\draw  [dash pattern={on 4.5pt off 4.5pt}]  (298.63,250.53) -- (206.5,308.59) ;
\draw  [dash pattern={on 4.5pt off 4.5pt}]  (342.3,250.99) -- (434.01,308.97) ;

\draw  [color={rgb, 255:red, 57; green, 114; blue, 249 }  ,draw opacity=1 ] (441.95,145.74) -- (442.78,155.84) -- (434.2,150.44) ;
\draw   (452.95,196.55) -- (461.05,204.66)(461.05,196.55) -- (452.95,204.66) ;
\draw [color={rgb, 255:red, 0; green, 0; blue, 0 }  ,draw opacity=0.39 ]   (320.65,200.34) -- (329.5,178.36) ;
\draw [shift={(330.25,176.5)}, rotate = 111.92] [color={rgb, 255:red, 0; green, 0; blue, 0 }  ,draw opacity=0.39 ][line width=0.75]    (6.56,-1.97) .. controls (4.17,-0.84) and (1.99,-0.18) .. (0,0) .. controls (1.99,0.18) and (4.17,0.84) .. (6.56,1.97)   ;
\draw   (408.95,195.55) -- (417.05,203.66)(417.05,195.55) -- (408.95,203.66) ;
\draw   (179.95,195.55) -- (188.05,203.66)(188.05,195.55) -- (179.95,203.66) ;
\draw  [color={rgb, 255:red, 57; green, 114; blue, 249 }  ,draw opacity=1 ] (192.74,244.15) -- (192.18,234.03) -- (200.61,239.65) ;

\draw (446.5,135.9) node [anchor=north west][inner sep=0.75pt]    {$\Gamma _{n}$};
\draw (457.5,180) node [anchor=north west][inner sep=0.75pt]  [font=\footnotesize]  {$A_{n} =(\log n)^{\alpha }$};
\draw (327.25,130) node [anchor=north west][inner sep=0.75pt]  [font=\scriptsize]  {$\gamma $};
\draw (326.34,183.31) node [anchor=north west][inner sep=0.75pt]  [font=\scriptsize]  {$\gamma ^{\prime }$};
\draw (412.75,206.9) node [anchor=north west][inner sep=0.75pt]  [font=\scriptsize]  {$(\log n)^{\alpha ^{\prime }}$};
\draw (373.75,177) node [anchor=north west][inner sep=0.75pt]    {$\Lambda _{n}$};
\draw (158.25,200.9) node [anchor=north west][inner sep=0.75pt]  [font=\footnotesize]  {$-A_{n}$};

\end{tikzpicture}
    \end{center}
    \caption{Sketch of the contour $\Gamma_n$, and the sets $\Lambda_n$ and $C_{\theta^{\prime}} \cup \left\{z : |z| < \gamma\right\}$.}
    \label{fig:bowtie_sketch} 
\end{figure}

First we employ the symmetries of $Q$ and $r$ and change variables in the integrals to $u = \beta_{n} s$. Denoting $\lambda \equiv  \beta_{n} z$ we obtain
\begin{equation}
    h_{n}(z) = \dfrac{2}{\pi} \dfrac{\beta_{n}}{n} \left[ \int_{A_n}^{\beta_{n}} \dfrac{u Q^{\prime}(u)}{(u^{2} - \lambda^{2}) \sqrt{1 - (u/{\beta_{n})^{2}}}} \diff u +
     \dfrac{i}{4} \oint_{\Gamma_{n}} \dfrac{Q^{\prime}(u)}{r(u/\beta_{n}) (u-\lambda)} \diff u \right] .
    \label{eq:hnz_double_integral}
\end{equation}
In order to reduce to the $\lambda = 0$ case, we want to examine the effect of replacing $1/(u^{2} - \lambda^{2}) \to 1/u^{2}$. To that end, we consider the difference
\begin{equation}
    \dfrac{1}{u^{2} - \lambda^{2}} - \dfrac{1}{u^{2}} = \dfrac{\lambda^{2}}{u^{2}(u^{2} - \lambda^{2})},
\end{equation}
so that
\begin{equation}
  \dfrac{\pi}{2} \dfrac{n}{\beta_{n}} h_{n}(z) =  \int_{A_n}^{\beta_{n}} \dfrac{Q^{\prime}(u)}{u \sqrt{1 - (u/{\beta_{n})^{2}}}} \left(1 +\dfrac{\lambda^{2}}{u^{2} - \lambda^{2}}\right)\diff u +
     \dfrac{i}{4} \oint_{\Gamma_{n}} \dfrac{Q^{\prime}(u)}{r(u/\beta_{n}) (u-\lambda)} \diff u \equiv J_1 + J_2.
\end{equation}
Recall from \cref{{eq:helper_one_T},{eq:helper_three_T}} that 
\begin{equation}
    \dfrac{\pi}{2} \dfrac{n}{\beta_{n}} h_n(0) = I_1 + I_2 + I_3 , 
 \end{equation}
 a representation that also holds in the case $p >1$. We therefore need to establish $(J_1 + J_2)/(I_1+I_2+I_3) =1 + o(1)$: We prove this by verifying that the following limits for $n \to \infty$ hold uniformly in $z$:
 \begin{equation}
 \dfrac{I_1}{I_1+I_2+I_3} \to 0, \quad  \dfrac{J_1}{I_2+I_3} \to 1, \quad  \dfrac{J_2}{I_2+I_3} \to 0 .
 \end{equation}
The second claim follows from the observation that for $(\log{n})^{\alpha} = A_{n} \leq u \leq \beta_{n}$ and $| \lambda | \leq (\log{n})^{\alpha^{\prime}}$ one has
\begin{equation}
\dfrac{\lambda^{2}}{u^{2} - \lambda^{2}} \leq \mathcal{O}\left( (\log{n})^{-2(\alpha-\alpha^{\prime})}\right) .
\end{equation}
The first claim has already been established in the case $p=1$, see  \cref{eq:helper_two_T,eq:helper_four_T,eq:helper_five_T}. In the case $p>1$, it follows from \cref{lem:vslf_characterization}(iii) and from the boundedness of $Q^{\prime}(u)/u$ on bounded sets that $I_1=\mathcal{O}\left( (\log{n})^{\delta_{1}}\right)$, whereas by  \cref{lem:vslf_characterization}(vi) and \cref{lem:hn0_scaling_sup} we have
$I_1+I_2+I_3=\Omega\left(n/\beta_{n}\right)=\Omega\left(n^{\delta_{2}}\right)$, for some positive numbers $\delta_{1}, \delta_{2}$. Consequently $I_1/(I_1+I_2+I_3) \to 0$ as $n \to \infty$.

Finally, we turn to the third claim. Observe that by the choice of the path of integration $\Gamma_n$ the denominator of the integrand of the term $J_2$ is uniformly bounded away from $0$.
Using \cref{eq:complex_bound_T} we obtain the sup-norm bound $\mathcal{O}\left(A_{n}^{p-1} (\log{A_{n}})^{q+\epsilon/2}\right)$ on the integrand yielding
$J_2=\mathcal{O}\left(A_{n}^{p} (\log{A_{n}})^{q+\epsilon/2}\right)$. For $p > 1$ we have $I_2+I_3=\Omega(n/\beta_{n})$ which establishes $J_2/(I_2+I_3) \to 0$. In the case $p=1$
we learn from \cref{eq:helper_two_T,eq:helper_four_T} that $I_2+I_3 = \Omega\left((\log{\beta_{n})^{q+1-\epsilon/2}}\right)$ and the third claim follows from $\alpha -(q+1) < 0$ by choosing $\epsilon$ small enough. This completes the proof.
\end{proof}

\begin{remark}
    By using a similar contour to that depicted in \cref{fig:bowtie_sketch}, one can combine the definition \cref{eq:hn_def} and \cref{lem:vslf_characterization,lem:Qprime_complex_bound}, in order to establish the following upper bounds on $|h_{n}(z)|$:
    \begin{equation}
        |h_{n}(z)| \leq \begin{cases}
            \mathcal{O}\big(\log{n}\big), & p>1; \\
            \mathcal{O}\big((\log{n})^{1+\epsilon}\big), & p=1,
        \end{cases}
    \end{equation}
    where $\epsilon > 0$ can be made arbitrarily small. These bounds are uniformly valid for $z = x + i y$ with $|x| < 1 - \delta$ for arbitrary but fixed $\delta > 0$, and $z$ in a suitable $\Theta(|x|)$-sized neighborhood of $x$.
    \label{rem:hnz_upper_bound}
\end{remark}

\subsection{Uniform lower bounds for the equilibrium measure on the real line}
\label{sec:poseqmeasureproof}
Let us now prove \cref{lem:hn_positive_x_g1} from \cref{sec:equilibrium_measure}, which we restate here.
\firstposeqmeasure*
\begin{proof}
    Since $h_{n}(x)$ is even, it suffices to consider $x \geq 0$. The complexity of the proof depends on the magnitude of $x$, and whether it depends on $n$.
    Let $M>0$ be given. It is our goal to construct a corresponding constant $C>0$ uniformly for sufficiently large values of $n$ as stated in the lemma.
    
    \vspace{\baselineskip}
    \textbf{Case 1 (near origin): $x  \in [0, (\log{n})^{\alpha} / \beta_{n}]$, with $\alpha > 0$ for $p > 1$ and $\alpha < 1+q$ for $p=1$}
    \vspace{\baselineskip}

    In this near-zero regime, we can apply \cref{lem:complex_hnz_uniform_lower} to conclude that $h_{n}(x) = h_{n}(0) [1 + o(1)]$ uniformly in $x$, where the $o(1)$ refers to scaling with $n$. Then the asymptotics of $h_{n}(0)$ from \cref{lem:hn0_scaling_sup} suffice to give a uniform positive lower bound on $h_{n}(x)$ in this region. 
    Thus, any choice, say $0 <C \leq1$ would work for this region.

\vspace{\baselineskip}
   \textbf{Case 2 (bulk): $x  \in [(\log{n})^{\alpha} / \beta_{n} ,M]$, with $\alpha$ as in case 1}
    \vspace{\baselineskip}

    By \cref{lem:vslf_characterization}(i), there exists $A > 0$ such that $u Q^{\prime}(u)$ is increasing for $u > A$. This implies $s V_{n}^{\prime}(s)$ is increasing for $s > A/\beta_{n}$, and therefore in this region of increase $x V_{n}^{\prime}(x) - s V_{n}^{\prime}(s)$ has the same sign as $x - s$. If we split the integral \cref{eq:hn_even_integral} for $h_{n}(x)$ into two regions,
    \begin{align}
        h_{n}(x) &= \dfrac{2}{\pi}\left( \int_{0}^{A/\beta_{n}} + \int_{A/\beta_{n}}^{1} \right) \dfrac{x V_{n}^{\prime}(x) - s V_{n}^{\prime}(s)}{x^{2} - s^{2}} \dfrac{\diff s}{\sqrt{1-s^{2}}},\\
        &\equiv I_{1} + I_{2}, \nonumber
    \end{align}
    the second integral $I_{2}$ is an integral of a manifestly positive integrand, so is positive. Let us first show that $I_{2}$ can be lower bounded by a positive constant. Since $I_{2}$ has a positive integrand, we can lower bound it by truncating the region of integration. In addition, we learn from \cref{lem:vslf_characterization}(v) 
    that for all $\epsilon > 0$ there exists $n_0 \in \mathbb N$ such that for all $n \geq n_0$ and $s \in [1/4, 1]$  we have
    \begin{equation}
    \left| s V_{n}^{\prime}(s) - \dfrac{p}{\lambda_{p}} s^{p} \right| \leq \dfrac{\epsilon}{2} .
    \end{equation}
   Using in addition the above mentioned monotonicity of $s V_{n}^{\prime}(s)$ we obtain for $(\log{n})^{\alpha} / \beta_{n} \leq x \leq1/2$ that
   \begin{equation}
   \dfrac{\pi}{2} I_2 \geq 
   \int_{3/4}^{1} \dfrac{s V_{n}^{\prime}(s) - x V_{n}^{\prime}(x)}{s^{2} - x^{2}} \dfrac{\diff s}{\sqrt{1-s^{2}}}
   \geq \int_{3/4}^{1} \left[\dfrac{3}{4}V_{n}^{\prime}(3/4) - \dfrac{1}{2}V_{n}^{\prime}(1/2) \right] \diff s \geq  \dfrac{1}{4} \left[ \dfrac{p}{\lambda_{p}}\left( \left( \dfrac{3}{4}\right)^p - 
   \left( \dfrac{1}{2}\right)^p\right) -\epsilon \right],
   \end{equation}
   and for $x \in [1/2, M]$ that
     \begin{equation}
   \dfrac{\pi}{2} I_2 \geq 
   \int_{1/8}^{1/4} \dfrac{x V_{n}^{\prime}(x) - s V_{n}^{\prime}(s)}{x^{2} - s^{2}} \dfrac{\diff s}{\sqrt{1-s^{2}}}
   \geq \dfrac{1}{x^{2}} \int_{1/8}^{1/4} \left[\dfrac{1}{2}V_{n}^{\prime}(1/2) - \dfrac{1}{4}V_{n}^{\prime}(1/4) \right] \diff s \geq  \dfrac{1}{8 x^{2}} \left[ \dfrac{p}{\lambda_{p}}\left( \left( \dfrac{1}{2}\right)^p - 
   \left( \dfrac{1}{4}\right)^p\right) -\epsilon \right].
   \end{equation}
  From these estimates it follows that for any $p \geq 1$ one can choose $\epsilon > 0$ and an $M$-dependent constant $C>0$ such that for all $n \geq n_0$ we have $I_2 \geq 2C$. In order to complete the proof we need to show that 
 \begin{equation}
 | I_1 | \leq \dfrac{I_2}{2} ,
 \label{eq:final_claim_T}
 \end{equation} 
  for all sufficiently large values of $n$. Let us start upper-bounding the modulus of $I_1$ by observing
    \begin{align}
        I_{1} &\equiv \dfrac{2}{\pi} \int_{0}^{A/\beta_{n}} \dfrac{x V_{n}^{\prime}(x) - s V_{n}^{\prime}(s)}{x^{2} - s^{2}} \dfrac{\diff s}{\sqrt{1-s^{2}}},\\
        &= \dfrac{2}{\pi} \dfrac{\beta_{n}}{n} \int_{0}^{A} \dfrac{v Q^{\prime}(v) - u Q^{\prime}(u)}{v^{2} - u^{2}} \dfrac{\diff u}{\sqrt{1 - (u/\beta_{n})^{2}}},\nonumber
    \end{align}
    where $v \equiv \beta_{n} x \in [(\log{n})^{\alpha}, \beta_{n} M]$, and we rescaled the integration variable for convenience. With $f_{Q}(y) \coloneqq \frac{\mathrm{d}}{\mathrm{d} y}[y Q^{\prime}(y)] = Q^{\prime}(y) + y Q^{\prime \prime}(y)$, we have
    \begin{equation}
        \dfrac{v Q^{\prime}(v) - u Q^{\prime}(u)}{v^{2} - u^{2}} = \dfrac{1}{v+u} \int_{0}^{1} f_{Q}[v - t(v-u)] \diff t.
        \label{eq:fQ_rewriting}
    \end{equation}
    The argument of $f_{Q}$ is always between $u$ and $v$, and since $u \in [0,A]$ while $v \to \infty$, this argument is always in $[0,v]$. For bounded arguments, we know $f_{Q}(u)$ is bounded due to our analyticity assumption on $Q$. For growing arguments, note that the condition \cref{eq:vsf_prop3} on $Q$ implies that
    \begin{align}
        f_{Q}(y) &= Q^{\prime}(y) \left[1 + \dfrac{y Q^{\prime\prime}(y)}{Q^{\prime}(y)}\right], \\
        &\leq Q^{\prime}(y) \times 2p \mathclap{\hspace{10em} \text{for large enough } y,}\label{eq:fQ_rewriting2_T}
    \end{align}
    with $Q^{\prime}(y) > 0$ also for large enough $y$, given condition \cref{eq:vsf_prop1}. Again for sufficiently large $y$, from \cref{eq:vslf_characterization_prop3} we have the bound $Q^{\prime}(y) \leq y^{p-1} (\log{y})^{q + \epsilon}$ for $\epsilon > 0$. For $p > 1$ this bound increases with $y$, while for $p=1$ it can increase or decrease depending on the sign of $q$. Taken together, these considerations of both bounded and growing arguments imply a bound, valid for $t \in [0,1]$ and $u \in [0,A]$, of the form
    \begin{equation}
        \Big|f_{Q}[v - t(v-u)]\Big| \leq c \, v^{p-1} (\log{v})^{\eta_{q}}
    \end{equation}
    for some constant $c > 0$, where we have defined 
    \begin{equation}
        \eta_{q} \coloneqq \begin{cases}
            q+\epsilon, & q \geq 0; \\
            0, & q < 0,
        \end{cases}
    \end{equation}
    where $\epsilon > 0$ can be made small by taking $n$ large. Then we have the bound
    \begin{align}
        |I_{1}| &\leq \dfrac{2}{\pi} \dfrac{\beta_{n}}{n} c \, v^{p-1} (\log{v})^{\eta_{q}}  \times \int_{0}^{A} \dfrac{1}{v+u} \dfrac{\diff u}{\sqrt{1 - (u/\beta_{n})^{2}}}.
    \end{align}
    The integral can be upper bounded simply as
    \begin{equation}
        \int_{0}^{A} \dfrac{1}{v+u} \dfrac{\diff u}{\sqrt{1 - (u/\beta_{n})^{2}}} \leq \dfrac{1}{\sqrt{1 - (A/\beta_{n})^{2}}} \int_{0}^{A} \dfrac{1}{v+u} \diff u = \dfrac{1}{\sqrt{1 - (A/\beta_{n})^{2}}}\log\left(1 + \dfrac{A}{v}\right) = \mathcal{O}\left(\dfrac{1}{v}\right),
    \end{equation}
    since $A/v \to 0$ and $A/\beta_{n} \to 0$ as $n \to \infty$. Thus we conclude that
    \begin{equation}
        |I_{1}| \leq \mathcal{O}\left(\dfrac{\beta_{n}}{n} v^{p-2} (\log{v})^{\eta_{q}} \right).
        \label{eq:uniform_lower_I1_conditional}
    \end{equation}
    We now analyze this on a case by case basis.

    \textbf{Case $\mathcal{A}$: $p>1$}

    Using $v = \mathcal{O}(\beta_{n})$ and the asymptotics for $\beta_{n}$ in \cref{eq:vslf_characterization_prop5}, for $p \geq 2$ we have $v^{p-2} =  \mathcal{O}(\beta_{n}^{p-2})$, and so
    \begin{equation}
        |I_{1}| \leq \wt{\mathcal{O}}\left(\dfrac{\beta_{n}^{p-1}}{n}\right) = \wt{\mathcal{O}}\left(n^{-1/p}\right),
    \end{equation}
    where the $\wt{\mathcal{O}}$ notation hides polylogarithmic factors which are unimportant for $p > 1$. Equally, for $1 < p < 2$, $v^{p-2}$ is $o(1)$, and hence in this case
    \begin{equation}
        |I_{1}| \leq \wt{\mathcal{O}}(\beta_{n} / n) = \wt{\mathcal{O}}(n^{-(p-1)/p}).
    \end{equation}
    Overall we conclude for all $p>1$ that $|I_{1}|$ vanishes as $n \to \infty$ and condition~\eqref{eq:final_claim_T} is satisfied for $n$ large enough.
    
    \textbf{Case $\mathcal{B}$: $p=1$}

    In this marginal case, we need to refine our lower bound on $I_{2}$. First we rewrite the integrand of $I_2$ in terms of $f_{Q}$ as in \cref{eq:fQ_rewriting}, but this time we use the bound
    \begin{align}
        f_{Q}(y) &= Q^{\prime}(y) \left[1 + \dfrac{y Q^{\prime\prime}(y)}{Q^{\prime}(y)}\right],\\
        &\geq Q^{\prime}(y) \times (p/2) \mathclap{\hspace{10em} \text{for large enough } y,}
        \label{eq:fQ_lower_bound}
    \end{align}
    which again follows from the condition \cref{eq:vsf_prop3} on $Q$. Let $A^{\prime}\geq A$ be so large that we can apply \cref{eq:fQ_lower_bound} and the lower bound in \cref{eq:vslf_characterization_prop3} on $Q^{\prime}$ throughout the slightly reduced domain of integration $ [A^{\prime}, \beta_{n}]$. Then, as $p=1$, we have the lower bound
    \begin{equation}
        I_{2} \geq \dfrac{1}{\pi} \dfrac{\beta_{n}}{n} \int_{A^{\prime}}^{\beta_{n}} \int_{0}^{1} \dfrac{\left(\log\left[v - t(v-u)\right]\right)^{q - \epsilon}}{v+u} \diff t \, \diff u.
        \label{eq:uniform_lower_I2_log}
    \end{equation}
    We can take $A^{\prime}$ large enough that the integrand is positive, such that we can lower bound $I_{2}$ by restricting the $t$ integral domain. The argument $v-t(v-u)$ of the logarithm linearly interpolates between $u$ and $v$, and if $q-\epsilon$ is positive (negative), we restrict to the half of the $t$-integral where $v-t(v-u)$ is lower (upper) bounded by its value $(v+u)/2$ at the midpoint. Evaluating the integrals then gives the lower bound
    \begin{equation}
        I_{2} \geq \Omega\left(\dfrac{\beta_{n}}{n} \left[\big(\log(v+u)\big)^{1+q-\epsilon}\right]_{u=A^{\prime}}^{u=\beta_{n}}\right).
        \label{eq:I2_lower_bound}
    \end{equation}
    If $v \leq \mathcal{O}(\beta_{n}^{\delta})$ for some $\delta<1$, then this bound is of the order $\Omega((\beta_{n}/n) (\log{\beta_{n}})^{1+q-\epsilon})$ since the upper and lower limits do not cancel to leading order. In that case, using \cref{eq:uniform_lower_I1_conditional} to bound $|I_{1}|$, we have
    \begin{equation}
        \dfrac{|I_{1}|}{I_{2}} \leq \mathcal{O}\left(\dfrac{1}{v} \dfrac{1}{(\log{\beta_{n}})^{1+q-\epsilon-\eta_{q}}}\right).
    \end{equation}
    In the region $v \in [(\log{n})^{\alpha}, \beta_{n}^{2/3}]$, we can bound $1/v$ by $1/(\log{n})^{\alpha}$, so then
    \begin{equation}
        \dfrac{|I_{1}|}{I_{2}} \leq \mathcal{O}\left(\dfrac{1}{(\log{\beta_{n}})^{1+q-\epsilon-\eta_{q}+\alpha}}\right).
    \end{equation}
    For $q \geq 0$, the exponent is bounded by $1 + \alpha - 2\epsilon$, while for $q < 0$, it is bounded by $1 + q + \alpha - \epsilon$, which shows that in both cases the ratio goes to zero as $n \to \infty$. Equally, in the region $v \in [\beta_{n}^{2/3}, M\beta_{n}]$, the $I_{2}$ lower bound from \cref{eq:I2_lower_bound} becomes $\Omega((\beta_{n}/n) (\log{\beta_{n}})^{q-\epsilon})$ because the leading term cancels, while the $1/v$ from the $I_{1}$ bound is at most $1/\beta_{n}^{2/3}$, so then 
    \begin{equation}
        \dfrac{|I_{1}|}{I_{2}} \leq \mathcal{O}\left(\dfrac{1}{\beta_{n}^{2/3}} \dfrac{1}{(\log{\beta_{n}})^{q-\epsilon-\eta_{q}}}\right),
    \end{equation}
    which also goes to zero as $n\to\infty$. Thus \cref{eq:final_claim_T} holds in all cases considered.

   \end{proof}
    
   \begin{remark}
        Combining \cref{lem:complex_hnz_uniform_lower,eq:I2_lower_bound}, it follows that for $p=1$ and $|x| < \beta_{n}^{-1/3}$, we have the uniform lower bound
        \begin{equation}
            h_{n}(x) \geq \Omega\big((\log{n})^{1-\epsilon}\big),
        \end{equation}
        where $\epsilon > 0$ can be made arbitrarily small by taking $n$ large. (This bound remains true for $|x|< \beta_{n}^{-\delta}$ with $\delta > 0$.)
        \label{rem:hnz_lower_bound}
   \end{remark}

\subsection{Behavior of the equilibrium measure near the endpoints}
\subsubsection{Behavior at the endpoint}
\begin{lemma}
    For $Q \in \mathrm{VSLF}(p,q)$ with $p>0$, we have
    \begin{equation}
        \lim_{n\to\infty} h_{n}(1) = 2p.
    \end{equation}
    \label{lem:hn1_scaling_T}
\end{lemma}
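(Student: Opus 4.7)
The plan is to work from the integral representation in \cref{eq:hn_even_integral} specialized to $x=1$,
\begin{equation*}
h_n(1) = \frac{2}{\pi} \int_0^1 \frac{V_n'(1) - s V_n'(s)}{(1-s^2)^{3/2}}\, ds,
\end{equation*}
and apply dominated convergence. The pointwise limit of the integrand follows immediately from \cref{eq:vslf_characterization_prop6}, which gives $sV_n'(s) \to (p/\lambda_p)s^p$ uniformly on compact subsets of $(0,\infty)$, and from \cref{eq:vslf_characterization_prop4}, which gives $V_n'(1) \to p/\lambda_p$. The candidate limit is therefore
\begin{equation*}
\lim_{n\to\infty} h_n(1) = \frac{2p}{\pi \lambda_p} \int_0^1 \frac{1-s^p}{(1-s^2)^{3/2}}\, ds.
\end{equation*}

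The main work is to produce an $n$-uniform integrable dominating function. I would split $[0,1]$ into $[0,1-\delta]\cup[1-\delta,1]$ for a fixed small $\delta>0$. On $[0,1-\delta]$ the denominator is bounded below by a positive constant, so it suffices to bound the numerator: \cref{cor:useful_estimate_T} gives $|sV_n'(s)|\le (2p/\lambda_p) s^{p-\epsilon}$ on $[A/\beta_n,1]$, while on $[0,A/\beta_n]$ one uses that $Q'(0)=0$ and the local boundedness of $Q'$ to get $|sV_n'(s)|=\mathcal{O}(\beta_n/n)$; combined with the bound $V_n'(1)\le 2p/\lambda_p$ for large $n$, this yields a uniform $L^\infty$ bound. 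On $[1-\delta,1]$, the essential issue is the $(1-s^2)^{3/2}$ singularity in the denominator, and I would exploit cancellation via the mean value theorem: for $s\in[1-\delta,1]$ there exists $\xi_s\in(s,1)$ with
\begin{equation*}
V_n'(1) - s V_n'(s) = (1-s)\bigl[V_n'(\xi_s) + \xi_s V_n''(\xi_s)\bigr].
\end{equation*}
Factoring out $Q'(\beta_n\xi_s)$, the bracket becomes $\tfrac{\beta_n Q'(\beta_n\xi_s)}{n}\bigl[1+\beta_n\xi_s Q''(\beta_n\xi_s)/Q'(\beta_n\xi_s)\bigr]$. Using \cref{eq:vsf_prop3} the second factor is bounded by $p+\epsilon$ for large $n$, while \cref{lem:vslf_characterization}(iv) together with the local uniform convergence $Q'(\beta_n\xi)/Q'(\beta_n)\to\xi^{p-1}$ (from \cite[Lemma 3.1(ii)]{lubinskyUniformMeanApproximation1988}, cited in the proof of our \cref{lem:vslf_characterization}(v)) bounds the prefactor uniformly by a constant times $p/\lambda_p$. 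Hence for $s\in[1-\delta,1]$ and $n$ large,
\begin{equation*}
\left|\frac{V_n'(1)-sV_n'(s)}{(1-s^2)^{3/2}}\right| \le \frac{C\,(1-s)}{(2(1-s))^{3/2}} = \frac{C}{2^{3/2}(1-s)^{1/2}},
\end{equation*}
which is integrable on $[1-\delta,1]$ uniformly in $n$. The hardest step is exactly this uniform control of $V_n'(\xi)+\xi V_n''(\xi)$ near $\xi=1$, since it requires combining the asymptotic information on $Q'$ and $Q''$ simultaneously.

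Finally, to evaluate the resulting constant, I would integrate by parts using $\tfrac{d}{ds}(s/\sqrt{1-s^2}) = (1-s^2)^{-3/2}$. The boundary term vanishes ($1-s^p\sim p(1-s)$ near $s=1$ beats the $1/\sqrt{1-s}$ blow-up), giving
\begin{equation*}
\int_0^1 \frac{1-s^p}{(1-s^2)^{3/2}}\, ds = p\int_0^1 \frac{s^p}{\sqrt{1-s^2}}\, ds = \frac{p}{2}\,B\!\left(\tfrac{p+1}{2},\tfrac{1}{2}\right) = \frac{\sqrt{\pi}\,\Gamma(\tfrac{p+1}{2})}{\Gamma(\tfrac{p}{2})} = \pi \lambda_p,
\end{equation*}
using the definition \eqref{eq:lambda_def} of $\lambda_p$ and the duplication identity $\Gamma(p/2+1)=(p/2)\Gamma(p/2)$. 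Substituting back yields $\lim_{n\to\infty} h_n(1) = \frac{2p}{\pi\lambda_p}\cdot \pi\lambda_p = 2p$, as claimed.
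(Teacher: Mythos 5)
Your proof takes essentially the same route as the paper's: apply dominated convergence to the integral representation of $h_n(1)$, using \cref{lem:vslf_characterization}(iv,v) for the pointwise limit, a uniform bound away from $s=1$, and the $(1-s)$-cancellation of $V_n'(1) - sV_n'(s)$ near $s=1$ controlled via the $\mathrm{VSLF}$ hypothesis \cref{eq:vsf_prop3}. The paper realizes that endpoint cancellation through the integral (averaging) form of the difference quotient with the auxiliary function $f_Q=Q'+yQ''$ and the bound $f_Q \le 2p\,Q'$, rather than the classical mean value theorem, but the two are equivalent. One slip worth fixing: on $[0,A/\beta_n]$ you claim $|sV_n'(s)|=\mathcal{O}(\beta_n/n)$, but since $s \le A/\beta_n$ the factor of $s$ cancels a $\beta_n$ and the correct bound is $|sV_n'(s)| \le AM_A/n = \mathcal{O}(1/n)$; this distinction matters because the lemma allows $0<p<1$, where $\beta_n/n \to \infty$ and your stated bound would fail to give an $n$-uniform dominating function. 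Finally, the identity $\Gamma(p/2+1)=(p/2)\Gamma(p/2)$ is the Gamma recurrence, not the duplication formula.
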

\begin{proof}
    From the integral form \cref{eq:hn_even_integral} we have
    \begin{equation}
        h_{n}(1) = \dfrac{2}{\pi} \int_{0}^{1} \dfrac{V_{n}^{\prime}(1) - s V_{n}^{\prime}(s)}{1 - s^{2}} \dfrac{\diff s}{\sqrt{1 - s^{2}}}.
    \end{equation}
    Similar to the proof of \cref{lem:hn0_scaling_sup} for $p>1$, we will use the result from \cref{lem:vslf_characterization}(v) that $\lim_{n\to\infty} s V_{n}^{\prime}(s) = (p/\lambda_{p}) s^{p}$, uniformly for $s$ in compact subsets of $(0,\infty)$, which proves pointwise convergence of the integrand for $s \in (0,1)$. In order to apply the dominated convergence theorem we argue differently for the integration domains $[0, 1/2]$ and $[1/2, 1]$. Choosing $A>0$ according to \cref{lem:vslf_characterization}(i) so that $sV_{n}^{\prime}(s)$ is increasing on $[A/\beta_{n}, 1/2]$, the integrand is upper bounded by $(4/3)^{3/2}V_{n}^{\prime}(1) \leq \mathcal{O}(1)$ on this interval. Since
    \begin{equation}
     \int_{0}^{A/\beta_{n}} s V_{n}^{\prime}(s) \diff s = \dfrac{1}{n \beta_{n}} \int_{0}^{A} u Q^{\prime}(u) \diff u \to 0,
    \end{equation}
    as $n\to \infty$, we may interchange the $n$-limit and integration on the domain $[0, 1/2]$. For the remaining part of the integral it suffices to show that 
     \begin{equation}
      \dfrac{V_{n}^{\prime}(1) - s V_{n}^{\prime}(s)}{1 - s} \leq \mathcal{O}(1),
      \end{equation}
    uniformly for $s \in [1/2, 1]$. Here we use the function $f_{Q}$ introduced in the proof of \cref{lem:hn_positive_x_g1} and the relations presented in \cref{eq:fQ_rewriting,eq:fQ_rewriting2_T}. We obtain
    \begin{equation}
    \dfrac{V_{n}^{\prime}(1) - s V_{n}^{\prime}(s)}{1 - s} = \dfrac{\beta_{n}}{n} \dfrac{\beta_{n}Q^{\prime}(\beta_{n}) - \beta_{n}s Q^{\prime}(\beta_{n}s)}{\beta_{n} - \beta_{n}s}
    \leq 2p \dfrac{\beta_{n}}{n} \int_{0}^{1} Q^{\prime} (\beta_{n}[1-t(1-s)]) \diff s.
    \end{equation}
    Applying \cref{cor:useful_estimate_T} with $\epsilon=1$ and keeping  in mind that $1-t(1-s) \in [s, 1] \subset [1/2, 1]$ for $0 \leq t \leq 1$ we conclude for sufficiently large $n$ that
     \begin{equation}
    \dfrac{V_{n}^{\prime}(1) - s V_{n}^{\prime}(s)}{1 - s} 
    \leq 2p \dfrac{2p}{\lambda_{p}} \int_{0}^{1}  [1-t(1-s)]^{p-2} \diff s \leq \mathcal{O}(1).
    \end{equation}
   In summary we may conclude that
    \begin{equation}
        \lim_{n \to \infty} h_{n}(1) = \dfrac{2}{\pi} \dfrac{p}{\lambda_{p}} \int_{0}^{1} \dfrac{1 - s^{p}}{1-s^{2}}\dfrac{\diff s}{\sqrt{1 - s^{2}}} = 2p,
    \end{equation}
    where the last step follows from evaluating the integral and using the definition of $\lambda_{p}$ in \cref{eq:lambda_def}.
\end{proof}

\subsubsection{Uniform lower bounds near $z=\pm 1$}
Let us now prove a uniform lower bound on $h_{n}(z)$ for $z$ in a neighbourhood of $z = 1$, with the case $z=-1$ following by symmetry. This will follow by combining the following lemma about uniform continuity with \cref{lem:hn1_scaling_T}, which shows that $h_{n}(1) = 2p + o(1)$ as $n \to \infty$.
\begin{lemma}
    Let $Q \in \mathrm{CVSLF}(p,q,\theta,\gamma)$ and $p > 0$. Then there exist constants $C, \varrho > 0$ and $n_{0} \in \mathbb{N}$ such that, for all sufficiently large $n\geq n_{0}$, we have
    \begin{equation}
        \left| h_{n}(z) - h_{n}(1) \right| \leq C |z-1|, \quad \text{for all } |z-1| \leq \varrho.
    \end{equation}
    \label{lem:hn1_lower_bound}
\end{lemma}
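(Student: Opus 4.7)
The plan is to derive the Lipschitz estimate from a uniform bound on $h_n'$ via Cauchy's derivative inequality. Since $h_n$ is holomorphic wherever $V_n$ is, it suffices to show that there exists a disk $U_\varrho=\{z:|z-1|<\varrho\}$, with $\varrho>0$ independent of $n$, on which $\sup_n \|h_n\|_{L^\infty(U_\varrho)}<\infty$. Cauchy's inequality then yields $|h_n'(z)|\le C$ uniformly for $|z-1|\le\varrho/2$ and sufficiently large $n$, and integrating along the segment from $1$ to $z$ gives the claimed bound with the same constant $C$.

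The bulk of the work is the uniform $L^\infty$ bound. Recall that $V_n(z)=Q(\beta_n z)/n$ is analytic on $\mathcal{D}_n:=C_\theta\cup\{|z|<\gamma/\beta_n\}$; I represent $h_n$ via \cref{eq:hn_def} using a closed positively oriented contour $\Gamma_n\subset\mathcal{D}_n$ that encloses $[-1,1]\cup\overline{U_\varrho}$. Concretely, fix $0<\theta'<\theta$ and a small $n$-independent $\varrho>0$ with $\overline{U_\varrho}\subset C_{\theta'}$, and take $\Gamma_n=\Gamma^{\mathrm{out}}\cup\Gamma_n^{\mathrm{in}}$, where $\Gamma^{\mathrm{out}}$ is an $n$-independent arc inside $C_{\theta'}$ at fixed positive distance from $[-1,1]\cup\overline{U_\varrho}$ and from the imaginary axis, while $\Gamma_n^{\mathrm{in}}$ consists of two short arcs of combined length $\mathcal{O}(1/\beta_n)$ crossing the imaginary axis inside $\{|s|<\gamma/\beta_n\}$. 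On $\Gamma^{\mathrm{out}}$, \cref{lem:complex_VSF_to_freud} together with oddness of $Q'$ gives $V_n'(s)\to(p/\lambda_p)s^{p-1}$ uniformly, so $\sup_{s\in\Gamma^{\mathrm{out}}}|V_n'(s)|$ is uniformly bounded; since $1/|s-z|$ and $1/|r(s)|$ are $\mathcal{O}(1)$ on $\Gamma^{\mathrm{out}}$ uniformly in $z\in\overline{U_\varrho}$, this portion of the integral is $\mathcal{O}(1)$. On $\Gamma_n^{\mathrm{in}}$, analyticity of $Q$ at the origin together with $Q'(0)=0$ (from evenness of $Q$) yields $|Q'(u)|\le C|u|$ for $|u|\le\gamma$, hence $|V_n'(s)|=(\beta_n/n)|Q'(\beta_n s)|\le C\beta_n^2|s|/n=\mathcal{O}(\beta_n/n)$; combined with $|\Gamma_n^{\mathrm{in}}|=\mathcal{O}(1/\beta_n)$ and the fact that $|s-z|$ and $|r(s)|$ are $\Theta(1)$ there (since $z$ is near $1$ and $s$ near $0$), this piece contributes $\mathcal{O}(1/n)=o(1)$.

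Putting these together gives $\|h_n\|_{L^\infty(U_\varrho)}\le M$ uniformly in $n$, and the concluding Cauchy-derivative step then yields $|h_n(z)-h_n(1)|\le(2M/\varrho)|z-1|$ for $|z-1|\le\varrho/2$. The main obstacle — and the only nontrivial subtlety — is that the domain of analyticity $\mathcal{D}_n$ shrinks onto $C_\theta\cup\{0\}$ as $n\to\infty$, which forces $\Gamma_n$ to pinch through an $\mathcal{O}(1/\beta_n)$-neighborhood of the imaginary axis. It is precisely here that one must exploit the cancellation $Q'(0)=0$ (a consequence of evenness) and the shrinking length of the inner arc to keep the contribution uniformly bounded; if either of these failed, the $n$-dependence of the contour would ruin the uniform estimate.
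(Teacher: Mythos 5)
Your overall strategy matches the paper's: represent $h_n$ by its contour integral, establish a uniform $L^\infty$ bound on $h_n$ near $z=1$, and then pass to a Lipschitz bound via a Cauchy derivative estimate. However, your contour construction is geometrically impossible, and this is a genuine gap.

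You require $\Gamma^{\mathrm{out}}$ to be at fixed positive distance $d_0>0$ from the imaginary axis, so any endpoint of $\Gamma^{\mathrm{out}}$ has $|\Re s|\ge d_0$. To close up $\Gamma_n$, your inner piece $\Gamma_n^{\mathrm{in}}$ must join a right endpoint ($\Re s\ge d_0$) to a left endpoint ($\Re s\le -d_0$), which forces $|\Gamma_n^{\mathrm{in}}|\ge 2d_0$, not $\mathcal{O}(1/\beta_n)$. Since $C_{\theta'}$ does not touch the imaginary axis and the only way across it is through the shrinking disk $\{|s|<\gamma/\beta_n\}$, any admissible contour enclosing $[-1,1]$ must contain an intermediate stretch inside $C_{\theta'}$ reaching from $|s|\sim\gamma/\beta_n$ out to $\Theta(1)$ scales, and this stretch is neither $n$-independent nor dismissible. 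On that intermediate portion, $|V_n'(s)|$ is not uniformly bounded: \cref{lem:Qprime_complex_bound} gives $|V_n'(s)|\lesssim (\beta_n^p/n)|s|^{p-1}(\log|\beta_n s|)^{q+\epsilon/2}$, which diverges as $|s|\to 0$ whenever $p\le 1$ (and the lemma is asserted for all $p>0$). One therefore needs an integrability estimate on this stretch, not just an $L^\infty$ bound. The paper sidesteps the whole issue by deforming the contour onto $[-1,1-\delta]$ together with a fixed circle $|s-1|=\delta$ (see \cref{eq:hnz_decomposition}), and controls $\int_0^1|V_n'(s)|\,\diff s$ by splitting at $A/\beta_n$: the inner part is handled by the change of variables $u=\beta_n s$ (giving $\mathcal{O}(1/n)$), and the outer part via the real-variable bound $V_n'(s)\le (2p/\lambda_p)s^{p-1-\epsilon}$ from \cref{cor:useful_estimate_T}, which is integrable near $0$ for small $\epsilon$. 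If you want to keep your complex-contour route, you must replace the impossible partition with a three-piece contour (fixed outer arc, intermediate $C_{\theta'}$ arc, shrinking inner arcs) and supply an integrability argument on the intermediate piece analogous to the paper's.
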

\begin{proof}
Fix $\delta >0$ such that the circle $\{z : |z-1|=\delta \}$ is contained in the cone $C_{\theta}$. Furthermore choose $0 < \varrho < \wt{\varrho} < \delta$. Note that the claim follows if we can establish $|h_{n}^{\prime} (z)| \leq C$ for all $|z-1| \leq \varrho$. Since $h_{n}$ is analytic in $C_{\theta}$, we may obtain such a bound from the Cauchy integral formula for derivatives
 \begin{equation}
    h_{n}^{\prime}(z) = \frac{1}{2 \pi i} \oint_{|s-1| = \wt{\varrho}} \frac{h_{n}(s)}{(s-z)^{2}} \diff s,
\end{equation}
and from an upper bound $C_{1}$ for $|h_{n}(z)|$ on $\{z : |z-1|=\wt{\varrho}\}$ that then allows us to choose $C=\wt{\varrho} C_{1}/(\wt{\varrho}- \varrho)^2$.  In order to obtain such a bound $C_1$ we use the representation of \cref{eq:hn_representation_useful_T} for $h_n(z)$ and $ |z-1|=\wt{\varrho}$. We can deform the contour of integration to obtain
\begin{equation}
    h_{n}(z) = \dfrac{1}{\pi} \int_{-1}^{1-\delta} \dfrac{V_{n}^{\prime}(s) - V_{n}^{\prime}(z)}{s-z} \dfrac{\diff s}{\sqrt{1-s^{2}}} + \dfrac{1}{2\pi i} \oint_{|s-1|=\delta} \dfrac{V_{n}^{\prime}(s) - V_{n}^{\prime}(z)}{s-z} \dfrac{\diff s}{r(s)} .
     \label{eq:hnz_decomposition}
\end{equation}
We have the following bounds on the integrals. First, for $|s-1| = \delta$ and $|z-1| = \wt{\varrho}$ we obtain from \cref{lem:complex_VSF_to_freud} for $n$ sufficiently large that
\begin{equation}
    \left| 
    \dfrac{V_{n}^{\prime}(s) - V_{n}^{\prime}(z)}{s-z} \frac{1}{r(s)}
    \right| \le \dfrac{|V_{n}^{\prime}(s)| + |V_{n}^{\prime}(z)|}{\left(\delta - \wt{\varrho} \right)\sqrt{\delta}} \leq \dfrac{4p}{\lambda_{p}} \dfrac{(1+\delta)^{p-1}}{\left(\delta - \wt{\varrho} \right)\sqrt{\delta}}.
\end{equation}
Second, we turn to the first summand on the right hand side of \cref{eq:hnz_decomposition}. For $|z-1| = \wt{\varrho}$ we obtain
\begin{equation}
    \left| \int_{-1}^{1-\delta} \dfrac{V_{n}^{\prime}(s) - V_{n}^{\prime}(z)}{s-z} \dfrac{\diff s}{\sqrt{1-s^{2}}} 
    \right| \leq \dfrac{1}{(\delta - \wt{\varrho}) \sqrt{\delta}} \left[ 2  \int_{0}^{1} |V_{n}^{\prime}(s)| \diff s
    + \dfrac{4p}{\lambda_{p}}  (1+\wt{\varrho})^{p-1}\right] ,
\end{equation}
for large $n$. We use \cref{cor:useful_estimate_T} with $0 <\epsilon <p$ to bound the integral in the previous upper bound and obtain
\begin{equation}
    \int_{0}^{1} |V_{n}^{\prime}(s)| \diff s \leq \dfrac{\beta_{n}}{n} \int_{0}^{A/\beta_{n}}  |Q^{\prime}(\beta_{n}s)| \diff s +  \int_{A/\beta_{n}}^{1} s^{p-1-\epsilon} \diff s \leq
    \dfrac{1}{n}  \int_{0}^{A}  |Q^{\prime}(u)| \diff u + \dfrac{1}{p-\epsilon},
\end{equation}
that finally yields the desired uniform bound $C_1$ on $|h_n(z)|$ for $|z-1| = \wt{\varrho}$ and large enough values of $n$.
\end{proof}

\subsection{Decay of contributions from the lens boundaries}
\begin{figure}[t]
    \centering
    \tikzset{every picture/.style={line width=0.75pt}} %

\begin{tikzpicture}[x=0.75pt,y=0.75pt,yscale=-1,xscale=1]

\draw  (60,220) -- (410,220)(60,100) -- (60,220) -- cycle (403,215) -- (410,220) -- (403,225) (55,107) -- (60,100) -- (65,107)  ;
\draw    (80,220) -- (80,230) ;
\draw    (140,220) -- (140,230) ;
\draw    (220,220) -- (220,230) ;
\draw    (300,220) -- (300,230) ;
\draw    (380,220) -- (380,230) ;
\draw   (99.5,206) .. controls (99.5,200.48) and (103.98,196) .. (109.5,196) .. controls (115.02,196) and (119.5,200.48) .. (119.5,206) .. controls (119.5,211.52) and (115.02,216) .. (109.5,216) .. controls (103.98,216) and (99.5,211.52) .. (99.5,206) -- cycle ;

\draw   (170,205.5) .. controls (170,199.98) and (174.48,195.5) .. (180,195.5) .. controls (185.52,195.5) and (190,199.98) .. (190,205.5) .. controls (190,211.02) and (185.52,215.5) .. (180,215.5) .. controls (174.48,215.5) and (170,211.02) .. (170,205.5) -- cycle ;
\draw   (250.5,206) .. controls (250.5,200.48) and (254.98,196) .. (260.5,196) .. controls (266.02,196) and (270.5,200.48) .. (270.5,206) .. controls (270.5,211.52) and (266.02,216) .. (260.5,216) .. controls (254.98,216) and (250.5,211.52) .. (250.5,206) -- cycle ;
\draw   (331.5,206.5) .. controls (331.5,200.98) and (335.98,196.5) .. (341.5,196.5) .. controls (347.02,196.5) and (351.5,200.98) .. (351.5,206.5) .. controls (351.5,212.02) and (347.02,216.5) .. (341.5,216.5) .. controls (335.98,216.5) and (331.5,212.02) .. (331.5,206.5) -- cycle ;
\draw  [fill={rgb, 255:red, 0; green, 0; blue, 0 }  ,fill opacity=1 ] (77,190.13) .. controls (77,188.4) and (78.4,187) .. (80.13,187) .. controls (81.85,187) and (83.25,188.4) .. (83.25,190.13) .. controls (83.25,191.85) and (81.85,193.25) .. (80.13,193.25) .. controls (78.4,193.25) and (77,191.85) .. (77,190.13) -- cycle ;
\draw  [fill={rgb, 255:red, 0; green, 0; blue, 0 }  ,fill opacity=1 ] (217,139.88) .. controls (217,138.15) and (218.4,136.75) .. (220.13,136.75) .. controls (221.85,136.75) and (223.25,138.15) .. (223.25,139.88) .. controls (223.25,141.6) and (221.85,143) .. (220.13,143) .. controls (218.4,143) and (217,141.6) .. (217,139.88) -- cycle ;
\draw  [fill={rgb, 255:red, 0; green, 0; blue, 0 }  ,fill opacity=1 ] (297,119.88) .. controls (297,118.15) and (298.4,116.75) .. (300.13,116.75) .. controls (301.85,116.75) and (303.25,118.15) .. (303.25,119.88) .. controls (303.25,121.6) and (301.85,123) .. (300.13,123) .. controls (298.4,123) and (297,121.6) .. (297,119.88) -- cycle ;
\draw  [fill={rgb, 255:red, 0; green, 0; blue, 0 }  ,fill opacity=1 ] (377,110.13) .. controls (377,108.4) and (378.4,107) .. (380.13,107) .. controls (381.85,107) and (383.25,108.4) .. (383.25,110.13) .. controls (383.25,111.85) and (381.85,113.25) .. (380.13,113.25) .. controls (378.4,113.25) and (377,111.85) .. (377,110.13) -- cycle ;
\draw    (80.13,190.13) -- (220.13,139.88) ;
\draw    (220.13,139.88) -- (300.13,119.88) ;
\draw    (300.13,119.88) -- (380.13,110.13) ;

\draw (43,78) node [anchor=north west][inner sep=0.75pt] [font=\large]   {$\Delta _{n}( x)$};
\draw (416,214) node [anchor=north west][inner sep=0.75pt] [font=\large]   {$x$};
\draw (70,235.4) node [anchor=north west][inner sep=0.75pt]  [font=\footnotesize]  {$\dfrac{\gamma}{\beta _{n}}$};
\draw (115,233.4) node [anchor=north west][inner sep=0.75pt]  [font=\footnotesize]  {$\dfrac{(\log n)^{\alpha }}{\beta _{n}}$};
\draw (205.5,234.4) node [anchor=north west][inner sep=0.75pt]  [font=\footnotesize]  {$\dfrac{1}{\beta _{n}^{2/3}}$};
\draw (286,234.4) node [anchor=north west][inner sep=0.75pt]  [font=\footnotesize]  {$\dfrac{1}{\beta _{n}^{1/3}}$};
\draw (366,234.4) node [anchor=north west][inner sep=0.75pt]  [font=\footnotesize]  {$1-\delta_{2} $};
\draw (102.5,199.5) node [anchor=north west][inner sep=0.75pt]   [align=left] {$A$};
\draw (173.25,199.5) node [anchor=north west][inner sep=0.75pt]   [align=left] {$B$};
\draw (254,199.5) node [anchor=north west][inner sep=0.75pt]   [align=left] {$C$};
\draw (335,200.5) node [anchor=north west][inner sep=0.75pt]   [align=left] {$D$};
\draw (121,142.5) node [anchor=north west][inner sep=0.75pt]  [font=\normalsize]  {$\frac{x}{(\log n)^{\alpha }}$};
\draw (314,94) node [anchor=north west][inner sep=0.75pt]  [font=\normalsize]  {$\frac{x}{(\log n)^{1+\alpha }}$};
\draw (240.04,121.27) node [anchor=north west][inner sep=0.75pt]  [font=\scriptsize,rotate=-346.33] [align=left] {Linear};
\draw (231.44,138.1) node [anchor=north west][inner sep=0.75pt]  [font=\scriptsize,rotate=-346.33] [align=left] {interpolation};

\end{tikzpicture}
    \caption{Height function $\Delta_{n}(x)$ setting the imaginary part of the lens boundary $\wt{\Sigma}_{1}$ (see \cref{fig:R_jump_contour}), as a function of $x=\Re{z}$ in the different regions $A,B,C,D$. The contour is piecewise linear. The exponent $\alpha$ is chosen as in \cref{lem:lens_boundaries}, and $\delta_{2}$ and $\gamma$ are as in \cref{prop:fn_endpoint,lem:fn_origin}. The specific exponents of $\beta_{n}^{2/3}$ and $\beta_{n}^{1/3}$ are not important; any decreasing pair of powers in $(0,1)$ would suffice.}
    \label{fig:lens_height_function}
\end{figure}
\begin{lemma}
    Consider $Q \in \mathrm{CVSLF}(p,q,\theta,\gamma)$, and set $\alpha > 0$ arbitrarily for $p>1$ but constrained to $\alpha < 1+q$ for $p=1$. Let the regions $A,B,C,D$ and the height function $\Delta_{n}(x)$ be defined as indicated in \cref{fig:lens_height_function}. Then, for $\mathbb{X} \in \{A,B,C,D\}$, the following properties hold.
    \begin{equation}
        \sup_{x \in \mathbb{X}} \left| e^{-2n \phi_{n}(x + i \Delta_{n}(x))} \right| \xrightarrow{n \to \infty} 0.
        \label{eq:prop_a}
    \end{equation}
    \begin{equation}
        n \, \mathrm{poly}(\log{n}) \int_{\mathbb{X}} \left| e^{-2n \phi_{n}(x + i \Delta_{n}(x))}\right| \diff x \xrightarrow{n \to \infty} 0.
        \label{eq:prop_b}
    \end{equation}
    \label{lem:lens_boundaries}
\end{lemma}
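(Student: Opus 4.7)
The plan is to exploit the analyticity of $\phi_n$ in the upper half plane, together with the fact that $\phi_{n,+}(x) = -i\pi\int_x^1 \psi_n(t)\,\diff t$ is purely imaginary on $(-1,1)$, so $\Re\phi_{n,+}(x) = 0$. The Cauchy-Riemann equations give $\partial_y \Re\phi_n(x+iy) = \pi\Re\hat\psi_n(x+iy)$, and since $\hat\psi_{n,+}(x) = \psi_n(x) > 0$ on $(-1,1)$ by \cref{lem:hn_positive_x_g1}, Taylor expansion in $y$ yields the key bound
\begin{equation*}
\Re\phi_n(x+iy) \;\geq\; \tfrac{\pi}{2}\,\psi_n(x)\,y
\end{equation*}
provided the quadratic remainder can be dominated by the linear term. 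The remainder would be controlled uniformly via Cauchy's formula for $\phi_n''$ on a shrinking disk whose radius is proportional to $|x|$ (near the origin) or to a constant (away from it), with $|\phi_n''|$ bounded using the upper bound of \cref{rem:hnz_upper_bound} on $|h_n(z)|$. The choice of $\Delta_n(x)$ depicted in \cref{fig:lens_height_function} is engineered precisely so that this linear-in-$y$ regime holds throughout each region, reducing the problem to estimating $n\psi_n(x)\Delta_n(x)$ from below.

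With $|e^{-2n\phi_n(x+i\Delta_n(x))}| \leq \exp[-n\pi\psi_n(x)\Delta_n(x)]$ in hand, I would verify both (a) and (b) region by region. In region $D$, $\psi_n(x) = \Omega(1)$ by \cref{lem:hn_positive_x_g1} and $\Delta_n(x) \geq \beta_n^{-1/3}/(\log n)^{1+\alpha}$, so the exponent grows at least like $n\beta_n^{-1/3}/(\log n)^{1+\alpha}$, which is super-polynomial in $n$ for any $p\geq 1$. Region $C$ is handled by the same kind of interpolation estimate. In region $B$, $\Delta_n(x) = x/(\log n)^\alpha$ and $\psi_n(x) \approx h_n(0)/(2\pi)$ via \cref{lem:complex_hnz_uniform_lower}, so the minimum exponent on $B$ occurs at $x = (\log n)^\alpha/\beta_n$ and equals approximately $nh_n(0)/\beta_n$; by \cref{lem:hn0_scaling_sup} and \cref{eq:vslf_characterization_prop5} this grows algebraically for $p>1$ and super-polylogarithmically like $(\log n)^{1+o(1)}$ for $p=1$. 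For both $B$ and the bulk regions, property (a) follows immediately, and property (b) follows either by a supremum-times-length estimate or by a straightforward change of variables in the exponential.

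The main obstacle is region $A$, where $\Delta_n(x)$ must vanish at $x = \gamma/\beta_n$ (the boundary of the shrinking origin-parametrix disk) before rising linearly to $1/\beta_n$. For (a), the supremum is attained near $x = \gamma/\beta_n$ and vanishes only because $\pi\psi_n(x)\Delta_n(x) \to \infty$ at the far endpoint; carefully tracking the linear rate along the lens suffices. For (b), the key computation is the change of variables $u = n\pi\psi_n(x)\Delta_n(x)$. With $\Delta_n'(x) = 1/((\log n)^\alpha - \gamma) \approx (\log n)^{-\alpha}$ and $\psi_n(x) \approx h_n(0)/(2\pi)$ uniformly on $A$, this reduces the integral to
\begin{equation*}
\int_A \exp[-n\pi\psi_n(x)\Delta_n(x)]\,\diff x \;\lesssim\; \frac{(\log n)^\alpha}{n\,h_n(0)}.
\end{equation*}
For $p>1$ this is $o(1/n^{1+\epsilon})$ for some $\epsilon > 0$ and comfortably beats any fixed $n\cdot\mathrm{poly}(\log n)$; for $p=1$ we obtain $(\log n)^{\alpha - 1 - o(1)}/n$ using \cref{lem:hn0_scaling_sup}, which when multiplied by $n\cdot(\log n)^k$ tends to zero provided $\alpha > 0$ is chosen small enough relative to the fixed polynomial degree $k$. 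Since $\alpha$ can be taken as small as desired (subject to $\alpha < 1+q$ from \cref{lem:complex_hnz_uniform_lower}), this handles the desired polynomial of $\log n$.

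The hardest technical point is the quantitative justification of the linear-in-$y$ estimate on a lens whose height $\Delta_n(x)$ decays like $1/\beta_n$ as $x \to \gamma/\beta_n$: one must bound $|\phi_n''|$ on disks that lie entirely inside the (shrinking) domain of analyticity of $Q(\beta_n z)/n$ around the origin, which in turn requires the combined complex-analytic control of $h_n(z)$ developed in \cref{lem:complex_hnz_uniform_lower,rem:hnz_upper_bound,rem:hnz_lower_bound}. Once that is in place, the estimates in each region reduce to bookkeeping with the asymptotics of $\beta_n$, $h_n(0)$, and $h_n(1)$ already collected.
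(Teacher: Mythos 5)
Your overall strategy --- Cauchy--Riemann plus a Taylor expansion in the imaginary coordinate, yielding a pointwise bound of the form $|e^{-2n\phi_n(x+i\Delta_n(x))}|\le\exp[-cn\psi_n(x)\Delta_n(x)]$, then region-by-region estimation --- is sound and essentially equivalent in spirit to the paper's proof. The paper reaches the same pointwise bound by a slightly different route, writing $|e^{-2n\phi_n}|$ directly as $\exp[n\Re\int_0^{\Delta_n(x)}ir(x+it)h_n(x+it)\,\diff t]$ and controlling $\arg r\in[\pi/4,3\pi/4]$ and $|\arg h_n|\le\pi/6$ together with a lower bound $|h_n|\ge\gamma_n(x)$; this is just a cleaner way of packaging the same linear-in-$y$ estimate, and it sidesteps your need to justify that the Taylor remainder is dominated. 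Your treatment of regions $B$, $C$, $D$ and of the sup-bound (a) is fine.

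There is a genuine gap in region $A$ for property (b). You assert that $\Delta_n(x)$ vanishes at $x=\gamma/\beta_n$, but the lens boundary does not touch the real line there: it attaches to the boundary circle $\wt\Sigma_5$ of radius $\gamma/\beta_n$, at a strictly positive height. With the parametrization used in the paper, $\Delta_n(x)=x/(\log n)^\alpha$, so $\Delta_n(\gamma/\beta_n)=\gamma/(\beta_n(\log n)^\alpha)>0$. The crucial consequence is that the exponent $n\pi\psi_n(x)\Delta_n(x)$ does not descend to $0$ at the left end of $A$; one has
\begin{equation*}
n\,\psi_n(\gamma/\beta_n)\,\Delta_n(\gamma/\beta_n)\;\gtrsim\;\frac{n\,h_n(0)}{\beta_n(\log n)^\alpha}\;\gtrsim\;(\log n)^{1+q-\alpha-2\epsilon}\;\longrightarrow\;\infty
\end{equation*}
for $p=1,\;\alpha<1+q$ (and algebraically in $n$ for $p>1$). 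Your change of variables $u=n\pi\psi_n(x)\Delta_n(x)$ correspondingly starts from $u_{\min}\to\infty$, not from $u=0$; the integral therefore carries an extra factor $e^{-u_{\min}}=\exp[-\Omega((\log n)^{1+q-\alpha-2\epsilon})]$ that you drop. This matters because your bound $\int_A\lesssim (\log n)^\alpha/(n\,h_n(0))$, without the exponential, gives $n\,\mathrm{poly}(\log n)\int_A\sim(\log n)^{\alpha+k-1-o(1)}$, which tends to $0$ only for small enough polynomial degree $k$ --- exactly the caveat you flag. But the lemma (and its use in \cref{eq:DeltaR_bounds}) requires decay faster than $1/(n\,(\log n)^k)$ for \emph{every} fixed $k$, with $\alpha$ fixed in advance (only subject to $\alpha<1+q$). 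The fix is simply to use an $L_\infty$ bound on $A$ with the correct $\Delta_n(\gamma/\beta_n)$: $\int_A|e^{-2n\phi_n}|\,\diff x\le |A|\cdot\sup_A|e^{-2n\phi_n}|\lesssim \frac{(\log n)^\alpha}{\beta_n}\exp[-\Omega((\log n)^{1+q-\alpha-2\epsilon})]$, and the super-polylogarithmic exponential then beats any fixed polynomial of $\log n$.
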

\begin{proof}
    The proof relies on the following basic equality which follows from \cref{eq:psi_hat_def,eq:phi_n_origin}:
    \begin{align}
        \left|e^{-2n \phi_{n}(x + i \Delta_{n}(x))}\right| &= \exp\left[n \Re\left(\int_{0}^{\Delta_{n}(x)} i r(x+i t) h_{n}(x+i t) \diff t\right)\right],
        \label{eq:basic_equality}
    \end{align}
    where $r(z) = (z-1)^{1/2} (z+1)^{1/2}$. For $x \in \mathbb{X}$ and $t \in [0,\Delta_{n}(x)]$, one can check that $\arg{r(x+ i t)} \in [\pi/4, 3\pi/4]$.

    We will define a function $\gamma_{n}(x)$ such that we can prove the bounds
    \begin{align}
        |h_{n}(x+i t)| &\geq \gamma_{n}(x),\label{eq:hn_gamma_bound}\\
        |\arg{h_{n}(x+it)}| &\leq \pi/6,\label{eq:hn_arg_bound}       
    \end{align}
    for $x \in \mathbb{X}$ and $t \in [0,\Delta_{n}(x)]$. Then from \cref{eq:basic_equality} we get
    \begin{equation}
        \left|e^{-2n \phi_{n}(x + i \Delta_{n}(x))}\right| \leq \exp\left[-\sin\left(\frac{\pi}{12}\right) n \gamma_{n}(x) \Delta_{n}(x)\right].
    \end{equation}
    We begin with the case $p=1$, and will indicate how $p>1$ can be treated at the end of the proof.
    
    \textbf{Region $A$}
    
    From \cref{lem:complex_hnz_uniform_lower}, in this region we can choose $\gamma_{n}(x) = h_{n}(0)/2$ to satisfy \cref{eq:hn_gamma_bound}. With this choice, \cref{eq:hn_arg_bound} follows from $|h_{n}(x+it) - h_{n}(0)| \leq h_{n}(0)/2$. From \cref{fig:lens_height_function} we recall that $\Delta_{n}(x) = x / (\log{n})^{\alpha}$. \cref{eq:prop_a} then follows from \cref{lem:hn0_scaling_sup} and
    \begin{equation}
        n \Delta_{n}(x) \gamma_{n}(x) \geq \Omega\left(\dfrac{n}{\beta_{n}} \dfrac{1}{(\log{n})^{\alpha}} (\log{n})^{1-\epsilon}\right) \geq \Omega\left((\log{n})^{1+q-\alpha-2\epsilon}\right),
    \end{equation}
    where the second inequality uses \cref{eq:vslf_characterization_prop5}; recall that $\alpha < 1+q$ for $p=1$, so the exponent can always be made positive by choosing $\epsilon$ small enough. \cref{eq:prop_b} then follows from a simple $L_{\infty}$ bound.

    \textbf{Regions $B$--$D$}

    In all these regions, we choose $\gamma_{n}(x) = h_{n}(x)/2$, so that \cref{eq:hn_gamma_bound,eq:hn_arg_bound} follow from $|h_{n}(x+ it) - h_{n}(x)| \leq h_{n}(x)/2$. Let us now verify that this is a valid choice for $\gamma_{n}(x)$. From \cref{rem:hnz_upper_bound} we have a uniform upper bound on $|h_{n}(z)|$ in a suitable $\Theta(|x|)$-sized neighborhood of $x$. One can then use the Cauchy integral formula for derivatives to upper bound $|h_{n}^{\prime}(x+it)| \leq \mathcal{O}((\log{n})^{1+\epsilon}/|x|)$. Then we can bound the difference as $|h_{n}(x+ it) - h_{n}(x)| \leq \mathcal{O}(\Delta_{n}(x) (\log{n})^{1+\epsilon}/|x|)$. To show that this is less than $h_{n}(x)/2$, in regions $B$\&$C$ we use the lower bound $h_{n}(x) \geq \Omega((\log{n})^{1-\epsilon})$ from \cref{rem:hnz_lower_bound}, while in region $D$ we use $h_{n}(x) \geq \Omega(1)$ from \cref{lem:hn_positive_x_g1}.
    
    With this choice of $\gamma_{n}(x)$, one can show that in regions $C$\&$D$, \cref{eq:prop_a,eq:prop_b} are satisfied because $n \gamma_{n}(x) \Delta_{n}(x) \geq \Omega(n^{1/3 - \epsilon})$. In region $B$, \cref{eq:prop_a} follows from
    \begin{equation}
        n \Delta_{n}(x) \gamma_{n}(x) \geq \Omega\left(\dfrac{n}{\beta_{n}} (\log{n})^{1-\epsilon}\right) \geq \Omega\left((\log{n})^{1+q-2\epsilon}\right),
    \end{equation}
    where we used $x \geq (\log{n})^{\alpha} /\beta_{n}$, the lower bound from \cref{rem:hnz_lower_bound}, and \cref{eq:vslf_characterization_prop5}. Since $1+q>0$, the exponent can be made positive by choosing $\epsilon$ sufficiently small. \cref{eq:prop_b} follows from the calculation
    \begin{align}
        n \, \mathrm{poly}(\log{n}) \int_{B} \exp\left[-\Omega\left(n \, (\log{n})^{1-\epsilon} \dfrac{x}{(\log{n})^{\alpha}}\right)\right] \diff x &\leq \mathcal{O}\left(\dfrac{n \, \mathrm{poly}(\log{n})}{n (\log{n})^{1-\epsilon-\alpha}} \exp\left[-\Omega\left(\dfrac{n}{\beta_{n}} (\log{n})^{1-\epsilon}\right)\right]\right),\\
        &\leq \mathcal{O}\Big(\mathrm{poly}(\log{n}) \exp\left[-\Omega\left((\log{n})^{1+q-2\epsilon}\right)\right]\Big),
    \end{align}
    where we used \cref{eq:vslf_characterization_prop5} for $\beta_{n}$. For $q>0$ the upper bound converges to zero superpolynomially in $n$, while for $q \leq 0$ the convergence may be subpolynomial (but still super-polylogarithmic). This concludes the proof for $p=1$.

    For $p>1$ the proof is much simpler. It suffices to take $\Delta_{n}(x) = x / (\log{n})^{2}$ and $\gamma_{n}(x) = (1/2) \min_{x\in[0,1]} h_{n}(x) \geq \Omega(1)$. The bound $|h_{n}(x+it) - h_{n}(x)| \leq h_{n}(x)/2$ follows from \cref{lem:complex_hnz_uniform_lower} for $x \in A$, and for $x \in B,C,D$ by an upper bound on the derivative $h_{n}^{\prime}(x+it)$ using \cref{rem:hnz_upper_bound} as before. Overall one finds that
    \begin{equation}
        n \gamma_{n}(x) \Delta_{n}(x) \geq \Omega\left(\dfrac{n}{\beta_{n}} (\log{n})^{-2}\right) \geq \Omega\left(n^{1 - 1/p -\epsilon}\right),
    \end{equation}
    which suffices to prove \cref{eq:prop_a,eq:prop_b} for $p>1$.
\end{proof}

\clearpage
\section{Airy bootstrap}
\label{sec:airy_bootstrap}
Since it will be useful for \cref{sec:universality}, we include here a version of the spectral bootstrap suited for recovering the spectral function near the `edge' $\omega \approx \beta_{n}$. A modification is necessary because the polynomial asymptotics in this region are qualitatively different from those in the bulk of the spectrum: they are governed by the `Airy' universality class rather than the `sine' universality class. These Airy asymptotics hold in the `regular' case where the density $\sigma_{n}(\omega)$ vanishes like a square root as $\omega \to \beta_{n}$ from below. For our class of spectral functions, we prove this is the case for large enough $n$ (see \cref{sec:equilibrium_measure}).

The Airy asymptotics are most easily stated in terms of a function $f_{n}(x)$ which is related to $I_{n}$ by
\begin{equation}
    f_{n}(x) = -\left(\dfrac{3\pi}{2}\left[\dfrac{n}{2} - I_{n}(\beta_{n} x)\right]\right)^{2/3}.
    \label{eq:fn_In_relation}
\end{equation}
Since $I_{n}(\beta_{n}) = \int_{0}^{\beta_{n}} \sigma_{n}(\omega) \diff \omega = n/2$, we have $f_{n}(1) = 0$. Moreover, given the square root vanishing of the derivative $\partial_{\omega}I_{n}(\omega) = \sigma_{n}(\omega)$, using \cref{eq:psin_rescaling,eq:psin_hn} we have
\begin{equation}
    f_{n}^{\prime}(1) = \left(\dfrac{n h_{n}(1)}{\sqrt{2}}\right)^{2/3}.
    \label{eq:fnp1}
\end{equation}
The asymptotics of $h_{n}(1)$ are given in \cref{lem:hn1_scaling}, where it is shown that $h_{n}(1)$ is $\mathcal{O}(1)$ as $n \to \infty$, so $f_{n}^{\prime}(1) \sim \mathcal{O}(n^{2/3})$.

For the Airy bootstrap, we start at the endpoint $\omega=\beta_{n}$, and then work backwards to lower frequencies. The derivation of the following equations is given in \cref{sec:airy_bootstrap_derivation}.

First we approximate $h_{n}(1)$ using 
\begin{equation}
    h_{n}(1) \approx \dfrac{1}{2n} \left(\dfrac{\Ai(0)}{\Ai^{\prime}(0)}\right)^{3} \left[\rho - \left(\dfrac{p_{n}(\beta_{n}) + p_{n-1}(\beta_{n})}{p_{n}(\beta_{n}) - p_{n-1}(\beta_{n})}\right)\right]^{3},
    \label{eq:hn1_asymptotic}
\end{equation}
where $\Ai$ is the Airy function. Then we can estimate the spectral function at $\omega=\beta_{n}$ using \cref{eq:fnp1} and
\begin{equation}
    \dfrac{\Phi(\beta_{n})}{2\pi} \approx \dfrac{1}{\beta_{n} p_{n}(\beta_{n})^{2}}\left[\big(2 f_{n}^{\prime}(1)\big)^{1/4} \Ai(0) - (1+\rho) \big(2 f_{n}^{\prime}(1)\big)^{-1/4} \Ai^{\prime}(0)\right].
    \label{eq:spectral_function_endpoint}
\end{equation}
We can now proceed with the main bootstrap equations: with $x \equiv \omega/\beta_{n}$, the update of the spectral function is given by
\begin{align}
    \dfrac{\Phi(\omega)}{2\pi} &\approx \dfrac{2}{\beta_{n}} \dfrac{1}{p_{n}(\omega)^{2} + p_{n-1}(\omega)^{2}} \dfrac{1}{\sqrt{1-x^{2}}}\Big[- 2x \Ai\big(f_{n}(x)\big) \Ai^{\prime}\big(f_{n}(x)\big) \sin\left[\rho \arccos{x}\right] \label{eq:phi_airy_bootstrap} \\
    &+\big(-f_{n}(x)\big)^{1/2} \Ai\big(f_{n}(x)\big)^{2}\big(x \cos\left[\rho \arccos{x}\right]+1\big) - \big(-f_{n}(x)\big)^{-1/2} \Ai^{\prime}\big(f_{n}(x)\big)^{2}\big(x \cos\left[\rho \arccos{x}\right]-1\big)\Big], \nonumber
\end{align}
while the update of the phase function $f_{n}(x)$ is given by
\begin{align}
    f_{n}^{\prime}(x) &\approx \dfrac{1}{2\Ai^{2} f_{n}^{2} - 2 (\Ai^{\prime})^{2} f_{n} - \Ai \Ai^{\prime}}\Big[-2 \beta_{n} \dfrac{\Phi(\omega)}{2\pi} K_{n}(\omega, \omega) f_{n} + \frac{2}{1-x^{2}} \Ai \Ai^{\prime} \cos[\rho \arccos{x}] f_{n} \label{eq:fnp_airy_bootstrap} \\
    &- \dfrac{\sin[\rho \arccos{x}]}{1-x^{2}} \left(\Ai^{2} \big(-f_{n}\big)^{3/2} - (\Ai^{\prime})^{2} (-f_{n})^{1/2}\right) - \dfrac{\rho}{\sqrt{1-x^{2}}}\left(\Ai^{2} \big(-f_{n}\big)^{3/2} + (\Ai^{\prime})^{2} (-f_{n})^{1/2}\right)\Big], \nonumber
\end{align}
where we have suppressed the arguments of $f_{n} \equiv f_{n}(x)$, $\Ai \equiv \Ai(f_{n}(x))$, and $\Ai^{\prime} \equiv \Ai^\prime(f_{n}(x))$.

Given a choice of frequency spacing $0 < \delta \omega \ll 1$ and a minimum frequency $\omega_{\mathrm{min}}$, the main routine then works as follows.
\begin{figure}[t]
    \centering
    \includegraphics[height=18em]{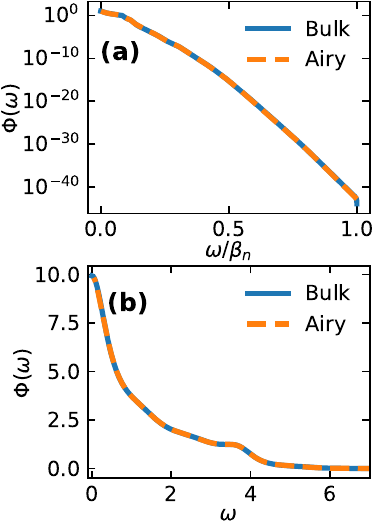}
    \includegraphics[height=18em]{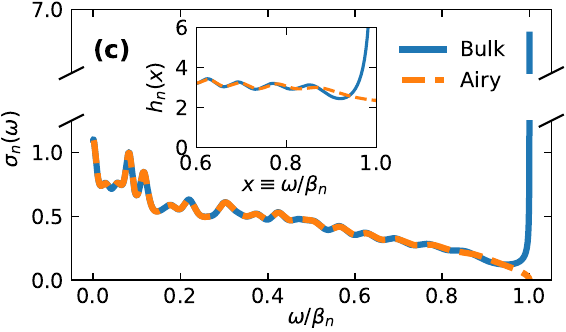}
    \caption{\textbf{(a,b)} Log- and linear-scale spectral function $\Phi(\omega)$ of the energy current operator of the mixed field Ising model, as computed using the `bulk' spectral bootstrap of \cref{sec:bulk_bootstrap} and the `Airy' spectral bootstrap of \cref{sec:airy_bootstrap}. While they are intended to work in different frequency regimes, the estimates of $\Phi(\omega)$ agree remarkably well throughout the bulk. \textbf{(c)} Comparison between the extracted equilibrium measure $\sigma_{n}(\omega)$ for the same model. They agree throughout the bulk, but the bulk bootstrap breaks down near $\omega = \beta_{n}$, whereas the Airy bootstrap produces the expected square root vanishing of $\sigma_{n}(\omega)$ as $\omega \to \beta_{n}$. Indeed, the inset shows $h_{n}(x \equiv \omega/\beta_{n}) = 2\pi \frac{\beta_{n}}{n} \frac{\sigma_{n}(\omega)}{\sqrt{1-(\omega/\beta_{n})^{2}}}$; for the Airy bootstrap this is finite and as $x \to 1$ approaches the value $h_{n}(1) \approx 2$ expected from \cref{lem:hn1_scaling} with $p=1$. Here we set $n=40$, with $\beta_{n} \approx 2 b_{n} \approx 46.1$.}
    \label{fig:airy_bootstrap_benchmark}
\end{figure}
\begin{enumerate}
    \setcounter{enumi}{-1}
    \item Set $\omega = \beta_{n}$ and $f_{n}(1) = 0$. Compute $f_{n}^{\prime}(1)$ using \cref{eq:fnp1,eq:hn1_asymptotic}, and $\Phi(\beta_{n})$ using \cref{eq:spectral_function_endpoint}.
    \item Increment $\omega \mapsto \omega - \delta \omega$.
    \item Set $f_{n}\left(\dfrac{\omega}{\beta_{n}}\right) = f_{n}\left(\dfrac{\omega+\delta\omega}{\beta_{n}}\right) - f_{n}^{\prime}\left(\dfrac{\omega+\delta\omega}{\beta_{n}}\right) \times \dfrac{\delta\omega}{\beta_{n}}$.
    \item Compute $\Phi(\omega)$ using \cref{eq:phi_airy_bootstrap}.
    \item Compute $f_{n}^{\prime}(\omega/\beta_{n})$ using \cref{eq:fnp_airy_bootstrap}.
    \item Repeat steps 1--4 until $\omega=\omega_{\mathrm{min}}$, then terminate.
\end{enumerate}
Note that in step 2 we divide the derivative by $\beta_{n}$ because $f_{n}(x)$ is defined in terms of $x \equiv \omega / \beta_{n}$.

In \cref{fig:airy_bootstrap_benchmark} we show a benchmark of this `Airy bootstrap' on the energy current operator of the mixed field Ising model (MFIM), as we did in \cref{sec:bulk_bootstrap}, and compare to the results from the `bulk bootstrap' described in that section. These two methods are designed to work in different frequency regimes: the bulk bootstrap starts at $\omega=0$ and iterates up to some $\omega_{\mathrm{max}} \ll \beta_{n}$, while the Airy bootstrap starts at $\omega=\beta_{n}$ and iterates down to some $\omega_{\mathrm{min}}$. They should agree in frequency regimes where both methods are approximately valid, but \textit{a priori} they could disagree elsewhere. In fact, in \cref{fig:airy_bootstrap_benchmark}(a,b), we find that the extracted spectral functions $\Phi(\omega)$ agree remarkably well throughout almost the whole frequency range $\omega \in [0,\beta_{n}]$, with only some small deviation at $\omega\approx\beta_{n}$. This good agreement lends credence to our investigation of the high-frequency tail of $\Phi(\omega)$ in \cref{sec:bulk_bootstrap}, where we found $\Phi(\omega \to \infty) \sim \exp[-\mathcal{O}(\omega \log{\omega})]$, consistent with locality bounds on operator growth in 1D~\cite{parkerUniversalOperatorGrowth2019}. It is also perhaps surprising that the Airy bootstrap can produce accurate estimates of the spectral function at moderate frequencies, since one might initially have worried about the numerical stability of a procedure which begins with the spectral function as small as $\Phi(\omega=\beta_{n}) \sim 10^{-43}$.

More significant deviation between the bulk and Airy bootstraps can be found by looking at their estimates for the equilibrium measure $\sigma_{n}(\omega)$, as shown in \cref{fig:airy_bootstrap_benchmark}(c). This is computed directly as part of the bulk bootstrap, while for the Airy bootstrap we can recover it from $f_{n}(\omega/\beta_{n})$ using \cref{eq:fn_In_relation} and $\sigma_{n}(\omega) = \partial_{\omega} I_{n}(\omega)$. We find that the two methods give closely matching estimates of $\sigma_{n}(\omega)$ for $0\leq\omega/\beta_{n} \lesssim 0.9$, but as $\omega$ approaches the spectral edge at $\omega = \beta_{n}$, the bulk bootstrap estimate blows up, while the Airy bootstrap produces the square root vanishing expected for `regular' equilibrium measures (we remind the reader that we prove that this regularity property holds at large-$n$ for our class of spectral function---see \cref{sec:equilibrium_measure}). To corroborate this, in the inset to \cref{fig:airy_bootstrap_benchmark}(c) we plot the function (c.f.~\cref{eq:psin_rescaling,eq:psin_hn})
\begin{equation}
    h_{n}(x \equiv \omega/\beta_{n}) = 2\pi \dfrac{\beta_{n}}{n} \dfrac{\sigma_{n}(\omega)}{\sqrt{1-(\omega/\beta_{n})^{2}}},
\end{equation}
which should be finite at $x=1$ if $\sigma_{n}(\omega)$ vanishes like a square root at $\omega=\beta_{n}$. For the Airy bootstrap estimate of $h_{n}(x)$, this is indeed the case, and we can also see that the limiting value of $h_{n}(x)$ as $x \to 1$ is close to 2, which is expected from \cref{lem:hn1_scaling} with $p=1$, given the $\Phi(\omega \to \infty) \sim \exp[-\mathcal{O}(\omega \log{\omega})]$ high frequency decay of the spectral function. On the other hand, the bulk bootstrap estimate does not successfully capture this behavior. Although the results of \cref{fig:airy_bootstrap_benchmark}(a,b) show that this disagreement in estimating $\sigma_{n}(\omega)$ is not fatal for approximating the spectral function, it will be more relevant that we have an accurate estimate of $\sigma_{n}(\omega)$ in \cref{sec:universality}, where we check for the emergence of universality.

\clearpage
\section*{Glossary of Symbols}
\label{sec:glossary}
We use standard asymptotic notation. Let $f,g : \mathbb{R}^{+} \to \mathbb{C}$ be two functions. We write $f(x) = \mathcal{O}(g(x))$ iff there exist constants $C,x_{0}>0$ such that $|f(x)| < C |g(x)|$ for all $x>x_{0}$. We write $f(x) = o(g(x))$ iff, for every constant $C>0$, there exists $x_{0}>0$ such that $|f(x)| < C |g(x)|$ for all $x>x_{0}$. We write $f(x) = \Omega(g(x))$ iff $g(x) = \mathcal{O}(f(x))$, and $f(x) = \omega(g(x))$ iff $g(x) = o(f(x))$. We write $f(x) = \Theta(g(x))$ iff $f(x) = \mathcal{O}(g(x))$ and $f(x) = \Omega(g(x))$. We write $f(x) = \wt{\mathcal{O}}(g(x))$ iff $f(x) = \mathcal{O}(g(x) (\log|g(x)|)^{k})$ for some finite $k$, and $f(x) = \wt{\Omega}(g(x))$ iff $g(x) = \wt{\mathcal{O}}(f(x))$. We write $f(x) = \mathrm{poly}(g(x))$ iff $f(x) = \mathcal{O}((g(x))^{k})$ for some finite $k$.
\bgroup
\def\arraystretch{2}
{\makeatletter
\renewcommand\table@hook{\footnotesize}
\makeatother
\begin{longtable}{ccc} 
    \toprule
    \textbf{Symbol} & \textbf{Description} & \textbf{Definition} \\
    \midrule
    \endfirsthead
    \toprule
    \textbf{Symbol} & \textbf{Description} & \textbf{Definition} \\
    \midrule
    \endhead
    $\mathcal{L}$ & Liouvillian superoperator & $\mathcal{L}(\cdot) = [H, \cdot \,]$ \\
    $A$ & Initial Lanczos operator (not necessarily normalized) & System dependent \\
    $C(t)$ & Autocorrelation function & $C(t) = (A|A(t)) = (A | e^{i \mathcal{L} t} A)$ \\
    $\Phi(\omega)$ & Spectral function & $\Phi(\omega) = \int_{\mathbb{R}} e^{-i\omega t} C(t) \diff t$ \\
    $\rho$ & Low-frequency power-law exponent & $\Phi(\omega\to 0) \sim |\omega|^{\rho}$ \\
    $G(z)$ & Green's function (resolvent) & $G(z) = \left(A \left| \frac{1}{z - \mathcal{L}}\right| A\right)$ \\
    $b_{n}$ & $n$th Lanczos coefficient & \cref{eq:lanczos_coefficient_def} \\
    $O_{n}$ & $n$th Lanczos operator & \cref{eq:lanczos_operator_def} (see also \cref{eq:lanczos_vector}) \\
    $\mathcal{L}_{n}$ & Level-$n$ Liouvillian & $\mathcal{L}_{n} = \mathcal{Q}_{n} \mathcal{L} \mathcal{Q}_{n}$ with $\mathcal{Q}_{n} = \mathds{1} - \sum_{m=0}^{n-1} |O_{m})(O_{m}|$ \\
    $G_{n}(z)$ & Level-$n$ Green's function & $G_{n}(z) = \left(O_{n} \left| \frac{1}{z - \mathcal{L}_{n}}\right| O_{n}\right)$ \\
    $w(\omega)$ & Weight function for orthogonal polynomials & $w(\omega) = \Phi(\omega) / 2\pi$ \\
    $\Phi_{n}(\omega)$ & $n$th spectral function (w.r.t.\ $\mathcal{L}_{n}$) & $\Phi_{n}(\omega) = \int_{\mathbb{R}} e^{-i\omega t} (O_{n}| e^{i \mathcal{L}_{n}t} O_{n}) \diff t = 2 \Im[G_{n}(\omega -i 0^{+})]$ \\
    $\wt{\Phi}_{n}(\omega)$ & $n$th spectral function (w.r.t.\ $\mathcal{L}$) & $\Phi_{n}(\omega) = \int_{\mathbb{R}} e^{-i\omega t} (O_{n}| e^{i \mathcal{L}t} O_{n}) \diff t$ \\
    $p_{n}(\omega)$ & Degree $n$ orthonormal polynomial w.r.t.\ $\Phi(\omega)/2\pi$ & \cref{eq:poly_orthogonality} (see also \cref{eq:three_term_recursion}) \\
    $y_{n}$ & Positive leading coefficient of $p_{n}(\omega)$ & $p_{n}(\omega) = y_{n} \omega^{n} + \mathcal{O}(\omega^{n-1}), \quad y_{n} > 0$ \\
    $P_{n}(\omega)$ & Monic orthogonal polynomial & $P_{n}(\omega) = (1/y_{n}) p_{n}(\omega)$ \\
    $Q(\omega)$ & Potential (for Coulomb gas) & Defined implicitly by $\Phi(\omega)/2\pi \equiv |\omega|^{\rho} e^{-Q(\omega)}$ \\
    $p,q$ & Polynomial growth exponents for potential & $Q(\omega \to \infty) \sim \omega^{p} (\log{\omega})^{q + o(1)}$ (see \cref{sec:potential_definitions}) \\
    $\sigma_{n}(\omega)$ & Equilibrium density (with charge $n$) & \cref{eq:coulomb_gas_def} \\
    $\beta_{n}$ & $n$th Mhaskar-Rakhmanov-Saff (MRS) number & \cref{eq:MRS_def} (asymptotics in \cref{eq:beta_n_asymptotics}; satisfies $\beta_{n} = 2 b_{n}[1 + \mathcal{O}(1/n)]$) \\
    $V_{n}(x)$ & Rescaled potential & $V_{n}(x) = Q(\beta_{n}x)/n$ \\
    $\psi_{n}(x)$ & Rescaled equilibrium density & $\psi_{n}(x) = (\beta_{n}/n) \sigma_{n}(\beta_{n}x)$ \\
    $\psi^{(p)}(x)$ & Ullman distribution & \cref{eq:freud_eq_measure} \\
    $h_{n}(x)$ & Non-semicircle part of eq.~measure & $h_{n}(x) = 2\pi \psi_{n}(x)/\sqrt{1 - x^{2}}$ (see also \cref{eq:hn_integral}) \\
    $K_{n}(x,y)$ & Christoffel-Darboux kernel & $K_{n}(x,y) = \sum_{m=0}^{n-1} p_{m}(x) p_{n}(y)$ \\
    $\hat{K}_{n}(x,y)$ & Weighted Christoffel-Darboux kernel & $\hat{K}_{n}(x,y) = \sqrt{w(x)w(y)} K_{n}(x,y)$ \\
    $\mathbb{S}(u,v)$ & Sine kernel & $\mathbb{S}(u,v) = \frac{\sin[\pi(u-v)]}{\pi(u-v)}$ \\
    $\mathbb{J}_{\rho/2}(u,v)$ & Bessel kernel & $\mathbb{J}_{\rho/2}(u,v) = \pi \sqrt{u} \sqrt{v} \frac{J_{\frac{\rho+1}{2}}(\pi u) J_{\frac{\rho-1}{2}}(\pi v) - J_{\frac{\rho-1}{2}}(\pi u) J_{\frac{\rho+1}{2}}(\pi v)}{2(u-v)}$ \\
    $\mathbb{A}(u,v)$ & Airy kernel & $\mathbb{A}(u,v) = \frac{\Ai(u) \Ai^{\prime}(v) - \Ai^{\prime}(u) \Ai(v)}{u-v}$ \\
    $\theta_{n}(\omega)$ & (WKB) Phase factor & $\theta_{n}(\omega) = -\pi \int_{\omega}^{\beta_{n}} \sigma_{n}(\omega^{\prime}) \diff \omega^{\prime} - \frac{\pi}{4}$ \\
    $I_{n}(\omega)$ & Cumulative equilibrium measure & $I_{n}(\omega) = \int_{0}^{\omega} \sigma_{n}(\omega^{\prime}) \diff \omega^{\prime}$ \\ 
    $\mathcal{J}$ & Heat/spin/charge current operator & Hamiltonian-dependent (see \cref{sec:transport_coeffs}) \\
    $g_{n}(z)$ & Logarithmic transform of equilibrium measure & $g_{n}(z) = \int_{-1}^{1} \log(z-s) \psi_{n}(s) \diff s, \quad z \in \mathbb{C} \setminus \mathbb{R}$ \\
    $l_{n}$ & Lagrange multiplier for Coulomb gas energy minimization & \cref{eq:lagrange_at_zero} \\
    $Y(z)$ & Solution to fundamental Riemann-Hilbert problem & \cref{eq:fki_sol} \\
    $\sigma_{3}$ & 3rd Pauli matrix (not to be confused with $\sigma_{n}(\omega)$) & $\sigma_{3} = \mathrm{diag}(1,-1)$ \\
    $Y_{1}$ & $\mathcal{O}(1/z)$ correction to leading $Y(z\to\infty)$ scaling & $z^{-n \sigma_{3}} Y(z) = \mathds{1} + Y_{1} / z + \mathcal{O}(1/z^{2})$ as $z \to \infty$ \\
    $U(z)$ & $Y(z)$ rescaled by $\beta_{n}$ & $U(z) = \beta_{n}^{-(n+\rho/2)\sigma_{3}} Y(\beta_{n}z) \beta_{n}^{(\rho/2)\sigma_{3}}$ \\
    $T(z)$ & $U(z)$ rescaled by the log-transformed equilibrium measure & $T(z) = e^{-(n l_{n}/2)\sigma_{3}} U(z) e^{(n l_{n}/2)\sigma_{3}} e^{-n g_{n}(z) \sigma_{3}}$ \\
    $\hat{\psi}_{n}(z)$ & Analytic continuation of equilibrium measure & \cref{eq:psi_hat_def} \\
    $\phi_{n}(z)$ & Integral of equilibrium measure & $\phi_{n}(z) = -\pi i \int_{1}^{z} \hat{\psi}_{n}(s) \diff s, \quad z \in \mathbb{C} \setminus \mathbb{R}$ \\
    $S(z)$ & Solution of intermediate RHP with contour deformation & \cref{eq:S_def} \\
    $N(z)$ & Solution of RHP for outside region & \cref{eq:Pinf_sol} \\
    $D(z)$ & Szeg\H{o} function for $|x|^{\rho}$ on $[-1,1]$ & $D(z) = z^{\rho/2} / \varphi(z)^{\rho/2}$ with $\varphi(z) = z + (z+1)^{1/2} (z-1)^{1/2}$ \\
    $a(z)$ & Generates solution of outside RHP for $\rho=0$ & $a(z) = (z-1)^{1/4} / (z+1)^{1/4}$ \\
    $r(z)$ & Appears in solutions for RHPs on $[-1,1]$ & $r(z) = (z-1)^{1/2} (z+1)^{1/2}$ \\
    $P(z)$ & Solution of RHP in local parametrices near $z=\pm 1$ and $z=0$ & \cref{eq:P_def_endpoint,eq:Pdef_left_endpoint} for $z=\pm 1$, and \cref{eq:P_def_origin} for $z=0$ \\
    $f_{n}(z)$ & Biholomorphic map to auxiliary $\zeta$-plane & \hyperref[prop:fn_endpoint]{Prop.~\ref*{prop:fn_endpoint}} for $z$ near $z=1$, and \cref{eq:fn_zero_def} for $z$ near $z=0$ \\
    $P^{(1)}(z)$ & Solution of constant-jump RHP for local parametrices & \cref{eq:P1_def} for $z=1$, and \cref{eq:P1_def_origin} for $z=0$ \\
    $\Psi(\zeta)$ & Solution of auxiliary RHP for $z=\pm 1$ local parametrices & \cref{eq:Psi_def} \\
    $\Psi_{\rho/2}(\zeta)$ & Solution of auxiliary RHP for $z=0$ local parametrix & \cref{eq:Psi_rho_def} \\
    $\omega(z)$ & Analytic continuation of $|x|^{\rho}$, continuous across $\mathbb{R}$ & \cref{eq:omega_def} \\
    $W(z)$ & Function related to $|x|^{\rho}$, discontinuous across $\mathbb{R}$ & \cref{eq:W_def} \\
    $S_{\mathrm{par}}(z)$ & Approximate solution of RHP for $S(z)$ using local parametrices & \cref{eq:Spar_def} \\
    $R(z)$ & Solution of residual RHP & $R(z) = S(z) S_{\mathrm{par}}(z)^{-1}$, solved in \cref{eq:R_exact_sol} \\
    $\wt{\Sigma}_{j}$ & Contour on which $R(z)$ has a jump, $j=1,\dots,9$ & See \cref{fig:R_jump_contour} \\
    \bottomrule
\end{longtable}}
\egroup

\end{document}